\tikzset{l/.style={font=\fontsize{8}{8}\selectfont}}
\definecolor{salmon}{rgb}{1,0.47,0.425}
\newcommand{\maps}{\colon}    
\newcommand{\R}{{\mathbb R}}  
\newcommand{\C}{{\mathbb C}}  
\renewcommand{\H}{{\mathbb H}}  
\renewcommand{\O}{{\mathbb O}}  
\newcommand{\K}{{\mathbb K}}  
\newcommand{\Z}{{\mathbb Z}}  
\renewcommand{\Re}{\mathrm{Re}} 
\renewcommand{\Im}{\mathrm{Im}} 
\newcommand{\Retr}{\Re \; \tr} 
\newcommand{\Mor}{\mathrm{Mor}} 
\newcommand{\h}{\mathfrak{h}} 
\newcommand{\tr}{{\mathrm{tr}}} 
\newcommand{\U}{{\rm U}}    
\newcommand{\SO}{{\rm SO}}    
\newcommand{\SU}{{\rm SU}}    
\newcommand{\Sp}{{\rm Sp}}    
\newcommand{\Spin}{{\rm Spin}}    
\newcommand{\ISO}{\mathrm{ISO}} 
\newcommand{\SISO}{\mathrm{SISO}} 
\newcommand{\String}{{\rm String}}    
\newcommand{\Superstring}{{\rm Superstring}}    
\newcommand{\Brane}{{\rm Brane}}    
\newcommand{\Twobrane}{2\text{-}\mathrm{Brane}} 
\newcommand{\n}{{\mathfrak{n}}} 
\newcommand{\so}{{\mathfrak{so}}}  
\newcommand{\gl}{{\mathfrak{gl}}}  
\newcommand{\g}{\mathfrak{g}}  
\newcommand{\T}{\mathcal{T}} 
\renewcommand{\b}{{\mathfrak{b}}}  
\newcommand{\siso}{\mathfrak{siso}} 
\newcommand{\sugra}{\mathfrak{sugra}} 
\newcommand{\brane}{\mathfrak{brane}} 
\newcommand{\twobrane}{2\text{-}\mathfrak{brane}} 
\newcommand{\superstring}{\mathfrak{superstring}} 
\newcommand{\strng}{\mathfrak{string}} 
\newcommand{\tri}{\operatorname{{{\rm tri}}}} 
\newcommand{\End}{{\rm End}} 
\newcommand{\Hom}{{\rm Hom}} 
\newcommand{\Sym}{{\rm Sym}} 
\newcommand{\Cliff}{{\rm Cliff}}    
\newcommand{\hol}{\mathrm{hol}}
\newcommand{\Set}{\mathrm{Set}} 
\newcommand{\Fun}{\mathrm{Fun}} 
\newcommand{\SuperVect}{\mathrm{SuperVect}} 
\newcommand{\SuperAlg}{\mathrm{SuperAlg}} 
\newcommand{\GrAlg}{\mathrm{GrAlg}} 
\newcommand{\Man}{\mathrm{Man}} 
\newcommand{\SuperMan}{\mathrm{SuperMan}} 
\newcommand{\SuperPoints}{\mathrm{SuperPoints}} 
\newcommand{\op}{{\rm op}} 
\newcommand{\Ad}{{\rm Ad}} 
\newcommand{\inv}{\mathrm{inv}} 
\newcommand{\id}{\mathrm{id}} 
\newcommand{\inclusion}{\hookrightarrow}
\newcommand{\iso}{\cong} 
\newcommand{\To}{\Rightarrow}
\newcommand{\tensor}{\otimes} 
\newcommand{\half}{\frac{1}{2}} 
\newcommand{\fourth}{\frac{1}{4}} 
\newcommand{\V}{\mathcal{V}} 
\renewcommand{\S}{\mathcal{S}} 
\newcommand{\A}{\mathcal{A}} 
\newcommand{\B}{\mathcal{B}} 
\newcommand{\Psibar}{\overline{\Psi}} 
\newcommand{\psibar}{\overline{\psi}} 
\newcommand{\phibar}{\overline{\phi}} 
\newcommand{\chibar}{\overline{\chi}} 
\newcommand{\define}[1]{{\bf \boldmath{#1}}}
\newcommand{\arxiv}[1]{\href{http://arxiv.org/abs/#1}{\texttt{arXiv:{#1}}}}
\newtheorem{thm}{Theorem}[chapter]
\newtheorem{cor}{Corollary}[chapter]
\newtheorem{lem}{Lemma}[chapter]
\newtheorem{prop}{Proposition}[chapter]
\newtheorem*{YL}{Yoneda Lemma}
\theoremstyle{definition}
\newtheorem{defn}{Definition}[chapter]
\newcommand{\draftmode}{false}
\newcommand{\thesisTitle}{{Division Algebras, Supersymmetry and Higher Gauge Theory}}
\newcommand{\degreeTitle}{Doctor of Philosophy}
\newcommand{\documentType}{Dissertation}
\newcommand{\documentTypeCAPS}{DISSERTATION}
\newcommand{\authorName}{John Gmerek Huerta}
\newcommand{\degreeMonth}{June}
\newcommand{\degreeYear}{2011}
\newcommand{\committeeChair}{Professor John Baez}
\newcommand{\committeeMemberOne}{Professor Vyjayanthi Chari}
\newcommand{\committeeMemberTwo}{Professor Stefano Vidussi}
\newcommand{\doublespaced}{\ifthenelse{\equal{\draftmode}{true}}{\onehalfspacing}{\doublespacing}}
\newcommand{\singlespaced}{\singlespacing}
\begin{document}

\date{}

\pagestyle{empty}

\begin{centering}
UNIVERSITY OF CALIFORNIA\\
RIVERSIDE\\

\vspace{1in}

\thesisTitle

\vspace{1in}

A \documentType { }submitted in partial satisfaction\\
of the requirements for the degree of\\

\vspace{0.4in}
\doublespaced
\degreeTitle\\
in\\
Mathematics\\
by\\
\authorName\\
\degreeMonth { }\degreeYear\\
\singlespaced
\vspace{1.6in}

\end{centering}

\noindent \documentType { }Committee:\\
\indent\indent\indent \committeeChair, Chairperson\\
\indent\indent\indent \committeeMemberOne\\
\indent\indent\indent \committeeMemberTwo

\newpage
\clearpage

\vspace*{\stretch{1}}
\begin{center}
Copyright by\\
\authorName\\
\degreeYear
\end{center}

\newpage
\clearpage

%
%
%
%
%
%
%
%
%

\pagestyle{plain}
\pagenumbering{roman}

\addtocounter{page}{2}

\begin{center}
Acknowledgements
\end{center}

A dissertation is the capstone to a doctoral program, and the acknowledgements
provide a useful place to thank the countless people who have helped out along
the way, both personally and professionally.

First, of course, I thank my advisor, John Baez. It is hard to imagine a better
advisor, and no one deserves more credit for my mathematical and professional
growth during this program. ``Thanks'' does not seem sufficient, but it is all
I have to give.

Also deserving special mention is John's collaborator, James Dolan. I am
convinced there is no subject in mathematics for which Jim does not have some
deep insight, and I thank him for sharing a few of these insights with me.
Together, John and Jim are an unparalleled team: there are no two better people
with whom to talk about mathematics, and no two people more awake to the joy of
mathematics.

I would also like to thank Geoffrey Dixon, Tevian Dray, Robert Helling, Corinne
Manogue, Chris Rogers, Hisham Sati, James Stasheff, and Riccardo Nicoletti for
helpful conversations and correspondence.  I especially thank Urs Schreiber for
many discussions of higher gauge theory and $L_\infty$-superalgebras, smooth
$\infty$-groups, and supergeometry. 

In addition, I thank Vyjayanthi Chari and Stefano Vidussi for serving on my
committee. I thank Mike Stay for providing some beautiful diagrams formatted in
TikZ. I thank An Huang, Theo Johnson-Freyd and David Speyer for catching some
errors. This work was partially supported by the FQXi grant RFP2-08-04, and by
a 2008--2009 Research Mentorship Fellowship from UC Riverside. I owe FQXi and
UC Riverside thanks for their support.

Many of the people I name above are not just colleagues, but close personal
friends. Besides those above, I would also like to thank Jack Bennett, Alex
Hoffnung, Felix Marhuenda-Donate, Rob Niemeyer, Tim Ridenour and Jonathan
Sarhad, among many others, for making graduate school so much fun. Finally, I
would like to thank my girlfriend, Jen Crowley, for being a wonderful, loving
companion.

\newpage

\null \vfill
{\large
\begin{center}
For my parents, Eliseo and Marion Huerta.
\end{center}}
\vfill \null

\newpage

\singlespacing
\begin{centering}
ABSTRACT OF THE \documentTypeCAPS\\ 

\vspace{0.7in}

\thesisTitle\\

\vspace{0.3in}

by\\

\vspace{0.3in}

\authorName\\

\vspace{0.4in}

\degreeTitle, Graduate Program in Mathematics\\
University of California, Riverside, \degreeMonth { } \degreeYear\\
\committeeChair, Chairperson\\

\end{centering}

\vspace{0.7in}

From the four normed division algebras---the real numbers, complex numbers,
quaternions and octonions, of dimension $k=1$, 2, 4 and 8, respectively---a
systematic procedure gives a 3-cocycle on the Poincar\'e superalgebra in
dimensions $k+2$=3, 4, 6 and 10, and a 4-cocycle on the Poincar\'e superalgebra in
dimensions $k+3$=4, 5, 7 and 11. The existence of these cocycles follow from
spinor identities that hold only in these dimensions, and which are closely
related to the existence of the superstring in dimensions $k+2$, and the
super-2-brane in dimensions $k+3$.

In general, an $(n+1)$-cocycle on a Lie superalgebra yields a `Lie
$n$-superalgebra': that is, roughly, an $n$-term chain complex equipped with a
bracket satisfying the axioms of a Lie superalgebra up to chain homotopy. We
thus obtain Lie 2-superalgebras extending the Poincar\'e superalgebra in
dimensions $k+2$, and Lie 3-superalgebras extending the Poincar\'e superalgebra
in dimensions $k+3$. We present evidence, based on the work of Sati, Schreiber
and Stasheff, that these Lie $n$-superalgebras describe infinitesimal `higher
symmetries' of the superstring and 2-brane.

Generically, integrating a Lie $n$-superalgebra to a Lie $n$-supergroup yields
a `Lie $n$-supergroup' that is hugely infinite-dimensional. However, when the
Lie $n$-superalgebra is obtained from an $(n+1)$-cocycle on a nilpotent Lie
superalgebra, there is a geometric procedure to integrate the cocycle to one on
the corresponding nilpotent Lie supergroup. 

In general, a smooth $(n+1)$-cocycle on a supergroup yields a `Lie
n-supergroup': that is, a weak $n$-group internal to supermanifolds. Using our
geometric procedure to integrate the 3-cocycle in dimensions $k+2$, we obtain a
Lie 2-supergroup extending the Poincar\'e supergroup in those dimensions, and
similarly integrating the 4-cocycle in dimensions $k+3$, we obtain a Lie
3-supergroup extending the Poincar\'e supergroup in those dimensions.

\tableofcontents
\listoftables

\clearpage
\pagenumbering{arabic}
\addtocounter{page}{8}

\setlength\footnotesep{1.5em} 

\chapter{Introduction} \label{ch:intro}

\section{Overview}

There is a deep relationship between supersymmetry, division algebras, and higher
gauge theory. In this thesis, we begin to tell this story: how division
algebras give rise to higher infinitesimal symmetries of strings and membranes,
modeled by a generalization of a Lie algebra called a `Lie $n$-algebra', and
how this infinitesimal picture can be integrated to global one, with higher
symmetries modeled by a `Lie $n$-group'. In this overview, we want to take the
opportunity to explain the big picture, postponing references until the next
section.

From a physical perspective, gauge theory is the geometric language which
allows us to describe how point particles change as they move through
spacetime. \emph{Higher} gauge theory is a generalization which describes how
strings and membranes change as they move through spacetime. 

We can view higher gauge theory as a categorification of gauge theory, which is
intuitively clear from the diagrams we use to describe higher categories: as a
particle moves through spacetime from point $x$ to point $y$, it sweeps out a
worldline $\gamma$ that we can view as a morphism from $x$ to $y$ in a certain
category:
\[ \xy
	(-8,0)*+{\bullet}="4";
	(8,0)*+{\bullet}="6";
	(-12,0)*+{x};
	(12,0)*+{y};
	{\ar@/^1.65pc/^\gamma "4";"6"};
\endxy .
\]
The job of a connection in gauge theory is to assign to $\gamma$ an element
$\hol(\gamma)$ in the gauge group which describes how the state of our particle
changes as it moves along $\gamma$.

Boosting up a dimension, when a string moves through spacetime, it sweeps out a
worldsheet $\Sigma$, which we can view as a 2-morphism in a certain 2-category:
\[
	\xy
	(-8,0)*+{\bullet}="4";
	(8,0)*+{\bullet}="6";
	{\ar@/^1.65pc/ "4";"6"};
	{\ar@/_1.65pc/ "4";"6"};
	{\ar@{=>}^{\Sigma} (0,3)*{};(0,-3)*{}} ;
	\endxy .
\]
The job of a `2-connection' in `higher gauge theory' is to assign to $\Sigma$
an element $\hol(\Sigma)$ in the `higher gauge group' which describes how the
state of our string changes as it moves along $\Sigma$.

In practice, the strings and membranes of interest in physics are
supersymmetric, so they are called superstrings and supermembranes. This also
leads to higher gauge theory, but it goes through the normed division algebras.
There is a mysterious connection between supersymmetry and the four normed
division algebras: the real numbers, complex numbers, quaternions and
octonions.  This can be seen in super-Yang--Mills theory, in superstring
theory, and in theories of supermembranes and supergravity.  Most simply, the
connection is visible from the fact that the normed division algebras have
dimensions 1, 2, 4 and 8, while classical superstring theories and minimal
super-Yang--Mills theories live in spacetimes of dimension \emph{two} higher:
3, 4, 6 and 10.  The simplest classical super-2-brane theories make sense in
spacetimes of dimensions \emph{three} higher: 4, 5, 7 and 11.  Classical
supergravity makes sense in all of these dimensions, but the octonionic cases
are the most important: in 10 dimensions supergravity is a low-energy limit of
superstring theory, while in 11 dimensions it is believed to be a low-energy
limit of `M-theory', which incorporates the 2-brane.

These numerical relationships are far from coincidental.  They arise because we
can use the normed division algebras to construct the spacetimes in question,
as well as their associated spinors.  A certain spinor identity that holds in
dimensions 3, 4, 6 and 10 is an easy consequence of this construction, as is a
related identity that holds in dimensions 4, 5, 7 and 11.  These identities are
fundamental to the physical theories just listed.

Yet these identities have another interpretation: they are cocycle conditions
in Lie superalgebra cohomology for suitably chosen Lie superalgebras. We can
use them to categorify the infinitesimal symmetries of spacetime, or rather its
supersymmetric analog, superspacetime. This gives rise to Lie 2-superalgebras
and Lie 3-superalgebras.

Thanks to work by Hisham Sati, Urs Schreiber and Jim Stasheff, we expect that
generalized connections valued in these Lie 2- and 3-algebras will incorporate
fields of interest to string theory and supergravity. However, these
generalized connections are described in terms of infinitesimal data, because
Lie $n$-superalgebras are infinitesimal objects. We would like to know the global
story, so we want to integrate these to Lie $n$-supergroups.

Given a Lie $n$-algebra, there is a general technique, due to Getzler and
Henriques, to build a Lie $n$-group which integrates it. Usually, these are
hugely infinite-dimensional. For instance, if $\g$ is the finite-dimensional
Lie algebra of a simply-connected, finite-dimensional Lie group $G$, applying
the construction of Getzler and Henriques yields not $G$, but a simplicial
Banach manifold which is infinite-dimensional at almost every level.

Fortuantely, our Lie $n$-algebras are special. The cocycles which define them
are defined on nilpotent Lie subsuperalgebras, and these can be integrated
using a geometric method to smooth cocycles on the corresponding Lie
supergroups. So we obtain Lie $n$-supergroups which are finite-dimensional, and
even algebraic.  We expect that studying the higher gauge theory of these Lie
$n$-supergroups will yield important results for physics.

\section{Introduction} \label{sec:intro}

The relationship between division algebras and supersymmetry can be seen in
super-Yang--Mills theory, in superstring theory, and in theories of
supermembranes and supergravity.  Most simply, the connection is visible from
the fact that the normed division algebras have dimensions 1, 2, 4 and 8, while
classical superstring theories and minimal super-Yang--Mills theories live in
spacetimes of dimension \emph{two} higher: 3, 4, 6 and 10.  The simplest
classical super-2-brane theories make sense in spacetimes of dimensions
\emph{three} higher: 4, 5, 7 and 11.  Classical supergravity makes sense in all
of these dimensions, but the octonionic cases are the most important: in 10
dimensions supergravity is a low-energy limit of superstring theory, while in
11 dimensions it is believed to be a low-energy limit of `M-theory', which
incorporates the 2-brane.

As we noted in our overview, these numerical relationships are far from
coincidental.  They arise because we can use the normed division algebras to
construct the spacetimes in question, as well as their associated spinors. In a
bit more detail, suppose $\K$ is a normed division algebra of dimension $k$.
There are just four examples:
\begin{itemize}
\item the real numbers $\R$ ($k = 1$),
\item the complex numbers $\C$ ($k = 2$),
\item the quaternions $\H$ ($k = 4$),
\item the octonions $\O$ ($k = 8$).
\end{itemize}
Then we can identify vectors in $(k+2)$-dimensional Minkowski
spacetime with $2 \times 2$ hermitian matrices having entries in $\K$.
Similarly, we can identify spinors with elements of $\K^2$.  Matrix
multiplication then gives a way for vectors to act on spinors.  There
is also an operation that takes two spinors $\psi$ and $\phi$ and
forms a vector $[\psi, \phi]$.  Using elementary properties of
normed division algebras, we can prove that
\[               [\psi, \psi] \psi = 0 . \]
Following Schray \cite{Schray}, we call this identity the `3-$\psi$'s rule'.
This identity is an example of a `Fierz identity'---roughly, an identity that
allows one to reorder multilinear expressions made of spinors. This can be made
more visible in the 3-$\psi$'s rule if we polarize the above cubic form to
extract a genuinely trilinear expression:
\[               [\psi , \phi] \chi + [\phi , \chi] \psi + [\chi , \psi] \phi = 0 . \]

In fact, the 3-$\psi$'s rule holds \emph{only} when Minkowski spacetime has
dimension 3, 4, 6 or 10.  Moreover, it is crucial for super-Yang--Mills theory
and superstring theory in these dimensions.  In minimal super-Yang--Mills
theory, we need the 3-$\psi$'s rule to check that the Lagrangian is
supersymmetric, thanks to an argument we will review in Chapter \ref{ch:sym}.
In superstring theory, we need it to check the supersymmetry of the
Green--Schwarz Lagrangian \cite{GreenSchwarz, GreenSchwarzWitten}.  But the
3-$\psi$'s rule also has a deeper significance, which we study here.

This deeper story involves not only the 3-$\psi$'s rule but also the
`4-$\Psi$'s rule', a closely related Fierz identity required for super-2-brane
theories in dimensions 4, 5, 7 and 11.  To help the reader see the forest for
the trees, we present a rough summary of this story in the form of a recipe:

\begin{enumerate}

	\item Spinor identities that come from division algebras are
		\emph{cocycle conditions}.

	\item The corresponding cocycles allow us to extend the Poincar\'e
		Lie superalgebra to a higher structure, a \emph{Lie
		$n$-superalgebra}.

	\item Connections valued in these Lie
		$n$-superalgebras describe the \emph{field content} of
		superstring and super-2-brane theories.

\end{enumerate}

To begin our story in dimensions 3, 4, 6 and 10, let us first introduce some
suggestive terminology: we shall call $[\psi, \phi]$
the \define{bracket of spinors}. This is because this function is symmetric,
and it defines a Lie superalgebra structure on the supervector space
\[ \T = V \oplus S \]
where the even subspace $V$ is the vector representation of
$\Spin(k+1,1)$, while the odd subspace $S$ is a certain spinor
representation.  This Lie superalgebra is called the
\define{supertranslation algebra}.

There is a cohomology theory for Lie superalgebras, sometimes called
Chevalley--Eilenberg cohomology.  The cohomology of $\T$ will play a
central role in what follows.  Why?  First, because the 3-$\psi$'s rule is
really a cocycle condition, for a 3-cocycle $\alpha$ on $\T$ which eats 
two spinors and a vector and produces a number as follows:
\[ \alpha(\psi, \phi, A) = \langle \psi, A \phi \rangle . \]
Here, $\langle -,- \rangle$ is a pairing between spinors.  
Since this 3-cocycle is Lorentz-invariant, it extends to a cocycle
on the Poincar\'e superalgebra
\[    \siso(k+1,1) \iso \so(k+1,1) \ltimes \T  . \]
In fact, we obtain a nonzero element of the third cohomology of
the Poincar\'e superalgebra this way.

Just as 2-cocycles on a Lie superalgebra give ways of extending it to larger
Lie superalgebras, $(n+1)$-cocycles give extensions to \emph{Lie
$n$-superalgebras}.  To understand this, we need to know a bit about
$L_\infty$-algebras \cite{MSS,SS}.  An $L_\infty$-algebra is a chain complex
equipped with a structure like that of a Lie algebra, but where the laws hold
only `up to $d$ of something'.  A Lie $n$-algebra is an $L_\infty$-algebra in
which only the first $n$ terms are nonzero.  All these ideas also have `super'
versions.  In general, an $\h$-valued $(n+1)$-cocycle $\omega$ on $\g$ is a
linear map:
\[ \Lambda^{n+1} \g \to \h \] 
satisfying a certain equation called a `cocycle condition'. We can use an
$\h$-valued $(n+1)$-cocycle $\omega$ on a Lie superalgebra $\g$ to extend $\g$
to a Lie $n$-superalgebra of the following form:
\[ \g \stackrel{d}{\longleftarrow} 0 \stackrel{d}{\longleftarrow} \cdots \stackrel{d}{\longleftarrow} 0 \stackrel{d}{\longleftarrow} \h . \]
Here, $\g$ sits in degree 0 while $\h$ sits in degree $n-1$. We call Lie
$n$-superalgebras of this form `slim Lie $n$-superalgebras', and denote them by
$\brane_\omega(\g,\h)$. 

In particular, we can use the 3-cocycle $\alpha$ to extend $\siso(k+1,1)$ to a
slim Lie 2-superalgebra of the following form:
\[ \xymatrix{ \siso(k+1, 1) & \R \ar[l]_<<<<<d } . \]
We call this the `superstring Lie 2-superalgebra', and denote it as 
$\superstring(k+1,1)$. The superstring Lie 2-superalgebra is an extension of
$\siso(k+1,1)$ by $\b\R$, the Lie 2-algebra with $\R$ in degree 1 and
everything else trivial.  By `extension', we mean that there is a short exact
sequence of Lie 2-superalgebras:
\[         0 \to b\R \to \superstring(k+1,1) \to \siso(k+1,1) \to 0 . \]
To see precisely what this means, let us expand it a bit.  Lie
2-superalgebras are 2-term chain complexes, and writing these
vertically, our short exact sequence looks like this:
\[ 
\def\objectstyle{\scriptstyle}
\xymatrix { 
            0 \ar[r] & \R \ar[r] \ar[d]_d & \R \ar[r] \ar[d]_d    & 0 \ar[r] \ar[d]_d     & 0 \\
            0 \ar[r] & 0 \ar[r]         & \siso(k+1,1) \ar[r] & \siso(k+1,1) \ar[r] & 0 \\
} 
\]
In the middle, we see $\superstring(k+1,1)$.  This Lie 2-superalgebra
is built from two pieces: $\siso(k+1,1)$ in degree $0$ and $\R$ in
degree $1$.  But since the cocycle $\alpha$ is nontrivial, these two
pieces still interact in a nontrivial way.  Namely, the Jacobi
identity for three 0-chains holds only up to $d$ of a 1-chain.  So,
besides its Lie bracket, the Lie 2-superalgebra $\superstring(k+1,1)$
also involves a map that takes three 0-chains and gives a 1-chain.
This map is just $\alpha$.

What is the superstring Lie 2-algebra good for?  The answer lies in a
feature of string theory called the `Kalb--Ramond field', or `$B$
field'.  The $B$ field couples to strings just as the $A$ field in
electromagnetism couples to charged particles.  The $A$ field is
described locally by a 1-form, so we can integrate it over a
particle's worldline to get the interaction term in the Lagrangian for
a charged particle.  Similarly, the $B$ field is described locally by
a 2-form, which we can integrate over the worldsheet of a string.

Gauge theory has taught us that the $A$ field has a beautiful
geometric meaning: it is a connection on a $\U(1)$ bundle over
spacetime.  What is the corresponding meaning of the $B$ field?  It
can be seen as a connection on a `$\U(1)$ gerbe': a gadget like a
$\U(1)$ bundle, but suitable for describing strings instead of point
particles.  Locally, connections on $\U(1)$ gerbes can be identified
with 2-forms.  But globally, they cannot.  The idea that the $B$ field
is a $\U(1)$ gerbe connection is implicit in work going back at least
to the 1986 paper by Gawedzki \cite{Gawedzki}.  More recently, Freed
and Witten \cite{FreedWitten} showed that the subtle difference
between 2-forms and connections on $\U(1)$ gerbes is actually crucial
for understanding anomaly cancellation.  In fact, these authors used
the language of `Deligne cohomology' rather than gerbes.  Later work
made the role of gerbes explicit: see for example Carey, Johnson and
Murray \cite{CareyJohnsonMurray}, and also Gawedzki and Reis
\cite{GawedzkiReis}.

More recently still, work on higher gauge theory has revealed that the
$B$ field can be viewed as part of a larger package.  Just as gauge
theory uses Lie groups, Lie algebras, and connections on bundles to
describe the parallel transport of point particles, higher gauge
theory generalizes all these concepts to describe parallel transport
of extended objects such as strings and membranes
\cite{BaezHuerta:invitation, BaezSchreiber}.  In particular, Schreiber, Sati and
Stasheff \cite{SSS} have developed a theory of `$n$-connections'
suitable for describing parallel transport of objects with
$n$-dimensonal worldvolumes.  In their theory, the Lie algebra of the
gauge group is replaced by a Lie $n$-algebra---or in the
supersymmetric context, a Lie $n$-superalgebra.  Applying their ideas
to $\superstring(k+1,1)$, we get a 2-connection which can be described
locally using the following fields:
\vskip 1em
\begin{center}
\begin{tabular}{cc}
	\hline
	$\superstring(k+1,1)$ & Connection component \\
	\hline
	$\R$                  & $\R$-valued 2-form \\
	$\downarrow$          & \\
	$\siso(k+1,1)$        & $\siso(k+1,1)$-valued 1-form \\
	\hline
\end{tabular}
\end{center}
\vskip 1em
The $\siso(k+1,1)$-valued 1-form consists of three fields which help
define the background geometry on which a superstring propagates: the
Levi-Civita connection $A$, the vielbein $e$, and the gravitino
$\psi$.  But the $\R$-valued 2-form is equally important in the
description of this background geometry: it is the $B$ field!

Alas, this is only part of the story. Rather than building $\superstring(k+1,1)$
with the cocycle $\alpha$, quantum considerations indicate we should really use
a certain linear combination of $\alpha$ and the canonical 3-cocycle on
$\so(k+1,1)$. This canonical 3-cocycle can be defined on any simple Lie algebra.
It comes from combining the Killing form with the bracket:
\[ j = \langle -, [-,-] \rangle . \]
To ensure that certain quantum `anomalies' cancel, we need to replace $\alpha$
with the linear combination:
\[ \frac{1}{32} j + \frac{1}{2} \alpha . \]
These coefficients come from careful analysis of the anomalies associated with
superstring theory. See the paper by Bonora et al.\ and the references therein
\cite{Bonora}.

We choose, however, to focus on $\alpha$. This simplifies our later work, and
because Lie 2-algebras based on $j$ have already been the subject of much
scrutiny \cite{BaezCrans, BCSS, Henriques, SchommerPries}, it should be
possible to combine what we do here with the work of other authors to arrive at
a more complete picture.

Next let us extend these ideas to Minkowski spacetimes one dimension higher:
dimensions 4, 5, 7 and 11.  In this case a certain subspace of $4 \times 4$
matrices with entries in $\K$ will form the vector representation of
$\Spin(k+2, 1)$, while $\K^4$ will form a spinor representation.  As before,
there is a `bracket' operation that takes two spinors $\Psi$ and $\Phi$ and
gives a vector $[\Psi , \Phi]$.  As before, there is an action of vectors on
spinors.  This time the 3- $\psi$'s rule no longer holds:
\[ [\Psi , \Psi] \Psi \neq 0 . \]
However, we show that
\[ [\Psi , [\Psi , \Psi] \Psi] = 0 . \]
We call this the `4-$\Psi$'s rule'.  This identity plays a basic role 
for the super-2-brane, and related theories of supergravity. The explicit
relationship between this identity and the division algebras goes back to Foot
and Joshi \cite{FootJoshi}.

Once again, the bracket of spinors defines a Lie superalgebra structure on the
supervector space
\[ \T = \V \oplus \S \]
where now $\V$ is the vector representation of $\Spin(k+2,1)$, while $\S$ is a
certain spinor representation of this group.  Once again, the cohomology of
$\T$ plays a key role.  The 4-$\Psi$'s rule is a cocycle condition---but this
time for a 4-cocycle $\beta$ which eats two spinors and two vectors and
produces a number as follows:
\[ \beta(\Psi, \Phi, \A, \B) = \langle \Psi, (\A \wedge \B) \Phi \rangle . \]
Here, $\langle -, - \rangle$ denotes the inner product of two spinors, and the
bivector $\A \wedge \B$ acts on $\Phi$ via the usual Clifford action. Since
$\beta$ is Lorentz-invariant, we shall see that it extends to a 4-cocycle on
the Poincar\'e superalgebra $\siso(k+2,1)$.

We can use $\beta$ to extend the Poincar\'e superalgebra to a Lie
3-superalgebra of the following form: 
\[ \xymatrix{ \siso(k+2, 1) & 0 \ar[l]_<<<<<d & \R \ar[l]_d } . \] 
We call this the `2-brane Lie 3-superalgebra', and denote it as
$\mbox{2-}\brane(k+1,1)$.   It is an extension of $\siso(k+2,1)$ by $\b^2\R$,
the Lie 3-algebra with $\R$ in degree 2, and everything else trivial.  In other
words, there is a short exact sequence: 
\[ 0 \to \b^2\R \to \mbox{2-}\brane(k+2,1) \to \siso(k+2,1) \to 0 . \]
Again, to see what this means, let us expand it a bit.  Lie 3-superalgebras are
3-term chain complexes.  Writing out each of these vertically, our short exact
sequence looks like this:
\[ 
\def\objectstyle{\scriptstyle}
\xymatrix { 
            0 \ar[r]  & \R \ar[r] \ar[d]_d & \R \ar[r] \ar[d]_d    & 0 \ar[r] \ar[d]_d     & 0  \\
            0 \ar[r]  & 0 \ar[r] \ar[d]_d  & 0 \ar[r] \ar[d]_d     & 0 \ar[r] \ar[d]_d     & 0  \\
            0 \ar[r]        & 0 \ar[r]         & \siso(k+2,1) \ar[r] & \siso(k+2,1) \ar[r] & 0 \\
} \]
In the middle, we see 2-$\brane(k+2,1)$.

The most interesting Lie 3-algebra of this type, 2-$\brane(10,1)$, plays an
important role in 11-dimensional supergravity.  This idea goes back to the work
of Castellani, D'Auria and Fr\'e \cite {TheCube, DAuriaFre}.  These authors
derived the field content of 11- dimensional supergravity starting from a
differential graded-commutative algebra.  Later, Sati, Schreiber and Stasheff
\cite{SSS} explained that these fields can be reinterpreted as a 3-connection
valued in a Lie 3-algebra which they called `$\sugra(10,1)$'.  This is the Lie
3-algebra we are calling 2-$\brane(10,1)$.  One of our messages here is that
the all-important cocycle needed to construct this Lie 3-algebra arises
naturally from the octonions, and has analogues for the other normed division
algebras.

If we follow these authors and consider a 3-connection valued in
2-$\brane(10,1)$, we find it can be described locally by these fields:
\vskip 1em
\begin{center}
\begin{tabular}{cc}
	\hline
	2-$\brane(k+2,1)$ & Connection component \\
	\hline
	$\R$            & $\R$-valued 3-form \\
	$\downarrow$    & \\
	$0$             & \\
	$\downarrow$    & \\
	$\siso(k+2,1)$  & $\siso(k+2,1)$-valued 1-form \\
	\hline
\end{tabular}
\end{center}
\vskip 1em
Again, a $\siso(k+2,1)$-valued 1-form contains familiar fields: the Levi-Civita
connection, the vielbein, and the gravitino.  But now we also see a 3-form,
called the $C$ field.  This is again something we might expect on physical
grounds, at least in dimension 11.  While the case is less clear than in string
theory, it seems that for the quantum theory of a 2-brane to be consistent, it
must propagate in a background obeying the equations of 11-dimensional
supergravity, in which the $C$ field naturally shows up \cite{Tanii}.  The work
of Diaconescu, Freed, and Moore \cite{DiaconescuFreedMoore}, as well as that of
Aschieri and Jurco \cite{AschieriJurco}, is also relevant here.

So far, we have focused on Lie 2- and 3-algebras and generalized connections
valued in them. This connection data is infinitesimal: it tells us how to
parallel transport strings and 2-branes a little bit. Ultimately, we would like
to understand this parallel transport globally, as we do with particles in
ordinary gauge theory.

To achieve this global description, we will need `Lie $n$-groups' rather than
Lie $n$-algebras. Naively, one expects a Lie 2-supergroup $\Superstring(k+1,1)$
for which the Lie 2-super\-algebra $\superstring(k+1,1)$ is the infinitesimal
approximation, and similarly a Lie 3-supergroup \break $\Twobrane(k+2,1)$ for
which the Lie 3-superalgebra $\twobrane(k+1,1)$ is an infinitesimal
approximation.  In fact, this is precisely what we will construct.

In order to `integrate' Lie $n$-algebras to obtain Lie $n$-groups, we will have
to overcome two obstacles: how does one define a Lie $n$-group? And, how does
one integrate a Lie $n$-algebra to a Lie $n$-group? To answer the former
question, we take a cue from Baez and Lauda's definition of Lie 2-group: it is
a categorified Lie group, a `weak 2-category' with one object with a manifold
of weakly associative and weakly invertible morphisms, a manifold of strictly
associative and strictly invertible 2-morphisms, and all structure maps smooth.
While this definition is known to fall short in important ways, it has the
virtue of being fairly simple. Ultimately, one should use an alternative
definition, like that of Henriques \cite{Henriques} or Schommer-Pries
\cite{SchommerPries}, which weakens the notion of product on a group: rather
than an algebraic operation in which there is a unique product of any two group
elements, `the' product is defined only up to equivalence.

So, roughly speaking, a Lie $n$-group should be a `weak $n$-category' with one
object, a manifold of weakly invertible morphisms, a manifold of weakly
invertible 2-morphisms, and so on, up to a manifold of strictly invertible
$n$-morphisms. To make this precise, however, we need to get very precise about
what a `weak $n$-category' is, which becomes more complicated as $n$ gets
larger. We therefore limit ourselves to the tractable cases of $n=2$ and 3. We
further limit ourselves to what we call a `slim Lie $n$-group', at least for
$n=2$ and 3.

A `slim Lie 2-group' is what Baez and Lauda call a `special Lie 2-group': it is
a skeletal bicategory with one object, a Lie group $G$ of morphisms, a Lie group $G
\ltimes H$ of 2-morphisms, and the group axioms hold strictly \emph{except for
associativity}---there is a nontrivial 2-morphism called the `associator':
\[ a(g_1,g_2,g_3) \maps (g_1g_2)g_3 \Rightarrow g_1(g_2g_3) . \]
The associator, in turn, satisfies the `pentagon identity', which says the
following pentagon commutes:
\[
\xy
 (0,20)*+{(g_1  g_2)  (g_3  g_4)}="1";
 (40,0)*+{g_1  (g_2  (g_3  g_4))}="2";
 (25,-20)*{ \quad g_1  ((g_2  g_3)  g_4)}="3";
 (-25,-20)*+{(g_1  (g_2  g_3))  g_4}="4";
 (-40,0)*+{((g_1  g_2)  g_3)  g_4}="5";
 {\ar@{=>}^{a(g_1,g_2,g_3  g_4)}     "1";"2"}
 {\ar@{=>}_{1_{g_1}  a(g_2,g_3,g_4)}  "3";"2"}
 {\ar@{=>}^{a(g_1,g_2  g_3,g_4)}    "4";"3"}
 {\ar@{=>}_{a(g_1,g_2,g_3)  1_{g_4}}  "5";"4"}
 {\ar@{=>}^{a(g_1  g_2,g_3,g_4)}    "5";"1"}
\endxy
\]
We shall see that this identity forces $a$ to be a 3-cocycle on the Lie group
$G$ of morphisms. We denote the Lie 2-group of this from by $\String_a(G,H)$.

Likewise, a `slim Lie 3-group' is a skeletal tricategory with one object, with
a Lie group $G$ of morphisms, trivial 2-morphisms, and a Lie group $G \ltimes
H$ of 3-morphisms. The associator is necessarily trivial, because it is a
2-morphism:
\[ a(g_1, g_2, g_3) \maps (g_1 g_2) g_3 \stackrel{1}{\Longrightarrow} g_1 (g_2 g_3) , \]
However, it does not satisfy the pentagon identity! There is a nontrivial
3-morphism called the `pentagonator':
\[
\xy
 (0,20)*+{(g_1  g_2)  (g_3  g_4)}="1";
 (40,0)*+{g_1  (g_2  (g_3  g_4))}="2";
 (25,-20)*{ \quad g_1  ((g_2  g_3)  g_4)}="3";
 (-25,-20)*+{(g_1  (g_2  g_3))  g_4}="4";
 (-40,0)*+{((g_1  g_2)  g_3)  g_4}="5";
     {\ar@{=>}^{a(g_1,g_2,g_3 g_4)}     "1";"2"}
     {\ar@{=>}_{1_{g_1} \cdot a_(g_2,g_3,g_4)}  "3";"2"}
     {\ar@{=>}^{a(g_1,g_2 g_3,g_4)}    "4";"3"}
     {\ar@{=>}_{a(g_1,g_2,g_3) \cdot 1_{g_4}}  "5";"4"}
     {\ar@{=>}^{a(g_1 g_2,g_3,g_4)}    "5";"1"}
     {\ar@3{->}^{\pi(g_1,g_2,g_3,g_4)} (-2,5)*{}; (-2,-5)*{} }
\endxy
\]
This 3-morphism satisfies an identity of its own, called the `pentagonator
identity'.  Similar to the case with the slim Lie 2-group $\String_a(G,H)$, the
pentagonator identity forces $\pi$ to be a Lie group 4-cocycle on $G$.

Moreover, we can generalize all of this to obtain Lie 2-supergroups and Lie
3-supergroups from 3- and 4-cocycles on Lie supergroups. In general, we expect
that any supergroup $(n+1)$-cocycle $f$ gives rise to a slim $n$-supergroup,
$\Brane_f(G,H)$, though this cannot be made precise without being more definite
about $n$-categories for higher $n$. 

Nonetheless, the precise examples of Lie 2- and 3-groups suggest a strong
parallel to the way Lie algebra $(n+1)$-cocycles give rise to Lie $n$-algebras.
And this parallel suggests a naive scheme to integrate Lie $n$-algebras. Given
a slim Lie $n$-superalgebra $\brane_\omega(\g,\h)$, we seek a slim Lie
$n$-supergroup $\Brane_f(G,H)$ where:
\begin{itemize}
	\item $G$ is a Lie supergroup with Lie superalgebra $\g$; i.e.\, it is a Lie
		supergroup integrating $\g$,
	\item $H$ is a Lie supergroup with Lie superalgebra $\h$; i.e.\, it is a Lie
		supergroup integrating $\h$,
	\item $f$ is a Lie supergroup $(n+1)$-cocycle on $G$ that, in some suitable sense,
		integrates the Lie superalgebra $(n+1)$-cocycle $\omega$ on $\g$.
\end{itemize}
Admittedly, we only define $\Brane_f(G,H)$ precisely when $n=2$ or $3$, but
that will suffice to handle our cases of interest, $\superstring(k+1,1)$ and
$\twobrane(k+2,1)$.

Unfortunately, this naive scheme fails to work even for well-known examples of
slim Lie 2-algebras, such as the the string Lie 2-algebra $\strng(n)$. In this
case, we can:
\begin{itemize}
	\item integrate $\so(n)$ to $\Spin(n)$ or $\SO(n)$,
	\item integrate $\R$ to $\R$ or $\U(1)$, 
	\item but there is no hope to integrate $\omega$ to a nontrivial
		$(n+1)$-cocycle $f$ on $\SO(n)$ or $\Spin(n)$, because compact
		Lie groups admit \emph{no nontrivial smooth cocycles}.
\end{itemize}
Really, this failure is a symptom of the fact that our definition of Lie
$n$-group is oversimplified. There are more sophisticated approaches to
integrating the string Lie 2-algbera, like those due to Baez, Crans, Schreiber
and Stevenson \cite{BCSS} or Schommer-Pries \cite{SchommerPries}, and a general
technique to integrate any Lie $n$-algebra due to Henriques \cite{Henriques},
which Schreiber \cite{Schreiber} has in turn generalized to handle Lie
$n$-superalgebras and more.  All three techniques involve generalizing the
notion of Lie 2-group (or Lie $n$-group, for Henriques and Schreiber) away from
the world of finite-dimensional manifolds, and the latter two generalize the
notion of 2-group to one in which products are defined only up to equivalence.

Given this history, it is remarkable that the naive scheme we outlined for
integration actually works for the Lie $n$-superalgebras we really care
about---namely, the superstring Lie 2-algbera and the super-2-brane Lie
3-algebra. Moreover, this is not some weird quirk unique to these special
cases, but the result of a beautiful geometric procedure for integrating Lie
algebra cocycles defined on a \emph{nilpotent} Lie algebra. Originally invented
by Houard \cite{Houard}, we generalize this technique to the case of nilpotent
Lie superalgebras and supergroups. 

Finally, we mention another use for the cocycles $\alpha$ and $\beta$.  These
cocycles are also used to build Wess--Zumino--Witten terms for superstrings and
2-branes. For example, in the case of the string, one can extend the string's
worldsheet to be the boundary of a three-dimensional manifold, and then
integrate $\alpha$ over this manifold.  This provides an additional term for
the action of the superstring, a term that is required to give the action
Siegel symmetry, balancing the number of bosonic and fermionic degrees of
freedom. For the 2-brane, the Wess--Zumino--Witten term is constructed in
complete analogy---we just `add one' to all the dimensions in sight
\cite{AchucarroEvans,Duff}.

Indeed, the network of relationships between supergravity, string and 2-brane
theories, and cocycles constructed using normed division algebras is extremely
tight.  The Siegel symmetry of the string or 2-brane action constrains the
background of the theory to be that of supergravity, at least in dimensions 10
and 11 \cite{Tanii}, and without the WZW terms, there would be no Siegel
symmetry.  The WZW terms rely on the cocycles $\alpha$ and $\beta$.  These
cocycles also give rise to the Lie 2- and 3-superalgebras $\superstring(9,1)$
and 2-$\brane(10,1)$.  And these, in turn, describe the field content of
supergravity in these dimensions!

As further grist for this mill, WZW terms can also be viewed in the context of
higher gauge theory.  In string theory, the WZW term is the holonomy of a
connection on a $\U(1)$ gerbe \cite{GawedzkiReis}.  Presumably the WZW term in
a 2-brane theory is the holonomy of a connection on a $\U(1)$ 2-gerbe
\cite{Stevenson}.  This is a tantalizing clue that we are at the beginning of a
larger but ultimately simpler story.

\section{Plan of the thesis}

The focus of this thesis is not on the applications to physics that we sketched
in the Introduction, but on constructing Lie $n$-superalgebras from division
algebras, and integrating these Lie $n$-superalgebras to Lie $n$-supergroups.
We organize the thesis as follows:

\begin{itemize}

	\item In Chapter \ref{ch:geometry}, we give a review of the needed
		facts about normed division algebras, and apply the division
		algebras to construct vectors and spinors in spacetimes of
		certain dimension. We conclude by using these constructions to
		prove certain spinor identities needed for supersymmetric
		physics.

	\item In Chapter \ref{ch:supertranslations}, we introduce the algebra
		underlying supersymmetry: super vector spaces and Lie
		superalgebras. We construct some important examples of Lie
		superalgebras: the supertranslation algebras, $\T$, using
		division algebras. We give a well-known generalization of
		Chevalley--Eilenberg cohomology to Lie superalgebras, and prove
		that the supertranslation algebras admit nontrivial cocycles
		thanks to the spinor identities from the previous chapter.

	\item In Chapter \ref{ch:sym}, we take a break from the larger story to
		discuss super-Yang--Mills theory. We prove the supersymmetry of
		super-Yang--Mills theory in spacetime dimensions 3, 4, 6 and
		10, using the division algebras.

	\item In Chapter \ref{ch:Lie-n-superalgebras}, we describe how a Lie
		superalgebra $(n+1)$-cocycle on $\g$ gives rise to a Lie
		$n$-superalgebra which extends $\g$. We use this general
		construction to build several important examples of Lie 2- and
		3-superalgebras: the well-known string Lie 2-algebra
		$\strng(n)$ extending $\so(n)$, the Heisenberg Lie 2-algebra
		extending the Heisenberg Lie algebra, the superstring Lie
		2-algebra $\superstring(k+1,1)$ and the super-2-brane Lie
		3-algebra $\twobrane(k+2,1)$, both extending the Poincar\'e
		superalgebras $\siso(k+1,1)$ and $\siso(k+2,1)$, respectively.
		
	\item In Chapter \ref{ch:Lie-n-groups}, we describe Lie group
		cohomology based on smooth group cochains. We define Lie
		$n$-groups for $n = 2$ and 3, using bicategories and
		tricategories internal to the category of smooth manifolds. We
		sketch how a Lie group $(n+1)$-cocycle on $G$ gives rise
		to a Lie $n$-group which extends $G$, and give a full
		construction for $n = 2$ and 3.

	\item In Chapter \ref{ch:integrating}, we apply a little-known geometric
		technique to integrate nilpotent Lie $n$-algebras to Lie
		$n$-groups, by integrating Lie algebra $(n+1)$-cocycles to Lie
		group $(n+1)$-cocycles. We compute some examples for 2-step
		nilpotent Lie algebras, and conclude with by constructing the
		Heisenberg Lie 2-group from the Lie 2-algebra.

	\item In Chapter \ref{ch:supergeometry}, we introduce a little
		supergeometry. Specifically, we sketch the definition of supermanifold, and
		discuss the functor of points approach to studying these
		spaces. We describe how to get a supermanifold from any super
		vector space, and show the corresponding functor of points is
		especially simple. We then describe how to integrate a
		nilpotent Lie superalgebra to a supergroup.

	\item In Chapter \ref{ch:Lie-n-supergroups}, we generalize everything
		from Chapter \ref{ch:Lie-n-groups} to the super setting. We
		describe Lie supergroup cohomology, and we define Lie
		$n$-supergroups for $n=2$ and 3, using bicategories and
		tricategories internal to the category of supermanifolds.

	\item In Chapter \ref{ch:integrating2}, we generalize everything from
		Chapter \ref{ch:integrating} to the super setting. We show how
		to integrate nilpotent Lie $n$-superalgebras to Lie
		$n$-supergroups, by integrating Lie superalgebra
		$(n+1)$-cocycles to Lie supergroup $(n+1)$-cocycles. This is
		done using the functor of points.

	\item Finally, in Chapter \ref{ch:finale}, we apply the results of the
		previous chapter to integrate \\ $\superstring(k+1,1)$ and
		$\twobrane(k+2,1)$ to Lie $n$-supergroups,
		$\Superstring(k+1,1)$ and
		$\Twobrane(k+2,1)$. We conclude with some remarks about where
		these results could lead, and the next steps in this research
		program.

\end{itemize}

\section{Prior work}

Portions of this thesis are adapted from two papers coauthored with my advisor,
John Baez, called  ``Supersymmetry and division algebras I and II''
\cite{BaezHuerta:susy1, BaezHuerta:susy2}. Specifically, Sections
\ref{sec:divalg}, \ref{sec:k+2}, \ref{sec:superalgebra}, and Chapter
\ref{ch:sym} are adapted from the first paper \cite{BaezHuerta:susy1}, Sections
\ref{sec:intro}, \ref{sec:k+3}, \ref{sec:cohomology}, the beginning of Chapter
\ref{ch:Lie-n-superalgebras} and Section \ref{sec:n-brane-algs} are adapted
from the second paper \cite{BaezHuerta:susy2}, and Section \ref{sec:spinors}
combines related results from both papers.

\chapter{Spacetime geometry from division algebras} \label{ch:geometry}

In this chapter, we begin to explore the relationship between:
\begin{itemize}
	\item Normed division algebras of dimension $k = 1, 2, 4$ and 8.
	\item Superstring theories in spacetimes of dimension $k+2 = 3, 4, 6$ and 10.
	\item Super-2-brane theories in spacetimes of dimension $k+3 = 4, 5, 7$ and 11.
\end{itemize}
Physically, a supersymmetric theory requires the use of vector representations
of the Lorentz group to describe its bosonic degrees of freedom, and the spinor
representations of the Lorentz group to describe its fermionic degrees of
freedom. In this chapter, we will show that a normed division algebra $\K$ of
dimension $k$ can be used to construct vectors and spinors in $k+2$ and $k+3$
dimensions.

First, let us describe the most general situation. Let $V$ be a real vector
space equipped with a nondegenerate quadratic form, $|\cdot|^2$. The group $\Spin(V)$,
the double-cover of $\SO_0(V)$, acts on $V$ as the symmetries of $|\cdot|^2$. We say
that $V$ is the \define{vector representation} of $\Spin(V)$, and call its
elements \define{vectors}.

We can also construct representations $\Spin(V)$ by considering the Clifford
algebra, $\Cliff(V)$. This is the associative algebra generated by $V$ for
which elements $A \in V$ square to their norm:
\[ \Cliff(V) = TV \big / A^2 \sim |A|^2, \]
where $TV$ denotes the tensor algebra on $V$. Because the \define{Clifford
relation} $A^2 = |A|^2$ respects the parity of the number of vectors in any
expression, the Clifford algebra is $\Z_2$-graded:
\[ \Cliff(V) = \Cliff_0(V) \oplus \Cliff_1(V). \]
We call $\Cliff_0(V)$ and $\Cliff_1(V)$ the \define{even part} and \define{odd
part} of $\Cliff(V)$, respectively. $\Cliff_0(V)$ is the subalgebra of
$\Cliff(V)$ generated by products of pairs of vectors, while $\Cliff_1(V)$ is a
mere subspace of $\Cliff(V)$, spanned by products of odd numbers of vectors. 

It is well-known that $\Spin(V)$ lives inside $\Cliff_0(V)$. This is the group
generated by products of pairs of \define{unit vectors}: vectors $A$ for which
$|A|^2 = \pm 1$. So, we can consider representations of $\Spin(V)$ that come
from modules of $\Cliff_0(V)$. Such a representation is called a
\define{spinor representation} of $\Spin(V)$, and its elements are called
\define{spinors}. The algebra $\Cliff_0(V)$ turns out to be either a matrix
algebra or the sum of two matrix algebras, so there are either two irreducible
spinor representations, $S_+$ and $S_-$, or just one, $S$. In this latter case,
let us define $S_+ = S_- = S$, so that we may use uniform notation throughout.
For a wonderfully clear introduction to Clifford algebras, including a complete
classification, see the text of Porteous \cite{Porteous}.

Since there are many different modules of $\Cliff_0(V)$, there are many
different spinor representations. Physicists distinguish some of them with
special names like `Majorana spinors' or `Weyl spinors', and we will see
some examples of these below. We do not, however, need to define these terms
precisely, because such distinctions are only important for comparing our work
to the literature. Instead, we shall see how to handle all the vectors and
spinors we need in a uniform way using normed division algebras.

So far, we have said nothing that depends on the dimension of the space of
vectors, $V$. In some some dimensions, special phenomena occur, thanks to the
existence of the normed division algebras.  A \define{normed division algebra}
is a real, possibly nonassociative algebra $\K$ with 1, equipped with a norm
$|\cdot|$ satisfying
\[ |ab| = |a||b|. \]
As with the complex numbers, this norm can be expressed using conjugation:
$|a|^2 = aa^* = a^*a$, where $* \maps \K \to \K$ is a suitable involution. By a
classic theorem of Hurwitz~\cite{Hurwitz}, there are only four
finite-dimensional normed division algebras: the real numbers, $\R$, the
complex numbers, $\C$, the quaternions, $\H$, and the octonions, $\O$.  These
algebras have dimension 1, 2, 4, and 8.  Only the octonions are
nonassociative, but mildly so: they are \define{alternative}, meaning that the
subalgebra generated by any two elements is associative.

One can use the theory of Clifford algebras to prove that normed division
algebras can only occur in these dimensions. This is a two-way street, however,
and we will traverse it the other way, using the division algebras to better
understand objects that are usually only studied with Clifford algebras:
vectors and spinors. For a division algebra $\K$ of dimension $k$, we will
mainly be interested in the vectors and spinors in Minkowski spacetime of
dimension $k+2$ or $k+3$, but we can get a taste for how this works just by
considering Euclidean space of dimension $k$. 

In this case, something remarkable happens. Namely, we can identify the vector
and irreducible spinor representations with the division algebra itself:
\[ V = \K, \quad S_+ = \K, \quad S_- = \K. \]
Because each of these representations is just $\K$ in disguise, there is
an obvious way for a vector to act on a spinor: multiplication! We define:
\[
\begin{array}{cccc}
	\cdot \maps & V \tensor S_+ & \to & S_- \\
	            & A \tensor \psi & \mapsto & A\psi 
\end{array}
\]
and 
\[
\begin{array}{cccc}
	\cdot \maps & V \tensor S_- & \to & S_+ \\
	            & A \tensor \phi & \mapsto & A^* \phi .
\end{array}
\]
Because the action of $V$ swaps the spinor spaces, it preserves their direct
sum, $S_+ \oplus S_-$. Acting on this latter space with the same vector twice,
we get:
\begin{eqnarray*}
	A \cdot A \cdot (\psi, \phi) & = & A \cdot (A^* \phi, A \psi) \\
	                             & = & ( A^* A \psi, A A^* \phi) \\
				     & = & |A|^2 (\psi, \phi).
\end{eqnarray*}
Note that nonassociativity poses no problem for us in the above calculation,
thanks to alternativity: everything in sight takes place in the subalgebra
generated by only two elements, $A$ and $\psi$.

Now, the above equation is the Clifford relation: acting twice by $A$ is the
same as multiplying by $|A|^2$. Thus the map $V \tensor (S_+ \oplus S_-) \to S_+
\oplus S_-$ induces a homomorphism:
\[ \Cliff(V) \to \End(S_+ \oplus S_-). \]
In this way, $S_+ \oplus S_-$ becomes a module of $\Cliff(V)$. Because acting
by vectors swaps $S_+$ and $S_-$, both of these subspaces are preserved by the
subalgebra $\Cliff_0(V)$ generated by products of pairs of vectors, and in this
way they become representations of $\Spin(V)$.

We thus see how the vectors and spinors in $k$-dimensional Euclidean space are
both just elements in the division algebra $\K$, albeit with different actions
of $\Spin(V)$. We can view this as a mathematical signpost that supersymmetry
is possible: physically, vectors and spinors are used to describe bosons and
fermions, so the fact that both vectors and spinors lie in division algebra in
dimension $k$ suggests there is a great deal of symmetry between bosons and
fermions in dimension $k$. Such a symmetry is precisely what supersymmetry was
invented to provide.

There is much more to this story even in Euclidean signature. But we are
interested in physics, so having had a brief taste of Euclidean space, we now
turn to Minkowski spacetime. First, in Section \ref{sec:divalg}, we review the basic
facts we need about normed division algebras. Then we develop vectors and
spinors for $(k+2)$-dimensional spacetime in Section \ref{sec:k+2}, and for
$(k+3)$-dimensional spacetime in Section \ref{sec:k+3}.

\section{Normed division algebras} \label{sec:divalg}

As we note above, in 1898 Hurwitz~\cite{Hurwitz} proved there are only four
finite-dimensional normed division algebras: the real numbers, $\R$, the
complex numbers, $\C$, the quaternions, $\H$, and the octonions, $\O$, with
dimensions  1, 2, 4, and 8, respectively. Decades later, in 1960, Urbanik and
Wright \cite{UrbanikWright} removed the finite-dimensionality condition from
this result.  For an overview of this subject, including a Clifford algebra
proof of Hurwitz's theorem, see the review by Baez \cite{Baez:Octonions}. In
this section, we focus on the tools we will need to study vectors and spinors
with division algebras later in this chapter.

Recall, a \define{normed division algebra} $\K$ is a (possibly nonassociative)
real algebra equipped with a multiplicative unit 1 and a norm $| \cdot |$
satisfying:
\[ |ab| = |a| |b|  \]
for all $a, b \in \K$.  Note this implies that $\K$ has no zero divisors.
We will freely identify $\R 1 \subseteq \K$ with $\R$.

In all cases, this norm can be defined using conjugation. Every normed
division algebra has a \define{conjugation} operator---a linear
operator $* \maps \K \to \K$ satisfying
\[ a^{**} = a, \quad (ab)^* = b^* a^* \]
for all $a,b \in \K$.   Conjugation lets us decompose each element of
$\K$ into real and imaginary parts, as follows:
\[ \Re(a) = \frac{a + a^*}{2}, \quad \Im(a) = \frac{a - a^*}{2}. \]
Conjugating changes the sign of the imaginary part and leaves the real part
fixed. We can write the norm as
\[ |a| = \sqrt{a a^*} = \sqrt{a^* a}. \]
This norm can be polarized to give an inner product on $\K$:
\[ (a, b) = \Re(a b^*) = \Re(a^* b). \]

The algebras $\R$, $\C$ and $\H$ are associative.  The octonions $\O$
are not.  Yet they come close: the subalgebra generated by any two
octonions is associative.  Another way to express this fact uses the
\define{associator}:
\[ [ a, b, c ] = (ab)c - a(bc), \]
a trilinear map $\K \tensor \K \tensor \K \to \K$.  A theorem due to
Artin \cite{Schafer} states that for any algebra, the subalgebra
generated by any two elements is associative if and only if the
associator is alternating (that is, completely antisymmetric in its
three arguments).  An algebra with this property is thus called
\define{alternative}.  The octonions $\O$ are alternative, and 
so of course are $\R$, $\C$ and $\H$: for these three the associator
simply vanishes!  

In what follows, our calculations make heavy use of the fact that
all four normed division algebras are alternative.  Besides this, the
properties we require are:

\begin{prop}
The associator changes sign when one of its entries is conjugated.
\end{prop}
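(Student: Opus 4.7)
The plan is to reduce the claim to a single position (say, the first argument) by exploiting the alternating property of the associator, and then use the decomposition of an element into its real and imaginary parts.

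First I would observe that, since $\K$ is an algebra over $\R$ with unit, scalar multiplication by reals is central and associative: for any $r \in \R$ and $b, c \in \K$ we have $(rb)c = r(bc) = b(rc)$, so $[r,b,c] = 0$, and similarly for the other slots. Combined with alternativity (the associator is completely antisymmetric), this means the associator vanishes whenever one of its arguments lies in $\R \subseteq \K$.

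Next, using the decomposition $a = \Re(a) + \Im(a)$ with $a^* = \Re(a) - \Im(a)$, I would write
\[ a^* = 2\Re(a) - a, \]
and then by trilinearity of the associator,
\[ [a^*, b, c] = 2[\Re(a), b, c] - [a, b, c] = -[a, b, c], \]
because the first term vanishes by the remark above. So conjugating the first entry flips the sign.

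Finally, I would handle the other two entries using the alternating property: for instance,
\[ [a, b^*, c] = -[b^*, a, c] = -(-[b, a, c]) = [b,a,c] = -[a,b,c], \]
and an analogous swap handles the third entry. The main (mild) obstacle is really just setting up the decomposition cleanly and invoking that $\R$ is central in the sense required; once that is in hand, the computation is immediate. Note that no appeal to associativity of $\K$ itself is needed — only alternativity and the fact that $\R \subseteq \K$ lies in the nucleus.
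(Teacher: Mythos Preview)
Your proof is correct and follows essentially the same approach as the paper: both reduce to showing that the associator vanishes when any entry is real, then use the real/imaginary decomposition (you via $a^* = 2\Re(a) - a$, the paper via $[a,b,c] = [\Im(a),\Im(b),\Im(c)]$). The only cosmetic difference is that you justify $[r,b,c]=0$ directly from reals lying in the nucleus of an $\R$-algebra, whereas the paper uses the two-generator characterization of alternativity; your observation is in fact slightly more elementary, since it does not require alternativity at all for that step.
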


\begin{proof}
Since the subalgebra generated by any two elements is associative, and
real elements of $\K$ lie in every subalgebra, $[a,b,c] = 0$ if any
one of $a,b,c$ is real.  It follows that $[a, b, c] = 
[ \Im(a), \Im(b), \Im(c)]$, which yields the desired result.
\end{proof}

\begin{prop}
\label{prop:imaginaryassociator} 
The associator is purely imaginary. 
\end{prop}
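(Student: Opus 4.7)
The plan is to show $[a,b,c]^* = -[a,b,c]$, which by definition means $\Re[a,b,c] = 0$ and hence that the associator is purely imaginary. The basic idea is to combine the antihomomorphism property of conjugation, the previous proposition (which says conjugating any single entry flips the sign of the associator), and alternativity.

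First I would compute directly from the definition of the associator:
\[ [a,b,c]^* = ((ab)c - a(bc))^* = c^*(b^*a^*) - (c^*b^*)a^* = -[c^*, b^*, a^*], \]
using $(xy)^* = y^* x^*$ and rearranging the minus sign. Next, I would apply the previous proposition three times, once for each of the three conjugated entries, picking up an overall factor of $(-1)^3$:
\[ [c^*, b^*, a^*] = -[c, b, a]. \]
Finally, alternativity tells us that swapping the first and third arguments of the associator flips the sign, so $[c,b,a] = -[a,b,c]$. Chaining these gives $[a,b,c]^* = -[c^*,b^*,a^*] = [c,b,a] = -[a,b,c]$, as required.

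There is no real obstacle here; the argument is essentially bookkeeping with signs, and the only subtlety is making sure to apply the previous proposition to all three entries independently (which is legitimate since it is a statement about conjugating \emph{any one} entry, and the three conjugations can be performed one at a time). Alternativity is what guarantees the associator is alternating, which is what lets us use the swap $[c,b,a] = -[a,b,c]$ in the last step; this is the only place alternativity enters, and it is essential.
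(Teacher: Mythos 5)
Your proof is correct and follows essentially the same route as the paper's: both start from $[a,b,c]^* = -[c^*,b^*,a^*]$ and then combine alternativity (to reverse the argument order) with three applications of the sign-flip-under-conjugation proposition; you merely perform these two steps in the opposite order. The sign bookkeeping checks out and the conclusion $[a,b,c]^* = -[a,b,c]$ indeed gives $\Re[a,b,c]=0$.
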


\begin{proof}
Since $(ab)^* = b^* a^*$, a calculation shows $[a,b,c]^* = -[c^*,b^*,a^*]$.  
By alternativity this equals $[a^*,b^*,c^*]$, which in
turn equals $-[a,b,c]$ by the above proposition.  So, $[a,b,c]$ is
purely imaginary.
\end{proof}

For any square matrix $A$ with entries in $\K$, we define its
\define{trace} $\tr(A)$ to be the sum of its diagonal entries. This
trace lacks the usual cyclic property, because $\K$ is noncommutative,
so in general $\tr(AB) \neq \tr(BA)$.  Luckily, taking the real part
restores this property:

\begin{prop}
Let $a$, $b$, and $c$ be elements of $\K$. Then
\[ \Re((ab)c) = \Re(a(bc)) \]
and this quantity is invariant under cyclic permutations of $a$, $b$, and
$c$.
\end{prop}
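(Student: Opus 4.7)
The first equality is immediate from Proposition \ref{prop:imaginaryassociator}. Since $(ab)c - a(bc) = [a,b,c]$ is purely imaginary, its real part vanishes, and so $\Re((ab)c) = \Re(a(bc))$.

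For the cyclic invariance, the plan is to combine this first equality with the auxiliary identity $\Re(xy) = \Re(yx)$ for all $x, y \in \K$. To establish this auxiliary identity, I would decompose $x = \Re(x) + \Im(x)$ and similarly for $y$. The purely real terms commute trivially, and the mixed real-imaginary terms are the same whether we multiply in one order or the other, so the only nontrivial comparison is between $\Im(x)\Im(y)$ and $\Im(y)\Im(x)$. Using $\Im(x)^* = -\Im(x)$ and the analogous identity for $y$, a short calculation gives $\Re(\Im(x)\Im(y)) = \tfrac{1}{2}\bigl(\Im(x)\Im(y) + \Im(y)\Im(x)\bigr)$, which is manifestly symmetric in $x$ and $y$.

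With these two tools in hand, cyclic invariance follows immediately. Applying $\Re(xy) = \Re(yx)$ with $x = ab$ and $y = c$ gives $\Re((ab)c) = \Re(c(ab))$, and then the first equality of the proposition, applied to the triple $(c,a,b)$, yields $\Re(c(ab)) = \Re((ca)b)$. The remaining rotation $\Re((ab)c) = \Re((bc)a)$ is handled analogously, using $\Re(xy) = \Re(yx)$ with $x = a$ and $y = bc$. The only delicate point is nonassociativity in the octonionic case, but this is harmless here: the auxiliary identity $\Re(xy) = \Re(yx)$ involves only single products, and the first equality of the proposition is precisely the place where nonassociativity is allowed to intrude, since it costs us only a purely imaginary correction that disappears upon taking $\Re$.
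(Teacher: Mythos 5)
Your proof is correct and follows essentially the same route as the paper: the first equality comes from the purely imaginary associator, and cyclic invariance is reduced to the auxiliary identity $\Re(xy) = \Re(yx)$. The only (minor) difference is how that auxiliary identity is established---you decompose into real and imaginary parts and use $\Im(x)^* = -\Im(x)$, whereas the paper deduces it from the symmetry of the inner product $(a,b) = \Re(ab^*)$ by substituting $b^*$ for $b$; both are valid one-line arguments.
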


\begin{proof} 
Proposition~\ref{prop:imaginaryassociator} implies
that $\Re((ab)c) = \Re(a(bc))$. For the cyclic property, it 
then suffices to prove $\Re(ab) = \Re(ba)$.  Since $(a,b) = (b,a)$ and 
the inner product is defined by $(a, b) = \Re(ab^*) = \Re(a^* b)$, we see:
\[ \Re(ab^*) = \Re(b^*a). \]
The desired result follows upon substituting $b^*$ for $b$. 
\end{proof}

\begin{prop}
\label{prop:realtrace}
Let $A$, $B$, and $C$ be $k \times \ell$, $\ell \times m$ and
$m \times k$ matrices with entries in $\K$. Then
\[ \Retr((AB)C) = \Retr(A(BC)) \]
and this quantity is invariant under cyclic permutations of $A$, $B$,
and $C$.  We call this quantity the \define{real trace} $\Retr(ABC)$.
\end{prop}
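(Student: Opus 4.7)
The plan is to reduce the matrix statement to the scalar statement that was just proved, by simply expanding everything in terms of matrix entries and using the fact that taking $\Re$ of a product of three elements of $\K$ is associative and cyclic.

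First I would write out $\Retr((AB)C)$ as a sum over indices. If $A$ is $k \times \ell$, $B$ is $\ell \times m$ and $C$ is $m \times k$, then
\[
\Retr((AB)C) \;=\; \sum_{i,j,n} \Re\bigl((A_{in} B_{nj}) C_{ji}\bigr),
\]
where $i$ ranges over rows of $A$, $n$ over the contracted $A$-$B$ index, and $j$ over the contracted $B$-$C$ index. The same expansion for $\Retr(A(BC))$ gives $\sum_{i,j,n} \Re\bigl(A_{in}(B_{nj} C_{ji})\bigr)$. By the previous proposition applied to the three scalars $A_{in}, B_{nj}, C_{ji} \in \K$, each summand agrees, so the two traces are equal.

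For the cyclic property, I would use the scalar cyclic identity $\Re((ab)c) = \Re((bc)a) = \Re((ca)b)$ term by term inside the triple sum. Cycling $A,B,C \mapsto B,C,A$ in the trace amounts to replacing $(A_{in}, B_{nj}, C_{ji})$ with $(B_{nj}, C_{ji}, A_{in})$ and then relabeling the summation indices; because the scalar real part is cyclic, each summand is preserved, and the relabeling simply reorders the same sum. Thus $\Retr((AB)C) = \Retr((BC)A) = \Retr((CA)B)$.

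No step here is really the ``main obstacle''---everything is driven by the previous proposition. The only mild subtlety is keeping track of nonassociativity when $\K = \O$: the parenthesizations $(AB)C$ and $A(BC)$ genuinely differ as matrix products, which is why we must invoke the scalar identity $\Re((ab)c) = \Re(a(bc))$ (a consequence of the associator being purely imaginary) rather than a naive associativity argument. Once that identity is in hand, the index-level computation above makes both the equality of parenthesizations and the cyclicity immediate, justifying the unambiguous notation $\Retr(ABC)$.
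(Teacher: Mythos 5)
Your proof is correct and is exactly what the paper's one-line proof means when it says the result ``follows from the previous proposition and the definition of the trace'': expand the traces entrywise and apply the scalar identity $\Re((ab)c)=\Re(a(bc))$ and its cyclic invariance to each summand. No issues.
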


\begin{proof}
This follows from the previous proposition and the definition of 
the trace.
\end{proof}

\section{Spacetime geometry in \emph{k}+2 dimensions} \label{sec:k+2} \label{sec:intertwiners}

We shall now see how to construct vectors and spinors for spacetimes of
dimension $k+2$ from a normed division algebra $\K$ of dimension $k$.
Most of the material for the here is well-known \cite{Baez:Octonions,
ChungSudbery, KugoTownsend,  ManogueSudbery, Sudbery}. We base our approach to
it on the papers of Manogue and Schray \cite{Schray, SchrayManogue}.  The key
facts are that one can describe vectors in $(k+2)$-dimensional Minkowski
spacetime as $2 \times 2$ hermitian matrices with entries in $\K$, and spinors
as elements of $\K^2$.  In fact there are two representations of $\Spin(k+1,1)$
on $\K^2$, which we call $S_+$ and $S_-$.  The nature of these representations
depends on $\K$:
\begin{itemize}
    \item When $\K = \R$, $S_+ \iso S_-$ is the Majorana spinor
        representation of $\Spin(2,1)$.
    \item When $\K = \C$, $S_+ \iso S_-$ is the Majorana spinor
        representation of $\Spin(3,1)$. 
    \item When $\K = \H$, $S_+$ and $S_-$ are the Weyl spinor
        representations of $\Spin(5,1)$.
    \item When $\K = \O$, $S_+$ and $S_-$ are the Majorana--Weyl
        spinor representations of $\Spin(9,1)$.
\end{itemize} 
Of course, these spinor representations are also
representations of the even part of the relevant Clifford algebras:
\vskip 1em
\begin{center}
\renewcommand{\arraystretch}{1.4}
\begin{tabular}{|lcl|}
	\hline
	\multicolumn{3}{|c|}{\textbf{Even parts of Clifford algebras}} \\
	\hline
	$\Cliff_{0}(2,1)$ & $\iso$ & $\R[2]$                \\
	$\Cliff_{0}(3,1)$ & $\iso$ & $\C[2]$                \\
	$\Cliff_{0}(5,1)$ & $\iso$ & $\H[2] \oplus \H[2]$   \\
	$\Cliff_{0}(9,1)$ & $\iso$ & $\R[16] \oplus \R[16]$ \\
	\hline
\end{tabular}
\renewcommand{\arraystretch}{1}
\end{center}
Here we see $\R^2$, $\C^2$, $\H^2$ and $\O^2$ showing up as irreducible
representations of these algebras, albeit with $\O^2$
masquerading as $\R^{16}$. The first two algebras have a unique irreducible
representation.  The last two both have two irreducible representations,
which correspond to left-handed and right-handed spinors.

Our discussion so far has emphasized the differences between the 4
cases.  But the wonderful thing about normed division algebras is that
they allow a unified approach that treats all four cases
simultaneously!  They also give simple formulas for the basic
$\Spin(k+1,1)$-equivariant operators involving vectors, spinors and scalars.

To begin, let $\K[m]$ denote the space of $m \times m$ matrices with
entries in $\K$. Given $A \in \K[m]$, define its \define{hermitian adjoint}
$A^\dagger$ to be its conjugate transpose:
\[ A^\dagger = (A^*)^T. \]
We say such a matrix is \define{hermitian} if $A = A^\dagger$. Now take the
$2 \times 2$ hermitian matrices:
\[ 
\h_2(\K) = \left\{ 
\left( 
\begin{array}{c c} 
	t + x & y     \\
	y^*   & t - x \\
\end{array}
\right) 
\; : \;
t, x \in \R, \; y \in \K
\right\}.
\]
This is an $(k + 2)$-dimensional real vector space. Moreover, the
usual formula for the determinant of a matrix gives the Minkowski norm
on this vector space:
\[ 
-\det 
\left( 
\begin{array}{c c} 
	t + x & y     \\
	y^*   & t - x \\
\end{array}
\right) 
= - t^2 + x^2 + |y|^2.
\]
We insert a minus sign to obtain the signature $(k+1, 1)$. Note this
formula is unambiguous even if $\K$ is noncommutative or nonassociative.   

It follows that the double cover of the Lorentz group, $\Spin(k + 1,
1)$, acts on $\h_2(\K)$ via determinant-preserving linear
transformations.  Since this is the `vector' representation, we will
often call $\h_2(\K)$ simply $V$.  The Minkowski metric
\[           g \maps V \otimes V \to \R  \]
is given by
\[           g(A,A) = -\det(A)  .\]
There is also a nice formula for the inner product of two different
vectors.  This involves the \define{trace reversal} of $A \in \h_2(\K)$,
defined by
\[ \tilde{A} = A - (\tr A) 1. \]
Note we indeed have $\tr(\tilde{A}) = -\tr(A)$.  Also note that
\[ A =
\left( 
\begin{array}{c c} 
	t + x & y     \\
	y^*   & t - x \\
\end{array}
\right)
\qquad \implies \qquad \tilde{A} =
\left( 
\begin{array}{c c} 
	-t + x & y     \\
	y^*    & -t - x \\
\end{array}
\right) ,
\]
so trace reversal is really \emph{time reversal}.  Moreover:

\begin{prop}
\label{prop:metric}
For any vectors $A,B \in V = \h_2(K)$, we have
\[       A \tilde{A} = \tilde{A} A = - \det(A) 1 \]
and
\[  \frac{1}{2} \Retr(A \tilde{B}) = \frac{1}{2} \Retr(\tilde{A} B) = g(A,B) .  \]
\end{prop}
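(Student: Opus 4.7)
The plan is to verify the matrix identity $A\tilde A = \tilde A A = -\det(A)\,1$ by direct calculation, and then obtain the polarized statement by taking real traces.

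Write $A = \begin{pmatrix} a & b \\ b^* & d \end{pmatrix}$ with $a,d\in\R$ and $b\in\K$, so that $\tilde A = A - (a+d)\,1 = \begin{pmatrix} -d & b \\ b^* & -a \end{pmatrix}$. Multiplying out $A\tilde A$ entry by entry, the diagonal entries work out to $-ad + bb^*$ and $b^*b - ad$, each equal to $-(ad - |b|^2) = -\det(A)$. The off-diagonal entries take the form $ab - ba$ and $b^*d - db^*$. Since $a,d$ are \emph{real}, they are central in $\K$, so these vanish without any need to invoke associativity. The same calculation, with the factors reversed, shows $\tilde A A = -\det(A)\,1$. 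This is the first identity; note that it is precisely the analogue, in this noncommutative setting, of ``$A$ times its classical adjugate is $\det(A)$,'' and the key reason it goes through is that hermitian $2\times 2$ matrices over $\K$ have \emph{real} diagonal entries and hence a real trace.

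For the inner-product formula, take real traces of $A\tilde A = -\det(A)\,1$ to get $\Retr(A\tilde A) = -2\det(A) = 2g(A,A)$, which proves the formula on the diagonal. Apply this to $A + B$:
\[ 2 g(A+B,A+B) = \Retr\bigl((A+B)\widetilde{(A+B)}\bigr) = \Retr(A\tilde A) + \Retr(A\tilde B) + \Retr(B\tilde A) + \Retr(B\tilde B). \]
Expanding $g(A+B,A+B) = g(A,A) + 2g(A,B) + g(B,B)$ and cancelling the diagonal pieces gives $4g(A,B) = \Retr(A\tilde B) + \Retr(B\tilde A)$. So it suffices to show $\Retr(A\tilde B) = \Retr(B\tilde A)$, and separately that $\Retr(A\tilde B) = \Retr(\tilde A B)$.

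Both of these symmetries follow from the cyclic property of the real trace (Proposition \ref{prop:realtrace}), together with the fact that $\tr A$ and $\tr B$ are real. Since $\tilde B = B - (\tr B)\,1$, we have $\Retr(A\tilde B) = \Retr(AB) - (\tr A)(\tr B)$, and similarly $\Retr(B\tilde A) = \Retr(BA) - (\tr A)(\tr B)$ and $\Retr(\tilde A B) = \Retr(AB) - (\tr A)(\tr B)$; all three are equal because $\Retr(AB) = \Retr(BA)$ by Proposition \ref{prop:realtrace}. Combining these yields $g(A,B) = \tfrac12 \Retr(A\tilde B) = \tfrac12 \Retr(\tilde A B)$, as desired. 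The only subtle point throughout is the noncommutativity (and, for $\K = \O$, nonassociativity) of $\K$, but every crucial manipulation reduces to multiplying by the real diagonal entries $a,d$ or the real numbers $\tr A,\tr B$, so the calculation proceeds as if $\K$ were a commutative associative algebra.
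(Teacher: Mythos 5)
Your proof is correct and follows the same route as the paper: verify $A\tilde A = \tilde A A = -\det(A)\,1$ by direct computation, take real traces to get the quadratic form, and polarize. The only difference is that you explicitly check the symmetry $\Retr(A\tilde B) = \Retr(B\tilde A) = \Retr(\tilde A B)$ via the cyclic property of the real trace, a detail the paper's appeal to the polarization identity leaves implicit.
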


\begin{proof}
We check the first equation by a quick calculation.  Taking the real
trace and dividing by 2 gives
\[      \frac{1}{2} \Retr(A \tilde{A}) = \frac{1}{2} \Retr(\tilde{A}A) = -\det(A) = g(A,A) . \]
Then we use the polarization identity, which says that two symmetric
bilinear forms that give the same quadratic form must be equal.
\end{proof}

Next we consider spinors.  As real vector spaces, the spinor
representations $S_+$ and $S_-$ are both just $\K^2$.
However, they differ as representations of $\Spin(k+1, 1)$. To
construct these representations, we begin by defining ways for vectors
to act on spinors:
\[ \begin{array}{cccl} 
	\gamma \maps & V \tensor S_+ & \to     & S_- \\
	             & A \tensor \psi     & \mapsto & A\psi 
   \end{array} \]
and 
\[ \begin{array}{cccl} 
	\tilde{\gamma} \maps & V \tensor S_- & \to     & S_+ \\
	                     & A \tensor \psi     & \mapsto & \tilde{A}\psi .
\end{array} \]
We have named these maps for definiteness, but we will also write the action of
a vector on a spinor with a dot:
\[ A \cdot \psi, \quad \psi \in S_\pm . \]
We can also think of $\gamma$ and $\tilde{\gamma}$ as maps that send elements
of $V$ to linear operators:
\[ \begin{array}{cccl}
	\gamma \maps         & V & \to & \Hom(S_+, S_-) , \\
	\tilde{\gamma} \maps & V & \to & \Hom(S_-, S_+) .
\end{array} \]
Here a word of caution is needed: since $\K$ may be nonassociative,
$2 \times 2$ matrices with entries in $\K$ cannot be identified with 
linear operators on $\K^2$ in the usual way.  They
certainly induce linear operators via left multiplication:
\[ L_A(\psi) = A \psi. \]
Indeed, this is how $\gamma$ and $\tilde{\gamma}$ turn elements of $V$ into
linear operators:
\[ \begin{array}{ccl} 
	\gamma(A)         & = & L_A, \\
	\tilde{\gamma}(A) & = & L_{\tilde{A}} .
\end{array} \]
However, because of nonassociativity, composing such linear operators is
different from multiplying the matrices:
\[ L_A L_B(\psi) = A( B \psi) \neq (AB) \psi = L_{AB} (\psi) . \]

Since vectors act on elements of $S_+$ to give elements of $S_-$
and vice versa, they map the space $S_+ \oplus S_-$ to itself.
This gives rise to an action of the Clifford algebra 
$\Cliff(V)$ on $S_+ \oplus S_-$:

\begin{prop} \label{prop:n2cliff}
The vectors $V = \h_2(\K)$ act on the spinors $S_+ \oplus S_- = \K^2
\oplus \K^2$ via the map
\[ \Gamma \maps  V \to \End(S_+ \oplus S_-) \]
given by 
\[ \Gamma(A)(\psi, \,\phi) = (\widetilde{A} \phi, \, A \psi)  .\]
Furthermore, $\Gamma(A)$ satisfies the Clifford algebra relation:
\[ \Gamma(A)^2 = g(A,A) 1 \]
and so extends to a homomorphism $\Gamma \maps \Cliff(V) \to 
\End(S_+ \oplus S_-)$, i.e.\ a representation of the Clifford algebra 
$\Cliff(V)$ on $S_+ \oplus S_-$.
\end{prop}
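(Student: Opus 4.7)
Since vectors swap $S_+$ and $S_-$, the given formula does land in $\End(S_+\oplus S_-)$, so the only substantive claim is the Clifford relation $\Gamma(A)^2=g(A,A)\cdot 1$. The plan is to verify this identity directly and then invoke the universal property of $\Cliff(V)$. Applying $\Gamma(A)$ twice gives
\[ \Gamma(A)^2(\psi,\phi) \;=\; (\tilde A(A\psi),\;A(\tilde A\phi)), \]
so the task reduces to proving $\tilde A(A\psi)=g(A,A)\,\psi$ and $A(\tilde A\phi)=g(A,A)\,\phi$. Proposition~\ref{prop:metric} already supplies the matrix identity $\tilde A A=A\tilde A=g(A,A)\cdot 1$, so if $\K$ were associative the result would be immediate.

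The main obstacle is the nonassociativity of $\O$: a priori $\tilde A(A\psi)$ need not equal $(\tilde A A)\psi$. The key observation I would use is that a hermitian matrix $A\in\h_2(\K)$ has its four entries $t+x,\,t-x,\,y,\,y^*$ all lying inside the commutative, associative subalgebra $\R\langle y\rangle\subseteq\K$ generated by the single off-diagonal entry $y$ (since $y^*=2\Re(y)-y$), and the same is then true of $\tilde A$. Expanding a component of $\tilde A(A\psi)$ in terms of entries, every product that appears has the shape $a_1(a_2\psi_i)$ with $a_1,a_2\in\R\langle y\rangle$ and $\psi_i$ a single component of the spinor. Such a product lives in the subalgebra of $\K$ generated by the \emph{two} elements $y$ and $\psi_i$, which is associative by the alternative law (Artin's theorem, as recalled in Section~\ref{sec:divalg}); hence $a_1(a_2\psi_i)=(a_1a_2)\psi_i$. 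This lets me reassociate freely and collapse the component-wise expansion into the matrix identity, yielding $\tilde A(A\psi)=(\tilde A A)\psi=g(A,A)\psi$. The same argument, with the roles of $A$ and $\tilde A$ interchanged, handles $A(\tilde A\phi)$.

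Once the Clifford relation $\Gamma(A)^2=g(A,A)\cdot 1$ is established, the extension $\Gamma\maps\Cliff(V)\to\End(S_+\oplus S_-)$ is automatic from the universal property of $\Cliff(V)$, which presents it as the associative algebra generated by $V$ modulo the relation $A^2=g(A,A)\cdot 1$. I note in closing that $\End(S_+\oplus S_-)$ is a genuinely associative algebra regardless of whether $\K$ is, since composition of linear endomorphisms of a real vector space is always associative; this is precisely why the pointwise nonassociativity of $\K$ does no damage at the Clifford-algebra level, even though it is the central subtlety at the matrix level addressed in the previous paragraph.
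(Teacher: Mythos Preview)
Your proof is correct and follows essentially the same approach as the paper's own proof. Both arguments reduce to the observation that, after expanding in components, every triple product involves only the off-diagonal entry $y$ (together with $y^*\in\R\langle y\rangle$) and a single spinor component $\psi_i$, so alternativity (via Artin's theorem) permits the reassociation $\tilde A(A\psi)=(\tilde A A)\psi$; the paper phrases this as ``the expressions involve multiplying at most two different nonreal elements of $\K$,'' while you spell out the subalgebra $\R\langle y\rangle$ explicitly, but the content is the same.
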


\begin{proof}
Suppose $A \in V$ and $\Psi = (\psi, \phi) \in S_+ \oplus S_-$. 
We need to check that 
\[ \Gamma(A)^2(\Psi) = -\det(A) \Psi . \]
Here we must be mindful of nonassociativity: we have
\[ \Gamma(A)^2(\Psi) = ( \tilde{A}(A \psi), \, A(\tilde{A} \phi)) . \]
Yet it is easy to check that the expressions $\tilde{A}(A \psi)$ 
and $A (\tilde{A} \phi)$ involve multiplying at most two 
different nonreal elements of $\K$.
These associate, since $\K$ is alternative, so in fact
\[ \Gamma(A)^2(\Psi) = ( (\tilde{A}A) \psi, \, (A \tilde{A}) \phi) . \]
To conclude, we use Proposition \ref{prop:metric}.
\end{proof}

The action of a vector swaps $S_+$ and $S_-$, so acting by vectors
twice sends $S_+$ to itself and $S_-$ to itself.  This means that
while $S_+$ and $S_-$ are \emph{not} modules for the Clifford algebra
$\Cliff(V)$, they are both modules for the even part of the Clifford
algebra, generated by products of pairs of vectors. Recalling that 
$\Spin(k+1,1)$ lives in this even part, we see that $S_+$ and $S_-$ 
are both representations of $\Spin(k+1, 1)$.

Now that we have representations of $\Spin(k+1, 1)$ on $V$, $S_+$ and
$S_-$, we need to develop the $\Spin(k+1, 1)$-equivariant maps that
relate them.  Ultimately, we need:
\begin{itemize}
	\item An invariant pairing: 
		\[ \langle -, - \rangle \maps S_+ \tensor S_- \to \R. \]
	\item An equivariant map that turns pairs of spinors into vectors:
		\[ [-,-] \maps S_\pm \tensor S_\pm \to V. \]
\end{itemize}
Another name for an equivariant map between group representations
is an `intertwining operator'.  As a first step, we show that the 
action of vectors on spinors is itself an intertwining operator:

\begin{prop} 
The maps 
\[ \begin{array}{cccl}
	\gamma \maps & V \tensor S_+ & \to     & S_- \\
                     & A \tensor \psi     & \mapsto & A \psi
\end{array} \]
and
\[ \begin{array}{cccl}
	\tilde{\gamma} \maps & V \tensor S_- & \to     & S_+ \\
	                     & A \tensor \psi    & \mapsto & \tilde{A} \psi
\end{array} \]
are equivariant with respect to the action of $\Spin(k+1, 1)$.
\end{prop}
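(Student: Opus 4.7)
The plan is to deduce equivariance formally from the Clifford algebra homomorphism $\Gamma \maps \Cliff(V) \to \End(S_+ \oplus S_-)$ established in Proposition \ref{prop:n2cliff}. The key observation is that $\gamma$ and $\tilde\gamma$ are nothing but the two off-diagonal blocks of $\Gamma|_V$ with respect to the decomposition $S_+ \oplus S_-$, since $\Gamma(A)(\psi, \phi) = (\tilde A \phi, A \psi)$. So it is enough to show that $\Gamma|_V$ is $\Spin(k+1,1)$-equivariant, and then read off the pieces.

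Recall that $\Spin(k+1,1)$ sits inside $\Cliff_0(V)$, and a standard fact from Clifford theory is that it acts on $V \subset \Cliff(V)$ by conjugation: $g \cdot A = gAg^{-1}$ for $g \in \Spin(k+1,1)$ and $A \in V$, with $gAg^{-1}$ again lying in $V$. (This gives precisely the double cover map $\Spin(k+1,1) \to \SO_0(k+1,1)$.) At the same time, $g$ acts on $S_\pm$ via $\Gamma(g)|_{S_\pm}$; the restriction makes sense because $g \in \Cliff_0(V)$ preserves the decomposition $S_+ \oplus S_-$.

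With these conventions fixed, the main step is a one-line calculation. Using the homomorphism property of $\Gamma$,
\[ \Gamma(g \cdot A) \;=\; \Gamma(gAg^{-1}) \;=\; \Gamma(g)\,\Gamma(A)\,\Gamma(g)^{-1}. \]
Apply both sides to $\Gamma(g)(\psi, 0) = (g\cdot\psi,\, 0)$ for $\psi \in S_+$. The left-hand side, by the block form of $\Gamma|_V$, yields $\bigl(0,\, \gamma(g\cdot A)(g\cdot \psi)\bigr)$. On the right-hand side, the factor $\Gamma(g)^{-1}\Gamma(g)$ collapses to the identity, leaving $\Gamma(g)\,\Gamma(A)(\psi, 0) = \Gamma(g)\bigl(0,\, \gamma(A)\psi\bigr) = \bigl(0,\, g\cdot \gamma(A)\psi\bigr)$. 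Comparing the $S_-$ components gives $\gamma(g\cdot A)(g\cdot\psi) = g \cdot \gamma(A)\psi$, which is exactly the equivariance of $\gamma$. Feeding $(0,\phi) \in S_-$ into the same identity instead gives the equivariance of $\tilde\gamma$.

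The main obstacle, if there is one, is purely bookkeeping: keeping straight that $g$ acts on $V$ by Clifford conjugation while it acts on $S_\pm$ via $\Gamma(g)|_{S_\pm}$, and confirming that these match the $\Spin(k+1,1)$-actions used in the statement. Once that is settled, equivariance is a formal consequence of $\Gamma$ being an algebra homomorphism. Nonassociativity of $\K$ causes no difficulty here because all computations take place inside the associative algebra $\End(S_+ \oplus S_-)$ through $\Gamma$.
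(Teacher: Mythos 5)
Your proof is correct and follows essentially the same route as the paper: both reduce to the equivariance of $\Gamma$ on $S_+ \oplus S_-$, which is the identity $\Gamma(gAg^{-1})\Gamma(g)\Psi = \Gamma(g)(\Gamma(A)\Psi)$ obtained from the homomorphism property of $\Gamma$ together with the conjugation action of $\Spin(k+1,1)$ on $V \subseteq \Cliff(V)$. Your extra care in reading off the off-diagonal blocks is a harmless elaboration of the paper's remark that $\gamma$ and $\tilde\gamma$ are restrictions of $\Gamma$.
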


\begin{proof}
Both $\gamma$ and $\tilde{\gamma}$ are restrictions of the map \[ \Gamma \maps
V \tensor (S_+ \oplus S_-) \to S_+ \oplus S_- ,\] so it suffices to check that
$\Gamma$ is equivariant. Indeed, an element $g \in \Spin(k+1,1)$ acts on $V$ by
conjugation on $V \subseteq \Cliff(V)$, and it acts on $S_+ \oplus S_-$ by
$\Gamma(g)$. Thus, we compute:
\[ \Gamma(gAg^{-1}) \Gamma(g)\Psi = \Gamma(g) (\Gamma(A) \Psi), \]
for any $\Psi \in S_+ \oplus S_-$. Here it is important to note that the
conjugation $gAg^{-1}$ is taking place in the associative algebra $\Cliff(V)$,
not in the algebra of matrices.  This equation says that $\Gamma$ is indeed
$\Spin(k+1,1)$-equivariant, as claimed.  
\end{proof}

Now we exhibit the key tool: the pairing between $S_+$ and $S_-$:

\begin{prop}
The pairing
\[ \begin{array}{cccl}
\langle -, - \rangle \maps & S_+ \tensor S_- & \to     & \R \\
 & \psi \tensor \phi & \mapsto & \Re(\psi^\dagger \phi)
\end{array} \]
is invariant under the action of $\Spin(k+1,1)$.
\end{prop}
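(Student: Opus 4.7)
The plan is to use the embedding $\Spin(k+1,1) \hookrightarrow \Cliff_0(k+1,1)$ to reduce the invariance statement to a single-vector identity. The group $\Spin(k+1,1)$ is generated by even products $A_1 A_2 \cdots A_{2r}$ of vectors in $V$ with $|A_i|^2 = \pm 1$ and $\prod_i |A_i|^2 = +1$. By Proposition \ref{prop:n2cliff}, such a product acts on $S_+$ as the alternating composite $L_{\tilde{A}_1} L_{A_2} L_{\tilde{A}_3} \cdots L_{A_{2r}}$ and on $S_-$ as the composite with $A_i$ and $\tilde{A}_i$ interchanged. So invariance will follow by iterating a single-vector identity $2r$ times along the chain.

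The key identity I would prove first is: for any vector $A \in V$ and spinors $\psi \in S_+$, $\phi \in S_-$,
\[ \Re\bigl((A\psi)^\dagger (\tilde{A}\phi)\bigr) \;=\; g(A,A)\, \Re(\psi^\dagger \phi), \]
together with the mirror identity obtained by swapping the roles of $\gamma$ and $\tilde{\gamma}$. To establish it, I would first note that $A \in \h_2(\K)$ is hermitian, so a short coordinate check gives $(A\psi)^\dagger = \psi^\dagger A$ as row vectors. Next, I would expand $(\psi^\dagger A)(\tilde{A}\phi)$ as a coordinate sum of triple products of the form $(\psi_j^* A_{ji})(\tilde{A}_{ik}\phi_k)$ and apply the cyclic property of $\Re$ on triples in $\K$ from Section \ref{sec:divalg} term by term, re-associating to reach $\Re\bigl(\psi^\dagger \cdot A(\tilde{A}\phi)\bigr)$. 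Finally, the alternativity argument used in the proof of Proposition \ref{prop:n2cliff} shows $A(\tilde{A}\phi) = (A\tilde{A})\phi$, since each component of this vector involves at most two independent nonreal elements of $\K$, and Proposition \ref{prop:metric} identifies $A\tilde{A} = -\det(A)\,I = g(A,A)\,I$.

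Iterating the identity $2r$ times down the chain then produces an overall factor of $\prod_i g(A_i, A_i) = \prod_i |A_i|^2 = +1$ by the defining constraint on elements of $\Spin(k+1,1)$, yielding invariance. The main obstacle is bookkeeping the nonassociativity of $\K$ when $\K = \O$: the product $(\psi^\dagger A)(\tilde{A}\phi)$ mixes together independent nonreal contributions from $\psi$, $A$, and $\phi$, so naive re-association fails, and the argument succeeds only because associators in $\K$ are purely imaginary (Proposition \ref{prop:imaginaryassociator}) and therefore drop out upon taking real parts. This same marriage of alternativity with $\Re$-cyclicity is what powered the Clifford relation in Proposition \ref{prop:n2cliff}.
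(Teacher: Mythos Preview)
Your proposal is correct and essentially mirrors the paper's own proof: the paper proves the same single-vector identity $\langle \tilde{\gamma}(A)\phi, \gamma(A)\psi \rangle = g(A,A)\langle \psi,\phi\rangle$ by writing $(\tilde{A}\phi)^\dagger(A\psi) = (\phi^\dagger\tilde{A})(A\psi)$, re-associating under $\Re$ via the purely-imaginary-associator fact (equivalently Proposition~\ref{prop:realtrace} applied to the matrix triple $\phi^\dagger,\tilde{A},A\psi$), and then invoking $\tilde{A}(A\psi)=g(A,A)\psi$ from alternativity, before iterating over pairs of unit vectors. Your coordinate description of ``triple products'' is slightly imprecise---each term $(\psi_j^* A_{ji})(\tilde{A}_{ik}\phi_k)$ has four $\K$-factors, and what you are really using is Proposition~\ref{prop:realtrace} on the matrix triple $(\psi^\dagger,A,\tilde{A}\phi)$---but the substance is identical to the paper's argument.
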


\begin{proof}
Given $A \in V$, we use the fact that the associator is purely
imaginary to show that 
\[ 
\Re \left( ( \tilde{A} \phi )^\dagger ( A \psi ) \right) = 
\Re \left( (\phi^\dagger \tilde{A}) (A \psi) \right) = 
\Re \left( \phi^\dagger (\tilde{A} (A \psi)) \right).
\]
As in the proof of the Clifford relation, it is easy to check that the column
vector $\tilde{A}(A \psi)$ involves at most two nonreal elements of $\K$ and
equals $g(A,A) \psi$. So:
\[ 
\langle \tilde{\gamma}(A) \phi, \gamma(A) \psi \rangle = g(A,A) \langle \psi, \phi \rangle.
 \]
In particular when $A$ is a unit vector, acting by $A$ swaps the order of
$\psi$ and $\phi$ and changes the sign at most.  The pairing $\langle -, -
\rangle$ is thus invariant under the group in $\Cliff(V)$ generated by products
of \emph{pairs} of unit vectors, which is $\Spin(k+1, 1)$.
\end{proof}

With this pairing in hand, there is a manifestly equivariant way to turn a pair
of spinors into a vector.  Given $\psi, \phi \in S_+$, there is a unique vector
$[\psi, \phi]$ whose inner product with any vector $A$ is given by
\[ g([\psi, \phi], A) = \langle \psi, \gamma(A) \phi \rangle .\]
Similarly, given $\psi, \phi \in S_-$, we define $[\psi , \phi] \in V$ by
demanding
\[ g([\psi, \phi], A) = \langle \tilde{\gamma}(A) \psi, \phi \rangle \]
for all $A \in V$.  This gives us maps
\[ S_\pm \tensor S_\pm \to V \]
which are manifestly equivariant.

On the other hand, because $S_\pm = \K^2$ and $V = \h_2(\K)$, there is
also a naive way to turn a pair of spinors into a vector using matrix
operations: just multiply the column vector $\psi$ by the row vector
$\phi^\dagger$ and then take the hermitian part:
\[ \psi \phi^\dagger + \phi \psi^\dagger \in \h_2(\K), \]
or perhaps its trace reversal:
\[ \widetilde{\psi \phi^\dagger + \phi \psi^\dagger} \in \h_2(\K). \]
In fact, these naive guesses match the manifestly equivariant approach
described above:

\begin{prop}
The maps $[-,-] \maps S_\pm \tensor S_\pm \to V$ are given by:
\[ \begin{array}{cccl}
	[-,-] \maps & S_+ \tensor S_+   & \to     & V \\
                    & \psi \tensor \phi & \mapsto & \widetilde{\psi \phi^\dagger + \phi \psi^\dagger}
\end{array} \]
\[ \begin{array}{cccl}
	[-,-] \maps & S_- \tensor S_-   & \to     & V \\
                    & \psi \tensor \phi & \mapsto & \psi \phi^\dagger + \phi \psi^\dagger .
\end{array} \]
These maps are equivariant with respect to the action of $\Spin(k+1, 1)$.
\end{prop}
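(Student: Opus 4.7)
The plan is to verify the two displayed formulas by pairing each candidate vector against an arbitrary test vector $C \in V$, using Proposition \ref{prop:metric} to convert $g(-,-)$ into a real trace and then using the cyclicity of the real trace from Proposition \ref{prop:realtrace}. Equivariance then follows formally from the defining characterizations $g([\psi,\phi],A) = \langle \psi, \gamma(A)\phi\rangle$ and $g([\psi,\phi],A) = \langle \tilde{\gamma}(A)\psi,\phi\rangle$, so no extra calculation is needed for it.

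For the $S_+$ case, I would take $\psi,\phi \in S_+$ and $C \in V$, and compute
\[ g(\widetilde{\psi\phi^\dagger + \phi\psi^\dagger},\, C) = \tfrac{1}{2}\Retr\bigl((\psi\phi^\dagger + \phi\psi^\dagger)\,C\bigr), \]
using Proposition \ref{prop:metric} together with $\tilde{\tilde{A}} = A$. By the cyclic property of $\Retr$ (Proposition \ref{prop:realtrace}), the right side equals $\tfrac{1}{2}\Re(\phi^\dagger C\psi) + \tfrac{1}{2}\Re(\psi^\dagger C\phi)$. Since $C$ is hermitian and $*$ reverses products of elements of $\K$, the scalar $\phi^\dagger C\psi$ is the conjugate of $\psi^\dagger C\phi$, so the two real parts coincide and we get $\Re(\psi^\dagger C\phi) = \langle \psi, C\phi\rangle$, matching the defining equation. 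The $S_-$ case runs identically, but pairing against $\tilde{C}$ instead of $C$: after applying Proposition \ref{prop:metric} in the form $g(A,C) = \tfrac{1}{2}\Retr(A\tilde{C})$ and the same cyclic manipulation, the result is $\Re(\psi^\dagger \tilde{C}\phi) = \langle \tilde{C}\psi,\phi\rangle$, again matching the defining equation.

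For equivariance, I would simply observe that if $g \in \Spin(k+1,1)$ and we write its action on vectors and spinors multiplicatively, then the defining equation combined with the invariance of the Minkowski metric, the invariance of $\langle -,-\rangle$, and the already-established $\Spin(k+1,1)$-equivariance of $\gamma$ and $\tilde\gamma$ gives $g\cdot[\psi,\phi] = [g\cdot\psi, g\cdot\phi]$ on the nose, by uniqueness of the vector representing a linear functional $C \mapsto \langle\psi,\gamma(C)\phi\rangle$ with respect to the nondegenerate form $g$.

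The main obstacle will be bookkeeping around nonassociativity: expressions like $\psi\phi^\dagger C$ could in principle depend on parenthesization. The rescue is that each entry of these matrix products only mixes a bounded number of nonreal octonions at a time, so inside $\Re$ all alternative-algebra identities and Proposition \ref{prop:realtrace} apply; in particular, the step $\Retr(\psi\phi^\dagger C) = \Re(\phi^\dagger C\psi)$ must be justified by unfolding the definition of $\Retr$ entry-by-entry and invoking the cyclic identity for $\Re$ of products of three elements of $\K$. Once that is checked, the remainder of the argument is a clean application of the identities already developed.
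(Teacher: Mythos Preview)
Your proposal is correct and follows essentially the same route as the paper: both arguments reduce the claim to pairing against an arbitrary test vector, invoke Proposition~\ref{prop:metric} to rewrite $g$ as a real trace, and then use the cyclic property of $\Retr$ (Proposition~\ref{prop:realtrace}) together with hermiticity of the test vector to land on the pairing $\langle\psi, A\phi\rangle$. The only cosmetic difference is direction---the paper starts from the defining relation and derives the matrix formula, while you start from the candidate matrix formula and verify the defining relation---and your explicit attention to the nonassociativity bookkeeping is a welcome addition.
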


\begin{proof}
First suppose $\psi,\phi \in S_+$.  We have already seen that the map $[-,-]
\maps S_+ \tensor S_+ \to V$ is equivariant.  We only need to show that this
map has the desired form.  We start by using some definitions:
\[ g([\psi, \phi], A) = \langle \psi, \gamma(A) \phi \rangle = \Re(\psi^\dagger (A \phi)) = \Retr( \psi^\dagger A \phi) . \]
We thus have
\[     g([\psi, \phi], A) = \Retr( \psi^\dagger A \phi) = \Retr( \phi^\dagger A \psi) ,
\]
where in the last step we took the adjoint of the inside.  Applying the cyclic
property of the real trace, we obtain
\[ 
g([\psi, \phi], A) = \Retr( \phi \psi^\dagger A ) = \Retr( \psi \phi^\dagger A ). 
\]
Averaging gives
\[ 
g([\psi, \phi], A) = \half \Retr( ( \psi \phi^\dagger + \phi \psi^\dagger ) A ). 
\]
On the other hand, Proposition \ref{prop:metric} implies that
\[ g([\psi, \phi], A) = \frac{1}{2} \Retr(\widetilde{[\psi, \phi]}A) .\]
Since both these equations hold for all $A$, we must have 
\[   \widetilde{[\psi, \phi]} =  \psi \phi^\dagger + \phi \psi^\dagger. \]
Doing trace reversal twice gets us back where we started, so
\[   [\psi, \phi] = \widetilde{\psi \phi^\dagger + \phi \psi^\dagger} \]
as desired.  A similar calculation shows that if $\psi, \phi \in S_-$, then
$[\psi, \phi] = \psi \phi^\dagger + \phi \psi^\dagger$.
\end{proof}

\begin{table}[H]
	\begin{center}
\renewcommand{\arraystretch}{1.3}
		\begin{tabular}{r@{$\maps$}c@{$\tensor$}c@{$\to$}ccc}
			\hline
			\multicolumn{4}{c}{Map}                                  & Division algebra notation                           & Index notation\\
			\hline   
			$g$                     & $\, V$   & $\, V$   & $\, \R$  & $\frac{1}{2} \Retr(A \tilde{B})$                    & $A^\mu B_\mu$ \\
			$\gamma$                & $\, V$   & $\, S_+$ & $\, S_-$ & $A\psi$                                             & $A^\mu \gamma_\mu \psi$ \\
			$\tilde{\gamma}$        & $\, V$   & $\, S_-$ & $\, S_+$ & $\tilde{A}\psi$                                     & $A^\mu \tilde{\gamma}_\mu \psi$ \\
			$[-,-]$                 & $\, S_+$ & $\, S_+$ & $\, V$   & $\widetilde{\psi \phi^\dagger + \phi \psi^\dagger}$ & $\psibar \gamma^\mu \phi$ \\
			$[-,-]$                 & $\, S_-$ & $\, S_-$ & $\, V$   & $\psi \phi^\dagger + \phi \psi^\dagger$             & $\psibar \tilde{\gamma}^\mu \phi$ \\
			$\langle - , - \rangle$ & $\, S_+$ & $\, S_-$ & $\, \R$  & $\Re(\psi^\dagger \phi)$                            & $\psibar \phi$ \\
			\hline
		\end{tabular}
\caption{\label{tab:intertwiners}
Division algebra notation vs.\ index notation}
	\end{center}
\end{table}
\renewcommand{\arraystretch}{1}
\noindent 
We can summarize our work so far with a table of the basic bilinear
maps involving vectors, spinors and scalars.  Table 1 shows how to
translate between division algebra notation and something more closely
resembling standard physics notation.  In this table the adjoint
spinor $\psibar$ denotes the spinor dual to $\psi$ under the pairing
$\langle -, - \rangle$.  The gamma matrix $\gamma^\mu$ denotes a
Clifford algebra generator acting on $S_+$, while $\tilde{\gamma}^\mu$
denotes the same element acting on $S_-$.  Of course $\tilde{\gamma}$
is not standard physics notation; the standard notation for this
depends on which of the four cases we are considering: $\R$, $\C$, $\H$ or 
$\O$.

\section{Spacetime geometry in \emph{k}+3 dimensions} \label{sec:k+3}

In the last section we recalled how to describe spinors and vectors in
$(k+2)$-dimensional Minkowski spacetime using a division algebra $\K$
of dimension $k$.  Here we show how to boost this up one dimension,
and give a division algebra description of vectors and spinors in
$(k+3)$-dimensional Minkowski spacetime.

We shall see that vectors in $(k+3)$-dimensional Minkowski spacetime can
be identified with $4 \times 4$ $\K$-valued matrices of 
this particular form:
\[ \left( 
\begin{array}{cc} 
	a & \tilde{A} \\
	A & -a
\end{array}
\right)
\]
where $a$ is a real multiple of the $2 \times 2$ identity matrix
and $A$ is a $2 \times 2$ hermitian matrix with entries in $\K$.
Moreover, $\Spin(k+2, 1)$ has a representation on $\K^4$,
which we call $\S$.  Depending on $\K$, this gives the following
types of spinors: 
\begin{itemize}
    \item When $\K = \R$, $\S$ is the Majorana spinor representation of
	    $\Spin(3,1)$.
    \item When $\K = \C$, $\S$ is the Dirac spinor representation of
	    $\Spin(4,1)$. 
    \item When $\K = \H$, $\S$ is the Dirac spinor representation of
	    $\Spin(6,1)$.
    \item When $\K = \O$, $\S$ is the Majorana spinor representation of
	    $\Spin(10,1)$.
\end{itemize} 
Again, these spinor representations are also representations of the even part
of the relevant Clifford algebra:
\vskip 1em
\begin{center}
\renewcommand{\arraystretch}{1.4}
\begin{tabular}{|lcl|}
	\hline
	\multicolumn{3}{|c|}{\textbf{Even parts of Clifford algebras}} \\
	\hline
	$\Cliff_{0}(3,1)$  & $\iso$ & $\C[2]$                 \\
	$\Cliff_{0}(4,1)$  & $\iso$ & $\H[2]$                 \\
	$\Cliff_{0}(6,1)$  & $\iso$ & $\H[4]$    \\
	$\Cliff_{0}(10,1)$ & $\iso$ & $\R[32]$ \\
	\hline
\end{tabular}
\renewcommand{\arraystretch}{1}
\end{center}
These algebras have irreducible representations on $\R^4 \iso \C^2$, 
$\C^4 \iso \H^2$, $\H^4$ and $\O^4 \iso \R^{32}$, respectively.  

The details can be described in a uniform way for all four cases.  We
take as our space of `vectors' the following $(k+3)$-dimensional
subspace of $\K[4]$:
\[ \V = 
\left\{ 
\left( 
\begin{array}{cc} 
	a & \tilde{A} \\
	A & -a
\end{array}
\right)
: a \in \R, \quad A \in \h_2(\K)
\right\} .
\]
In the last section, we defined vectors in $k+2$ dimensions to be $V =
\h_2(\K)$. That space has an obvious embedding into $\V$, given by
\[ \begin{array}{rcl}
	V & \inclusion & \V \\
	A & \mapsto    & \left( \begin{matrix} 0 & \tilde{A} \\ A & 0 \end{matrix} \right) .
\end{array} \]
The Minkowski metric 
\[ h \maps \V \tensor \V \to \R \]
is given by extending the Minkowski metric $g$ on $V$:
\[ h \left( \left( \begin{smallmatrix} a & \tilde{A} \\ A & -a \end{smallmatrix} \right), \left( \begin{smallmatrix} a & \tilde{A} \\ A & -a \end{smallmatrix} \right) \right) = g(A,A) + a^2 . \]
From our formulas for $g$, we can derive formulas for $h$:

\begin{prop} \label{prop:metric2}
	For any vectors $\A,\B \in \V \subseteq \K[4]$, we have
	\[ \A^2 = h(\A, \A)1 \]
	and
	\[ \fourth \Retr(\A \B) = h(\A, \B). \]
\end{prop}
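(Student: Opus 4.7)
Both statements reduce to computations that can be carried out by treating $\A,\B\in\V$ as $2\times 2$ \emph{block} matrices with $2\times 2$ $\K$-valued blocks, together with Proposition \ref{prop:metric}.

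First I would write
\[ \A = \begin{pmatrix} a\,1 & \tilde{A} \\ A & -a\,1 \end{pmatrix}, \qquad \B = \begin{pmatrix} b\,1 & \tilde{B} \\ B & -b\,1 \end{pmatrix}, \]
where $1$ denotes the $2\times 2$ identity. A key preliminary remark: although $\K$ may be nonassociative, each entry of the $4\times 4$ matrix product $\A\B$ is a sum of products of \emph{pairs} of $\K$-elements drawn from the entries of $\A$ and $\B$. So no reassociation is ever performed, and block multiplication agrees with the full $4\times 4$ product.

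For the first equation, block multiplication gives
\[ \A^2 = \begin{pmatrix} a^2\,1 + \tilde{A}A & a\tilde{A}-\tilde{A}a \\ aA - aA & A\tilde{A} + a^2\,1 \end{pmatrix}. \]
Because $a$ is real it commutes with every entry, so the off-diagonal blocks vanish. By Proposition \ref{prop:metric}, $\tilde{A}A = A\tilde{A} = g(A,A)\,1$, so each diagonal block equals $(a^2 + g(A,A))\,1 = h(\A,\A)\,1$, proving $\A^2 = h(\A,\A)\,1$.

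For the trace identity, I would use the polarization identity rather than a fresh computation. Applying the first part to $\A+\B$ yields
\[ (\A+\B)^2 = h(\A+\B,\A+\B)\,1, \]
and subtracting $\A^2$ and $\B^2$ gives
\[ \A\B + \B\A = 2\,h(\A,\B)\,1. \]
Taking $\Retr$ of both sides and using $\tr 1_{4} = 4$ produces
\[ \Retr(\A\B) + \Retr(\B\A) = 8\,h(\A,\B). \]
The cyclic property of the real trace (Proposition \ref{prop:realtrace}) makes the two terms on the left equal, so $\Retr(\A\B) = 4\,h(\A,\B)$, which is the claimed formula.

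The only place one must pay attention is the very first step, making sure block multiplication is legitimate in the nonassociative case; but since each entry of $\A^2$ and $\A\B$ is a sum of products of just two $\K$-entries, no associator ever enters the calculation. Everything else is then a direct appeal to the $k+2$-dimensional Proposition \ref{prop:metric}.
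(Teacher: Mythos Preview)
Your proof is correct and follows essentially the same route as the paper: compute $\A^2$ by block multiplication, invoke Proposition~\ref{prop:metric} to identify the diagonal blocks with $h(\A,\A)\,1$, then polarize and take the real trace for the second formula. You are simply more explicit than the paper about why the off-diagonal blocks vanish, why block multiplication is legitimate despite nonassociativity, and why $\Retr(\A\B)=\Retr(\B\A)$ via Proposition~\ref{prop:realtrace}; the paper compresses all of this into ``it is easy to check'' and ``polarizing and taking the real trace of both sides.''
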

\begin{proof}
	For $\A = \left( \begin{smallmatrix} a & \tilde{A} \\ A & -a
	\end{smallmatrix} \right)$, it is easy to check:
	\[ \A^2 = \left( \begin{matrix} a^2 + \tilde{A}A & 0 \\ 0 & A\tilde{A} + a^2 \end{matrix} \right) . \]
	By Proposition \ref{prop:metric}, we have $A \tilde{A} = \tilde{A} A =
	g(A,A)1$, and substituting this in establishes the first formula. The
	second formula follows from polarizing and taking the real trace of
	both sides.
\end{proof}

Define a space of `spinors' by $\S = S_+ \oplus S_- = \K^4$. To distinguish 
elements of $\V$ from elements of $\h_2(\K)$, we will denote them with
calligraphic letters such as $\A$ and $\B$.  Similarly, to distinguish
elements of $\S$ from $S_\pm$, we will denote them with capital Greek letters
such as $\Psi$ and $\Phi$.

Elements of $\V$ act on $\S$ by left multiplication:
\[ \begin{array}{rcl}
	\V \tensor \S & \to     & \S \\
	\A \tensor \Psi        & \mapsto & \A \Psi .
\end{array} \]
We can dualize this to get a map:
\[ \begin{array}{cccl}
	\Gamma \maps & \V & \to     & \End(\S) \\
	             & \A & \mapsto & L_\A .
\end{array} \]
This induces the Clifford action of $\Cliff(\V)$ on $\S$.
Note that this $\Gamma$ is the same as the map in Proposition
\ref{prop:n2cliff} when we restrict to $V \subseteq \V$.

\begin{prop}
The vectors $\V \subseteq \K[4]$ act on the spinors $\S = \K^4$ via the map
\[ \Gamma \maps \V \to \End(\S) \]
given by 
\[   \Gamma(\A)\Psi = \A \Psi. \]
Furthermore, $\Gamma(\A)$ satisfies the Clifford algebra relation:
\[ \Gamma(\A)^2 = h(\A,\A) 1 \]
and so extends to a homomorphism $\Gamma \maps \Cliff(\V) \to \End(\S)$, i.e.\ a
representation of the Clifford algebra $\Cliff(\V)$ on $\S$.
\end{prop}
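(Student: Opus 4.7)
The plan is to reduce this to the previous Clifford relation proved in Proposition \ref{prop:n2cliff} by a direct block computation, being careful about nonassociativity of $\K$. The first formula in Proposition \ref{prop:metric2} already tells us that $\A^2 = h(\A,\A)\,1$ as an element of $\K[4]$, but \emph{we cannot conclude} $\Gamma(\A)^2 = h(\A,\A)\,1$ from this, since in general $L_\A L_\A \neq L_{\A^2}$ when $\K$ is nonassociative. So the real content is verifying the operator identity directly on elements $\Psi \in \S$.

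First I would write $\Psi = (\psi, \phi) \in S_+ \oplus S_- = \S$ and expand
\[ \A\Psi = \left( \begin{matrix} a & \tilde{A} \\ A & -a \end{matrix} \right) \left( \begin{matrix} \psi \\ \phi \end{matrix} \right) = \left( \begin{matrix} a\psi + \tilde{A}\phi \\ A\psi - a\phi \end{matrix} \right). \]
Applying $\A$ a second time and simplifying the real scalar $a$ against the $\K$-valued entries (which is unproblematic since $a \in \R$ lies in the center of $\K$), the cross terms $\pm a\tilde{A}\phi$ and $\pm aA\psi$ cancel, leaving
\[ \A(\A\Psi) = \left( \begin{matrix} a^2\psi + \tilde{A}(A\psi) \\ A(\tilde{A}\phi) + a^2\phi \end{matrix} \right). \]

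Next comes the key point, which is where alternativity enters. The expressions $\tilde{A}(A\psi)$ and $A(\tilde{A}\phi)$ each involve multiplication in $\K$ of entries drawn from the single hermitian matrix $A$ together with components of a single column vector $\psi$ or $\phi$; as already observed in the proof of Proposition \ref{prop:n2cliff}, each scalar entry of these column vectors is a product of at most two nonreal elements of $\K$. Since $\K$ is alternative, the subalgebra generated by any two elements is associative, so we may rebracket:
\[ \tilde{A}(A\psi) = (\tilde{A}A)\psi, \qquad A(\tilde{A}\phi) = (A\tilde{A})\phi. \]
Now Proposition \ref{prop:metric} gives $\tilde{A}A = A\tilde{A} = g(A,A)\,1$, so both entries above equal $g(A,A)\psi$ and $g(A,A)\phi$ respectively.

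Combining, we get $\A(\A\Psi) = (a^2 + g(A,A))\Psi = h(\A,\A)\Psi$ by the definition of $h$, which is exactly the Clifford relation $\Gamma(\A)^2 = h(\A,\A)\,1$. The universal property of the Clifford algebra $\Cliff(\V)$ then extends $\Gamma$ uniquely to an algebra homomorphism $\Cliff(\V) \to \End(\S)$, making $\S$ a $\Cliff(\V)$-module. The only obstacle worth flagging is the one just navigated: one must resist the temptation to invoke Proposition \ref{prop:metric2} directly on $\A^2$ and instead carry out the computation on $\Psi$, using alternativity at the crucial step.
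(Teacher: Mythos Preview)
Your proof is correct and follows essentially the same route as the paper: expand $\A(\A\Psi)$ in block form, observe the cross terms cancel, and then invoke alternativity (via the ``at most two nonreal elements'' argument from Proposition~\ref{prop:n2cliff}) to reassociate $\tilde{A}(A\psi)$ and $A(\tilde{A}\phi)$. The only cosmetic difference is that the paper, after reassociating, writes $\Gamma(\A)^2\Psi = \A^2\Psi$ and cites Proposition~\ref{prop:metric2} directly, whereas you cite Proposition~\ref{prop:metric} for $\tilde{A}A = g(A,A)1$ and then assemble $a^2 + g(A,A) = h(\A,\A)$ by hand; these are the same computation.
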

\begin{proof}
Here, we must be mindful of nonassociativity. For $\Psi = (\psi, \phi) \in
\S$ and $\A = \left( \begin{smallmatrix} a & \tilde{A} \\ A &
-a \end{smallmatrix} \right) \in \V$, we have:
\[ \Gamma(\A)^2 \Psi = \A ( \A \Psi) \]
which works out to be:
\[ \Gamma(\A)^2 \Psi = \left( \begin{array}{c} a^2 \psi + \tilde{A}(A\psi) \\ A(\tilde{A}\phi) + a^2 \phi \end{array} \right). \]
A quick calculation shows that the expressions $\tilde{A}(A\psi)$ and
$A(\tilde{A} \phi)$ involve at most two nonreal elements of $\K$, so everything
associates and we can write: 
\[ \Gamma(\A)^2 \Psi = \A^2 \Psi \]
By Proposition \ref{prop:metric2}, we are done.
\end{proof}

This tells us how $\S$ is a module of $\Cliff(\V)$, and thus a representation
of $\Spin(\V)$, the subgroup of $\Cliff(\V)$ generated by products of pairs of
unit vectors. 

In the last section, we saw how to construct a $\Spin(V)$-invariant pairing
\[ \langle -,- \rangle \maps S_+ \tensor S_- \to \R. \]
We can use this to build up to a $\Spin(\V)$-invariant pairing on
$\S$:
\[ \langle (\psi, \phi), (\chi, \theta) \rangle = \langle \psi, \theta \rangle - \langle \chi, \phi \rangle \]
To see this, let
\[ \Gamma^0 = \left( \begin{matrix} 0 & -1 \\ 1 & 0 \end{matrix} \right) \]
Then, because $\langle \psi, \phi \rangle = \Re(\psi^\dagger \phi)$, it is easy
to check that:
\[ \langle \psi, \theta \rangle - \langle \chi, \phi \rangle = \Re \left( \left( \begin{matrix} \psi \\ \phi \end{matrix} \right)^\dagger \Gamma^0 \left( \begin{matrix} \chi \\ \theta \end{matrix} \right) \right). \]
We can show this last expression is invariant by explicit calculation.
\begin{prop} \label{prop:bilinearform}
	Define the nondegenerate skew-symmetric bilinear form
	\[ \langle -,- \rangle \maps \S \tensor \S \to \R \]
	by
	\[ \langle \Psi, \Phi \rangle = \Re(\Psi^\dagger \Gamma^0 \Phi). \]
	This form is invariant under $\Spin(\V)$.
\end{prop}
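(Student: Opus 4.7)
The proposition makes three claims: skew-symmetry, nondegeneracy, and $\Spin(\V)$-invariance of $\langle -, - \rangle$. The first two are essentially free. Writing $\Psi = (\psi, \phi)$ and $\Phi = (\chi, \theta)$ and using the block form of $\Gamma^0$, one computes $\Re(\Psi^\dagger \Gamma^0 \Phi) = \Re(\phi^\dagger \chi) - \Re(\psi^\dagger \theta) = \langle \chi, \phi \rangle - \langle \psi, \theta \rangle$. Skew-symmetry follows since the earlier pairing on $S_+ \otimes S_-$ is symmetric ($\Re(\psi^\dagger \phi) = \Re(\phi^\dagger \psi)$), and nondegeneracy is inherited: if $\langle \Psi, \Phi \rangle = 0$ for every $\Phi = (\chi, \theta)$, setting $\chi = 0$ forces $\psi = 0$ and setting $\theta = 0$ forces $\phi = 0$.

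For $\Spin(\V)$-invariance, my plan is to show that each $\Gamma(\A)$ is skew-adjoint with respect to $\langle -, - \rangle$, i.e.\
\[ \langle \A \Psi, \Phi \rangle + \langle \Psi, \A \Phi \rangle = 0 . \]
The linchpin will be the matrix identity $\A^\dagger \Gamma^0 = -\Gamma^0 \A$ in $\K[4]$, which I can verify by a direct block computation: for $\A = \left( \begin{smallmatrix} a & \tilde{A} \\ A & -a \end{smallmatrix} \right)$ we have $\A^\dagger = \left( \begin{smallmatrix} a & A \\ \tilde{A} & -a \end{smallmatrix} \right)$ since $A$ and $\tilde{A}$ are hermitian, and both $\A^\dagger \Gamma^0$ and $-\Gamma^0 \A$ reduce to $\left( \begin{smallmatrix} A & -a \\ -a & -\tilde{A} \end{smallmatrix} \right)$. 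Substituting then gives $\langle \A \Psi, \Phi \rangle = \Re(\Psi^\dagger \A^\dagger \Gamma^0 \Phi) = -\Re(\Psi^\dagger \Gamma^0 \A \Phi) = -\langle \Psi, \A \Phi \rangle$.

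With skew-adjointness in hand, $\Spin(\V)$-invariance follows by a formal argument paralleling the one in Section \ref{sec:k+2}. Using $\Gamma(\A)^2 = h(\A,\A) \cdot 1$ together with skew-adjointness of each $\Gamma(\A)$, a short calculation gives $\langle \Gamma(\A)\Gamma(\B)\Psi, \Gamma(\A)\Gamma(\B)\Phi \rangle = h(\A,\A) h(\B,\B) \langle \Psi, \Phi \rangle$, so products of pairs of unit vectors preserve $\langle -, - \rangle$ up to a sign. Equivalently, the commutators $[\Gamma(\A), \Gamma(\B)]$, which span the image of the bivectors under $\Gamma$, act skew-adjointly, so the Lie algebra of $\Spin(\V)$ acts by infinitesimal isometries and $\Spin(\V)$ preserves the form.

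The main technical pitfall is nonassociativity when $\K = \O$: the product $\Psi^\dagger \A^\dagger \Gamma^0 \Phi$ depends a priori on how one brackets it. I would manage this in the style of Proposition \ref{prop:realtrace}. The identity $\A^\dagger \Gamma^0 = -\Gamma^0 \A$ is a literal matrix identity, since $\Gamma^0$ has real entries and so these products only permute rows and columns with signs, invoking no octonionic multiplication at all; meanwhile any remaining associators in $\Re(\Psi^\dagger \, \A^\dagger \, \Gamma^0 \, \Phi)$ are purely imaginary by Proposition \ref{prop:imaginaryassociator} and so vanish under $\Re$. This is essentially the same argument that let us define $\Retr(ABC)$ unambiguously in Section \ref{sec:divalg}, and making it fully rigorous is the principal technical chore.
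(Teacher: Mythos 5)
Your proof is correct and hinges on the same key matrix identity as the paper's, namely $\A^\dagger \Gamma^0 = -\Gamma^0 \A$, verified by block computation, together with the observation that the reality of $\Gamma^0$ neutralizes nonassociativity (which the paper phrases via the cyclic property of the real trace, Proposition \ref{prop:realtrace}). Where you differ is in factoring the argument through the skew-adjointness statement $\langle \A\Psi, \Phi\rangle = -\langle \Psi, \A\Phi\rangle$ and then applying it twice to a pair of unit vectors, whereas the paper computes $\langle \A\Psi, \A\Phi\rangle = -h(\A,\A)\langle\Psi,\Phi\rangle$ in a single pass. Your version is a bit cleaner conceptually: it exhibits $\Gamma(\V)$ inside the Lie algebra of isometries of the form, and your closing Lie-algebra argument via the commutators $[\Gamma(\A),\Gamma(\B)]$ neatly sidesteps having to track the factor $h(\A,\A)h(\B,\B)$, which both the paper's proof and your first argument leave implicit (it is $+1$ on the even elements of spinor norm one that constitute $\Spin(\V)$, but neither proof spells that out). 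Your explicit checks of skew-symmetry and nondegeneracy are what the paper calls ``easy to see''; your computation also reveals a harmless sign discrepancy in the paper's preliminary display $\langle (\psi,\phi),(\chi,\theta)\rangle = \langle\psi,\theta\rangle - \langle\chi,\phi\rangle$, which should read with the opposite sign to match $\Re(\Psi^\dagger\Gamma^0\Phi)$.
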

\begin{proof}
	It is easy to see that, for any spinors $\Psi, \Phi \in \S$ and vectors
	$\A \in \V$, we have
	\[ \langle \A \Psi, \A \Phi \rangle = \Re \left( ( \Psi^\dagger \A^\dagger ) \Gamma^0 ( \A \Phi ) \right) = \Re \left( \Psi^\dagger ( \A^\dagger \Gamma^0 ( \A \Phi ) ) \right) ,\]
	where in the last step we have used Proposition \ref{prop:realtrace}.
	Now, given that
	\[ \A = \left( \begin{matrix} a & \tilde{A} \\ A & -a \end{matrix} \right) , \]
	a quick calculation shows:
	\[ \A^\dagger \Gamma^0 = -\Gamma^0 \A. \]
	So, this last expression becomes:
	\[ -\Re \left( \Psi^\dagger ( \Gamma^0 \A ( \A \Phi ) ) \right) = -\Re \left( \Psi^\dagger ( \Gamma^0 \Gamma(\A)^2 \Phi ) ) \right) = -|\A|^2 \Re \left( \Psi^\dagger \Gamma^0 \Phi \right) , \]
	where in the last step we have used the Clifford relation. Summing up,
	we have shown: 
	\[ \langle \A \Psi, \A \Phi \rangle = -|\A|^2 \langle \Psi, \Phi \rangle . \]
	In particular, when $\A$ is a unit vector, acting by
	$\A$ changes the sign at most. Thus, $\langle -,- \rangle$ is
	invariant under the group generated by products of pairs of unit
	vectors, which is $\Spin(\V)$. It is easy to see that it is
	nondegenerate, and it is skew-symmetric because $\Gamma^0$ is.
\end{proof}

With the form $\langle -,- \rangle$ in hand, there is a manifestly equivariant
way to turn a pair of spinors into a vector.  Given $\Psi, \Phi \in \S$, there
is a unique vector $[\Psi, \Phi]$ whose inner product with any vector $\A$ is
given by
\[ h([\Psi, \Phi], \A) = \langle \Psi, \Gamma(\A) \Phi \rangle .\]

It will be useful to have an explicit formula for this operation:

\begin{prop} 
\label{prop:bracket}
Given $\Psi = (\psi_1, \psi_2)$ and $\Phi = (\phi_1, \phi_2)$ in
$\S = S_+ \oplus S_-$, we have:
\[ [\Psi, \Phi] = \left( 
\begin{array}{cc}
	\langle \psi_1, \phi_2 \rangle + \langle \phi_1, \psi_2 \rangle & -\widetilde{[\psi_1 , \psi_2]} + \widetilde{[\phi_1, \phi_2]} \\
	-[\psi_1 , \psi_2] + [\phi_1, \phi_2]                           & -\langle \psi_1, \phi_2 \rangle - \langle \phi_1, \psi_2 \rangle \\
\end{array} \right) .
\]
\end{prop}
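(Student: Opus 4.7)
The plan is to invoke the defining property
\[ h([\Psi, \Phi], \A) = \langle \Psi, \Gamma(\A) \Phi \rangle \]
for a general $\A \in \V$, compute both sides explicitly, and read off the entries of $[\Psi, \Phi]$ by matching coefficients.

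First I would parametrize $\A$ by a scalar $a \in \R$ and a matrix $A \in \h_2(\K)$ according to the definition of $\V$, perform the block-matrix product $\Gamma(\A)\Phi = \A \Phi$ against the column spinor $\Phi = (\phi_1, \phi_2)^T$, then apply $\Gamma^0$ and take $\Re(\Psi^\dagger \Gamma^0 \A \Phi)$. The result splits naturally into four pieces: two multiplied by $a$, of the form $\Re(\psi_1^\dagger \phi_2)$ and $\Re(\psi_2^\dagger \phi_1)$, and two that are linear in $A$ (respectively $\tilde A$), of the form $\Re(\psi_1^\dagger A \phi_1)$ and $\Re(\psi_2^\dagger \tilde A \phi_2)$.

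Second, I would identify these four pieces in terms of the intertwiners built in the previous section. The $a$-terms are exactly the scalar pairings $\langle\psi_1,\phi_2\rangle$ and $\langle\phi_1,\psi_2\rangle$, using the cyclic property of $\Re$ (Proposition \ref{prop:realtrace}) to swap factor orders where needed. The two $A$-dependent terms are exactly $g([\psi_1,\phi_1],A)$ and $g([\psi_2,\phi_2],A)$, by the defining equations for the brackets $S_\pm \otimes S_\pm \to V$ and the hermiticity $\tilde A^\dagger = \tilde A$.

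Third, I would expand the left-hand side. Writing the unknown vector $[\Psi, \Phi] \in \V$ in the form dictated by the definition of $\V$, with scalar piece $b$ and matrix piece $B \in \h_2(\K)$, and applying Proposition \ref{prop:metric2}, we obtain $h([\Psi, \Phi], \A) = ba + g(B, A)$. Because the equation must hold for all $a \in \R$ and all $A \in \h_2(\K)$ independently, matching the $a$-coefficient and the $A$-coefficient reads off $b$ and $B$ uniquely, with the remaining entries of $[\Psi, \Phi]$ determined by the structure of $\V$ (the bottom-right being $-b$ and the top-right the trace reversal $\tilde B$ of the bottom-left).

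The only subtlety, as usual in this section, is nonassociativity: at each step the products involving elements of $\K$ must lie in a subalgebra generated by at most two non-real elements, so that alternativity rescues associativity. This is the same trick already used in Propositions \ref{prop:n2cliff} and \ref{prop:bilinearform}, so no new technical ingredient is required; the proof amounts to a careful unwinding of definitions and bookkeeping of traces.
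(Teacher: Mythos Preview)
Your proposal is correct and follows essentially the same approach as the paper: use the defining equation $h([\Psi,\Phi],\A)=\langle\Psi,\Gamma(\A)\Phi\rangle$, split $\A$ into its scalar part $a$ and its $\h_2(\K)$-part $A$, and match coefficients. The paper organizes this slightly differently, first decomposing $\V$ orthogonally as $V \oplus \R$ and treating the two pieces in separate paragraphs, whereas you handle a general $\A$ in one shot; the content is the same.
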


\begin{proof}
	Decompose $\V$ into orthogonal subspaces:
	\[ \V = \left\{ \left( \begin{matrix} 0 & \tilde{A} \\ A & 0 \end{matrix} \right) : A \in V \right\} \oplus \left\{ \left( \begin{matrix} a & 0 \\ 0 & -a \end{matrix} \right) : a \in \R \right\} \]
	The first of these is just a copy of $V$, a $(k+2)$-dimensional
	Minkowski spacetime. The second is the single extra spatial dimension
	in our $(k+3)$-dimensional Minkowski spacetime, $\V$.

	Now, use the definition of $[\Psi, \Phi]$, but restricted to $V$. It is
	easy to see that, for any vector $A \in V$, we have:
	\[ h([\Psi, \Phi], A) = -\langle \psi_1, \gamma(A) \phi_1 \rangle + \langle \tilde{\gamma}(A) \psi_2, \phi_2 \rangle \]
	Letting $B$ be the component of $[\Psi, \Phi]$ which lies in $V$, this
	becomes:
	\[ g(B, A) =  -\langle \psi_1, \gamma(A) \phi_1 \rangle + \langle \tilde{\gamma}(A) \psi_2, \phi_2 \rangle. \]
	Note that we have switched to the metric $g$ on $V$, to which $h$
	restricts. By definition, this is the same as:
	\[ g(B, A) =  g(-[\psi_1, \phi_1] + [\psi_2, \phi_2], A). \]
	Since this holds for all $A$, we must have $B = -[\psi_1, \phi_1] +
	[\psi_2, \phi_2]$.

	It remains to find the component of $[\Psi, \Phi]$ orthogonal to
	$B$. Since $\left\{ \left( \begin{smallmatrix} a & 0 \\ 0 & -a
	\end{smallmatrix} \right) : a \in \R \right\}$ is 1-dimensional, this
	is merely a number. Specifically, it is the constant of proportionality
	in the expression:
	\[ h \left([\Psi, \Phi], \left( \begin{smallmatrix} a & 0 \\ 0 & -a \end{smallmatrix} \right) \right) = a ( \langle \psi_1, \phi_2 \rangle + \langle \phi_1, \psi_2 \rangle ) . \]
	Thus, this component is $\langle \psi_1, \phi_2 \rangle + \langle
	\phi_1, \psi_2 \rangle$. Putting everything together, we get 
	\[ [\Psi, \Phi] = \left( 
	\begin{array}{cc}
		\langle \psi_1, \phi_2 \rangle + \langle \phi_1, \psi_2 \rangle & -\widetilde{[\psi_1 , \psi_2]} + \widetilde{[\phi_1, \phi_2]} \\
		-[\psi_1, \psi_2] + [\phi_1, \phi_2]                            & -\langle \psi_1, \phi_2 \rangle - \langle \phi_1, \psi_2 \rangle \\
	\end{array} \right) .
	\]
\end{proof}

\section{The spinor identities} \label{sec:spinors}

We now prove crucial identities involving spinors in spacetimes of dimension
$k+2$ and $k+3$. In a sense, this solves our puzzle concerning how division
algebras are related to string theory and 2-brane theory: the spinor identities
allow the construction of WZW terms for these theories, thus guaranteeing they
have Siegel symmetry. Siegel symmetry forces the bosonic and fermionic degrees
of freedom to match, so it is essential for supersymmetry. In dimensions 10 and
11, Siegel symmetry also constrains the background of spacetime to be that of
supergravity. Yet, in solving the puzzle, we uncover new questions. What, for
instance, is the significance of these spinor identities? We shall see, in the
remainder of this thesis, that these identities lead the way to higher gauge
theory.

The first identity we shall prove holds in spacetimes of dimension 3, 4, 6 and
10. This identity appears in several guises in the physics literature. Besides
the role it plays in string theory, we shall see in Chapter \ref{ch:sym} that
it implies the supersymmetry of super-Yang--Mills theories in these dimensions.

Let us see the various forms this identity can take. In dimensions 3, 4, 6 and
10, we have what Schray \cite{Schray} has dubbed the \define{3-$\psi$'s rule}:
\[ [\psi, \psi] \cdot \psi = 0 , \]
for all spinors $\psi \in S_+$, where the bracket is the bilinear map defined
in Proposition \ref{prop:bracket}. That is, a spinor squared to a vector and
then acting on itself vanishes. It is also common to see this cubic form
polarized to obtain a trilinear in three spinors:
\[ [\psi, \phi] \cdot \chi + [\chi, \psi] \cdot \phi + [\phi, \chi] \cdot \psi = 0. \]
A more geometric interpretation, emphasized by Deligne and Freed
\cite{DeligneFreed}, is that spinors square to null vectors in these
special dimensions:
\[ |[\psi,\psi]|^2 = 0 , \]
where $|A|^2 = g(A,A)$ is the quadratic form associated with the Minkowski
inner product.  On the other hand, physicists prefer to write all of these
expressions using gamma matrices. Referring to Table \ref{tab:intertwiners}, we
write components of the vector $[\psi, \psi]$ as $\psibar \gamma^\mu \psi$ when
$\psi \in S_+$.  These identities then become, respectively:
\[ (\psibar \gamma^\mu \psi) \gamma_\mu \psi = 0, \]
\[ (\psibar \gamma^\mu \phi) \gamma_\mu \chi + (\chibar \gamma^\mu \psi) \gamma_\mu \phi + (\phibar \gamma^\mu \chi) \gamma_\mu \psi = 0, \]
and
\[ (\psibar \gamma^\mu \psi)(\psibar \gamma_\mu \psi) = 0. \]
Finally, it also common for the spinors to be removed from the second identity,
to obtain an equivalent expression in terms of gamma matrices alone. We now
establish that these are all equivalent. In fact, this is a consequence of the
following symmetries:
\begin{prop} \label{prop:symmetries}
	For any spinors $\psi, \phi, \chi, \theta \in S_+$, the 4-linear
	expression $\langle \theta, [\psi, \phi] \chi \rangle$ is symmetric
	under the exchange of the unbracketed spinors:
	\[ \langle \theta, [\psi, \phi] \cdot \chi \rangle = \langle \chi, [\psi, \phi] \cdot \theta \rangle \]
	and under the exchange of the bracketed and unbracketed spinors:
	\[ \langle \theta, [\psi, \phi] \cdot \chi \rangle = \langle \psi, [\theta, \chi] \cdot \phi \rangle \]
	Similarly, for any spinors $\psi, \phi, \chi, \theta \in S_-$, the 4-linear
	expression $\langle [\psi, \phi] \cdot \theta, \chi \rangle$ is symmetric
	under the exchange of the unbracketed spinors:
	\[ \langle [\psi, \phi] \cdot \theta,  \chi \rangle = \langle [\psi, \phi] \cdot \chi, \theta \rangle \]
	and under the exchange of the bracketed and unbracketed spinors:
	\[ \langle [\psi, \phi] \cdot \theta, \chi \rangle = \langle [\theta, \chi] \cdot \psi, \phi \rangle \]

\end{prop}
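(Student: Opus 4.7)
The plan is to observe that the 4-linear expression $\langle \theta, [\psi,\phi]\cdot\chi\rangle$ admits a compact, manifestly-symmetric reformulation purely in terms of the Minkowski metric $g$ and the spinor bracket $[-,-]$. Once that reformulation is in hand, both claimed symmetries become tautologies.

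Concretely, recall from the construction preceding the statement that, for $\theta,\chi\in S_+$, the vector $[\theta,\chi]\in V$ is defined by the universal property
\[ g([\theta,\chi], A) \;=\; \langle \theta, A\cdot\chi\rangle \qquad \text{for all } A\in V. \]
Since $[\psi,\phi]$ is itself a vector, I can specialize $A = [\psi,\phi]$ in this defining equation to obtain the key identity
\[ \langle \theta, [\psi,\phi]\cdot\chi\rangle \;=\; g\bigl([\theta,\chi],\, [\psi,\phi]\bigr). \]
Now both symmetries in the $S_+$ case drop out immediately. For the exchange of the unbracketed spinors, I note that the explicit formula $[\theta,\chi]=\widetilde{\theta\chi^\dagger + \chi\theta^\dagger}$ is manifestly symmetric in $\theta$ and $\chi$, so $g([\theta,\chi],[\psi,\phi]) = g([\chi,\theta],[\psi,\phi]) = \langle \chi, [\psi,\phi]\cdot\theta\rangle$. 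For the exchange of the bracketed pair with the unbracketed pair, I use the symmetry of $g$: $g([\theta,\chi],[\psi,\phi]) = g([\psi,\phi],[\theta,\chi]) = \langle \psi, [\theta,\chi]\cdot\phi\rangle$.

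The $S_-$ case is treated in exactly parallel fashion. The defining property of $[\theta,\chi]$ for $\theta,\chi\in S_-$ reads $g([\theta,\chi],A)=\langle A\cdot\theta,\chi\rangle$ (the action of $A$ on an $S_-$ spinor is via $\widetilde{A}$), so specializing $A=[\psi,\phi]$ gives
\[ \langle [\psi,\phi]\cdot\theta,\, \chi\rangle \;=\; g\bigl([\theta,\chi],\, [\psi,\phi]\bigr). \]
Symmetry of the $S_-$ bracket $[\theta,\chi]=\theta\chi^\dagger+\chi\theta^\dagger$ under $\theta\leftrightarrow\chi$ yields the first identity, and symmetry of $g$ yields the second.

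There is essentially no obstacle in this argument: all the real work was done in constructing $[-,-]$ as the vector whose metric pairing with $A$ reproduces the spinor pairing $\langle -, A\cdot -\rangle$, and in verifying that the matrix formula for $[-,-]$ is symmetric. What might at first look like a concern -- the nonassociativity of $\K$, which intervenes whenever we manipulate expressions like $\theta^\dagger(A\chi)$ -- never enters, because we never unfold the pairing or the bracket back into division-algebra expressions; we only use their equivariant/universal characterizations and the symmetry of $g$.
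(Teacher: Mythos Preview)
Your proof is correct and follows essentially the same approach as the paper: rewrite $\langle \theta, [\psi,\phi]\cdot\chi\rangle$ as $g([\theta,\chi],[\psi,\phi])$ via the defining property of the bracket, then read off the first symmetry from the symmetry of $[\theta,\chi]$ and the second from the symmetry of $g$ (together with another application of the defining property). The $S_-$ case is handled identically in both.
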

\begin{proof}
	We shall prove this in the case where all spinors are in $S_+$, the
	case of $S_-$ being almost identical. Using the definition of bracket,
	write:
	\[ \langle \theta, [\psi, \phi] \cdot \chi \rangle = g([\theta, \chi], [\psi,\phi]). \]
	The first formula then follows from the symmetry of the bracket
	$[\theta, \chi]$. The second formula follows from the definition of the
	bracket $[\psi, \phi]$.
\end{proof}
\begin{prop} \label{prop:variants}
	The following are equivalent:
	\begin{enumerate}
		\item $[\psi, \psi] \cdot \psi = 0$ for all $\psi \in S_\pm$.
		\item $[\psi, \phi] \cdot \chi + [\chi, \psi] \cdot \phi + [\phi, \chi] \cdot \psi = 0$ for all $\psi, \phi, \chi \in S_\pm$
		\item $|[\psi, \psi]|^2 = 0$ for all $\psi \in S_\pm$.
	\end{enumerate}
\end{prop}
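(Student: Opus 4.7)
The plan is to deduce all three equivalences by polarization, combined with the defining property of the bracket (as the vector dual to the action under $\langle-,-\rangle$) and the symmetries from Proposition~\ref{prop:symmetries}. The equivalence $(1)\iff(2)$ is pure polarization: the symmetric trilinear form
\[ T(\psi,\phi,\chi) := [\psi,\phi]\cdot\chi + [\chi,\psi]\cdot\phi + [\phi,\chi]\cdot\psi \]
satisfies $T(\psi,\psi,\psi)=3[\psi,\psi]\cdot\psi$, so it is three times the polarization of the cubic map $\psi\mapsto[\psi,\psi]\cdot\psi$. Over a field of characteristic zero a homogeneous polynomial map vanishes identically iff its polarization does, so $(1)$ and $(2)$ are equivalent.

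The implication $(1)\Rightarrow(3)$ is immediate from the definition of $[\psi,\psi]$, which is characterized by $g([\psi,\psi],A)=\langle\psi,A\cdot\psi\rangle$ for all $A\in V$. Setting $A=[\psi,\psi]$ gives $|[\psi,\psi]|^2 = \langle\psi,[\psi,\psi]\cdot\psi\rangle$, which vanishes under $(1)$.

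The nontrivial direction is $(3)\Rightarrow(1)$, where Proposition~\ref{prop:symmetries} does the real work. Starting from $Q(\psi):=\langle\psi,[\psi,\psi]\cdot\psi\rangle = 0$, I would polarize once by substituting $\psi\mapsto\psi+t\phi$ and reading off the coefficient of $t$, obtaining
\[ \langle\psi,[\psi,\psi]\cdot\phi\rangle + 2\langle\psi,[\psi,\phi]\cdot\psi\rangle + \langle\phi,[\psi,\psi]\cdot\psi\rangle = 0. \]
By the first symmetry of Proposition~\ref{prop:symmetries} (swap of the two unbracketed arguments) the third term equals the first, and by the second symmetry (swap of bracketed pair with unbracketed pair) the middle term also equals the first; the identity thus collapses to $\langle\psi,[\psi,\psi]\cdot\phi\rangle=0$ for all $\psi,\phi$. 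Polarizing once more, this time in $\psi$, and applying the two symmetries again in the same spirit, I expect to reach
\[ \langle\xi,\,[\psi,\psi]\cdot\phi + 2[\psi,\phi]\cdot\psi\rangle = 0 \quad\text{for all }\xi,\psi,\phi\in S_\pm. \]
Since the pairing $\langle-,-\rangle\maps S_+\tensor S_-\to\R$ is nondegenerate (taking $\xi=\psi$ yields $|\psi|^2$ when $S_\pm$ are both identified with $\K^2$), varying $\xi$ strips off the pairing, yielding $[\psi,\psi]\cdot\phi+2[\psi,\phi]\cdot\psi=0$. Setting $\phi=\psi$ then gives $3[\psi,\psi]\cdot\psi=0$, which is $(1)$.

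The main obstacle is bookkeeping rather than anything structural: one must apply the two symmetries of Proposition~\ref{prop:symmetries} in the right order so that the polarized terms collapse to a single expression at each stage, so that nondegeneracy of the pairing can finish the job. The argument goes through identically for $S_+$ and $S_-$, since Proposition~\ref{prop:symmetries} supplies parallel symmetries on both sides.
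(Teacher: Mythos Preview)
Your proof is correct and uses the same ingredients as the paper's: polarization, the symmetries of Proposition~\ref{prop:symmetries}, and nondegeneracy of $\langle-,-\rangle$. The paper compresses your two-step polarization for $(3)\Rightarrow(1)$ into a single observation that Proposition~\ref{prop:symmetries} makes the 4-linear $\langle\theta,\,[\psi,\phi]\cdot\chi + [\chi,\psi]\cdot\phi + [\phi,\chi]\cdot\psi\rangle$ totally symmetric in all four slots, so it vanishes identically if and only if its quartic specialization $|[\psi,\psi]|^2$ does.
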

\begin{proof}
Again, we shall prove this in the case where all spinors are in $S_+$, the case
of $S_-$ being virtually identical. Because the bracket is symmetric, the
trilinear expression 
\[ [\psi, \phi] \cdot \chi + [\chi, \psi] \cdot \phi + [\phi, \chi] \cdot \psi \]
is totally symmetric in its three arguments. Just as a symmetric bilinear
vanishes if and only if the associated quadratic form vanishes, a symmetric
trilinear vanishes if and only if the associated cubic form does. In this case,
that cubic form is, up to a numerical factor:
\[ [\psi, \psi] \cdot \psi. \]
So (1) holds if and only if (2) does.

On the other hand, we can use the symmetries of Proposition
\ref{prop:symmetries} to show that the following expression is symmetric in all
four spinors:
\[ \langle \theta, [\psi, \phi] \cdot \chi + [\chi, \psi] \cdot \phi + [\phi, \chi] \cdot \psi \rangle. \]
Statement (2) holds if and only if this expression vanishes for all $\theta$,
but this totally symmetric 4-linear vanishes if and only if the associated
quartic form vanishes. In this case, that quartic form is, up to a multiplicative
factor:
\[ \langle \psi, [\psi, \psi] \cdot \psi \rangle = g([\psi, \psi], [\psi, \psi]). \] 
Thus, (2) holds if and only if (3) holds.
\end{proof}

We now prove the 3-$\psi$'s rule. Note that it is really the alternative law,
rather than any division algebra axioms, that does the work. Our proof is based
on an argument in the appendix to a paper by Dray, Janesky and Manogue
\cite{DrayJaneskyManogue}.
\begin{thm}  
\label{thm:3psirule}
Suppose $\psi \in S_\pm$.  Then $[\psi, \psi] \cdot \psi = 0$. In other words,
$[\psi, \psi] \psi = 0$ for $\psi \in S_+$, and $\widetilde{[\psi, \psi]}
\psi = 0$ for $\psi \in S_-$.
\end{thm}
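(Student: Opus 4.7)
My plan is to prove the identity by a direct matrix computation, relying only on the alternative law for $\K$.

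First I would reduce both cases to a single computation. For $\psi \in S_+$ the formulas in Section \ref{sec:k+2} give $[\psi,\psi] = 2\,\widetilde{\psi\psi^\dagger}$ and the action of a vector on $S_+$ is by left multiplication, so the statement is $\widetilde{\psi\psi^\dagger}\,\psi = 0$. For $\psi \in S_-$ the bracket is $[\psi,\psi] = 2\,\psi\psi^\dagger$ and a vector $A$ acts on $S_-$ via $\tilde A$, so the statement is again $\widetilde{\psi\psi^\dagger}\,\psi = 0$. Thus in both cases it suffices to establish this single equation.

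Next I would write $\psi = \left(\begin{smallmatrix} a \\ b \end{smallmatrix}\right)$ with $a,b\in\K$ and compute
\[ \psi\psi^\dagger = \begin{pmatrix} aa^* & ab^* \\ ba^* & bb^* \end{pmatrix}, \qquad \tr(\psi\psi^\dagger) = |a|^2 + |b|^2, \]
so that
\[ \widetilde{\psi\psi^\dagger} = \psi\psi^\dagger - \tr(\psi\psi^\dagger)\,1 = \begin{pmatrix} -|b|^2 & ab^* \\ ba^* & -|a|^2 \end{pmatrix}. \]
Multiplying on the right by $\psi$ gives the column vector
\[ \widetilde{\psi\psi^\dagger}\,\psi = \begin{pmatrix} -|b|^2 a + (ab^*)b \\ (ba^*)a - |a|^2 b \end{pmatrix}. \]

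The key step—and really the only nontrivial one—is to collapse the two nonassociative products. Since $\K$ is alternative, any subalgebra generated by two elements is associative; in particular $(ab^*)b = a(b^*b) = a|b|^2$ and $(ba^*)a = b(a^*a) = b|a|^2$. Because $|a|^2,|b|^2 \in \R$ commute with every element of $\K$, both entries cancel to zero. This establishes $\widetilde{\psi\psi^\dagger}\,\psi = 0$ and hence the 3-$\psi$'s rule.

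The main obstacle I anticipate is a bookkeeping worry rather than a genuine conceptual one: making sure the two associator moves $(ab^*)b \to a(b^*b)$ and $(ba^*)a \to b(a^*a)$ really do fall under alternativity, since the expressions superficially involve three elements $a, b, b^*$ (respectively $b,a,a^*$). This is fine because the subalgebra generated by $a$ and $b$ automatically contains $a^*$ and $b^*$ (as $x^* = 2\Re(x) - x$), so each triple lies in a two-generator subalgebra and hence associates.
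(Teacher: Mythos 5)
Your proposal is correct and follows essentially the same route as the paper: both reduce to showing $\widetilde{\psi\psi^\dagger}\,\psi = 0$ and then invoke alternativity (Artin's theorem) to collapse the nonassociative products. The only difference is presentational—you expand the matrix entries explicitly, while the paper argues directly with $(\psi\psi^\dagger)\psi - \psi(\psi^\dagger\psi)$, noting $\tr(\psi\psi^\dagger)=\psi^\dagger\psi$—and your closing remark that the two-generator subalgebra contains the conjugates is exactly the right justification.
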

\begin{proof}
Let $\psi \in S_+$. By definition,
\[     [\psi, \psi] \psi =  2(\widetilde{\psi \psi^\dagger}) \psi 
= 2(\psi \psi^\dagger - \tr(\psi \psi^\dagger) 1) \psi .\]
It is easy to check that $\tr (\psi \psi^\dagger) = \psi^\dagger \psi$,
so 
\[     [\psi, \psi] \psi =  
2((\psi \psi^\dagger) \psi - (\psi^\dagger \psi) \psi ).\]
Since $\psi^\dagger \psi$ is a real number, it commutes with $\psi$:
\[     [\psi, \psi] \psi =  
2((\psi \psi^\dagger) \psi - \psi (\psi^\dagger \psi) ) .\]
Since $\K$ is alternative, every subalgebra of $\K$ generated by two elements
is associative.  The spinor $\psi \in \K^2$ is built from just two elements of
$\K$, so the right-hand side vanishes. The proof for the second case is
similar.
\end{proof}

Similarly, spinors in dimension 4, 5, 7 and 11 satisfy a related identity,
written in gamma matrix notation as follows:
\[ \Psibar \Gamma_{ab} \Psi \Psibar \Gamma^b \Psi = 0 \]
This identity shows up in two prominent places in the physics literature.
First, it is required for the existence of 2-brane theories in these dimensions
\cite{AchucarroEvans, Duff}. This is because it allows the construction of a
Wess--Zumino--Witten term for these theories, which give these theories
Siegel symmetry.  

Yet it is known that 2-branes in 11 dimensions are intimately connected to
supergravity. Indeed, the Siegel symmetry imposed by the WZW term constrains
the 2-brane background to be that of 11-dimensional supergravity \cite{Tanii}.
So it should come as no surprise that this spinor identity also plays a
crucial role in supergravity, most visibly in the work of D'Auria and 
Fr\'e \cite{DAuriaFre} and subsequent work by Sati, Schreiber and Stasheff 
\cite{SSS}.  

This identity is equivalent to the \define{4-$\Psi$'s rule}:
\[ [\Psi, [\Psi , \Psi] \Psi] = 0 . \]
To see this, note that we can turn a pair of spinors $\Psi$ and $\Phi$ into a
2-form, $\Psi * \Phi$.  This comes from the fact that we can embed bivectors inside
the Clifford algebra $\Cliff(\V)$ via the map
\[ \A \wedge \B \mapsto \A\B - \B\A \in \Cliff(\V). \]
These can then act on spinors using the Clifford action. Thus, define:
\begin{equation}
\label{eq:star}
 (\Psi * \Phi) (\A, \B) = \langle \Psi, (\A \wedge \B) \Phi \rangle. 
\end{equation}
But when $\Psi = \Phi$, we can simplify this using the Clifford relation:
\begin{eqnarray*}
	(\Psi * \Psi) (\A, \B) & = & \langle \Psi, (\A \B - \B \A) \Psi \rangle \\
	                       & = & \langle \Psi, 2\A \B \Psi \rangle - \langle \Psi, \Psi \rangle h(\A, \B) \\
			       & = & 2 \langle \Psi, \A \B \Psi \rangle
\end{eqnarray*}
where we have used the skew-symmetry of the form. The index-ridden identity
above merely says that inserting the vector $[\Psi, \Psi]$ into one slot
of the 2-form $\Psi * \Psi$ is zero, no matter what goes into the other slot:
\[ (\Psi * \Psi) (\A, [\Psi, \Psi]) = 2 \langle \Psi, \A[\Psi, \Psi] \Psi \rangle  = 0\]
for all $\A$. By the definition of the bracket, this is the same as
\[ 2 h( [\Psi, [\Psi, \Psi] \Psi], \A) = 0 \]
for all $\A$. Thus, the index-ridden identity is equivalent to:
\[ [\Psi, [\Psi, \Psi] \Psi] = 0 \]
as required.

Now, let us prove this. As we shall see, we use both the 3-$\psi$'s rule and
the cyclic property of the real trace in the proof, both of which follow from
our use of normed division algebras. For another derivation of the 4-$\Psi$'s
rule using division algebras, see the paper of Foot and Joshi \cite{FootJoshi}.
\begin{thm}
Suppose $\Psi \in \S$.  Then $[\Psi, [\Psi, \Psi] \Psi] = 0$. 
\end{thm}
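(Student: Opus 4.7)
The plan is a direct calculation via two applications of Proposition \ref{prop:bracket}, using the 3-$\psi$'s rule (Theorem \ref{thm:3psirule}) and the cyclic property of the real trace (Proposition \ref{prop:realtrace}) as the two main tools.

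Write $\Psi = (\psi, \phi) \in S_+ \oplus S_-$ and abbreviate $r := \Re(\psi^\dagger \phi)$. By Proposition \ref{prop:bracket} with $\Phi = \Psi$, the vector $\A_0 := [\Psi, \Psi] \in \V$ has diagonal entries $\pm 2r$ and off-diagonal $V$-block $A_0 = -[\psi,\psi] + [\phi,\phi] = -2\widetilde{\psi\psi^\dagger} + 2\phi\phi^\dagger$, with trace-reverse $\tilde A_0 = -2\psi\psi^\dagger + 2\widetilde{\phi\phi^\dagger}$. Computing $\A_0 \Psi = (\psi', \phi')$ then produces, among other things, the cubic terms $\widetilde{\psi\psi^\dagger}\psi$ and $\widetilde{\phi\phi^\dagger}\phi$. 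These vanish by the 3-$\psi$'s rule, so
\[ \psi' = 2r\psi - 2(\psi\psi^\dagger)\phi, \qquad \phi' = 2(\phi\phi^\dagger)\psi - 2r\phi. \]

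Next I would apply Proposition \ref{prop:bracket} a second time to $[\Psi, \A_0 \Psi]$ and check that both components vanish. The scalar component $\langle \psi, \phi'\rangle + \langle \psi', \phi\rangle$ has its $r$-cross-terms cancel, leaving $2\Retr(\psi^\dagger (\phi\phi^\dagger)\psi) - 2\Retr(\phi^\dagger (\psi\psi^\dagger)\phi)$; by the cyclic property of the real trace, both summands equal $\Retr((\psi\psi^\dagger)(\phi\phi^\dagger))$, so this piece is zero. The $V$-component $-[\psi, \psi'] + [\phi, \phi']$ expands via $[\psi, \psi'] = \widetilde{\psi\psi'^\dagger + \psi'\psi^\dagger}$ and $[\phi, \phi'] = \phi\phi'^\dagger + \phi'\phi^\dagger$ into several hermitian-matrix terms, which after substitution of $\psi', \phi'$ and collection must be shown to cancel.

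The hard part is this last $V$-component calculation. Over the octonions, nonassociativity prevents $(\psi\psi^\dagger)\phi$ from collapsing to $\psi(\psi^\dagger\phi)$, so matrix products such as $\psi(\phi^\dagger(\psi\psi^\dagger))$ and $((\psi\psi^\dagger)\phi)\psi^\dagger$ have to be tracked carefully rather than rearranged freely. The two rescuing facts are that alternativity makes any subalgebra of $\K$ generated by two elements associative, so the rearrangements we actually need do go through, and that any leftover cubic expressions of the form $\widetilde{\xi\xi^\dagger}\xi$ get killed by a second application of the 3-$\psi$'s rule, with the remaining scalar pieces again handled by the cyclic trace property.
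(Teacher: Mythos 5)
Your setup and your treatment of the scalar component are exactly right, and match the paper's computation: after killing the cubes $\widetilde{[\psi,\psi]}\psi$-type terms with the 3-$\psi$'s rule, the component of $[\Psi,[\Psi,\Psi]\Psi]$ orthogonal to $V\subset\V$ reduces to a difference of two real traces that cancel by cyclicity. The gap is the $V$-component. You write that the terms ``must be shown to cancel'' and gesture at alternativity and a second use of the 3-$\psi$'s rule, but you never carry this out, and it is not clear it would go through as described: after substituting $\psi'$ and $\phi'$, the surviving matrices are things like $\psi\bigl(\phi^\dagger(\psi\psi^\dagger)\bigr)$, $\bigl((\psi\psi^\dagger)\phi\bigr)\psi^\dagger$, $\bigl((\phi\phi^\dagger)\psi\bigr)\phi^\dagger$, which are \emph{mixed} cubics in $\psi$ and $\phi$ involving up to four distinct octonions. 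These are not of the form $\widetilde{\xi\xi^\dagger}\xi$, so the 3-$\psi$'s rule does not apply to them directly (you would at least need its polarized, trilinear form), and the two-generator associativity you invoke does not cover products of three independent spinor entries. As it stands, the hardest part of the proof is missing.

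The paper sidesteps this entirely with a short representation-theoretic argument you should adopt: the map $\Psi \mapsto [\Psi,[\Psi,\Psi]\Psi]$ is $\Spin(\V)$-equivariant, so the span of its image is a subrepresentation of $\V$; since $\V$ is irreducible, that span is either $0$ or all of $\V$. Your computation shows the component along the direction orthogonal to $V$ vanishes identically, so the image lies in the proper subspace $V\subsetneq\V$, forcing the span to be $0$. Thus the vanishing of the one component you \emph{did} compute already finishes the proof, and the unwieldy $V$-component calculation never has to be done.
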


\begin{proof}
Let $\Psi = (\psi, \phi)$.  By Proposition \ref{prop:bracket},
\[ [\Psi, \Psi] = \left( \begin{array}{cc} 2 \langle \psi, \phi \rangle & -\widetilde{[\psi, \psi]} + \widetilde{[\phi, \phi]} \\
					   -[\psi, \psi] + [\phi, \phi] & -2 \langle \psi, \phi \rangle  \\
				      \end{array} \right) \]
and thus
\[ [\Psi, \Psi] \Psi = \left( \begin{array}{c} 2 \langle \psi, \phi \rangle \psi - \widetilde{[\psi, \psi]} \phi + \widetilde{[\phi, \phi]} \phi \\
						    -[\psi, \psi]\psi + [\phi, \phi]\psi - 2\langle \psi, \phi \rangle \phi  \\
				      \end{array} \right). \]
Both $[\psi, \psi]\psi = 0$ and $\widetilde{[\phi, \phi]} \phi = 0$ by
the 3-$\psi$'s rule, Theorem \ref{thm:3psirule}.  So:
\[ [\Psi, \Psi] \Psi = \left( \begin{array}{c} 2 \langle \psi, \phi \rangle \psi - \widetilde{[\psi, \psi]} \phi \\ 
					       \ [\phi, \phi] \psi - 2 \langle \psi, \phi \rangle \phi \\ 
				      \end{array} \right). \]
The resulting matrix for $[\Psi, [\Psi, \Psi] \Psi]$ is large
and unwieldy, so we shall avoid writing it out. Fortunately, all
we really need is the $(1,1)$ entry. Recall, this is the component of the
vector $[\Psi, [\Psi, \Psi] \Psi]$ that is orthogonal to the
subspace $V \subset \V$. Call this component $a$.  A calculation shows:
\begin{eqnarray*}
a & = & \langle \psi, [\phi, \phi] \psi \rangle - \langle \widetilde{[\psi, \psi]} \phi, \phi \rangle \\
  & = & \Retr (\psi^\dagger (2 \phi \phi^\dagger) \psi)  - \Retr (\phi^\dagger (2 \psi \psi^\dagger) \phi ) \\
  & = & 0 ,
\end{eqnarray*}
where the two terms cancel by the cyclic property of the real trace, 
Proposition \ref{prop:realtrace}.   Thus, this
component of the vector $[\Psi, [\Psi, \Psi] \Psi]$ vanishes.
But since the map $\Psi \mapsto [\Psi, [\Psi, \Psi] \Psi]$ is 
equivariant with respect to the action of $\Spin(\V)$, and $\V$ is an 
irreducible representation of this group, it follows that all components of 
this vector must vanish.
\end{proof}

\chapter{Supertranslation algebras and their cohomology} \label{ch:supertranslations}

\section{Superalgebra} \label{sec:superalgebra}

So far we have used normed division algebras to construct a number of algebraic
structures: vectors as elements of $\h_2(\K)$ or $\K[4]$, spinors as elements
of $\K^2$ or $\K^4$, and the various bilinear maps involving vectors, spinors,
and scalars.  However, to describe supersymmetry, we also need superalgebra.
Specifically, we need anticommuting spinors.  Physically, this is because
spinors are fermions, so we need them to satisfy anticommutation relations.
Mathematically, this means that we will do our algebra in the category of
`super vector spaces', SuperVect, rather than the category of vector spaces,
Vect.

A \define{super vector space} is a $\Z_2$-graded vector space $V = V_0
\oplus V_1$ where $V_0$ is called the \define{even} or
\define{bosonic} part, and $V_1$ is called the \define{odd} or
\define{fermionic} part. Like Vect, SuperVect is a symmetric monoidal 
category \cite{BaezStay}.  It has:
\begin{itemize}
	\item $\Z_2$-graded vector spaces as objects;
	\item Grade-preserving linear maps as morphisms;
	\item A tensor product $\tensor$ that has the following grading: if $V
		= V_0 \oplus V_1$ and $W = W_0 \oplus W_1$, then $(V \tensor
		W)_0 = (V_0 \tensor W_0) \oplus (V_1 \tensor W_1)$ and 
              $(V \tensor W)_1 = (V_0 \tensor W_1) \oplus (V_1 \tensor W_0)$;
	\item A braiding
		\[ B_{V,W} \maps V \tensor W \to W \tensor V \]
		defined as follows: $v \in V$ and $w \in W$ 
              are of grade $p$ and $q$, then
		\[ B_{V,W}(v \tensor w) = (-1)^{pq} w \tensor v. \]
\end{itemize}
The braiding encodes the `the rule of signs': in any calculation, when
two odd elements are interchanged, we introduce a minus sign.  

There is an obvious notion of direct sums for super vector spaces, with
\[         (V \oplus W)_0 = V_0 \oplus W_0  , \qquad 
           (V \oplus W)_1 = V_1 \oplus W_1 \]
and also an obvious notion of duals, with
\[         (V^*)_0 = (V_0)^*, \qquad (V^*)_1 = (V_1)^*  .\]
We say a super vector space $V$ is \define{even} if it equals its 
even part ($V = V_0$), and \define{odd} if it equals its odd part 
($V = V_1$).  Any subspace $U \subseteq V$ of an even (resp.\ odd) 
super vector space becomes a super vector space which is again even
(resp.\ odd). 

It is noteworthy that treating division algebras as \emph{odd} is compatible
with the physical applications of this thesis. For instance, this turns out to
force the spinor representations $S_\pm$ to be odd and the vector
representation $V$ to be even, as follows.

We treat the spinor representations $S_\pm$ as super vector spaces
using the fact that they are the direct sum of two copies of $\K$.
Since $\K$ is odd, so are $S_+$ and $S_-$.  Since $\K^2$ is odd, so is
its dual.  This in turn forces the space of linear maps from $\K^2$ to
itself, $\End(\K^2) = \K^2 \tensor (\K^2)^*$, to be even. This even
space contains the $2 \times 2$ matrices $\K[2]$ as the subspace of
maps realized by left multiplication:
\[ \begin{array}{rcl}
\K[2] & \inclusion & \End(\K^2) \\
A     & \mapsto         & L_A .
\end{array} \]
$\K[2]$ is thus even. Finally, this forces the subspace of hermitian
$2 \times 2$ matrices, $\h_2(\K)$, to be even.  So, the vector
representation $V$ is even.  All this matches the usual rules in
physics, where spinors are fermionic and vectors are bosonic.

\section{Cohomology of Lie superalgebras} \label{sec:cohomology}

We now fuse the vectors and spinors we described with division algebras into a
single structure.  In any dimension, a symmetric bilinear intertwining operator
that eats two spinors and spits out a vector gives rise to a `super-Minkowski
spacetime' \cite{Deligne}.  The infinitesimal translation symmetries of this
object form a Lie superalgebra, called the `supertranslation algebra', $\T$.  The
cohomology of this Lie superalgebra is interesting and apparently rather
subtle \cite{Brandt, MSX}.  We shall see that its 3rd cohomology is
nontrivial in dimensions $k+2=3$, 4, 6 and 10, thanks to the 3-$\psi$'s rule.
algebras.  Similarly, its 4th cohomology is nontrivial in dimensions $k+3 = 4$, 5, 7
and 11, thanks to the 4-$\Psi$'s rule.

For arbitrary superspacetimes, the cohomology of $\T$ is not explicitly known.
Techniques to compute it have been described by Brandt \cite{Brandt}, who
applied them in dimension 5 and below. Movshev, Schwarz and Xu \cite{MSX}
showed how to augment these techniques using the computer algebra system LiE
\cite{LiE}, and fully describe the cohomology in dimension 6 and 10 in this
way.

Based on the work of these authors, it seems likely that the 3rd and 4th
cohomology of $\T$ is nontrivial in sufficiently large dimensions. We
conjecture, however, that dimensions $k+2$ and $k+3$ are the only ones with
\emph{Lorentz invariant} 3- and 4-cocycles. Exploratory calculations with LiE
bare this conjecture out, but the general answer appears to be unknown.

Now for some definitions. Briefly, a \define{Lie superalgebra} is a Lie algebra
in the category of super vector spaces. More concretely, it is a super vector
space $\g$ equipped with a graded-antisymmetric bracket:
\[ [-,-] \maps \Lambda^2 \g \to \g, \]
that satisfies the Jacobi identity up to some signs:
\[ [X,[Y,Z]] = [ [X,Y], Z] + (-1)^{|X||Y|} [Y,[X,Z]], \]
for homogeneous $X, Y, Z \in \g$. Here, $\Lambda^2 \g$ is the exterior square
of $\g = \g_0 \oplus \g_1$ as a super vector space. As an ordinary vector
space, 
\[ \Lambda^2 \g \iso \Lambda^2 \g_0 \, \oplus \, \g_0 \tensor \g_1 \, \oplus \, \Sym^2 \g_1 , \]
thanks to the rule of signs.

We will be concerned with several Lie superalgebras in this thesis. However,
one of the most important is also one of the most simple. Take $V$ to be the
space of vectors in Minkowski spacetime in any dimension, and take $S$ to be
any spinor representation in this dimension.  Suppose that there is a symmetric
equivariant bilinear map:
\[ [-,-] \maps S \tensor S \to V. \]
Form a super vector space $\T$ with 
\[ \T_0 = V , \qquad \T_1 = S .\]
We make $\T$ into a Lie superalgebra, the \define{supertranslation algebra}, by
giving it a suitable bracket operation.  This bracket will be zero except when
we bracket a spinor with a spinor, in which case it is simply $[-,-]$.  Since
this is symmetric and spinors are odd, the bracket operation is
graded-antisymmetric overall. Furthermore, the Jacobi identity holds trivially,
thanks to the near triviality of the bracket.  Thus $\T$ is indeed, a Lie
superalgebra.

Despite the fact that $\T$ is nearly trivial, its cohomology is not.  To see
this, we must first recall how to generalize Chevalley--Eilenberg cohomology
\cite{AzcarragaIzquierdo, ChevalleyEilenberg} from Lie algebras to Lie
superalgebras \cite{Leites}. Suppose $\g$ is a Lie superalgebra and $R$ is a
representation of $\g$. That is, $R$ is a supervector space equipped with a Lie
superalgebra homomorphism $\rho \maps \g \to \gl(R)$.  We now define the
cohomology groups of $\g$ with values in $R$. 

First, of course, we need a cochain complex. We define the \define{Lie
superalgebra cochain complex} $C^\bullet(\g, R)$ to consist of
graded-antisymmetric $p$-linear maps at level $p$:
\[ C^p(\g,R) = \left\{ \omega \maps \Lambda^p \g \to R \right\} . \]
In fact, the $p$-cochains $C^p(\g, R)$ are a super vector space, in which
grade-preserving elements are even, while grade-reversing elements are odd.
When $R = \R$, the trivial representation, we typically omit it from the
cochain complex and all associated groups, such as the cohomology groups. Thus,
we write $C^\bullet(\g)$ for $C^\bullet(\g,\R)$.

Next, we define the coboundary operator $d \maps C^p(\g, R) \to C^{p+1}(\g,
R)$. Let $\omega$ be a homogeneous $p$-cochain and let $X_1, \dots, X_{p+1}$ be
homogeneous elements of $\g$. Now define:
\begin{eqnarray*}
& & d\omega(X_1, \dots, X_{n+1}) = \\ 
& & \sum^{p+1}_{i=1} (-1)^{i+1} (-1)^{|X_i||\omega|} \epsilon^{i-1}_1(i) \rho(X_i) \omega(X_1, \dots, \hat{X}_i, \dots, X_{p+1}) \\
& & + \sum_{i < j} (-1)^{i+j} (-1)^{|X_i||X_j|} \epsilon^{i-1}_1(i) \epsilon^{j-1}_1(j) \omega([X_i, X_j], X_1, \dots, \hat{X}_i, \dots, \hat{X}_j, \dots X_{p+1}) .
\end{eqnarray*}
Here, $\epsilon^j_i(k)$ is shorthand for the sign one obtains by moving $X_k$
through $X_i, X_{i+1}, \dots, X_j$. In other words,
\[ \epsilon^j_i(k) = (-1)^{|X_k|(|X_i| + |X_{i+1}| + \dots + |X_j|)}. \]

Following the usual argument for Lie algebras, one can check that:

\begin{prop}
	The Lie superalgebra coboundary operator $d$ satisfies $d^2 = 0$.
\end{prop}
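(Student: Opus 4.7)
The plan is to mirror the classical Chevalley--Eilenberg proof for ordinary Lie algebras, with the main additional work being the careful tracking of the sign factors that the rule of signs in SuperVect imposes. I would prefer to organize the proof algebraically rather than combinatorially.

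Specifically, I would identify $C^\bullet(\g, R)$ with $\Lambda^\bullet \g^* \tensor R$, where the exterior algebra is formed in SuperVect (so that odd elements of $\g^*$ contribute symmetrically rather than antisymmetrically, matching the $\Sym^2 \g_1$ summand in the earlier decomposition of $\Lambda^2 \g$). Under this identification, the formula given for $d$ is precisely the unique odd derivation on $\Lambda^\bullet \g^* \tensor R$ whose action on generators is determined by the dual of the bracket on $\g^*$ and by $\rho$ on $R$. Since the super-commutator of an odd derivation with itself, $d^2 = \tfrac{1}{2}[d,d]$, is again a (even) derivation, it suffices to verify $d^2 = 0$ on generators. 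On $R$ (in degree $0$), the vanishing reduces to the statement that $\rho$ is a homomorphism of Lie superalgebras, $\rho([X,Y]) = \rho(X)\rho(Y) - (-1)^{|X||Y|}\rho(Y)\rho(X)$. On $\g^*$ (in degree $1$), the vanishing is dual to the super Jacobi identity for $\g$. Both are hypotheses we are allowed to use.

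If one instead prefers to work directly from the explicit multilinear formula for $d$, one expands $(d^2 \omega)(X_1, \dots, X_{p+2})$ and sorts the resulting terms by type: those involving two factors of $\rho$, those involving one $\rho$ and one bracket, and those involving two brackets. The pure-$\rho$ terms combine with a subset of the mixed terms and cancel using the representation property of $\rho$. Among the pure-bracket terms, those with three nested indices cancel in triples via the super Jacobi identity, while those with four disjoint indices cancel in pairs via graded antisymmetry of $\omega$.

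The principal obstacle in either approach is the sign bookkeeping. The derivation approach hides the signs inside the universal property of the super exterior algebra, so one only checks finitely many relations on generators; this is the approach I would actually write up. The direct approach is conceptually transparent but notationally heavy: one must verify at each cancellation that the $\epsilon^j_i(k)$ signs from reordering, the $(-1)^{|X_i||\omega|}$ signs from moving elements past $\omega$, and the swap signs $(-1)^{|X_i||X_j|}$ conspire to produce a net relative sign of $-1$ between the two terms being paired, and this bookkeeping is the only nontrivial part of the argument.
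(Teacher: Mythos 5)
The paper never actually writes a proof for this proposition: it simply states that ``following the usual argument for Lie algebras, one can check that'' $d^2 = 0$, and leaves the verification to the reader. Your proposal supplies what the paper omits, and your first (derivation-based) argument takes a genuinely different route from the direct expansion that the paper's remark points to. The comparison is worth noting: the paper's intended argument is the brute-force one, your ``second approach''---expand $(d^2\omega)(X_1,\dots,X_{p+2})$, sort by the number of $\rho$'s versus brackets, and cancel using the super Jacobi identity, the representation property, and graded antisymmetry---with all the $\epsilon^j_i(k)$ signs in play. Your preferred derivation argument avoids the sign chase entirely by organizing $C^\bullet(\g,R)$ as $\Lambda^\bullet\g^* \tensor R$, characterizing $d$ by its universal property, and invoking $d^2 = \tfrac12[d,d]$ on generators; this is the cleaner and more modern argument, and the two vanishing checks on generators are exactly the representation property and the dual Jacobi identity, as you say.

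Two small points of care, neither of which invalidates the argument. First, the identification $C^p(\g,R) \cong \Lambda^p\g^*\tensor R$ uses finite-dimensionality of $\g$; this holds for every Lie superalgebra in the thesis but should be stated. Second, when $R\neq\R$, $\Lambda^\bullet\g^*\tensor R$ is not an algebra, so $d$ is not literally a derivation---it is a degree-$1$ operator satisfying the Leibniz rule with respect to the $\Lambda^\bullet\g^*$-module structure, $d(\omega\cdot x) = (d\omega)\cdot x + (-1)^{|\omega|}\omega\cdot dx$. A short computation shows $d^2(\omega\cdot x) = (d^2\omega)\cdot x + \omega\cdot d^2 x$, so the ``vanishes on generators implies vanishes everywhere'' step still goes through, but the correct phrasing is ``module derivation over the DGA $\Lambda^\bullet\g^*$'' rather than ``derivation on $\Lambda^\bullet\g^*\tensor R$.'' With that adjustment, the argument is complete and correct.
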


\noindent
We thus say a $R$-valued $p$-cochain $\omega$ on $\g$ is an
\define{$p$-cocycle} or \define{closed} when $d \omega = 0$, and an
\define{$p$-coboundary} or \define{exact} if there exists an $(p-1)$-cochain
$\theta$ such that $\omega = d \theta$.   Every $p$-coboundary is an
$p$-cocycle, and we say an $p$-cocycle is \define{trivial} if it is a
coboundary.  We denote the super vector spaces of $p$-cocycles and
$p$-coboundaries by $Z^p(\g,V)$ and $B^p(\g,V)$ respectively.  The $p$th \define{ 
Lie superalgebra cohomology of $\g$ with coefficients in $R$}, denoted
$H^p(\g,R)$ is defined by 
\[ H^p(\g,R) = Z^p(\g,R)/B^p(\g,R). \]
This super vector space is nonzero if and only if there is a nontrivial
$p$-cocycle. In what follows, we shall be especially concerned with the even
part of this super vector space, which is nonzero if and only if there is a
nontrivial even $p$-cocycle. Our motivation for looking for even cocycles is
simple: these parity-preserving maps can regarded as morphisms in the category
of super vector spaces, which is crucial for the construction in Theorem
\ref{trivd} and everything following it.

Now consider Minkowski spacetimes of dimensions 3, 4, 6 and 10.  Here
Minkowski spacetime can be written as $V = \h_2(\K)$, and we can take our
spinors to be $S_+ = \K^2$.  Since from Section \ref{sec:k+2} we know there is
a symmetric bilinear intertwiner $[-,-] \maps S_+ \otimes S_+ \to V$, we obtain
the supertranslation algebra $\T = V \oplus S_+$.  We can decompose the space
of $n$-cochains with on $\T$ into summands by counting how many of the
arguments are vectors and how many are spinors:
\[     C^n(\T) \iso  
 \bigoplus_{p + q = n}   (\Lambda^p(V) \otimes \Sym^q(S_+))^* . \]
We call an element of $(\Lambda^p(V) \otimes \Sym^q(S_+))^*$ a 
{\bf {\boldmath$(p,q)$-form}}.  Since the bracket of two spinors is
a vector, and all other brackets are zero, $d$ of a $(p,q)$-form is
a $(p-1,q+2)$-form.  

Using the 3-$\psi$'s rule we can show:

\begin{thm}
\label{thm:3-cocycle}
In dimensions 3, 4, 6 and 10, the supertranslation algebra $\T$ has a
nontrivial, Lorentz-invariant even 3-cocycle taking values in the trivial
representation $\R$, namely the unique $(1,2)$-form with 
\[ \alpha(\psi, \phi, A) = g([\psi, \phi], A) \]
for spinors $\psi, \phi \in S_+$ and vectors $A \in V$.
    \end{thm}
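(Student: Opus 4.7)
The plan is to verify that $\alpha$ is a well-defined, Lorentz-invariant, even, closed 3-cochain, and that it is not a coboundary.

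The formula $\alpha(\psi, \phi, A) = g([\psi, \phi], A)$ is bilinear in the spinors and linear in the vector, and it is symmetric in the spinors because the bracket is. Hence $\alpha$ determines a unique element of $(V \tensor \Sym^2 S_+)^*$, which extends to a $(1, 2)$-form on $\T$. It is $\Spin(k+1,1)$-invariant because both $g$ and $[-,-]$ are, and it is even because it preserves parity (eating one even input and two odd inputs to produce a scalar).

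For closure, note that the only nontrivial bracket in $\T$ is $[S_+, S_+] \to V$, while $[V, V] = [V, S_+] = 0$. In the Chevalley--Eilenberg formula for $d\alpha$, every nonzero term therefore has the shape $\alpha([\psi_i, \psi_j], X_k, X_l)$ with $\psi_i, \psi_j \in S_+$; since $[\psi_i, \psi_j] \in V$ and $\alpha$ has type $(1,2)$, the remaining $X_k, X_l$ must be spinors as well. So it suffices to show $d\alpha(\psi_1, \psi_2, \psi_3, \psi_4) = 0$ for arbitrary spinors. With all inputs odd, the sign factors combine to a common value on each pair, and the six terms group naturally according to the three partitions of $\{1, 2, 3, 4\}$ into two pairs. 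Using the bracketed-vs-unbracketed exchange symmetry of Proposition \ref{prop:symmetries} (which reflects symmetry of $g$), the sum collapses to
\[ d\alpha(\psi_1, \psi_2, \psi_3, \psi_4) = -2 \bigl( T_{12|34} + T_{13|24} + T_{14|23} \bigr), \]
where $T_{ab|cd} := \alpha([\psi_a, \psi_b], \psi_c, \psi_d) = g([\psi_a, \psi_b], [\psi_c, \psi_d])$. The polarized form of the 3-$\psi$'s rule (Theorem \ref{thm:3psirule} combined with Proposition \ref{prop:variants}) says exactly that this three-term sum vanishes.

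For nontriviality, suppose $\alpha = d\beta$ for some 2-cochain $\beta$. Decomposing $\beta$ into pieces of type $(2,0)$, $(1,1)$, $(0,2)$ and using the type-shifting rule $(p, q) \mapsto (p-1, q+2)$ for $d$, only the $(2,0)$-piece $\beta_{2,0}$ can contribute to the $(1,2)$-type output $\alpha$. The Chevalley--Eilenberg formula gives $d\beta_{2,0}(A, \psi, \phi) = -\beta_{2,0}([\psi, \phi], A)$, so the condition $d\beta_{2,0} = \alpha$ forces $\beta_{2,0}(B, A) = -g(B, A)$ for every $A \in V$ and every $B$ in the image of $[-,-] : \Sym^2 S_+ \to V$. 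Since that intertwiner is a nonzero $\Spin(k+1,1)$-equivariant map into the irreducible vector representation, it is surjective, so the identity would hold on all of $V \otimes V$. This contradicts the fact that $\beta_{2,0}$ is antisymmetric while $g$ is symmetric and nonzero. The decisive step is closure, where the 3-$\psi$'s rule---holding only in dimensions 3, 4, 6, 10---does the essential work; the remaining steps are formal.
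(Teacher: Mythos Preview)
Your proof is correct and follows essentially the same route as the paper's: the same type-counting to see that $d\alpha$ lives purely in spinor arguments, the 3-$\psi$'s rule for closure, and the symmetry/antisymmetry contradiction for nontriviality. The only cosmetic difference is that you compute $d\alpha(\psi_1,\psi_2,\psi_3,\psi_4)$ directly and invoke the trilinear form of the 3-$\psi$'s rule, whereas the paper depolarizes to the diagonal and uses the quartic form $g([\psi,\psi],[\psi,\psi])=0$; these are equivalent via Proposition~\ref{prop:variants}, and your added remark that surjectivity of $[-,-]$ follows from Schur's lemma makes explicit what the paper simply asserts.
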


\begin{proof}
	First, note that $\alpha$ has the right symmetry to be a linear map on
	$\Lambda^3(V \oplus S_+)$. Second, note that $\alpha$ is a
	$(1,2)$-form, eating one vector and two spinors. Thus $d\alpha$ is a
	$(0,4)$-form.

	Because spinors are odd, $d\alpha$ is a symmetric function of four
	spinors. By the definition of $d$, $d\alpha(\psi, \phi, \chi, \theta)$
	is the totally symmetric part of $\alpha([\psi , \phi], \chi, \theta)
	= \alpha(\chi, \theta, [\psi, \phi]) = g([\chi, \theta], [\psi ,
	\phi])$.   But any symmetric 4-linear form can be obtained from
	polarizing a quartic form. In this, we polarize $g([\psi, \psi],
	[\psi, \psi])$ to get $d\alpha$. Thus:
	\[ d\alpha(\psi, \psi, \psi, \psi) = g([\psi, \psi], [\psi, \psi]) = \langle \psi, [\psi, \psi] \psi \rangle \]
	where we have used the definition of the bracket to obtain the last
	expression, which vanishes due to the 3-$\psi$ rule. Thus $\alpha$ is
	closed.

	It remains to show $\alpha$ is not exact. So suppose it is exact, and
	that
	\[ \alpha = d\omega .\]
	By our remarks above we may assume $\omega$ is a $(2,0)$-form: that is,
	an antisymmetric bilinear function of two vectors. By the definition of
	$d$, this last equation says:
	\[ g([\psi, \phi], A) = -\omega([\psi, \phi], A). \]
	But since $S_+ \tensor S_+ \to V$ is onto, this implies
	\[ g = -\omega, \]
	a contradiction, since $g$ is symmetric while $\omega$ is
	antisymmetric.	
\end{proof}

Next consider Minkowski spacetimes of dimensions 4, 5, 7 and 11.  In this case
Minkowski spacetime can be written as a subspace $\V$ of the $4 \times 4$
matrices valued in $\K$, and we can take our spinors to be $\S = \K^4$.  Since
from Section \ref{sec:k+3} we know there is a symmetric bilinear intertwiner
$[-,-] \maps \S \otimes \S \to \V$, we obtain a supertranslation algebra $\T =
\V \oplus \S$.  As before, we can uniquely 
decompose any $n$-cochain in $C^n(\T, \R)$ into a sum of 
$(p,q)$-forms, where now a {\bf {\boldmath$(p,q)$-form}} is an 
an element of $(\Lambda^p(\V) \otimes \Sym^q(\S))^*$. 
As before, $d$ of a $(p,q)$-form is a $(p-1,q+2)$-form.  
And using the 4-$\Psi$'s rule, we can show:

\begin{thm}
\label{thm:4-cocycle}
	In dimensions 4, 5, 7 and 11, the supertranslation algebra $\T$ has a
	nontrivial, Lorentz-invariant even 4-cocycle, namely the unique
	$(2,2)$-form with
	\[ \beta(\Psi, \Phi, \A, \B) = \langle \Psi, (\A\B - \B\A) \Phi \rangle \]
	for spinors $\Psi, \Phi \in \S$ and vectors $\A, \B \in \V$.  Here the
	commutator $\A\B - \B\A$ is taken in the Clifford algebra of $\V$.
\end{thm}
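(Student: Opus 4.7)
The plan is to follow the pattern of Theorem \ref{thm:3-cocycle}: verify that $\beta$ defines a $(2,2)$-cochain on $\T$, check closure via the $4$-$\Psi$'s rule, then establish non-triviality by a representation-theoretic obstruction.

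First, I would check that $\beta \in (\Lambda^2 \V \otimes \Sym^2 \S)^*$. Antisymmetry in the two vector arguments is immediate from the commutator $\A\B - \B\A$. Symmetry in the two (odd) spinor arguments follows from $\Spin(\V)$-invariance of $\langle -, -\rangle$: the bivector $\A\B - \B\A$ lies in $\so(\V) \subset \Cliff(\V)$ and acts as an infinitesimal isometry of the pairing, so $\langle (\A\B - \B\A)\Psi, \Phi\rangle = -\langle \Psi, (\A\B - \B\A)\Phi\rangle$; combined with the skew-symmetry of $\langle -, -\rangle$ from Proposition \ref{prop:bilinearform}, this yields $\beta(\Phi, \Psi, \A, \B) = \beta(\Psi, \Phi, \A, \B)$.

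Since the only nontrivial brackets in $\T$ are $[-,-]\maps \S \otimes \S \to \V$, the $5$-cochain $d\beta$ is a pure $(1,4)$-form, and by polarization it suffices to show $d\beta(\A, \Psi, \Psi, \Psi, \Psi) = 0$. The symmetry in the four equal spinors identifies all six bracket terms from the coboundary formula (up to a common sign), reducing closure to the vanishing of $\beta(\Psi, \Psi, [\Psi, \Psi], \A)$. Substituting $[\Psi,\Psi]\A = 2h(\A,[\Psi,\Psi]) - \A[\Psi,\Psi]$ from the Clifford relation into the definition of $\beta$ gives
\[ \beta(\Psi, \Psi, [\Psi,\Psi], \A) = 2\, h(\A, [\Psi,\Psi])\, \langle \Psi, \Psi\rangle - 2\, \langle \Psi, \A\,([\Psi,\Psi]\Psi) \rangle . \]
The first term vanishes because $\langle -, -\rangle$ is skew-symmetric; by the defining property of the bracket, the second term equals $-2\, h([\Psi, [\Psi,\Psi]\Psi], \A)$, which vanishes for all $\A$ by the $4$-$\Psi$'s rule.

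For non-triviality, suppose $\beta = d\omega$. Because $d$ shifts $(p,q)$-forms to $(p-1, q+2)$-forms, only the $(3,0)$-component of $\omega$ produces the $(2,2)$-form $\beta$, so we may assume $\omega$ is an alternating $3$-form on $\V$. The equation $\beta = d\omega$ then states that the bivector-valued bilinear form $\Psi * \Phi$ of equation~(\ref{eq:star}) factors through the bracket via a linear map $L \maps \V \to \Lambda^2 \V$, i.e.\ $\Psi * \Phi = L([\Psi, \Phi])$. Since both $*$ and $[-,-]$ are $\Spin(\V)$-equivariant and the bracket is surjective onto the irreducible representation $\V$ (its image is a nonzero $\Spin(\V)$-subrepresentation), $L$ must itself be $\Spin(\V)$-equivariant. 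But in each of the dimensions $k+3 = 4, 5, 7, 11$, $\V$ is not a subrepresentation of $\Lambda^2 \V$ (compare the vector and adjoint representations of $\Spin(\V)$); since $\V$ is irreducible, any nonzero equivariant map $L \maps \V \to \Lambda^2 \V$ would be an embedding, forcing $L = 0$ and hence $\beta = 0$, contradicting the nondegeneracy of $\langle -, -\rangle$. The main obstacle here is this last step: unlike Theorem \ref{thm:3-cocycle}, where a symmetry-versus-antisymmetry clash gave an immediate contradiction, here one needs the representation-theoretic input that $\V$ does not sit inside $\Lambda^2 \V$.
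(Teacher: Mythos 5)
Your verification that $\beta$ is a well-defined $(2,2)$-cochain and your closure argument track the paper's proof: the spinor symmetry comes from the fact that bivectors are infinitesimally skew with respect to $\langle -,-\rangle$ (the paper derives this from $\Gamma^0 \A = -\A^\dagger \Gamma^0$; you phrase it invariantly, but it is the same content), and the reduction of $d\beta = 0$ to the vanishing of $h([\Psi,[\Psi,\Psi]\Psi],\A)$ is exactly the computation of Section \ref{sec:spinors} followed by the 4-$\Psi$'s rule. Where you genuinely diverge is non-triviality. The paper contracts $\beta$ with the extra spatial unit vector $X = \left(\begin{smallmatrix}1&0\\0&-1\end{smallmatrix}\right)$, observes that $i_X\beta$ restricts to a multiple of $\alpha$ on $V \oplus S_+$, and shows $i_X d = -d\, i_X$ because $X$ brackets trivially with everything; exactness of $\beta$ would then force exactness of $\alpha$, contradicting Theorem \ref{thm:3-cocycle}. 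You instead argue directly: after reducing to a $(3,0)$-primitive (both proofs make the analogous reduction), exactness says $\Psi * \Phi$ factors through the surjective equivariant bracket, so the factoring map $L \maps \V \to \Lambda^2\V^*$ is $\Spin(\V)$-equivariant, and Schur's lemma together with the fact that the vector representation does not occur in $\Lambda^2\V$ (the adjoint representation, irreducible of different dimension in all four cases) forces $L = 0$ and hence $\beta = 0$. This is correct and arguably cleaner: it is independent of the 3-cocycle theorem and isolates the representation-theoretic reason for non-exactness, at the cost of importing the irreducibility facts about $\V$ and $\Lambda^2\V$, which the paper's dimensional-reduction trick avoids. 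One small point to tighten: at the very end you need $\beta \neq 0$, and nondegeneracy of $\langle -,-\rangle$ only reduces this to the claim that some bivector acts nontrivially on $\S$; that holds because $\Lambda^2\V$ acts through the nontrivial spin representation, but it deserves a sentence.
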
 

\begin{proof}
	First, to see that $\beta$ has the right symmetry to be a map on
	$\Lambda^4(\V \oplus \S)$, we note that it is antisymmetric on vectors.
	Recalling from Proposition \ref{prop:bilinearform} that the
	skew-symmetric bilinear form $\langle -, - \rangle$ is defined by:
	\[ \langle \Psi, \Phi \rangle = \Re(\Psi^\dagger \Gamma^0 \Phi) , \]
	and that because
	\[ \Gamma^0 \A = -\A^\dagger \Gamma^0, \]
	we have:
	\[ \Gamma^0 \A \B = \A^\dagger \B^\dagger \Gamma^0. \]
	Thus:
	\[ \langle \Psi, \A \B \Phi \rangle = \langle \B \A \Psi, \Phi \rangle = -\langle \Phi, \B \A \Psi \rangle, \]
	so we have:
	\[ \langle \Psi, (\A\B - \B\A) \Phi \rangle = \langle \Phi, (\A\B - \B\A) \Psi \rangle. \]
	Thus, $\beta$ is symmetric on spinors.

	Next note that $d\beta$ is a $(1,4)$-form, symmetric on its four spinor
	inputs. It is thus proportional to the polarization of
	\[ \alpha(\Psi, \Psi, [\Psi, \Psi], \A) = 
         \Psi * \Psi( [\Psi, \Psi], \A) \]
	We encountered this object in Section~\ref{sec:spinors}, where we showed
	that it is proportional to 
	\[ h([\Psi, [\Psi, \Psi] \Psi], \A). \]
	Moreover, this last expression vanishes by the 4-$\Psi$'s rule.  So,
	$\beta$ is closed.

	Furthermore, $\beta$ is not exact. To see this, consider the unit
	vector $\left( \begin{smallmatrix} 1 & 0 \\ 0 & -1 \end{smallmatrix}
	\right)$ orthogonal to $V \subseteq \V$. Taking the interior product of
	$\beta$ with this vector, a quick calculation shows:
	\[ \beta(\Psi, \Phi, \left( \begin{smallmatrix} 1 & 0 \\ 0 & -1 \end{smallmatrix} \right), \A) = 2\langle \psi_1, \gamma(A) \phi_1 \rangle + 2\langle \tilde{\gamma}(A) \psi_2, \phi_2 \rangle, \]
	where we have decomposed $\Psi = (\psi_1,\psi_2)$ and $\Phi = (\phi_1,
	\phi_2)$ into their components in $\S = S_+ \oplus S_-$, and $A$ is the
	component of $\A$ in $V$. Restricting to the subalgebra $V \oplus S_+
	\subseteq \V \oplus \S$, we see this is just $\alpha$, up to a factor.

	So, it suffices to check that interior product with $X = \left(
	\begin{smallmatrix} 1 & 0 \\ 0 & -1 \end{smallmatrix} \right)$
	preserves exactness. For then, if $\beta$ were exact, it would
	contradict that fact that $\alpha$ is not. Indeed, let $\omega$ be an
	$n$-cochain on $\T$, and let $X_1, \dots, X_n \in \T$. Then, by our
	formula for the coboundary operator, we have:
	\begin{eqnarray*}
	& & d\omega(X, X_1, \dots, X_n)  = \\
	& & \sum_{i < j} -(-1)^{i+j} (-1)^{|X_i||X_j|} \epsilon^{i-1}_1(i) \epsilon^{j-1}_1(j) \omega(X, [X_i, X_j], X_1, \dots, \hat{X}_i, \dots, \hat{X}_j, \dots X_{n}) \\
	& & + \sum_{i=1}^n (-1)^{1+i} \epsilon_1^{i-1}(i) \omega([X,X_i], X_1, \dots, \hat{X}_i, \dots, X_n),
	\end{eqnarray*}
	where, taking care with signs, we have collected terms involving
	bracketing with $X$ into the second summation. But $X$ is a vector, so
	all brackets with it vanish, and the second summation is zero.

	If we write $i_X \omega$ for the operation of taking the interior
	product of $\omega$ with $X$, we have just shown:
	\[ i_X d \omega = -d i_X \omega \]
	for any $\omega$. In particular, if $\omega = d \theta$ then $i_X
	\omega = d( -i_X \theta )$, and so interior product with $X$ preserves
	exactness, as claimed.
\end{proof}

\chapter{An application: super-Yang--Mills theory} \label{ch:sym}

For the moment, let us set aside our quest to understand division algebras,
supersymmetry and higher gauge theory and focus on a special case: the
connection between division algebras and super-Yang--Mills theories. Such
theories are the low energy limit of superstring theories in a fixed background
\cite{GreenSchwarzWitten}, so it is not surprising that they also occur only in
spacetimes of dimension 3, 4, 6 and 10. 

The minimal supersymmetric extension of pure Yang--Mills theory has the
Lagrangian:
\[ L = 
-\fourth \langle F, F \rangle + 
\half \langle \psi, \slashed{D}_A \psi \rangle. \]
Here $A$ is a connection on a bundle with semisimple gauge group $G$, $F$ is
the curvature of $A$, $\psi$ is a $\g$-valued spinor field, and $\slashed{D}_A$
is the covariant Dirac operator associated with $A$.  In the physics
literature, it is well-known that this theory is supersymmetric if and only if
the dimension of spacetime is $3,4,6,$ or $10$.  Our goal in this section is to
present a self-contained proof of the `if' part of this result, based on the
3-$\psi$'s rule. Along the way, we shall give a division algebra interpretation
of the Lagrangian, $L$.

The proof that $L$ is supersymmetric goes back to the work of Brink, Schwarz
and Sherk~\cite{BrinkSchwarzScherk}, and others. The book by Green, Schwarz and
Witten~\cite{GreenSchwarzWitten} contains a standard proof based on the
properties of Clifford algebras in various dimensions.  But Evans~\cite{Evans}
has shown that the supersymmetry of $L$ in dimension $k+2$ implies the
existence of a normed division algebra of dimension $k$.  Conversely, Kugo and
Townsend~\cite{KugoTownsend} showed how spinors in dimension 3, 4, 6 and 10
derive special properties from the normed division algebras $\R$, $\C$, $\H$
and $\O$.  They formulated a supersymmetric model in 6 dimensions using the
quaternions, $\H$. They also speculated about a similar formalism in 10
dimensions using the octonions, $\O$.

Shortly after Kugo and Townsend's work, Sudbery~\cite{Sudbery} used division
algebras to construct vectors, spinors and Lorentz groups in Minkowski
spacetimes of dimensions 3, 4, 6 and 10.  He then refined his construction
with Chung~\cite{ChungSudbery}, and with Manogue~\cite{ManogueSudbery} he used
these ideas to give an octonionic proof of the supersymmetry of the above
Lagrangian in dimension 10.  This proof was later simplified by Manogue, Dray
and Janesky~\cite{DrayJaneskyManogue}. In the meantime, Schray~\cite{Schray}
applied the same tools to the superparticle.

All this work has made it quite clear that normed division algebras
explain why the above theory is supersymmetric in dimensions 3, 4, 6
and 10.  Technically, what we need to check for supersymmetry is that
$\delta L$ is a total divergence with respect to the supersymmetry
transformation
\begin{eqnarray*}
    \delta A    & = & [ \epsilon, \psi] \\
    \delta \psi & = & \textstyle{\half} F \epsilon
\end{eqnarray*}
for any constant spinor field $\epsilon$. A calculation that works in any
dimension shows that
\[ \delta L = \tri \psi + \mbox{divergence} \]
where $\tri \psi$ is a certain expression depending in a trilinear 
way on $\psi$ and linearly on $\epsilon$.  

So, the marvelous fact that needs to be understood is that $\tri \psi = 0$ in
dimensions 3, 4, 6 and 10, thanks to special properties of the normed division
algebras $\R$, $\C$, $\H$ and $\O$.  Indeed, we shall show that it is a
consequence of the 3-$\psi$'s rule. Yet this rule is a direct consequence of the
fact that $\R$, $\C$, $\H$ and $\O$ are alternative, so one could say that the
vanishing of $\tri \psi$ is a direct consequence of the total antisymmetry of a
simpler trilinear: the associator $[a,b,c]$. 

Let's get to work. For simplicity, we shall work over Minkowski spacetime, $M$.
This allows us to treat all bundles as trivial, sections as functions, and
connections as $\g$-valued 1-forms.  At the outset, we fix an invariant inner
product on $\g$, the Lie algebra of a semisimple Lie group $G$. We shall use
the following standard tools from differential geometry to construct $L$, none
of which need involve spinors or division algebra technology:
\begin{itemize}
	\item A connection $A$ on a principal $G$-bundle over $M$.
	      Since the bundle is trivial we think of this connection 
             as a $\g$-valued 1-form.
	\item The exterior covariant derivative $d_A = d + [A, -]$ on 
             $\g$-valued $p$-forms.
	\item The curvature $F = dA + \frac{1}{2} [A,A]$,  
             which is a $\g$-valued 2-form.
	\item The usual pointwise inner product $\langle F, F \rangle$ on
		$\g$-valued 2-forms, defined using the Minkowski metric on $M$
		and the invariant inner product on $\g$.
\end{itemize}
We also need the following spinorial tools. Because spinors describe
fermions, we assume $S_+$ and $S_-$ are odd objects in SuperVect.
So, whenever we switch two spinors, we introduce a minus sign.
\begin{itemize}
	\item A $\g$-valued section $\psi$ of a spin bundle over $M$. Note that
		this is, in fact, just a function:
		\[ \psi \maps M \to S_\pm \tensor \g. \]
		We call the collection of all such functions $\Gamma(S_\pm
		\tensor \g)$.
	\item The covariant Dirac operator $\slashed{D}_A$ derived from $D_A$.
		Of course,
		\[ \slashed{D}_A \maps \Gamma(S_\pm \tensor \g) \to \Gamma(S_\mp \tensor \g) \]
		and in fact,
		\[ \slashed{D}_A = \slashed{\partial} + A. \]
	\item A bilinear pairing 
		\[ \langle -,- \rangle \maps \Gamma(S_+ \tensor \g) \tensor \Gamma(S_- \tensor \g) \to C^\infty(M) \]
		built pointwise using our pairing
		\[ \langle -,- \rangle \maps S_+ \tensor S_- \to \R \]
		and the invariant inner product on $\g$.
\end{itemize}

The basic fields in our theory are a
connection on a principal $G$-bundle, which we think of as a
$\g$-valued 1-form:
\[ A \maps M \to V^* \tensor \g. \]
and a $\g$-valued spinor field, which we think of as a
$S_+ \tensor \g$-valued function on $M$:
\[ \psi \maps M \to S_+ \tensor \g .\]
All our arguments would work just as well with
$S_-$ replacing $S_+$.

To show that $L$ is supersymmetric, we need to show $\delta L$ is
a total divergence when $\delta$ is the following supersymmetry 
transformation:
\begin{eqnarray*}
    \delta A    & = & [\epsilon, \psi] \\
    \delta \psi & = & \half F \epsilon
\end{eqnarray*}
where $\epsilon$ is an arbitrary constant spinor field, treated as 
odd, but not $\g$-valued.  By a \define{supersymmetry transformation} we mean
that computationally we treat $\delta$ as a derivation on the algebra of
functions on spacetime. So, it is linear: 
\[      \delta (\alpha f + \beta g) = \alpha \delta f + \beta \delta g \]
where $\alpha, \beta \in \R$, and it satisfies the product rule:
\[         \delta (f g) = \delta(f) g + f \delta g. \]
For a more precise discussion of `supersymmetry transformations', see
Deligne and Freed \cite{DeligneFreed}.

The above equations require further explanation.  The bracket
$[\epsilon, \psi]$ denotes an operation that combines 
the spinor $\epsilon$ with the $\g$-valued spinor $\psi$ 
to produce a $\g$-valued 1-form.  We build this from our basic
intertwiner
\[ [-,-] \maps S_+ \tensor S_+ \to V. \]
We identify $V$ with $V^*$ using the Minkowski inner product $g$, obtaining
\[ [-,-] \maps S_+ \tensor S_+ \to V^*. \]
Then we tensor both sides with $\g$. This gives us a way to act by a 
spinor field on a $\g$-valued spinor field to obtain a $\g$-valued 1-form.
We take the liberty of also denoting this with by the bracket:
\[ [-,-] \maps \Gamma(S_+) \tensor \Gamma(S_+ \tensor \g) 
\to \Omega^1(M,\g). \]

We also need to explain how the 2-form $F$ acts on the constant spinor 
field $\epsilon$.  Using the Minkowski metric, we can identify differential
forms on $M$ with sections of the Clifford algebra bundle over $M$:
\[ \Omega^\bullet (M) \iso \Cliff(M). \]
Using this, differential forms act on spinor fields.  Tensoring
with $\g$, we obtain a way for $\g$-valued differential forms like $F$ 
to act on spinor fields like $\epsilon$ to give $\g$-valued spinor fields
like $F \epsilon$.  

Let us now apply the supersymmetry transformation
to each term in the Lagrangian.  First, the bosonic term:

\begin{prop}
The bosonic term has:
\[
 \delta \langle F, F \rangle 
= 2 (-1)^{k+1} \, \langle \psi, ({\star d_A \star} \, F) \epsilon \rangle + 
\rm{divergence}. 
\]
\end{prop}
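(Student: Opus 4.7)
The strategy is to first compute $\delta F$, then integrate by parts to move the derivative off the spinor-valued $1$-form $[\epsilon,\psi]$ onto $F$, and finally use the defining property of the spinor bracket to convert the resulting inner product of $\g$-valued $1$-forms into the spinor pairing.

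For the first step, since $\delta$ is a derivation and $\delta A = [\epsilon,\psi]$, starting from $F = dA + \tfrac{1}{2}[A,A]$ and using $d_A = d + [A, -]$ gives
\[ \delta F \;=\; d(\delta A) + [A, \delta A] \;=\; d_A[\epsilon, \psi], \]
where I have used that for $\g$-valued $1$-forms the graded wedge makes $[A, \delta A]$ and $[\delta A, A]$ agree. The Leibniz rule and symmetry of the pointwise inner product then give
\[ \delta \langle F, F\rangle \;=\; 2\langle F, d_A [\epsilon, \psi]\rangle. \]

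For the second step, I integrate by parts on Minkowski spacetime: for $\g$-valued forms,
\[ \langle F, d_A \beta\rangle \;=\; \langle d_A^* F,\; \beta\rangle + \text{divergence}, \]
where $d_A^*$ is the formal adjoint of $d_A$. On $p$-forms in signature $(k+1,1)$, $d_A^*$ equals $\pm \star d_A \star$ with a sign determined by $n = k+2$ and $p$; for $p = 2$ this sign is $(-1)^{k+1}$. Applied with $\beta = [\epsilon, \psi]$ this yields
\[ \delta \langle F, F\rangle \;=\; 2(-1)^{k+1}\,\langle (\star d_A \star) F,\; [\epsilon, \psi]\rangle + \text{divergence}. \]

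For the third step, set $\omega = \star d_A \star F$, a $\g$-valued $1$-form. Using the defining relation $g([\eta, \chi], A) = \langle \eta, A\chi\rangle$ of the spinor bracket from Section~\ref{sec:k+2} (together with the Minkowski identification $V \iso V^*$, the symmetry of $[-, -]$ on $S_+$, and the symmetry of the invariant pairing on $\g$), I obtain the pointwise identity
\[ \langle \omega, [\epsilon, \psi]\rangle \;=\; \langle \psi,\; \omega\,\epsilon\rangle, \]
where $\omega\,\epsilon$ denotes the Clifford action of the $\g$-valued $1$-form on the spinor (viewing $\omega$ as a section of $\Cliff(M) \otimes \g$). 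Substituting this into the previous display completes the proof.

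The main obstacle I expect is sign bookkeeping. The factor $(-1)^{k+1}$ comes from the Lorentzian codifferential on a $2$-form, but additional signs appear from reordering the odd fields $\epsilon$ and $\psi$ in the final identification, and from the graded symmetry of $[-,-]$ once we recognize spinor fields as odd elements of $\mathrm{SuperVect}$. These must be tracked carefully so that they combine to exactly the coefficient stated, rather than its negative.
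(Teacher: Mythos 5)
Your outline is exactly the paper's argument: compute $\delta F = d_A[\epsilon,\psi]$, integrate by parts to move $d_A$ onto $F$, then use the defining relation $g([\eta,\chi],A)=\langle\eta,A\chi\rangle$ of the spinor bracket to land on the spinor pairing. The final coefficient $2(-1)^{k+1}$ is correct, but you reach it through two intermediate sign claims that are each off relative to the paper's conventions and happen to cancel. First, with the adjoint formula the paper uses, $\langle \mu, d_A\nu\rangle = (-1)^{dp+d+1+s}\langle \star d_A\star\,\mu,\nu\rangle + \mathrm{divergence}$ with $d=k+2$, $p=2$, $s=1$, the exponent is $3k+8$, so the integration-by-parts sign on a $2$-form is $(-1)^{k}$, not $(-1)^{k+1}$. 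Second, your pointwise identity $\langle\omega,[\epsilon,\psi]\rangle = \langle\psi,\omega\,\epsilon\rangle$ is missing a minus sign: the bracket identity gives $\langle\epsilon,\omega\,\psi\rangle$, and swapping the two \emph{odd} spinor fields $\epsilon$ and $\psi$ (equivalently, taking the hermitian adjoint inside the real trace and then applying the sign rule) produces $-\langle\psi,\omega\,\epsilon\rangle$. So the correct bookkeeping is $(-1)^k\cdot(-1) = (-1)^{k+1}$, whereas you have $(-1)^{k+1}\cdot(+1)$.

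You flag this ambiguity yourself in your closing paragraph, so the issue is one of resolution rather than conception: as written, your two displayed intermediate identities cannot both be taken at face value, since the fermionic reordering sign is not optional once $\epsilon$ and $\psi$ are treated as odd. Fix the codifferential sign to $(-1)^k$ and insert the minus in the final spinor identity, and the proof matches the paper's.
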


\begin{proof}
By the symmetry of the inner product, we get:
\[ \delta \langle F, F \rangle = 2 \langle F, \delta F \rangle. \]
Using the handy formula $\delta F = d_A \delta A$, we have:
\[ \langle F, \delta F \rangle = \langle F, d_A \delta A \rangle. \]
Now the adjoint of the operator $d_A$ is $\star d_A \star$, up
to a pesky sign: if $\nu$ is a $\g$-valued $(p-1)$-form and $\mu$ is
a $\g$-valued $p$-form, we have 
\[   \langle \mu, d_A \nu \rangle =
(-1)^{dp + d + 1 + s} \langle {\star d_A \star} \, \mu, \nu \rangle 
+ \mbox{divergence} \]
where $d$ is the dimension of spacetime and $s$ is the signature,
i.e., the number of minus signs in the diagonalized metric.  It
follows that
\[ \langle F, \delta F \rangle = \langle F, d_A \delta A \rangle =
(-1)^k \,
\langle {\star d_A \star} \, F, \delta A \rangle + \mbox{divergence} \]
where $k$ is the dimension of $\K$.  
By the definition of $\delta A$, we get
\[ \langle {\star d_A \star} \, F, \delta A \rangle
=
\langle  {\star d_A \star} \, F, [\epsilon, \psi] \rangle. 
\]
Now we can use division algebra technology to show:
\[
\langle {\star d_A \star} \, F , [\epsilon, \psi] \rangle = \half
\Retr\left( ({\star d_A \star} \, F) (\epsilon \psi^\dagger + \psi
\epsilon^\dagger) \right) = -\langle \psi, ({\star d_A \star} \, F)
\epsilon \rangle, 
\] 
using the cyclic property of the real trace in the last step, and
introducing a minus sign in accordance with the sign rule.  Putting
everything together, we obtain the desired result.
\end{proof}

Even though this proposition involved the bosonic term only, division algebra
technology was still a useful tool in its proof. Ironically, division algebra
technology is absent from the proof of the next proposition, which deals with
the fermionic term:

\begin{prop}
The fermionic term has:
\[ \delta \langle \psi, \slashed{D}_A \psi \rangle = \langle \psi, 
\slashed{D}_A(F \epsilon) \rangle + \tri \psi + \rm{divergence} \]
where
\[ \tri \psi = \langle \psi, [\epsilon, \psi] \psi \rangle. \]
\end{prop}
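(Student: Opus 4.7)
The plan is to apply the derivation property of $\delta$ to $\langle\psi,\slashed{D}_A\psi\rangle$, single out the trilinear-in-$\psi$ piece coming from $\delta A$, and handle the remaining terms by integration by parts. Writing $\slashed{D}_A = \slashed{\partial} + A$ (with $A$ acting on the $\g$-valued spinor by Clifford multiplication paired with the bracket on $\g$), we have $\delta(\slashed{D}_A\psi) = \slashed{D}_A(\delta\psi) + (\delta A)\psi$, so the Leibniz rule yields
\[ \delta\langle\psi,\slashed{D}_A\psi\rangle = \langle\delta\psi,\slashed{D}_A\psi\rangle + \langle\psi,\slashed{D}_A(\delta\psi)\rangle + \langle\psi,(\delta A)\psi\rangle. \]
The third term is immediate: substituting $\delta A = [\epsilon,\psi]$ recovers $\tri\psi = \langle\psi,[\epsilon,\psi]\psi\rangle$ by definition.

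For the first two terms I would substitute $\delta\psi = \tfrac12 F\epsilon$ to obtain $\tfrac12\langle F\epsilon,\slashed{D}_A\psi\rangle + \tfrac12\langle\psi,\slashed{D}_A(F\epsilon)\rangle$, and then integrate by parts in the first summand so that $\slashed{D}_A$ lands on $F\epsilon$. Concretely, one invokes the formal self-adjointness of the covariant Dirac operator with respect to our pairing: for spinor fields $\varphi_1,\varphi_2$ of appropriate chiralities,
\[ \langle\varphi_1,\slashed{D}_A\varphi_2\rangle = \pm\langle\slashed{D}_A\varphi_1,\varphi_2\rangle + \mathrm{divergence}, \]
which follows from a Leibniz-rule argument for $\slashed{\partial}$ applied to the scalar function $\langle\varphi_1,\varphi_2\rangle$ combined with the invariance of the pairing under the $\g$-action. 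Combined with the super-symmetry of $\langle -,-\rangle$ under interchange of its two odd slots, this converts $\tfrac12\langle F\epsilon,\slashed{D}_A\psi\rangle$ into $\tfrac12\langle\psi,\slashed{D}_A(F\epsilon)\rangle$ up to a divergence, and the two halves combine to give $\langle\psi,\slashed{D}_A(F\epsilon)\rangle$ as claimed.

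The main obstacle is careful sign bookkeeping rather than anything conceptual. Three sign conventions must align: the sign from moving $\slashed{\partial}$ across $\langle-,-\rangle$ (integration by parts in signature $(k+1,1)$), the sign from swapping two odd spinors under the SuperVect braiding, and the sign relating $\Re(\varphi^\dagger\chi)$ to $\Re(\chi^\dagger\varphi)$ for anticommuting $\varphi,\chi$. These must conspire to produce a clean $+$ coefficient in front of $\langle\psi,\slashed{D}_A(F\epsilon)\rangle$; any single sign error would flip this term or multiply $\tri\psi$ by an unwanted factor. In contrast to the previous proposition, no division-algebra technology is needed at this step: the identity lives purely at the level of Clifford algebras and the formal properties of the pairing, as previewed by the remark that division algebra technology is absent from the proof of the next proposition. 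The genuine use of division algebras is postponed to the next step, where one must show that $\tri\psi$ itself vanishes using the $3$-$\psi$'s rule.
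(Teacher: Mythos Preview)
Your proposal is correct and follows essentially the same approach as the paper: expand via Leibniz, identify the $\delta A$ piece as $\tri\psi$, and use the formal self-adjointness of $\slashed{D}_A$ (which the paper phrases as the divergence of $[\psi,\phi]$ equaling $-\langle \phi, \slashed{D}_A \psi \rangle + \langle \psi, \slashed{D}_A \phi \rangle$) to move the Dirac operator onto $\delta\psi$. The only cosmetic difference is ordering: the paper integrates by parts with $\delta\psi$ still abstract and then substitutes $\delta\psi = \tfrac12 F\epsilon$, whereas you substitute first and then integrate by parts; both yield the same result, and your observation that no division-algebra input is needed here matches the paper's own remark.
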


\begin{proof}
It is easy to compute:
\[ \delta \langle \psi, \slashed{D}_A \psi \rangle = \langle \delta \psi, \slashed{D}_A \psi \rangle + \langle \psi, \delta{\slashed{D}_A} \psi \rangle + \langle \psi, \slashed{D}_A \delta \psi \rangle. \]
Now we insert $\delta \slashed{D}_A = \delta A = [\epsilon, \psi]$, and thus
see that the penultimate term is the trilinear one:
\[ \tri \psi = \langle \psi, [\epsilon, \psi] \psi \rangle. \]
So, let us concern ourselves with the remaining terms:
\[ \langle \delta \psi, \slashed{D}_A \psi \rangle + \langle \psi, \slashed{D}_A \delta \psi \rangle. \]
A computation using the product rule shows that the divergence of the 
1-form $[\psi, \phi]$ is given by $-\langle \phi, \slashed{D}_A \psi \rangle
+ \langle \psi, \slashed{D}_A \phi \rangle$, where the minus sign on the first
term arises from using the sign rule with these odd spinors. In the terms under
consideration, we can use this identity to move $\slashed{D}_A$ onto $\delta
\psi$:
\[ \langle \delta \psi, \slashed{D}_A \psi \rangle + \langle \psi, \slashed{D}_A \delta \psi \rangle = 2 \langle \psi, \slashed{D}_A \delta \psi \rangle + \mbox{divergence}. \]
Substituting $\delta \psi = \half F \epsilon$, we obtain the desired
result.
\end{proof}

Using these two propositions, it is immediate that
\begin{eqnarray*}
	\delta L & = & 
-\fourth \delta \langle F,F \rangle + 
\half \delta \langle \psi, \slashed{D}_A \psi \rangle \\
        & = & \half (-1)^k 
\langle \psi, ({\star d_A \star}\, F) \epsilon \rangle + 
\half \langle \psi, \slashed{D}_A(F \epsilon) \rangle + 
\half \tri \psi + \rm{divergence} 
\end{eqnarray*}
All that remains to show is that $\slashed{D}_A(F \epsilon) = (-1)^{k+1}
({\star d_A \star}F) \, \epsilon$.  Indeed, Snygg shows (Eq.\ 7.6
in~\cite{Snygg}) that for an ordinary, non-$\g$-valued $p$-form $F$ 
\[ \slashed{\partial} (F \epsilon) =
(d F) \epsilon + (-1)^{d + dp + s} ({\star d \star} \, F) \epsilon  \] 
where $d$ is the dimension of spacetime and $s$ is the signature.  This is
easily generalized to covariant derivatives and $\g$-valued $p$-forms:
\[ \slashed{D}_A (F \epsilon) = 
(d_A F) \epsilon + (-1)^{d + dp + s} ({\star d_A \star} \, F) \epsilon .\] 
In particular, when $F$ is the curvature 2-form, the first term vanishes by the
Bianchi identity $d_A F = 0$, and we are left with:
\[ \slashed{D}_A (F \epsilon) = (-1)^{k+1} ({\star d_A \star} \, F) \epsilon \]
where $k$ is the dimension of $\K$. We have thus shown:

\begin{prop}
\label{prop:variation}
Under supersymmetry transformations, the Lagrangian $L$ has: 
\[   \delta L = \half \tri \psi + \rm{divergence}.   \]
\end{prop}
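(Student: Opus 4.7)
The strategy is to assemble the proof from the two preceding propositions and then show that all terms except $\half \tri\psi$ collect into a divergence. Concretely, I would start from
\[ \delta L = -\tfrac{1}{4}\delta\langle F,F\rangle + \tfrac{1}{2}\delta\langle \psi,\slashed{D}_A\psi\rangle, \]
and plug in the two propositions just proved. This yields
\[ \delta L = \tfrac{1}{2}(-1)^{k}\langle \psi, (\star d_A\star\, F)\epsilon\rangle + \tfrac{1}{2}\langle\psi,\slashed{D}_A(F\epsilon)\rangle + \tfrac{1}{2}\tri\psi + \mathrm{divergence}. \]
So the proposition reduces to the claim that the first two terms cancel, i.e.\ that $\slashed{D}_A(F\epsilon) = (-1)^{k+1}(\star d_A\star\, F)\epsilon$.

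To prove this cancellation, I would invoke Snygg's identity for the action of the Dirac operator on a $p$-form acting on a spinor. For an ordinary $p$-form $F$, Snygg's formula gives
\[ \slashed{\partial}(F\epsilon) = (dF)\epsilon + (-1)^{d+dp+s}(\star d\star\, F)\epsilon, \]
where $d = k+2$ is the dimension of spacetime and $s$ is the signature. The same formula extends verbatim to $\mathfrak{g}$-valued forms with $d$ replaced by $d_A$, because the coupling to $A$ enters both through $\slashed{D}_A = \slashed{\partial} + A$ and $d_A = d + [A,-]$ in compatible ways (this is a routine check using Leibniz). Applying this to the curvature 2-form ($p=2$) and invoking the Bianchi identity $d_A F = 0$ eliminates the first term, leaving
\[ \slashed{D}_A(F\epsilon) = (-1)^{k+1}(\star d_A\star\, F)\epsilon, \]
after simplifying the sign $(-1)^{d+2d+s}$ for signature $s=1$ and $d=k+2$.

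Substituting this back into the expression for $\delta L$ makes the first two bracketed terms combine as
\[ \tfrac{1}{2}(-1)^{k}\langle \psi,(\star d_A\star F)\epsilon\rangle + \tfrac{1}{2}(-1)^{k+1}\langle\psi,(\star d_A\star F)\epsilon\rangle = 0, \]
leaving precisely $\delta L = \tfrac{1}{2}\tri\psi + \mathrm{divergence}$, as desired.

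The main obstacle is bookkeeping, not conceptual: one must carefully verify the sign $(-1)^{k+1}$ in Snygg's formula for Minkowski signature and dimension $d=k+2$, and confirm that the $\mathfrak{g}$-valued generalization of the formula really goes through, so that the coefficients $(-1)^k$ and $(-1)^{k+1}$ conspire to annihilate. Once that is pinned down, the rest of the argument is mechanical assembly of the two preceding propositions and the Bianchi identity.
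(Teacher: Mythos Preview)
Your proposal is correct and follows essentially the same approach as the paper: assemble the two preceding propositions, then invoke Snygg's identity (generalized to $\g$-valued forms and covariant derivatives) together with the Bianchi identity $d_A F = 0$ to obtain $\slashed{D}_A(F\epsilon) = (-1)^{k+1}(\star d_A \star\, F)\epsilon$, so that the two remaining terms cancel. The paper's argument is identical in structure and in the sign bookkeeping.
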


The above result actually holds in every dimension, though our proof used
division algebras and was thus adapted to the dimensions of interest: 3, 4, 6
and 10.  The next result is where division algebra technology becomes really
crucial:

\begin{prop}
\label{prop:variation2}
For Minkowski spacetimes of dimensions 3, 4, 6 and 10, $\tri \psi = 0$.  
\end{prop}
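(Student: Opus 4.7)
The plan is to show that the cubic-in-$\psi$ expression
\[ \tri\psi = \langle \psi, [\epsilon, \psi] \psi \rangle \]
vanishes by reducing it, via Proposition \ref{prop:symmetries} and the 3-$\psi$'s rule of Theorem \ref{thm:3psirule}, to $\langle \epsilon, [X,X] X \rangle = 0$. The main work is careful bookkeeping of Lie algebra indices and Grassmann signs.

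First, I would work in a basis $\{T_a\}$ of $\g$ and write $\psi = \psi^a T_a$, where each component $\psi^a$ is a classical (hence Grassmann-odd) spinor field. Since $\epsilon$ is not $\g$-valued, $[\epsilon, \psi] = [\epsilon, \psi^a] T_a$ is a $\g$-valued 1-form. Acting on the $\g$-valued spinor $\psi$ via the gauge-covariant product (Clifford multiplication on the spinor factor, Lie bracket on the $\g$ factor), and then pairing with $\psi$ using the invariant inner product $(-,-)_\g$ on $\g$ together with the spinor pairing, gives
\[ \tri\psi = f_{bcd}\,\langle \psi^b, [\epsilon, \psi^c] \psi^d \rangle, \]
where $f_{bcd} = (T_b, [T_c, T_d])_\g$. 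Invariance of the inner product on $\g$ forces $f_{bcd}$ to be \emph{totally antisymmetric} in its three indices.

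Next, I would argue that only the totally symmetric part of the spinor trilinear $(X,Y,Z) \mapsto \langle X, [\epsilon, Y] Z \rangle$ in its three arguments contributes. Since each $\psi^a$ is Grassmann-odd, the triple product $\psi^b \psi^c \psi^d$ is totally antisymmetric under permutations of the labels $b, c, d$; combined with the total antisymmetry of $f_{bcd}$, a relabeling argument shows that the contraction $f_{bcd}\,\langle \psi^b, [\epsilon, \psi^c] \psi^d \rangle$ is unchanged if we replace the trilinear by its totally symmetric part in $(X,Y,Z)$. This symmetric part is in turn encoded in the associated cubic form $X \mapsto \langle X, [\epsilon, X] X \rangle$ on an ordinary (commuting) spinor $X$. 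Applying the ``exchange of bracketed and unbracketed spinors'' clause of Proposition \ref{prop:symmetries} gives
\[ \langle X, [\epsilon, X] X \rangle = \langle \epsilon, [X, X] X \rangle, \]
and the 3-$\psi$'s rule (Theorem \ref{thm:3psirule}) makes the right-hand side vanish. Hence the symmetric part of the trilinear is zero, and $\tri\psi = 0$.

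The main obstacle, I expect, is the sign bookkeeping in the symmetrization step: one must carefully match the Grassmann signs arising from permuting the fermionic components $\psi^a$ against the antisymmetry of $f_{bcd}$, in order to justify replacing the spinor trilinear by its fully symmetrized version. Once that justification is in place, the essential geometric input---that $\epsilon$ can be moved into the bracket, where the 3-$\psi$'s rule annihilates the expression---is a two-line application of the results already established in Chapter \ref{ch:geometry}.
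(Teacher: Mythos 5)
Your proposal is correct and follows essentially the same route as the paper's proof: expand $\psi$ in a basis of $\g$, use the total antisymmetry of $\langle g_a,[g_b,g_c]\rangle$ together with the oddness of the spinor components to reduce the claim to the vanishing of the totally symmetric part of the trilinear $\langle \psi,[\epsilon,\phi]\chi\rangle$, then apply the second symmetry of Proposition \ref{prop:symmetries} to move $\epsilon$ into the bracket and invoke the 3-$\psi$'s rule. The only cosmetic difference is that you pass through the cubic form $\langle\epsilon,[X,X]X\rangle$ rather than stating the conclusion directly in terms of the symmetrization of $[\phi,\chi]\psi$.
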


\begin{proof} 
At each point of $M$, we can write 
\[ \psi = \sum \psi^a \tensor g_a, \]
where $\psi^a \in S_+$ and $g_a \in \g$. When we insert this into $\tri
\psi$, we see that
\[ \tri \psi = 
\sum \langle \psi^a, [\epsilon, \psi^b] \psi^c \rangle 
\, \langle g_a, [g_b, g_c] \rangle. \]
Since $\langle g_a, [g_b, g_c] \rangle$ is totally antisymmetric, this implies
$\tri \psi = 0$ for all $\epsilon$ if and only if the part of $\langle \psi^a,
[\epsilon, \psi^b] \psi^c \rangle$ that is antisymmetric in $a$, $b$ and $c$
vanishes for all $\epsilon$. Yet these spinors are odd; for even spinors, we
require the part of $\langle \psi^a, [\epsilon, \psi^b] \psi^c \rangle$ that is
\emph{symmetric} in $a$, $b$ and $c$ to vanish for all $\epsilon$.

Now let us remove our dependence on $\epsilon$. While we do this, let us
replace $\psi^a$ with $\psi$, $\psi^b$ with $\phi$, and $\psi^c$ with $\chi$ to
lessen the clutter of indices.
By the second formula in Proposition \ref{prop:symmetries}, we have:
\[ \langle \psi, [\epsilon, \phi] \chi \rangle =  \langle \epsilon, [\psi, \chi] \phi \rangle, \]
So, if we seek to show that the part of $\langle \psi, [\epsilon, \phi] \chi
\rangle$ that is totally symmetric in $\psi$, $\phi$ and $\chi$ vanishes for
all $\epsilon$, it is equivalent to show the totally symmetric part of $[\phi,
\chi] \psi$ vanishes. But this happens for all $\psi, \phi$ and $\chi$ in $S_+$
if and only if $[\psi,\psi]\psi = 0$ for all $\psi$ in $S_+$. This is the
3-$\psi$'s rule, Theorem \ref{thm:3psirule}, so we are done.  
\end{proof}

\chapter{Lie \emph{n}-superalgebras from Lie superalgebra cohomology} \label{ch:Lie-n-superalgebras}

In Section \ref{sec:cohomology}, we saw that the 3-$\psi$'s and 4-$\Psi$'s
rules are cocycle conditions for the cocycles $\alpha$ and $\beta$. This sheds
some light on the meaning of these rules, but it prompts an obvious followup
question: what are these cocycles good for?  

There is a very general answer to this question: a cocycle on a Lie
superalgebra lets us extend it to an `$L_\infty$-superalgebra'.  As we touched
on in the Introduction, this is a chain complex equipped with structure like
that of a Lie superalgebra, but where all the laws hold only `up to chain
homotopy'.  We give the precise definition below.

It is well known that that the 2nd cohomology of a Lie algebra $\g$
with coefficients in some representation $R$ classifies `central
extensions' of $\g$ by $R$ \cite{AzcarragaIzquierdo, ChevalleyEilenberg}.
These are short exact sequences of Lie algebras:
\[         0 \to R \to \tilde{\g} \to \g \to 0  \]
where $\tilde{g}$ is arbitrary and $R$ is treated as an abelian Lie algebra
whose image lies in the center of $\tilde{g}$.
The same sort of result is true for Lie superalgebras.  But this is just a
special case of an even more general fact.

Suppose $\g$ is a Lie superalgebra with a representation on a
supervector space $R$.  Then we shall prove that an even $R$-valued
$(n+2)$-cocycle $\omega$ on $\g$ lets us construct an 
$L_\infty$-superalgebra, called $\brane_\omega(\g,R)$, of the following form:
\[  
\g \stackrel{d}{\longleftarrow} 0 
\stackrel{d}{\longleftarrow} \dots 
\stackrel{d}{\longleftarrow} 0 
\stackrel{d}{\longleftarrow} R .
\]
where only the 0th and and $n$th grades are nonzero.   Moreover,
$\brane_\omega(\g,R)$ is an \define{extension} of $\g$: there
is a short exact sequence of $L_\infty$-superalgebras 
\[  0 \to \b^n R \to \brane_\omega(\g,R) \to \g \to 0 . \]
Here $\b^n R$ is the abelian $L_\infty$-superalgebra
with $R$ as its $n$th grade and all the rest zero:
\[             
0 
\stackrel{d}{\longleftarrow} 0 
\stackrel{d}{\longleftarrow} \cdots
\stackrel{d}{\longleftarrow} 0 
\stackrel{d}{\longleftarrow} R   \]
Note that when $n = 0$ and our vector spaces are all purely even, 
we are back to the familiar construction of Lie algebra extensions 
from 2-cocycles.

Technically, we should be more general than this in defining extensions.
Maps between $L_\infty$-algebras admit homotopies among themselves, and
this allows us to introduce a weakened notion of `short exact sequence': 
a fibration sequence in the $(\infty,1)$-category of $L_\infty$-algebras.
In general, these fibration sequences give the right concept of 
extension for $L_\infty$-algebras.  However, for the very special 
extensions we consider here, ordinary short exact sequences are all we 
need. 

It is useful to have a special name for $L_\infty$-superalgebras
whose nonzero terms are all of degree $< n$: we call them
\define{Lie $n$-superalgebras}.  In this
language, the 3-cocycle $\alpha$ defined in Theorem \ref{thm:3-cocycle}
gives rise to a Lie 2-superalgebra
\[              \T \stackrel{d}{\longleftarrow} \R   \]
extending the supertranslation algebra $\T$ in dimensions 3, 4, 6 and 10.
Similarly, the 4-cocycle $\beta$ defined in Theorem \ref{thm:4-cocycle}
gives a Lie 3-superalgebra 
\[              \T \stackrel{d}{\longleftarrow} 0 
                   \stackrel{d}{\longleftarrow} \R   \]
extending the supertranslation algebra in dimensions $4,5,7$ and $11$.

Now let us make all of these ideas precise. In what follows, we shall use
\define{super chain complexes}, which are chain complexes in the category
SuperVect of $\Z_2$-graded vector spaces:
\[  V_0 \stackrel{d}{\longleftarrow}
    V_1 \stackrel{d}{\longleftarrow}
    V_2 \stackrel{d}{\longleftarrow} \cdots \]
Thus each $V_p$ is $\Z_2$-graded and $d$ preserves this grading.

There are thus two gradings in play: the $\Z$-grading by
\define{degree}, and the $\Z_2$-grading on each vector space, which we
call the \define{parity}. We shall require a sign convention to
establish how these gradings interact. If we consider an object of odd
parity and odd degree, is it in fact even overall?  By convention, we
assume that it is. That is, whenever we interchange something of
parity $p$ and degree $q$ with something of parity $p'$ and degree
$q'$, we introduce the sign $(-1)^{(p+q)(p'+q')}$. We shall call the
sum $p+q$ of parity and degree the \define{overall grade}, or when it
will not cause confusion, simply the grade. We denote the overall
grade of $X$ by $|X|$.

We require a compressed notation for signs. If $x_{1}, \ldots, x_{n}$ are
graded, $\sigma \in S_{n}$ a permutation, we define the \define{Koszul sign}
$\epsilon (\sigma) = \epsilon(\sigma; x_{1}, \dots, x_{n})$ by 
\[ x_{1} \cdots x_{n} = \epsilon(\sigma; x_{1}, \ldots, x_{n}) \cdot x_{\sigma(1)} \cdots x_{\sigma(n)}, \]
the sign we would introduce in the free graded-commutative algebra generated by
$x_{1}, \ldots, x_{n}$. Thus, $\epsilon(\sigma)$ encodes all the sign changes
that arise from permuting graded elements. Now define:
\[ \chi(\sigma) = \chi(\sigma; x_{1}, \dots, x_{n}) := \textrm{sgn} (\sigma) \cdot \epsilon(\sigma; x_{1}, \dots, x_{n}). \]
Thus, $\chi(\sigma)$ is the sign we would introduce in the free
graded-anticommutative algebra generated by $x_1, \dots, x_n$.

Yet we shall only be concerned with particular permutations. If $n$ is a
natural number and $1 \leq j \leq n-1$ we say that $\sigma \in S_{n}$ is an
\define{$(j,n-j)$-unshuffle} if
\[ \sigma(1) \leq\sigma(2) \leq \cdots \leq \sigma(j) \hspace{.2in} \textrm{and} \hspace{.2in} \sigma(j+1) \leq \sigma(j+2) \leq \cdots \leq \sigma(n). \] 
Readers familiar with shuffles will recognize unshuffles as their inverses. A
\emph{shuffle} of two ordered sets (such as a deck of cards) is a permutation
of the ordered union preserving the order of each of the given subsets. An
\emph{unshuffle} reverses this process. We denote the collection of all
$(j,n-j)$ unshuffles by $S_{(j,n-j)}$.

The following definition of an $L_{\infty}$-algebra was formulated by
Schlessinger and Stasheff in 1985 \cite{SS}:

\begin{defn} \label{L-alg} An
\define{$L_{\infty}$-superalgebra} is a graded vector space $V$
equipped with a system $\{l_{k}| 1 \leq k < \infty\}$ of linear maps
$l_{k} \maps V^{\otimes k} \rightarrow V$ with $\deg(l_{k}) = k-2$
which are totally antisymmetric in the sense that
\begin{eqnarray}
   l_{k}(x_{\sigma(1)}, \dots,x_{\sigma(k)}) =
   \chi(\sigma)l_{k}(x_{1}, \dots, x_{n})
\label{antisymmetry}
\end{eqnarray}
for all $\sigma \in S_{n}$ and $x_{1}, \dots, x_{n} \in V,$ and,
moreover, the following generalized form of the Jacobi identity
holds for $0 \le n < \infty :$
\begin{eqnarray}
   \displaystyle{\sum_{i+j = n+1}
   \sum_{\sigma \in S_{(i,n-i)}}
   \chi(\sigma)(-1)^{i(j-1)} l_{j}
   (l_{i}(x_{\sigma(1)}, \dots, x_{\sigma(i)}), x_{\sigma(i+1)},
   \ldots, x_{\sigma(n)}) =0,}
\label{megajacobi}
\end{eqnarray}
where the inner summation is taken over all $(i,n-i)$-unshuffles with $i
\geq 1.$
\end{defn}

The following result shows how to construct $L_{\infty}$-superalgebras from Lie
superalgebra cocycles.  This is the `super' version of a result due to Crans
\cite{BaezCrans}. In this result, we require our cocycle to be even so we can
consider it as a morphism in the category of super vector spaces.

\begin{thm} \label{trivd}
There is a one-to-one correspondence between $L_{\infty}$-superalgebras
consisting of only two nonzero terms $V_{0}$ and $V_{n}$, with $d=0$, and
quadruples $(\g, V, \rho, l_{n+2})$ where $\g$ is a Lie superalgebra, $V$ is a
super vector space, $\rho$ is a representation of $\g$ on $V$, and $l_{n+2}$ is
an even $(n+2)$-cocycle on $\g$ with values in $V$.
\end{thm}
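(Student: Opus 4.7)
The plan is to unpack the $L_\infty$-superalgebra axioms in the very constrained two-term setting $V = V_0 \oplus V_n$ with $l_1 = 0$, and match what survives against the data $(\g, V, \rho, l_{n+2})$.

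\textbf{Degree counting.} The operation $l_k$ has degree $k-2$, so applied to $j$ inputs from $V_n$ and $k-j$ from $V_0$, its output lies in grade $jn + (k-2)$, which must equal $0$ or $n$. For $n \geq 1$ the only solutions are $(k,j) = (2,0)$, giving $l_2 \colon V_0 \tensor V_0 \to V_0$; $(k,j) = (2,1)$, giving $l_2 \colon V_0 \tensor V_n \to V_n$; and $(k,j) = (n+2, 0)$, giving $l_{n+2} \colon V_0^{\tensor (n+2)} \to V_n$. Every other bracket is forced to vanish. I would therefore set $\g := V_0$, $V := V_n$, $[-,-] := l_2|_{V_0 \tensor V_0}$, and $\rho(X)v := l_2(X, v)$, taking the candidate cocycle to be $l_{n+2}$ itself. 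The graded antisymmetry axiom \eqref{antisymmetry} gives the bracket its graded skew-symmetry and makes $l_{n+2}$ a graded-antisymmetric multilinear map $\Lambda^{n+2} \g \to V$, i.e., an element of $C^{n+2}(\g, V)$; its parity-preservation as a morphism of super vector spaces is precisely the requirement that it be \emph{even}.

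\textbf{Matching the axioms.} I run through \eqref{megajacobi} for the few values of $m$ where it is nonvacuous. For $m = 3$, only $(i,j) = (2,2)$ can contribute, since $l_1 = 0$ and (apart from the borderline $n=1$ case, already absorbed into $l_{n+2}$) every $l_k$ with $k \geq 3$ and $k \neq n+2$ vanishes. With all three inputs in $V_0$ this is the graded Jacobi identity for $[-,-]$, making $\g$ a Lie superalgebra. With two inputs in $V_0$ and one in $V_n$ it reads $\rho([X,Y]) = \rho(X)\rho(Y) - (-1)^{|X||Y|}\rho(Y)\rho(X)$, the representation property. For $m = n+3$ with all inputs in $V_0$, the only contributing pairs are $(i,j) = (2, n+2)$ and $(i,j) = (n+2, 2)$, producing summands of the form $l_{n+2}(l_2(x_{\sigma(1)}, x_{\sigma(2)}), \ldots)$ and $l_2(l_{n+2}(x_{\sigma(1)}, \ldots, x_{\sigma(n+2)}), x_{\sigma(n+3)})$ respectively. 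After reconciling the signs $\chi(\sigma)(-1)^{i(j-1)}$ with the conventions of Section~\ref{sec:cohomology}, this sum is exactly the Chevalley--Eilenberg cocycle condition $d l_{n+2} = 0$ relative to the representation $\rho$.

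\textbf{Vacuous cases and converse.} Every other instance of \eqref{megajacobi} is vacuous by the same grading argument used above: a composition $l_j(l_i(\cdots), \cdots)$ that is to be nonzero must have both $l_i$ and $l_j$ on the short list, and the inputs must land in permitted grades; a brief case check leaves only the instances just treated. The converse direction is then immediate: from a quadruple $(\g, V, \rho, l_{n+2})$ define $l_2$ and $l_{n+2}$ as above, set every other bracket to zero, and the same bookkeeping shows that each $L_\infty$ relation either vanishes identically or reproduces one of the axioms already in hand (graded antisymmetry, Jacobi, representation, or cocycle). The main obstacle I anticipate is the sign reconciliation inside the $m = n+3$ calculation: aligning the Koszul-signed unshuffle sum from \eqref{megajacobi} with the Chevalley--Eilenberg coboundary formula of Section~\ref{sec:cohomology}, in the presence of the extra degree shift coming from $l_{n+2}$ landing in $V_n$ rather than $V_0$. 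Once the sign conventions are pinned down consistently, all three conditions drop out simultaneously and the correspondence is a bijection.
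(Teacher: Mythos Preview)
Your proposal is correct and follows essentially the same route as the paper: identify the surviving brackets via degree counting, read off the Lie superalgebra structure and representation from the $3$-term identity, and extract the cocycle condition from the $(n+3)$-term identity with contributions only from $(i,j)=(2,n+2)$ and $(n+2,2)$. The paper's proof differs only in that it actually carries out the sign reconciliation you flag as the main obstacle, tracking the Koszul signs $\chi(\sigma)(-1)^{i(j-1)}$ through the unshuffle sum and matching them to the $\epsilon$-signs in the Chevalley--Eilenberg coboundary formula; your outline is accurate about where the work lies.
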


\begin{proof}

Given such an $L_{\infty}$-superalgebra we set $\g=
V_0$.  $V_0$ comes equipped with a bracket as part of the
$L_{\infty}$-structure, and since $d$ is trivial, this bracket satisfies the
Jacobi identity on the nose, making $\g$ into a Lie superalgebra. We define $V
= V_{n}$, and note that the bracket also gives a map $\rho \maps \g \tensor V
\to V$, defined by $\rho(x)f = [x,f]$ for $x \in \g, f \in V$. We have
\begin{eqnarray*}
  \rho ([x,y])f &=& [[x,y],f] \\
  		&=& (-1)^{|y||f|}[[x,f],y] + [x,[y,f]] \; \; \; \;
                    \textrm{by $(3)$ of Definition \ref{L-alg}} \\
                &=& (-1)^{|f||y|}[\rho(x)f, y] + [x, \rho(y) f] \\
                &=& -(-1)^{|x||y|}\rho(y)\rho(x)f + \rho(x)\rho(y) f \\
		&=& [\rho(x), \rho(y)]f
\end{eqnarray*}

\noindent
for all $x,y \in \g$ and $f \in V$, so that $\rho$ is indeed a representation.
Finally, the $L_{\infty}$ structure gives a map $l_{n+2} \maps
\Lambda^{n+2} \g \to V$ which is in fact an $(n+2)$-cocycle.  To see this, note
that
\[ 0 = \sum_{i+j = n+4} \sum_{\sigma} \chi(\sigma) (-1)^{i(j-1)} l_{j}(l_{i}(x_{\sigma(1)}, \ldots, x_{\sigma(i)}), x_{\sigma(i+1)}, \ldots, x_{\sigma(n+3)}) , \]
where we sum over $(i, (n+3)-i)$-unshuffles $\sigma \in S_{n+3}$.  However, the
only choices for $i$ and $j$ that lead to nonzero $l_{i}$ and $l_{j}$ are
$i=n+2, j=2$ and $i=2, j=n+2$. Thus, the above becomes, with $\sigma$ a
$(n+2, 1)$-unshuffle and $\tau$ a $(2, n+1)$-unshuffle:
\begin{eqnarray*}
0 & = & \sum_{\sigma} \chi(\sigma) (-1)^{n+2} [l_{n+2}(x_{\sigma(1)}, \dots, x_{\sigma(n+2)}), x_{\sigma(n+3)}] \\
  & + & \sum_{\tau} \chi(\tau) l_{n+2}([x_{\tau(1)}, x_{\tau(2)}], x_{\tau(3)}, \dots, x_{\tau(n+3)}) \\
  & = & \sum_{i=1}^{n+3} (-1)^{n+3-i}(-1)^{n+2} \epsilon^{n+2}_{i+1}(i) [l_{n+2}(x_1, \dots, \hat{x}_i, \dots, x_{n+3}), x_{i}] \\
  & + & \sum_{1 \leq i < j \leq n+3} (-1)^{i+j+1} (-1)^{|x_i||x_j|} \epsilon^{i-1}_1(i) \epsilon^{j-1}_1(j) l_{n+2}([x_{i}, x_{j}], x_{1}, \dots,\hat{x}_i, \dots, \hat{x}_j, \dots, x_{n+3}) .
\end{eqnarray*}
On the second line, we have explicitly specified the unshuffles and unwrapped
the signs encoded by $\chi$. Since $l_{n+2}$ is a morphism in SuperVect, it
preserves parity, and thus the element 
\[ l_{n+2}(x_1, \dots, \hat{x}_i, \dots, x_{n+2}) \] 
has parity $|x_1| + \dots + |x_{i-1}| + |x_{i+1}| + \dots + |x_{n+2}|$. So, we
can reorder the bracket in the first term, at the cost of a sign:
\begin{eqnarray*}
0 & = & \sum_{i=1}^{n+3} -(-1)^{i+1} \epsilon^{i-1}_1(i) [x_i, l_{n+2}(x_{1}, \dots, \hat{x}_i, \dots, x_{n+3})] \\
  & + & \sum_{1 \leq i < j \leq n+3} -(-1)^{i+j} (-1)^{|x_i||x_j|} \epsilon^{i-1}_1(i) \epsilon^{j-1}_1(j) l_{n+2}([x_{i}, x_{j}], x_{1}, \dots, \hat{x}_i, \dots, \hat{x}_j, \dots, x_{n+3}) \\
  & = & -dl_{n+2} .
\end{eqnarray*}
Here, we have used the fact that $\epsilon_{i+1}^{n+2}(i) (-1)^{|x_i|(|x_1| +
\dots + |x_{i-1}| + |x_{i+1}| + \dots + |x_{n+2}|)} = \epsilon_1^{i-1}(i)$.
Thus, $l_{n+2}$ is indeed a cocycle.

Conversely, given a Lie superalgebra $\g$, a representation $\rho$ of $\g$ on a
vector space $V$, and an even $(n+2)$-cocycle $l_{n+2}$ on $\g$ with values in
$V$, we define our $L_{\infty}$-superalgebra $V$ by setting
$V_{0} = \g$, $V_{n} = V$, $V_i = \{0\}$ for $i \ne 0,n$,
and $d=0$.  It remains to define the system of linear maps
$l_{k}$, which we do as follows: Since $\g$ is a Lie
superalgebra, we have a bracket defined on $V_{0}$. We extend this
bracket to define the map $l_2$, denoted by $[\cdot, \cdot] \maps
V_{i} \otimes V_{j} \rightarrow V_{i+j}$ where $i,j=0,n,$ as
follows:
\[ [x,f] = \rho(x) f, \quad [f,y] = (-1)^{|y||f|} \rho(y)f, \quad [f,g] = 0 \]
for $x,y \in V_0$ and $f,g \in V_n$.
With this definition, the map $[\cdot, \cdot]$ satisfies condition $(1)$ of
Definition \ref{L-alg}. We define $l_{k}=0$ for $3 \leq k \leq n+1$ and $k>
n+2$, and take $l_{n+2}$ to be the given $(n+2)$ cocycle, which satisfies
conditions $(1)$ and $(2)$ of Definition \ref{L-alg} by the cocycle condition. 
\end{proof}

This theorem tells us how to take a Lie superalgebra $(n+1)$-cocycle $\omega$,
and construct a Lie $n$-superalgebra with $d=0$, concentrated in degrees 0 and
$n-1$. We call such a Lie $n$-superalgebra a \define{slim Lie
$n$-superalgebra}, and denote it by $\brane_\omega(\g,R)$. When $n=2$, we will
also write $\strng_\omega(\g,R)$ for the same object, and when $R$ is the
trivial representation $\R$, we omit it. In the next section, we give some
examples of these objects.

\section{Examples of slim Lie \emph{n}-superalgebras}

\subsection{The string Lie 2-algebra} \label{sec:string2alg}

For $n \geq 3$, consider the Lie algebra $\so(n)$ of infinitesimal rotations of
$n$-dimensional Euclidean space. This matrix Lie algebra has Killing form given
by the trace, $\langle X , Y \rangle = \tr(XY)$, and an easy calculation shows
that 
\[ j = \langle - , [-, -] \rangle \] 
is a 3-cocycle on $\so(n)$. We call $j$ the \define{canonical 3-cocycle} on
$\so(n)$. Using $j$, we get a Lie 2-algebra $\strng_j(\so(n))$, which we denote
simply by $\strng(n)$. We call this the \define{string Lie 2-algebra}. First
defined by Baez--Crans \cite{BaezCrans}, it is so-named because it turned out
to be intimately related to the string group, $\String(n)$, the topological
group obtained from $\SO(n)$ by killing the 1st and 3rd homotopy groups. For a
description of this relationship, as well as the construction of Lie 2-groups
which integrate $\strng(n)$, see the papers of
Baez--Crans--Schreiber--Stevenson \cite{BCSS}, Henriques \cite{Henriques}, and
Schommer-Pries \cite{SchommerPries}.

\subsection{The Heisenberg Lie 2-algebra} \label{sec:sec:heisenberg2alg}

As we mentioned earlier, central extensions of Lie algebras are classified by
second cohomology. A famous example of this is the `Heisenberg Lie algebra', so
named because it mimics the canonical commutation relations in quantum
mechanics. Here we present a Lie 2-algebra generalization: the `Heisenberg Lie
2-algebra'.

Consider the abelian Lie algebra of translations in position-momentum space:
\[ \R^2 = \mathrm{span}(p,q). \]
Here, $p$ and $q$ are our names for the standard basis, the usual letters for
momentum and position in physics. Up to rescaling, this Lie algebra has a
single, nontrivial 2-cocycle:
\[ p^* \wedge q^* \in \Lambda^2(\R^2), \]
where $p^*$ and $q^*$ comprise the dual basis. Thus it has a nontrivial central
extension:
\[ 0 \to \R \to \mathfrak{H} \to \R^2 \to 0. \]

This central extension is called the \define{Heisenberg Lie algebra}. As a
vector space, $\mathfrak{H} = \R^3$, and we call the basis vectors $p,
q$ and $z$, where $z$ is central. When chosen with suitable normalization,
they satisfy the relations:
\[ [p,q] = z, \quad [p,z] = 0, \quad [q,z] = 0. \]
These are the same as the canonical commutation relations in quantum mechanics,
except that the generator $z$ would usually be a number, $-i \hbar$. It is from
this parallel that the Heisenberg Lie algebra derives its physical
applications: a representation of $\mathfrak{H}$ is exactly a way of choosing
linear operators $p$, $q$ and $z$ on a Hilbert space that satisfy the canonical
commutation relations.

With Lie 2-algebras, we can repeat the process that yielded the Heisenberg Lie
algebra to obtain a higher structure. Before we needed a 2-cocycle, but now we
need a 3-cocycle. Indeed, letting $p^*$, $q^*$ and $z^*$ be the dual basis of
$\mathfrak{H}^*$, it is easy to check that $\gamma = p^* \wedge q^* \wedge z^*$
is a nontrivial 3-cocycle on $\mathfrak{H}$. Thus there is a Lie 2-algebra
$\strng_\gamma(\mathfrak{H})$, the \define{Heisenberg Lie 2-algebra}, which we
denote by $\mathfrak{Heisenberg}$.  Later, in Chapter \ref{ch:integrating}, we
will see how to integrate this Lie 2-algebra to a Lie 2-group.

We suspect the Heisenberg Lie 2-algebra, like its Lie algebra cousin, is also
important for physics. We also suspect that the pattern continues: the
Heisenberg Lie 2-algebra may admit a `4-cocycle', and a central extension to a
Lie 3-algebra. However, since we have not defined the cohomology of Lie
$n$-algebras \cite{Penkava}, we do not pursue this here.

\subsection{The supertranslation Lie \emph{n}-superalgebras} \label{sec:n-trans-algs}

Some exceptional cocycles arise on the supertranslation algebras in certain
dimensions. Recall from Section \ref{sec:cohomology} that a supertranslation
algebra is a Lie superalgebra of the form:
\[ \T = V \oplus S, \]
where the even part $V$ is a vector space with a nondegerate quadratic form,
the odd part $S$ is a spinor representation of $\Spin(V)$, and the bracket
comes from a symmetric, $\Spin(V)$-equivariant map that takes pairs of spinors
to vectors:
\[ [-,-] \maps \Sym^2 S \to V. \]

In spacetime dimensions 3, 4, 6 and 10, we proved in Theorem \ref{thm:3-cocycle}
that there is a 3-cocycle $\alpha$, which is nonzero only when given two
spinors and a vector:
\[ \begin{array}{cccc}
	\alpha \maps & \Lambda^3(\T)              & \to     &  \R \\
	             &  A \wedge \psi \wedge \phi & \mapsto & \langle \psi, A \phi \rangle . \\
\end{array}
\]
There is thus a Lie 2-superalgebra, the \define{supertranslation Lie
2-superalgebra}, $\strng_\alpha(\T)$.

Likewise, in spacetime dimensions 4, 5, 7 and 11, we proved in Theorem
\ref{thm:4-cocycle} that there is a 4-cocycle $\beta$, which is nonzero only
when given two spinors and two vectors:
\[ \begin{array}{cccc}
	\beta \maps & \Lambda^4(\T)                        & \to     & \R \\
	            & \A \wedge \B \wedge \Psi \wedge \Phi & \mapsto & \langle \Psi, (\A \wedge \B) \Phi \rangle . \\
   \end{array}
\]
There is thus a Lie 3-superalgebra, the \define{supertranslation Lie
3-superalgebra}, $\brane_{\beta}(\T)$.

There is much more that one can do with the cocycles $\alpha$ and $\beta$,
however. We can use them to extend not just the supertranslations $\T$ to a Lie
$n$-superalgebra, but the full Poincar\'e superalgebra, $\so(V) \ltimes \T$. We
turn to this now.

\subsection{Superstring Lie 2-superalgebras, 2-brane Lie 3-superalgebras} \label{sec:n-brane-algs}

One of the principal themes of theoretical physics over the last century has
been the search for the underlying symmetries of nature. This began with
special relativity, which could be summarized as the discovery that the laws of
physics are invariant under the action of the Poincar\'e group:
\[ \ISO(V) = \Spin(V) \ltimes V. \]
Here, $V$ is the set of vectors in Minkowski spacetime and acts on Minkowski
spacetime by translation, while $\Spin(V)$ is the \define{Lorentz group}: the
double cover of $\SO_0(V)$, the connected component of the group of symmetries
of the Minkowski norm. Much of the progress in physics since special relativity
has been associated with the discovery of additional symmetries, like the
$\U(1) \times \SU(2) \times \SU(3)$ symmetries of the Standard Model of
particle physics \cite{BaezHuerta:guts}.

Today, `supersymmetry' could be summarized as the hypothesis that the laws of
physics are invariant under the `Poincar\'e supergroup', which is larger than
the Poincar\'e group:
\[ \SISO(V) = \Spin(V) \ltimes T. \]
Here, $V$ is again the set of vectors in Minkowski spacetime and $\Spin(V)$ is
the Lorentz group, but $T$ is the supergroup of translations on Minkowski
`superspacetime'. Though we have not yet learned enough supergeometry to talk
about $T$ precisely, we have already met its infinitesimal approximation: the
superstranslation algebra, $\T = V \oplus S$.  We think of the spinor
representation $S$ as giving extra, supersymmetric translations, or
`supersymmetries'.

In this thesis, we show how to further extend the Poincar\'e group to include
higher symmetries, thanks to the normed division algebras. That is, we will
show that in dimensions $k+2 = 3$, 4, 6 and 10, one can extend the Poincar\'e
supergroup $\SISO(k+1,1)$ to a `Lie 2-supergroup' we call
$\Superstring(k+1,1)$. Similarly, in dimensions $k+3 = 4$, 5, 7 and 11, one can
extend the Poincar\'e supergroup $\SISO(k+2,1)$ to a `Lie 3-supergroup' we call
$\Twobrane(k+2,1)$. 

We begin this construction in this section by working at the infinitesimal
level.  We construct a Lie 2-superalgebra, 
\[ \superstring(k+1,1) , \]
which extends the Poincar\'e superalgebra in dimension $k+2$:
\[ \siso(k+1,1 = \so(k+1,1) \ltimes \T \]
Then we construct a Lie 3-superalgebra,
\[ \twobrane(k+2,1) , \] 
which extends the Poincar\'e superalgebra in dimension $k+3$:
\[ \siso(k+2,1)= \so(k+2,1) \ltimes \T. \] 
We do this construction using the cocycles $\alpha$ and $\beta$. This is
possible because both $\alpha$ and $\beta$ are invariant under the action of
the corresponding Lorentz algebra: $\so(k+1,1)$ in the case of $\alpha$, and
$\so(k+2,1)$ for $\beta$.  This is manifestly true, because $\alpha$ and
$\beta$ are built from equivariant maps.

As we shall see, this invariance implies that $\alpha$ and $\beta$ are
cocycles, not merely on the supertranslations, but on the full Poincar\'e
superalgebra---$\siso(k+1,1)$ in the case of $\alpha$, and $\siso(k+2,1)$ in
the case of $\beta$.  We can extend $\alpha$ and $\beta$ to these larger
algebras in a trivial way: define the unique extension which vanishes unless
all of its arguments come from $\T$. Doing this, $\alpha$ and $\beta$ remain
cocycles, even though the Lie bracket (and thus $d$) has changed. Moreover,
they remain nontrivial. All of this is contained in the following proposition:

\begin{prop} Let $\g$ and $\h$ be Lie superalgebras such that $\g$ acts on
	$\h$, and let $R$ be a representation of $\g \ltimes \h$. Given any
	$R$-valued $n$-cochain $\omega$ on $\h$, we can uniquely extend it to
	an $n$-cochain $\tilde{\omega}$ on $\g \ltimes \h$ that takes the value
	of $\omega$ on $\h$ and vanishes on $\g$. When $\omega$ is even, we
	have:
	\begin{enumerate}
		\item $\tilde{\omega}$ is closed if and only if $\omega$ is
			closed and $\g$-equivariant.
		\item $\tilde{\omega}$ is exact if and only if $\omega =
			d\theta$, for $\theta$ a $\g$-equivariant
			$(n-1)$-cochain on $\h$.
	\end{enumerate}
\end{prop}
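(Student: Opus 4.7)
The plan is to introduce a bigrading on the cochain complex $C^\bullet(\g \ltimes \h, R)$ induced by the vector-space splitting $\g \ltimes \h = \g \oplus \h$. Call a cochain a $(p, q)$-cochain if it vanishes unless given $p$ arguments from $\g$ and $q$ from $\h$; every $n$-cochain decomposes uniquely as a sum of $(p, q)$-cochains with $p + q = n$, and by construction $\tilde{\omega}$ is pure of bidegree $(0, n)$. The Chevalley--Eilenberg differential splits correspondingly as $d = \delta_h + \delta_v$, where $\delta_h$ raises bidegree by $(0, 1)$ and implements the $\h$-coboundary treating $\g$-arguments as inert spectators, while $\delta_v$ raises bidegree by $(1, 0)$ and packages the $\g$-brackets, the $\g$-action on $R$, and the mixed bracket terms $[X, x] = \rho(X)x$. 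The key observation to unlock is that for a $(0, q)$-cochain $\theta$ identified with an $\h$-cochain, $\delta_h \theta$ agrees with $d_\h \theta$, while $(\delta_v \theta)(X; x_1, \dots, x_q)$ evaluates to the Lie derivative $(L_X \theta)(x_1, \dots, x_q)$, which vanishes identically in $X \in \g$ precisely when $\theta$ is $\g$-equivariant.

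Part (1) then follows by direct calculation: $d\tilde{\omega}$ has only a $(0, n+1)$-component $\widetilde{d_\h \omega}$ and a $(1, n)$-component encoding $L\omega$, with no contributions of higher $p$ since differentiating a $(0, n)$-cochain can produce at most one $\g$-argument (either as the acting element in an action term or as arising from a single mixed bracket $[X, x]$). Hence $d\tilde{\omega} = 0$ if and only if $\omega$ is both $\h$-closed and $\g$-equivariant. The backward direction of (2) proceeds identically, applied to $\tilde{\theta}$: if $\theta$ is $\g$-equivariant with $d_\h \theta = \omega$, then $\delta_h \tilde{\theta} = \tilde{\omega}$ and $\delta_v \tilde{\theta} = 0$, so $d\tilde{\theta} = \tilde{\omega}$.

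For the forward direction of (2), I decompose a primitive $\xi$ of $\tilde{\omega}$ into bigraded components $\xi = \sum_{p+q=n-1} \xi_{p,q}$. Reading off the $(0, n)$-component of $d\xi = \tilde{\omega}$ yields $d_\h \xi_{0, n-1} = \omega$, while vanishing of the higher bidegree components yields a descending chain $\delta_v \xi_{p, n-p-1} = -\delta_h \xi_{p+1, n-p-2}$ together with $\delta_v \xi_{n-1, 0} = 0$. Thus $\xi_{0, n-1}$ is already a primitive of $\omega$, but in general only $\g$-equivariant modulo a $\delta_h$-coboundary. My plan is to inductively modify $\xi$ by a $d$-coboundary, working from the top bidegree downward, so as to kill the components $\xi_{p, n-p-1}$ with $p \geq 1$; the descending chain above is precisely what guarantees solvability of the required cochain equations at each step. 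Once only the $(0, n-1)$-component survives, $\xi$ reduces to a trivial extension $\tilde{\theta}$, and the $(1, n-1)$-equation $\delta_v \tilde{\theta} = 0$ then forces $\theta$ to be $\g$-equivariant.

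The main obstacle is the inductive cleanup in the forward direction of (2), which amounts to a $p = 0$ column collapse in the Hochschild--Serre-style spectral sequence for the split extension $\h \inclusion \g \ltimes \h \to \g$. In practice, for the intended application of showing non-exactness of $\tilde{\alpha}$ on $\siso(k+1, 1)$ and $\tilde{\beta}$ on $\siso(k+2, 1)$, it already suffices to observe that any $\xi$ with $d\xi = \tilde{\omega}$ yields $\omega = d_\h \xi_{0, n-1}$, contradicting the previously established non-exactness of $\alpha$ and $\beta$ on $\T$.
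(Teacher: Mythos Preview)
Your approach---bigrading by $(p,q)$ according to how many arguments come from $\g$ versus $\h$, and splitting $d$ into a horizontal piece $\delta_h$ raising $q$ and a vertical piece $\delta_v$ raising $p$---is exactly what the paper does. Your proof of part (1) and of the backward direction of (2) matches the paper's essentially line for line.

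For the forward direction of (2), you are actually more careful than the paper. The paper asserts that once $\tilde{\omega} = d\chi$, ``we may assume $\chi$ is an even $(0,n-1)$-form, as any other part of $\chi$ is closed and does not contribute to $d\chi$.'' As you implicitly recognize, this is not justified: the higher-bidegree components $\chi_{p,n-1-p}$ for $p \geq 1$ are \emph{not} generally closed; rather, they satisfy the descending chain $\delta_v \chi_{p-1} + \delta_h \chi_p = 0$ that you write down. What is true (and what both you and the paper use) is that only $\chi_{0,n-1}$ contributes to the $(0,n)$-component of $d\chi$, so $\theta := \chi_{0,n-1}$ satisfies $d_\h \theta = \omega$. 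The paper then concludes $e\theta = 0$ by reading off the $(1,n-1)$-component of $d\tilde{\theta} = \tilde{\omega}$, but that equation only holds if one has already arranged $\chi$ to be pure $(0,n-1)$, which is the very point at issue. Your proposed inductive cleanup is the natural way to try to repair this, though as you note it amounts to a nontrivial spectral-sequence collapse and is not completed here.

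Your final remark is the right resolution for the paper's purposes: since $\alpha$ and $\beta$ were shown to be non-exact on $\T$ (with no equivariance hypothesis on the would-be primitive), the implication ``$\tilde{\omega}$ exact $\Rightarrow$ $\omega = d_\h \chi_{0,n-1}$'' already gives the needed contradiction, and the stronger equivariance claim is not used downstream.
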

\begin{proof}
	As a vector space, $\g \ltimes \h = \g \oplus \h$, so that 
	\[ \Lambda^n(\g \ltimes \h) \iso \bigoplus_{p+q=n} \Lambda^p \g \tensor \Lambda^q \h, \]
	as a vector space. Thanks to this decomposition, we can uniquely
	decompose $n$-cochains on $\g \ltimes \h$ by restricting to the summands.
	In keeping with our prior terminology, we call an $n$-cochain supported
	on $\Lambda^p \g \tensor \Lambda^q \h$ a $(p,q)$-form. Note that
	$\tilde{\omega}$ is just the $n$-cochain $\omega$ regarded as a
	$(0,n)$-form on $\g \ltimes \h$. We shall denote the space of
	$(p,q)$-forms by $C^{p,q}$. 
	
	We have two actions to distinguish: the action of $\g \ltimes \h$ on
	$R$, which we denote by $\rho$, and the action of $\g$ on $\h$, which
	we shall denote simply by the bracket, $[-,-]$. Inspecting the formula
	for the differential:
	\begin{eqnarray*}
		& & d\tilde{\omega}(X_1, \dots, X_{n+1}) = \\
		& & \sum^{n+1}_{i=1} (-1)^{i+1} (-1)^{|X_i||\tilde{\omega}|} \epsilon^{i-1}_1(i) \rho(X_i) \tilde{\omega}(X_1, \dots, \hat{X}_i, \dots, X_{n+1}) \\
		& & + \sum_{i < j} (-1)^{i+j} (-1)^{|X_i||X_j|} \epsilon^{i-1}_1(i) \epsilon^{j-1}_1(j) \tilde{\omega}([X_i, X_j], X_1, \dots, \hat{X}_i, \dots, \hat{X}_j, \dots X_{n+1})
	\end{eqnarray*}
	it is easy to see that
	\[ d \maps C^{p,q} \to C^{p,q+1} \oplus C^{p+1,q}. \]
	In particular:
	\[ d \maps C^{0,n} \to C^{0,n+1} \oplus C^{1,n}. \]
	Given an $n$-cochain $\omega$ on $\h$, it is easy to see that the
	part of $d\tilde{\omega}$ which lies in $C^{0,n+1}$ is just
	$\widetilde{d\omega}$, the extension of the $(n+1)$-cochain $d\omega$
	to $\g \ltimes \h$. 
	
	Let $e\omega$ denote the $(1,n)$-form part of $d\tilde{\omega}$.  To
	express this explicitly, choose $Y_1 \in \g$ and $X_2, \dots, X_{n+1}
	\in \h$. By definition $e\omega(Y_1, X_2, \dots, X_{n+1}) =
	d\tilde{\omega}(Y_1, X_2, \dots, X_{n+1})$, and inspecting the formula
	for the differential once more, we see this consists of only two
	nonzero terms:
	\begin{eqnarray*}
		e\omega(Y_1, X_2, \dots, X_{n+1}) 
		& = & (-1)^{|\tilde{\omega}||Y_1|} \rho(Y_1) \tilde{\omega}(X_2, \dots, X_{n+1}) \\
		&   & + \sum_{i=2}^{n+1} (-1)^{i+1} \epsilon_2^{i-1}(i) \tilde{\omega}([Y_1, X_i], X_2, \dots, \hat{X}_i, \dots, X_{n+1}) \\
		& = & (-1)^{|\omega||Y_1|} \rho(Y_1) \omega(X_2 \wedge \dots \wedge X_{n+1}) - \omega([Y_1, X_2 \wedge \dots \wedge X_{n+1}]) .
	\end{eqnarray*}
	In particular, note that for even $\omega$,  $e\omega = 0$ if and only
	if $\omega$ is $\g$-equivariant.

	To summarize, for any $n$-cochain $\omega$, we have that
	\[ d\tilde{\omega} = \widetilde{d\omega} + e\omega, \]
	where the first $d$ is defined on $\g \ltimes \h$, while the second is
	only defined on $\h$. The proof of 1 is now immediate: for even
	$\omega$, $d\tilde{\omega} = 0$ if and only if $\widetilde{d\omega} =
	0$ and $e\omega = 0$, which happens if and only $d\omega = 0$ and
	$\omega$ is $\g$-equivariant.

	To prove 2, suppose $\omega$ is even. Assume $\tilde{\omega} = d\chi$,
	for some $(n-1)$-cochain $\chi$ on $\g \ltimes \h$. Because $d\chi$ is
	an even $(0,n)$-form, we may assume $\chi$ is an even $(0,n-1)$-form,
	as any other part of $\chi$ is closed and does not contribute to
	$d\chi$.  Thus $\chi$ is the extension of an even $(n-1)$-cochain
	$\theta$ on $\h$. By our prior formula, we have:
	\[ \tilde{\omega} = d\tilde{\theta} = \widetilde{d\theta} + e\theta \]
	The left-hand side is a $(0,n)$-form, and thus the $(1,n-1)$-form part
	of the right-hand side, $e\theta$, vanishes. Thus $\theta$ is
	$\g$-equivariant, and $\tilde{\omega} = \widetilde{d\theta}$, which
	implies $\omega = d\theta$.  On the other hand, if $\omega = d\theta$
	and $\theta$ is $\g$-equivariant, then $e\theta = 0$ and thus
	$\tilde{\omega} = d\tilde{\theta}$.
\end{proof}

Thus we can extend $\alpha$ and $\beta$ to nonexact cocycles on the Poincar\'e
Lie superalgebra, simply by defining $\alpha$ and $\beta$ to vanish outside of
the supertranslation algebra. Thanks to Theorem~\ref{trivd}, we know that
$\alpha$ lets us extend $\siso(k+1,1)$ to a Lie 2-superalgebra:

\begin{thm} \label{thm:superstring}
	In dimensions 3, 4, 6 and 10, there exists a Lie 2-superalgebra formed
	by extending the Poincar\'e superalgebra $\siso(k+1,1)$ by the
	3-cocycle $\alpha$, which we call we the \define{superstring Lie
	2-superalgebra, $\superstring(k+1,1)$}.
\end{thm}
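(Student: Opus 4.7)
The plan is to apply Theorem \ref{trivd} to the Poincaré superalgebra $\siso(k+1,1)$, which requires producing a nontrivial even $\R$-valued 3-cocycle on it. Our candidate is the trivial extension $\tilde{\alpha}$ of the 3-cocycle $\alpha$ from Theorem \ref{thm:3-cocycle}: namely, the unique 3-cochain on $\siso(k+1,1) = \so(k+1,1) \ltimes \T$ that agrees with $\alpha$ on $\T$ and vanishes whenever any argument lies in $\so(k+1,1)$. The preceding proposition then does essentially all the work, reducing the task to two checks: that $\alpha$ is closed on $\T$ and $\so(k+1,1)$-equivariant (for closedness of $\tilde{\alpha}$), and that $\alpha$ is not of the form $d\theta$ for any $\so(k+1,1)$-equivariant 2-cochain $\theta$ on $\T$ (for nontriviality).

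First I would invoke Theorem \ref{thm:3-cocycle} directly for closedness on $\T$ and for the fact that $\alpha$ is even. Next I would observe Lorentz-equivariance: $\alpha$ was built out of the pairing $\langle -,-\rangle \maps S_+ \tensor S_- \to \R$, the Clifford action $V \tensor S_+ \to S_-$, and the Minkowski metric $g$, each of which is $\Spin(k+1,1)$-equivariant by the results of Section~\ref{sec:intertwiners}. Hence $\alpha$ is invariant under the infinitesimal action of $\so(k+1,1)$, which is exactly the equivariance required by part~(1) of the preceding proposition. Thus $\tilde{\alpha}$ is a closed even 3-cochain on $\siso(k+1,1)$.

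For nontriviality, I would appeal to part~(2) of the proposition: if $\tilde{\alpha}$ were exact, then $\alpha = d\theta$ on $\T$ for some $\so(k+1,1)$-equivariant 2-cochain $\theta$ on $\T$. But Theorem \ref{thm:3-cocycle} already rules out \emph{any} such $\theta$, equivariant or not, so $\tilde{\alpha}$ is nontrivial. With an even, closed, nontrivial 3-cocycle in hand, Theorem \ref{trivd} packages the data $(\siso(k+1,1), \R, \text{trivial rep}, \tilde{\alpha})$ into a slim Lie 2-superalgebra of the required form $\siso(k+1,1) \stackrel{d}{\longleftarrow} \R$, which we name $\superstring(k+1,1)$.

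The only step with any content is the Lorentz-equivariance check, and even that is essentially automatic from how $\alpha$ was defined; the main obstacle is really bookkeeping, making sure the pieces assembled in Chapter \ref{ch:supertranslations} and in the proposition above snap together cleanly. No genuine new computation is needed beyond what went into proving Theorem \ref{thm:3-cocycle} and the preceding proposition.
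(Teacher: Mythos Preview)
Your proposal is correct and follows essentially the same route as the paper: the paper likewise treats this theorem as an immediate consequence of the preceding proposition (extending $\alpha$ trivially from $\T$ to $\siso(k+1,1)$ using the Lorentz-invariance of $\alpha$, which as you note is manifest from its construction out of equivariant maps) together with Theorem~\ref{trivd}. Your write-up is if anything slightly more explicit than the paper's, which dispatches the theorem in a single sentence before the statement.
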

\noindent
Likewise, in dimensions one higher, $\beta$ lets us extend $\siso(k+2,1)$ to a
Lie 3-superalgebra. In the 11-dimensional case, this coincides with
the Lie 3-superalgebra which Sati, Schreiber and Stasheff call $\sugra(10,1)$
\cite{SSS}, which is the Koszul dual of an algebra defined by D'Auria and
Fr\'e~\cite{DAuriaFre}.
\begin{thm} \label{thm:2brane}
	In dimensions 4, 5, 7 and 11, there exists a Lie 3-superalgebra formed
	by extending the Poincar\'e superalgebra $\siso(k+2,1)$ by the
	4-cocycle $\beta$, which we call the \define{2-brane Lie
	3-superalgebra, $\twobrane(k+2,1)$}.
\end{thm}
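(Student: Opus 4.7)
The plan is to assemble three results already in hand: Theorem~\ref{thm:4-cocycle}, which produces the 4-cocycle $\beta$ on the supertranslation algebra $\T = \V \oplus \S$; the extension proposition just proved, which upgrades equivariant cocycles on $\T$ to cocycles on the semidirect product $\siso(k+2,1) = \so(k+2,1) \ltimes \T$; and Theorem~\ref{trivd}, which packages an even cocycle of the right degree into a slim Lie $n$-superalgebra. The structure will closely mirror the argument for $\superstring(k+1,1)$ in Theorem~\ref{thm:superstring}, adapted one dimension up.

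First I would trivially extend $\beta$ to a 4-cochain $\tilde{\beta}$ on $\siso(k+2,1)$, taking the value of $\beta$ whenever all four arguments lie in $\T$ and vanishing otherwise. By part~1 of the extension proposition, $\tilde{\beta}$ is closed if and only if $\beta$ is closed on $\T$ and $\so(k+2,1)$-equivariant. Closedness is exactly the content of Theorem~\ref{thm:4-cocycle}, whose heart is the 4-$\Psi$'s rule; this is the only genuinely nontrivial input and has already been discharged. For $\so(k+2,1)$-equivariance, I would observe that $\beta(\Psi, \Phi, \A, \B) = \langle \Psi, (\A \B - \B \A) \Phi \rangle$ is assembled entirely out of manifestly $\Spin(k+2,1)$-equivariant ingredients: the invariant skew form $\langle -, - \rangle$ of Proposition~\ref{prop:bilinearform} and the Clifford action of $\V$ on $\S$. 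Invariance under the group $\Spin(k+2,1)$ descends to equivariance under its Lie algebra $\so(k+2,1)$ by differentiating.

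Having shown that $\tilde{\beta}$ is an even 4-cocycle on $\siso(k+2,1)$ with values in the trivial representation $\R$, I would invoke Theorem~\ref{trivd}. In the indexing of that theorem, a $(n+2)$-cocycle gives a two-term $L_\infty$-superalgebra with nonzero pieces in degrees $0$ and $n$; taking $n=2$, the cocycle $\tilde{\beta}$ produces an $L_\infty$-superalgebra with $\siso(k+2,1)$ in degree $0$, zero in degree $1$, and $\R$ in degree $2$, which is precisely a slim Lie 3-superalgebra. This is our $\twobrane(k+2,1) := \brane_{\tilde{\beta}}(\siso(k+2,1), \R)$, and by construction it fits into a short exact sequence $0 \to \b^2 \R \to \twobrane(k+2,1) \to \siso(k+2,1) \to 0$, exhibiting it as an extension of the Poincar\'e superalgebra.

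The main obstacle is not really present in this proof at all: the hard spinorial identity that makes the whole construction work, namely $[\Psi,[\Psi,\Psi]\Psi] = 0$, is already encapsulated in Theorem~\ref{thm:4-cocycle}. What remains is bookkeeping: verifying Lorentz-equivariance (immediate from the equivariant construction), checking that $\beta$ is even so that it qualifies as a morphism in $\mathrm{SuperVect}$ (it is, since the spinor arguments appear in a symmetric pair and the vector arguments in an antisymmetric pair), and invoking the two general machines. One could optionally upgrade the statement by noting, via part~2 of the extension proposition, that $\tilde{\beta}$ is nonexact because $\beta$ itself is a nontrivial class on $\T$ that cannot be written as $d\theta$ for an $\so(k+2,1)$-equivariant $\theta$ --- but the theorem as stated only requires the construction, not a nontriviality claim for the extension class.
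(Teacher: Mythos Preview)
Your proposal is correct and matches the paper's own argument: the paper does not give a separate proof of this theorem but simply states it as an immediate consequence of the preceding extension proposition (which promotes the Lorentz-invariant cocycle $\beta$ on $\T$ from Theorem~\ref{thm:4-cocycle} to a cocycle on $\siso(k+2,1)$) together with Theorem~\ref{trivd}. Your write-up makes the logical dependencies more explicit than the paper does, but the ingredients and their assembly are identical.
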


\chapter{Lie \emph{n}-groups from group cohomology} \label{ch:Lie-n-groups}

Having constructed Lie $n$-algebras from Lie algebra $(n+1)$-cocycles, we now
turn to a parallel construction of Lie $n$-groups. Roughly speaking, an
`$n$-group' is a weak $n$-groupoid with one object---an $n$-category with one
object in which all morphisms are weakly invertible, up to higher-dimensional
morphisms. This definition is a rough one because there are many possible
definitions to use for `weak $n$-category', but despite this ambiguity, it can
still serve to motivate us. 

The richness of weak $n$-categories, no matter what definition we apply, makes
$n$-groups a complicated subject. In the midst of this complexity, we seek to
define a class of $n$-groups that have a simple description, and which are
straightforward to internalize, so that we may easily construct Lie $n$-groups
and Lie $n$-supergroups, as we shall do later in this thesis. The motivating
example for this is what Baez and Lauda \cite{BaezLauda} call a `special
2-group', which has a concrete description using group cohomology. Since Baez
and Lauda prove that all 2-groups are equivalent to special ones, group
cohomology also serves to classify 2-groups.

So, we will define `slim Lie $n$-groups', at least for $n \leq 3$.  This is an
Lie $n$-group which is skeletal (every weakly isomorphic pair of objects are
equal), and almost trivial: all $k$-morphisms are the identity for $1 < k < n$.
Slim Lie $n$-groups are useful because they can be completely classified by Lie
group cohomology. They are also easy to `superize', and their super versions
can be completely classified using Lie supergroup cohomology, as we shall see
in Chapter \ref{ch:Lie-n-supergroups}. Finally, we note that we could equally
well define `slim $n$-groups', working in the category of sets rather than the
category of smooth manifolds. The results in this section would hold in this
case as well, but are of less use to us in this thesis.

We should stress that the definition of Lie $n$-group we sketch here (and make
precise for $n \leq 3$), while it is good enough for our needs, is known to be
too naive in some important respects. For instance, it does not seem possible
to integrate every Lie $n$-algebra to a Lie $n$-group of this type, while
Henriques's definition of Lie $n$-group does make this possible
\cite{Henriques}. 

First we need to review the cohomology of Lie groups, as originally defined by
van Est \cite{vanEst}, who was working in parallel with the definition of group
cohomology given by Eilenberg and MacLane. Fix a Lie group $G$, an abelian Lie
group $H$, and a smooth action of $G$ on $H$ which respects addition in $H$.
That is, for any $g \in G$ and $h, h' \in H$, we have:
\[ g(h + h') = gh + gh'. \]
Then \define{the cohomology of $G$ with coefficients in $H$} is given by the
\define{Lie group cochain complex}, $C^{\bullet}(G,H)$. At level $p$, this
consists of the smooth functions from $G^p$ to $H$:
\[ C^p(G,H) = \left\{ f \maps G^p \to H \right\}. \]
We call elements of this set \define{$H$-valued $p$-cochains on $G$}. The
boundary operator is the same as the one defined by Eilenberg--MacLane. On a
$p$-cochain $f$, it is given by the formula:
\begin{eqnarray*}
	df(g_1, \dots, g_{p+1}) & = & g_1 f(g_2, \dots, g_{p+1}) \\
	                        &   & + \sum_{i=1}^p (-1)^i f(g_1, \dots, g_{i-1}, g_i g_{i+1}, g_{i+2}, \dots, g_{p+1}) \\
				&   & + (-1)^{p+1} f(g_1, \dots, g_p) .
\end{eqnarray*}
The proof that $d^2 = 0$ is routine. All the usual terminology applies: a
$p$-cochain $f$ for which $df = 0$ is called \define{closed}, or a
\define{cocycle}, a $p$-cochain $f = dg$ for some $(p-1)$-cochain $g$ is called
\define{exact}, or a \define{coboundary}. A $p$-cochain is said to be
\define{normalized} if it vanishes when any of its entries is 1. Every
cohomology class can be represented by a normalized cocycle. Finally, when $H =
\R$ with trivial $G$ action, we omit it when writing the complex
$C^\bullet(G)$, and we call real-valued cochains, cocycles, or coboundaries,
simply cochains, cocycles or coboundaries, respectively.

This last choice, that $\R$ will be our default coefficient group, may seem
innocuous, but there is another one-dimensional abelian Lie group we might have
chosen: $\U(1)$, the group of phases. This would have been an equally valid
choice, and perhaps better for some physical applications, but we have chosen
$\R$ because it simplifies our formulas slightly.

We now sketch how to build a slim Lie $n$-group from an $(n+1)$-cocycle. In
essence, given a normalized $H$-valued $(n+1)$-cocycle $a$ on a Lie group $G$,
we want to construct a Lie $n$-group $\Brane_a(G,H)$, which is the smooth, weak
$n$-groupoid with:
\begin{itemize}
	\item One object. We can depict this with a dot, or `0-cell': $\bullet$

	\item For each element $g \in G$, a 1-automorphism of the one object,
		which we depict as an arrow, or `1-cell':
		\[ \xymatrix{ \bullet \ar[r]^g & \bullet }, \quad g \in G. \]
		Composition corresponds to multiplication in the group:
		\[ \xymatrix{ \bullet \ar[r]^g & \bullet \ar[r]^{g'} & \bullet } = \xymatrix{ \bullet \ar[r]^{gg'} & \bullet }. \]
	\item Trivial $k$-morphisms for $1 < k < n$. If we depict 2-morphisms
		with 2-cells, 3-morphisms with 3-cells, then we are saying
		there is just one of each of these (the identity) up to level
		$n-1$:
		\[
		\xy
		(-8,0)*+{\bullet}="4";
		(8,0)*+{\bullet}="6";
		{\ar@/^1.65pc/^g "4";"6"};
		{\ar@/_1.65pc/_g "4";"6"};
		{\ar@{=>}^{1_g} (0,3)*{};(0,-3)*{}} ;
		\endxy
		, \quad
		\xy 
		(-10,0)*+{\bullet}="1";
		(10,0)*+{\bullet}="2";
		{\ar@/^1.65pc/^g "1";"2"};
		{\ar@/_1.65pc/_g "1";"2"};
		(0,5)*+{}="A";
		(0,-5)*+{}="B";
		{\ar@{=>}@/_.75pc/ "A"+(-1.33,0) ; "B"+(-.66,-.55)};
		{\ar@{=}@/_.75pc/ "A"+(-1.33,0) ; "B"+(-1.33,0)};
		{\ar@{=>}@/^.75pc/ "A"+(1.33,0) ; "B"+(.66,-.55)};
		{\ar@{=}@/^.75pc/ "A"+(1.33,0) ; "B"+(1.33,0)};
		{\ar@3{->} (-2,0)*{}; (2,0)*{}};
		(0,2.5)*{\scriptstyle 1_{1_g}};
		(-7,0)*{\scriptstyle 1_g};
		(7,0)*{\scriptstyle 1_g};
		\endxy
		, \quad \dots
		\] 

	\item For each element $h \in H$, an $n$-automorphism on the identity
		of the identity of \dots the identity of the 1-morphism $g$,
		and no $n$-morphisms which are not $n$-automorphisms. For
		example, when $n = 3$, we have:
		\[
		\xy 
		(-10,0)*+{\bullet}="1";
		(10,0)*+{\bullet}="2";
		{\ar@/^1.65pc/^g "1";"2"};
		{\ar@/_1.65pc/_g "1";"2"};
		(0,5)*+{}="A";
		(0,-5)*+{}="B";
		{\ar@{=>}@/_.75pc/ "A"+(-1.33,0) ; "B"+(-.66,-.55)};
		{\ar@{=}@/_.75pc/ "A"+(-1.33,0) ; "B"+(-1.33,0)};
		{\ar@{=>}@/^.75pc/ "A"+(1.33,0) ; "B"+(.66,-.55)};
		{\ar@{=}@/^.75pc/ "A"+(1.33,0) ; "B"+(1.33,0)};
		{\ar@3{->} (-2,0)*{}; (2,0)*{}};
		(0,2.5)*{\scriptstyle h};
		(-7,0)*{\scriptstyle 1_g};
		(7,0)*{\scriptstyle 1_g};
		\endxy
		, \quad h \in H.
		\]

	\item There are $n$ ways of composing $n$-morphisms, given by different
		ways of sticking $n$-cells together. For example, when $n = 3$,
		we can glue two 3-cells along a 2-cell, which should just
		correspond to addition in $H$:
		\[
		\xy 0;/r.22pc/:
		(0,15)*{};
		(0,-15)*{};
		(0,8)*{}="A";
		(0,-8)*{}="B";
		{\ar@{=>} "A" ; "B"};
		{\ar@{=>}@/_1pc/ "A"+(-4,1) ; "B"+(-3,0)};
		{\ar@{=}@/_1pc/ "A"+(-4,1) ; "B"+(-4,1)};
		{\ar@{=>}@/^1pc/ "A"+(4,1) ; "B"+(3,0)};
		{\ar@{=}@/^1pc/ "A"+(4,1) ; "B"+(4,1)};
		{\ar@3{->} (-6,0)*{} ; (-2,0)*+{}};
		(-4,3)*{\scriptstyle h};
		{\ar@3{->} (2,0)*{} ; (6,0)*+{}};
		(4,3)*{\scriptstyle k};
		(-15,0)*+{\bullet}="1";
		(15,0)*+{\bullet}="2";
		{\ar@/^2.75pc/^g "1";"2"};
		{\ar@/_2.75pc/_g "1";"2"};
		\endxy 
		\quad = \quad
		\xy 0;/r.22pc/:
		(0,15)*{};
		(0,-15)*{};
		(0,8)*{}="A";
		(0,-8)*{}="B";
		{\ar@{=>}@/_1pc/ "A"+(-4,1) ; "B"+(-3,0)};
		{\ar@{=}@/_1pc/ "A"+(-4,1) ; "B"+(-4,1)};
		{\ar@{=>}@/^1pc/ "A"+(4,1) ; "B"+(3,0)};
		{\ar@{=}@/^1pc/ "A"+(4,1) ; "B"+(4,1)};
		{\ar@3{->} (-6,0)*{} ; (6,0)*+{}};
		(0,3)*{\scriptstyle h + k};
		(-15,0)*+{\bullet}="1";
		(15,0)*+{\bullet}="2";
		{\ar@/^2.75pc/^g "1";"2"};
		{\ar@/_2.75pc/_g "1";"2"};
		\endxy .
		\]
		We also can glue two 3-cells along a 1-cell, which should again
		just be addition in $H$:
		\[	
		\xy 0;/r.22pc/:
		(0,15)*{};
		(0,-15)*{};
		(0,9)*{}="A";
		(0,1)*{}="B";
		{\ar@{=>}@/_.5pc/ "A"+(-2,1) ; "B"+(-1,0)};
		{\ar@{=}@/_.5pc/ "A"+(-2,1) ; "B"+(-2,1)};
		{\ar@{=>}@/^.5pc/ "A"+(2,1) ; "B"+(1,0)};
		{\ar@{=}@/^.5pc/ "A"+(2,1) ; "B"+(2,1)};
		{\ar@3{->} (-2,6)*{} ; (2,6)*+{}};
		(0,9)*{\scriptstyle h};
		(0,-1)*{}="A";
		(0,-9)*{}="B";
		{\ar@{=>}@/_.5pc/ "A"+(-2,-1) ; "B"+(-1,-1.5)};
		{\ar@{=}@/_.5pc/ "A"+(-2,0) ; "B"+(-2,-.7)};
		{\ar@{=>}@/^.5pc/ "A"+(2,-1) ; "B"+(1,-1.5)};
		{\ar@{=}@/^.5pc/ "A"+(2,0) ; "B"+(2,-.7)};
		{\ar@3{->} (-2,-5)*{} ; (2,-5)*+{}};
		(0,-2)*{\scriptstyle k};
		(-15,0)*+{\bullet}="1";
		(15,0)*+{\bullet}="2";
		{\ar@/^2.75pc/^g "1";"2"};
		{\ar@/_2.75pc/_g "1";"2"};
		{\ar "1";"2"};
		(8,2)*{\scriptstyle g};
		\endxy 
		\quad = \quad
		\xy 0;/r.22pc/:
		(0,15)*{};
		(0,-15)*{};
		(0,8)*{}="A";
		(0,-8)*{}="B";
		{\ar@{=>}@/_1pc/ "A"+(-4,1) ; "B"+(-3,0)};
		{\ar@{=}@/_1pc/ "A"+(-4,1) ; "B"+(-4,1)};
		{\ar@{=>}@/^1pc/ "A"+(4,1) ; "B"+(3,0)};
		{\ar@{=}@/^1pc/ "A"+(4,1) ; "B"+(4,1)};
		{\ar@3{->} (-6,0)*{} ; (6,0)*+{}};
		(0,3)*{\scriptstyle h + k};
		(-15,0)*+{\bullet}="1";
		(15,0)*+{\bullet}="2";
		{\ar@/^2.75pc/^g "1";"2"};
		{\ar@/_2.75pc/_g "1";"2"};
		\endxy  .
		\]
		And finally, we can glue two 3-cells at the 0-cell, the object
		$\bullet$.  This is the only composition of $n$-morphisms where
		the attached 1-morphisms can be distinct, which distinguishes
		it from the first two cases. It should be addition
		\emph{twisted by the action of $G$}:
		\[
		\xy 0;/r.22pc/:
		(0,15)*{};
		(0,-15)*{};
		(-20,0)*+{\bullet}="1";
		(0,0)*+{\bullet}="2";
		{\ar@/^2pc/^g "1";"2"};
		{\ar@/_2pc/_g "1";"2"};
		(20,0)*+{\bullet}="3";
		{\ar@/^2pc/^{g'} "2";"3"};
		{\ar@/_2pc/_{g'} "2";"3"};
		(-10,6)*+{}="A";
		(-10,-6)*+{}="B";
		{\ar@{=>}@/_.7pc/ "A"+(-2,0) ; "B"+(-1,-.8)};
		{\ar@{=}@/_.7pc/ "A"+(-2,0) ; "B"+(-2,0)};
		{\ar@{=>}@/^.7pc/ "A"+(2,0) ; "B"+(1,-.8)};
		{\ar@{=}@/^.7pc/ "A"+(2,0) ; "B"+(2,0)};
		(10,6)*+{}="A";
		(10,-6)*+{}="B";
		{\ar@{=>}@/_.7pc/ "A"+(-2,0) ; "B"+(-1,-.8)};
		{\ar@{=}@/_.7pc/ "A"+(-2,0) ; "B"+(-2,0)};
		{\ar@{=>}@/^.7pc/ "A"+(2,0) ; "B"+(1,-.8)};
		{\ar@{=}@/^.7pc/ "A"+(2,0) ; "B"+(2,0)};
		{\ar@3{->} (-12,0)*{}; (-8,0)*{}};
		(-10,3)*{\scriptstyle h};
		{\ar@3{->} (8,0)*{}; (12,0)*{}};
		(10,3)*{\scriptstyle k};
		\endxy 
		\quad = \quad 
		\xy 0;/r.22pc/:
		(0,15)*{};
		(0,-15)*{};
		(0,8)*{}="A";
		(0,-8)*{}="B";
		{\ar@{=>}@/_1pc/ "A"+(-4,1) ; "B"+(-3,0)};
		{\ar@{=}@/_1pc/ "A"+(-4,1) ; "B"+(-4,1)};
		{\ar@{=>}@/^1pc/ "A"+(4,1) ; "B"+(3,0)};
		{\ar@{=}@/^1pc/ "A"+(4,1) ; "B"+(4,1)};
		{\ar@3{->} (-6,0)*{} ; (6,0)*+{}};
		(0,3)*{\scriptstyle h + g k};
		(-15,0)*+{\bullet}="1";
		(15,0)*+{\bullet}="2";
		{\ar@/^2.75pc/^{gg'} "1";"2"};
		{\ar@/_2.75pc/_{gg'} "1";"2"};
		\endxy .
		\]
		For arbitary $n$, we define all $n$ compositions to be addition
		in $H$, except for gluing at the object, where it is addition
		twisted by the action.

	\item For any $(n+1)$-tuple of 1-morphisms, an $n$-automorphism $a(g_1,
		g_2, \dots, g_{n+1})$ on the identity of the identity of \dots
		the identity of the 1-morphism $g_1 g_2 \dots g_{n+1}$. We call
		$a$ the \define{$n$-associator}.

	\item $a$ satisfies an equation corresponding to the $n$-dimensional
		associahedron, which is equivalent to the cocycle condition.
\end{itemize}
In principle, it should be possible to take a globular definition of
$n$-category, such as that of Batanin or Trimble, and fill out this sketch to
make it a real definition of an $n$-group. Doing this here, however, would lead us
too far afield from our goal, for which we only need 2- and 3-groups.  So let
us flesh out these cases. The reader interested in learning more about the
various definitions of $n$-categories should consult Leinster's survey
\cite{Leinster:ncat} or Cheng and Lauda's guidebook \cite{ChengLauda}. 

\section{Lie 2-groups}

Speaking precisely, a \define{2-group} is a bicategory with one object in which
all 1-morphisms and 2-morphisms are weakly invertible. Rather than plain
2-groups, we are interested in \emph{Lie} 2-groups, where all the structure in
sight is smooth. So, we really need a bicategory `internal to the category of
smooth manifolds', or a `smooth bicategory'. To this end, we will give an
especially long and unfamiliar definition of bicategory, isolating each
operation and piece of data so that we can indicate its smoothness.  Readers
not familiar with bicategories are encouraged to read the introduction by
Leinster \cite{Leinster:bicat}. 

Before we give this definition, let us review the idea of a `bicategory', so
that its basic simplicity is not obscured in technicalities. A bicategory has
objects:
\[ x \, \bullet, \]
morphisms going between objects,
\[ \xymatrix{ x \, \bullet \ar[r]^f & \bullet \, y}, \]
and 2-morphisms going between morphisms:
\[
\xy
(-10,0)*+{x};
(-8,0)*+{\bullet}="4";
(8,0)*+{\bullet}="6";
(10,0)*+{y};
{\ar@/^1.65pc/^f "4";"6"};
{\ar@/_1.65pc/_g "4";"6"};
{\ar@{=>}^{\scriptstyle \alpha} (0,3)*{};(0,-3)*{}} ;
\endxy .
\] 
Morphisms in a bicategory can be composed just as morphisms in a category:
\[ \xymatrix{ x \ar[r]^f & y \ar[r]^g & z } \quad = \quad \xymatrix{ x \ar[r]^{f \cdot g} & z } . \]
But there are two ways to compose 2-morphisms---vertically:
\[
\xy
(-8,0)*+{x}="4";
(8,0)*+{y}="6";
{\ar^g "4";"6"};
{\ar@/^1.75pc/^{f} "4";"6"};
{\ar@/_1.75pc/_{h} "4";"6"};
{\ar@{=>}^<<{\scriptstyle \alpha} (0,6)*{};(0,1)*{}} ;
{\ar@{=>}^<<{\scriptstyle \beta} (0,-1)*{};(0,-6)*{}} ;
\endxy
\quad = \quad \xy
(-8,0)*+{x}="4";
(8,0)*+{y}="6";
{\ar@/^1.65pc/^f "4";"6"};
{\ar@/_1.65pc/_h "4";"6"};
{\ar@{=>}^{\scriptstyle \alpha \circ \beta } (0,3)*{};(0,-3)*{}} ;
\endxy
\] 
and horizontally:
\[
\xy
(-16,0)*+{x}="4";
(0,0)*+{y}="6";
{\ar@/^1.65pc/^{f} "4";"6"};
{\ar@/_1.65pc/_{g} "4";"6"};
{\ar@{=>}^<<<{\scriptstyle \alpha} (-8,3)*{};(-8,-3)*{}} ;
(0,0)*+{y}="4";
(16,0)*+{z}="6";
{\ar@/^1.65pc/^{f'} "4";"6"};
{\ar@/_1.65pc/_{g'} "4";"6"};
{\ar@{=>}^<<<{\scriptstyle \beta} (8,3)*{};(8,-3)*{}} ;
\endxy
\quad = \quad \xy
(-10,0)*+{x}="4";
(10,0)*+{z}="6";
{\ar@/^1.65pc/^{f \cdot f'} "4";"6"};
{\ar@/_1.65pc/_{g \cdot g'} "4";"6"};
{\ar@{=>}^{\alpha \cdot \beta} (0,3)*{};(0,-3)*{}} ;
\endxy .
\] 
Unlike a category, composition of morphisms need not be associative or have
left and right units. The presence of 2-morphisms allow us to \emph{weaken the
axioms}.  Rather than demanding $(f \cdot g) \cdot h = f \cdot (g \cdot h)$,
for composable morphisms $f, g$ and $h$, the presence of 2-morphisms allow for
the weaker condition that these two expressions are merely isomorphic:
 \[ a(f,g,h) \maps (f \cdot g) \cdot h \Rightarrow f \cdot (g \cdot h), \]
where $a(f,g,h)$ is an 2-isomorphism called the \define{associator}. In the
same vein, rather than demanding that:
\[ 1_x \cdot f = f = f \cdot 1_y, \]
for $f \maps x \to y$, and identities $1_x \maps x \to x$ and $1_y \maps y \to
y$, the presence of 2-morphisms allow us to weaken these equations to
isomorphisms:
\[ l(f) \maps 1_x \cdot f \Rightarrow f, \quad r(f) \maps f \cdot 1_y \Rightarrow f. \]
Here, $l(f)$ and $r(f)$ are 2-isomorphisms called the \define{left and right
unitors}.

Of course, these 2-isomorphisms obey rules of their own. The associator
satisfies its own axiom, called the \define{pentagon identity}, which says that
this pentagon commutes:
\[
\xy
 (0,20)*+{(f g) (h k)}="1";
 (40,0)*+{f (g (h k))}="2";
 (25,-20)*{ \quad f ((g h) k)}="3";
 (-25,-20)*+{(f (g h)) k}="4";
 (-40,0)*+{((f g) h) k}="5";
 {\ar@{=>}^{a(f,g,h k)}     "1";"2"}
 {\ar@{=>}_{1_f \cdot a_(g,h,k)}  "3";"2"}
 {\ar@{=>}^{a(f,g h,k)}    "4";"3"}
 {\ar@{=>}_{a(f,g,h) \cdot 1_k}  "5";"4"}
 {\ar@{=>}^{a(fg,h,k)}    "5";"1"}
\endxy
\]
Finally, the associator and left and right unitors satisfy the \define{triangle
identity}, which says the following triangle commutes:
\[ 
\xy
(-20,10)*+{(f 1) g}="1";
(20,10)*+{f (1 g)}="2";
(0,-10)*+{f g}="3";
{\ar@{=>}^{a(f,1,g)}	"1";"2"}
{\ar@{=>}_{r(f) \cdot 1_g}	"1";"3"}
{\ar@{=>}^{1_f \cdot l(g)} "2";"3"}
\endxy
\]

A word of caution is needed here before we proceed: \emph{in this chapter 
only}, we are bucking standard mathematical practice by writing the result of
doing first $\alpha$ and then $\beta$ as $\alpha \circ \beta$ rather than
$\beta \circ \alpha$, as one would do in most contexts where $\circ$ denotes
composition of \emph{functions}. This has the effect of changing how we read
commutative diagrams. For instance, the commutative triangle:
\[ \xymatrix{ f \ar[r]^\alpha \ar[rd]_\gamma & g \ar[d]^\beta \\
		         & h \\
}
\]
reads $\gamma = \alpha \circ \beta$ rather than $\gamma = \beta \circ \alpha$.

We shall now give the full definition, not of a bicategory, but of a `smooth
bicategory'. To do this, we use the idea of internalization. Dating back to
Ehresmann \cite{Ehresmann} in the 1960s, internalization has become a standard tool
of the working category theorist. The idea is based on a familiar one: any mathematical
structure that can be defined using sets, functions, and equations between
functions can be defined in categories other than Set. For instance, a group in
the category of smooth manifolds is a Lie group. To perform internalization, we
apply this idea to the definition of category itself. We recall the essentials
here to define `smooth categories'. More generally, one can define a `category
in $K$' for many categories $K$, though here we will work exclusively with the
example where $K$ is the category of smooth manifolds. For a readable treatment
of internalization, see Borceux's handbook \cite{Borceux}.

\begin{defn}  A \define{smooth category} $C$ consists of
\begin{itemize}
	\item a \define{smooth manifold of objects} $C_{0}$;
	\item a \define{smooth manifold of morphisms} $C_1$;
\end{itemize}
together with
\begin{itemize}
	\item smooth {\bf source} and {\bf target} maps $s,t \maps C_{1} \rightarrow C_{0}$, 
	\item a smooth {\bf identity-assigning} map $i \maps C_{0} \rightarrow C_{1}$, 
	\item a smooth {\bf composition} map $\circ \maps C_{1} \times _{C_{0}}
		C_{1} \rightarrow C_{1}$, where $C_1 \times_{C_0} C_1$ is the
		pullback of the source and target maps:
		\[ C_1 \times_{C_0} C_1 = \left\{ (f,g) \in C_1 \times C_1 : t(f) = s(g) \right\}, \]
		and is assumed to be a smooth manifold.
\end{itemize}
such that the following diagrams commute, expressing the usual category laws:
\begin{itemize}
	\item laws specifying the source and target of identity morphisms:
	\[
	\xymatrix{
 C_{0}
   \ar[r]^{i}
   \ar[dr]_{1}
   & C_{1}
   \ar[d]^{s} \\
  & C_{0} }
\hspace{.2in} \xymatrix{
   C_{0}
   \ar[r]^{i}
   \ar[dr]_{1}
   & C_{1}
   \ar[d]^{t} \\
  & C_{0}}
\]
	\item laws specifying the source and target of composite
	morphisms:
	\[
\xymatrix{ C_{1} \times _{C_{0}} C_{1}
  \ar[rr]^{\circ}
  \ar[dd]_{p_{1}}
  && C_{1}
  \ar[dd]^{s} \\ \\
C_{1}
  \ar[rr]^{s}
  && C_{0} }
  \hspace{.2in}
\xymatrix{ C_{1} \times_{C_{0}} C_{1}
  \ar[rr]^{\circ}
  \ar[dd]_{p_{2}}
   && C_{1}
  \ar[dd]^{t} \\ \\
   C_{1}
  \ar[rr]^{t}
   && C_{0} }
\]
	\item the associative law for composition of morphisms:
	\[
	\xymatrix{ C_{1} \times _{C_{0}} C_{1} \times _{C_{0}} C_{1}
  \ar[rr]^{\circ \times_{C_{0}} 1}
  \ar[dd]_{1 \times_{C_{0}} \circ}
   && C_{1} \times_{C_{0}} C_{1}
  \ar[dd]^{\circ} \\ \\
   C_{1} \times _{C_{0}} C_{1}
  \ar[rr]^{\circ}
   && C_{1} }
\]
	\item the left and right unit laws for composition of morphisms:
	\[
	\xymatrix{ C_{0} \times _{C_{0}} C_{1}
  \ar[r]^{i \times 1}
  \ar[ddr]_{p_2}
   & C_{1} \times _{C_{0}} C_{1}
  \ar[dd]^{\circ}
   & C_{1} \times_{C_{0}} C_{0}
  \ar[l]_{1 \times i}
  \ar[ddl]^{p_1} \\ \\
   & C_{1} }
\]
\end{itemize}
\end{defn}

The existence of pullbacks in the category of smooth manifolds is a delicate
issue. When working with categories internal to some category $K$, it is
customary to assume $K$ contains all pullbacks, but this is merely a
convenience. All the definitions still work as long as the existence of each
required pullback is implicit. 

To define smooth bicategories, we must first define smooth functors and natural
transformations:

\begin{defn} 
Given smooth categories $C$ and $C'$, a {\bf smooth functor} $F \maps C \to C'$
consists of:
\begin{itemize}
	\item a smooth map on objects, $F_{0} \maps C_{0} \to C_{0}'$; 
	\item a smooth map on morphisms, $F_{1} \maps C_{1} \rightarrow C_{1}'$;
\end{itemize}
such that the following diagrams commute, corresponding to the usual laws satisfied by a functor:
\begin{itemize}
\item preservation of source and target:
\[
\xymatrix{ C_{1} \ar[rr]^{s} \ar[dd]_{F_{1}}
 && C_{0}
\ar[dd]^{F_{0}} \\ \\
 C_{1}'
\ar[rr]^{s'}
 && C_{0}' }
\qquad \qquad \xymatrix{ C_{1} \ar[rr]^{t} \ar[dd]_{F_{1}}
 && C_{0}
\ar[dd]^{F_{0}} \\ \\
 C_{1}'
\ar[rr]^{t'}
 && C_{0}' }
\]
\item preservation of identity morphisms:
\[
\xymatrix{
 C_{0}
\ar[rr]^{i} \ar[dd]_{F_{0}}
 && C_{1}
\ar[dd]^{F_{1}} \\ \\
 C_{0}'
\ar[rr]^{i'}
 && C_{1}' }
\]
\item preservation of composite morphisms:
\[
\xymatrix{ C_{1} \times _{C_{0}} C_{1}
 \ar[rr]^{F_{1} \times_{C_0} F_{1}}
 \ar[dd]_{\circ}
  && C_{1}' \times_{C_{0}'} C_{1}'
 \ar[dd]^{\circ'} \\ \\
  C_{1}
 \ar[rr]^{F_{1}}
  && C_{1}' }
\]
\end{itemize}
\end{defn}

\begin{defn}  Given categories smooth categories $C$ and $C'$, and smooth
	functors $F,G \maps C \to C'$, a {\bf smooth natural transformation}
	$\theta \maps F \To G$ is a smooth map $\theta \maps C_0 \to C'_1$ for
	which the following diagrams commute, expressing the usual laws
	satisfied by a natural transformation: 
	\begin{itemize}
		\item laws specifying the source and target of the natural
		transformation:
		\[
		 \xymatrix{C_0 \ar[dr]^F \ar[d]_{\theta} \\ C'_1 \ar[r]_s & C'_0 }
		 \qquad \qquad
		 \xymatrix{C_0 \ar[dr]^G \ar[d]_{\theta} \\ C'_1 \ar[r]_t & C'_0 }
		\]
		\item the commutative square law:
		\[  \xymatrix{
		C_1
		 \ar[rr]^{\Delta (s\theta \times G)}
		 \ar[dd]_{\Delta (F \times t\theta)}
		  && C'_1 \times_{C_0} C'_1
		 \ar[dd]^{\circ'} \\ \\
		  C'_1 \times_{C_0} C'_1
		 \ar[rr]^{\circ'}
		  && C'_1
		}
		\]
	\end{itemize}
\end{defn}

Now we know enough about smooth category theory to bootstrap the definition of
smooth bicategories.  We do this in a somewhat nonstandard way: we make use of
the fact that the morphisms and 2-morphisms of a bicategory form an ordinary
category under vertical composition. Generalizing this, the morphisms and
2-morphisms in a smooth bicategory should form, by themselves, a smooth
category.  We can then define horizontal composition as a smooth functor, and
introduce the associator and left and right unitors as smooth natural
transformations between certain functors.  In detail: 
\begin{defn} \label{def:smoothbicat}
	A \define{smooth bicategory} $B$ consists of 
\begin{itemize}
	\item a \define{manifold of objects} $B_0$;
	\item a \define{manifold of morphisms} $B_1$;
	\item a \define{manifold of 2-morphisms} $B_2$;
\end{itemize}
equipped with:
\begin{itemize}
	\item a smooth category structure on $\underline{\Mor} B$, with
		\begin{itemize}
			\item $B_1$ as the smooth manifold of objects;
			\item $B_2$ as the smooth manifold of morphisms;
		\end{itemize}
		The composition in $\underline{\Mor} B$ is called
		\define{vertical composition} and denoted $\circ$.
	\item smooth \define{source} and \define{target maps}:
		\[ s, t \maps B_1 \to B_0. \]
	\item a smooth \define{identity-assigning map}:
		\[ i \maps B_0 \to B_1. \]
	\item a smooth \define{horizontal composition} functor:
		\[ \cdot \maps \underline{\Mor} B \times_{B_0} \underline{\Mor} B \to \underline{\Mor} B . \]
		That is, a pair of smooth maps:
		\[ \cdot \maps B_1 \times_{B_0} B_1 \to B_1  \]
		\[ \cdot \maps B_2 \times_{B_0} B_2 \to B_2, \]
		satisfying the axioms for a functor.
	\item a smooth natural transformation, the \define{associator}:
		\[ a(f,g,h) \maps (f \cdot g) \cdot h \To f \cdot (g \cdot h). \]
	\item smooth natural transformations, the \define{left} and
		\define{right unitors}, which are both trivial in the
		bicategories we consider:
		\[ l(f) \maps 1 \cdot f \To f, \quad r(f) \maps f \cdot 1 \To f. \]
\end{itemize}
such that the following diagrams commute, expressing the same laws regarding
sources, targets and identities as with a smooth category, and two new laws
expressing the compatibility of the various source and target maps:
\begin{itemize}
\item laws specifying the source and target of identity morphisms:
\[
\xymatrix{
 B_{0}
   \ar[r]^{i}
   \ar[dr]_{1}
   & B_{1}
   \ar[d]^{s} \\
  & B_{0} }
\hspace{.2in} \xymatrix{
   B_{0}
   \ar[r]^{i}
   \ar[dr]_{1}
   & B_{1}
   \ar[d]^{t} \\
  & B_{0}}
\]
\item laws specifying the source and target of the horizontal composite
of 1-morphisms:
\[
\xymatrix{ B_1 \times _{B_0} B_1
  \ar[rr]^{\cdot}
  \ar[dd]_{p_{1}}
  && B_1
  \ar[dd]^{t} \\ \\
B_1
  \ar[rr]^{t}
  && B_0 }
  \hspace{.2in}
\xymatrix{ B_1 \times_{B_0} B_1
  \ar[rr]^{\cdot}
  \ar[dd]_{p_{2}}
   && B_1
  \ar[dd]^{s} \\ \\
   B_1
  \ar[rr]^{s}
   && B_0 }
\]

\item laws expressing the compatibility of source and target maps:
\[ \xymatrix{
		B_2 \ar[rr]^s \ar[dd]_t & &  B_1 \ar[dd]^s \\
		                        & & \\
		B_1 \ar[rr]_s           & & B_0 \\
	}
	\hspace{.2in}
\xymatrix{
		B_2 \ar[rr]^t \ar[dd]_s & &  B_1 \ar[dd]^t \\
		                        & & \\
		B_1 \ar[rr]_t           & & B_0 \\
	}
\]
\end{itemize}
Finally, associator and left and right unitors satisfy some laws of their
own---the following diagrams commute:
\begin{itemize}
 \item the {\bf pentagon identity} for the associator:
\[
\xy
 (0,20)*+{(f g) (h k)}="1";
 (40,0)*+{f (g (h k))}="2";
 (25,-20)*{ \quad f ((g h) k)}="3";
 (-25,-20)*+{(f (g h)) k}="4";
 (-40,0)*+{((f g) h) k}="5";
 {\ar@{=>}^{a(f,g,h k)}     "1";"2"}
 {\ar@{=>}_{1_f \cdot a_(g,h,k)}  "3";"2"}
 {\ar@{=>}^{a(f,g h,k)}    "4";"3"}
 {\ar@{=>}_{a(f,g,h) \cdot 1_k}  "5";"4"}
 {\ar@{=>}^{a(fg,h,k)}    "5";"1"}
\endxy
\]
for any four composable morphisms $f$, $g$, $h$ and $k$.
\item the {\bf triangle identity} for the left and right unit
laws:
\[ 
\xy
(-20,10)*+{(f 1) g}="1";
(20,10)*+{f (1 g)}="2";
(0,-10)*+{f g}="3";
{\ar@{=>}^{a(f,1,g)}	"1";"2"}
{\ar@{=>}_{r(f) \cdot 1_g}	"1";"3"}
{\ar@{=>}^{1_f \cdot l(g)} "2";"3"}
\endxy
\]
for any two composable morphisms $f$ and $g$.
\end{itemize}
\end{defn}

This definition of smooth bicategory may seem so long that checking it is
utterly intimidating, but we shall see an example in a moment where this is
easy. This will be an example of a \define{Lie 2-group}, a smooth bicategory
with one object where all morphisms are weakly invertible, and all 2-morphisms
are strictly invertible.

Secretly, the pentagon identity is a cocycle condition, as we shall now see.
Given a normalized $H$-valued 3-cocycle $a$ on a Lie group $G$, we can
construct a Lie 2-group $\String_a(G,H)$ with:
\begin{itemize}
	\item One object, $\bullet$, regarded as a manifold in the trivial way.
	\item For each element $g \in G$, an automorphism of the one object:
		\[ \bullet \stackrel{g}{\longrightarrow} \bullet . \]
		Horizontal composition given by multiplication in the group:
		\[ \cdot \maps G \times G \to G. \]
		Note that source and target maps are necessarily trivial. The
		identity-assigning map takes the one object to $1 \in G$.
	\item For each $h \in H$, a 3-automorphism of the 2-morphism $1_g$,
		and no 3-morphisms between distinct 2-morphisms:
		\[
		\xy
		(-8,0)*+{\bullet}="4";
		(8,0)*+{\bullet}="6";
		{\ar@/^1.65pc/^g "4";"6"};
		{\ar@/_1.65pc/_g "4";"6"};
		{\ar@{=>}^h (0,3)*{};(0,-3)*{}} ;
		\endxy, \quad h \in H .
		\] 
		Thus the space of all 2-morphisms is $G \times H$, and
		the source and target maps are projection onto the first
		factor. The identity-assigning map takes each element of $G$ to
		$0 \in H$.
	\item Two kinds of composition of 2-morphisms: given a pair of
		2-morphisms on the same morphism, vertical compostion is given
		by addition in $H$:
		\[
		\xy
		(-8,0)*+{\bullet}="4";
		(8,0)*+{\bullet}="6";
		{\ar "4";"6"};
		{\ar@/^1.75pc/^{g} "4";"6"};
		{\ar@/_1.75pc/_{g} "4";"6"};
		{\ar@{=>}^<<{h} (0,6)*{};(0,1)*{}} ;
		{\ar@{=>}^<<{h'} (0,-1)*{};(0,-6)*{}} ;
		\endxy
		\quad = \quad \xy
		(-8,0)*+{\bullet}="4";
		(8,0)*+{\bullet}="6";
		{\ar@/^1.65pc/^g "4";"6"};
		{\ar@/_1.65pc/_g "4";"6"};
		{\ar@{=>}^{h + h'} (0,3)*{};(0,-3)*{}} ;
		\endxy .
		\] 
		That is, vertical composition is just the map:
		\[ \circ = 1 \times + \maps G \times H \times H \to G \times H. \]
		where we have used the fact that the pullback of 2-morphisms
		over the one object is trivially:
		\[ (G \times H) \times_\bullet (G \times H) \iso G \times H \times H. \]
		Given a pair of 2-morphisms on different morphisms, horizontal
		composition is addition \emph{twisted by the action of $G$}:
		\[
		\xy
		(-16,0)*+{\bullet}="4";
		(0,0)*+{\bullet}="6";
		{\ar@/^1.65pc/^{g} "4";"6"};
		{\ar@/_1.65pc/_{g} "4";"6"};
		{\ar@{=>}^<<<{h} (-8,3)*{};(-8,-3)*{}} ;
		(0,0)*+{\bullet}="4";
		(16,0)*+{\bullet}="6";
		{\ar@/^1.65pc/^{g'} "4";"6"};
		{\ar@/_1.65pc/_{g'} "4";"6"};
		{\ar@{=>}^<<<{h'} (8,3)*{};(8,-3)*{}} ;
		\endxy
		\quad = \quad \xy
		(-12,0)*+{\bullet}="4";
		(12,0)*+{\bullet}="6";
		{\ar@/^1.65pc/^{g g'} "4";"6"};
		{\ar@/_1.65pc/_{g g'} "4";"6"};
		{\ar@{=>}^{h + gh'} (0,3)*{};(0,-3)*{}} ;
		\endxy .
		\] 
		Or, in terms of a map, this is the multiplication on the
		semidirect product, $G \ltimes H$:
		\[ \cdot \maps (G \ltimes H) \times (G \ltimes H) \to G \ltimes H. \]
	\item For any triple of morphisms, a 2-isomorphism, the
		associator:
		\[ a(g_1,g_2,g_3) \maps g_1 g_2 g_3 \to g_1 g_2 g_3, \]
		given by the 3-cocycle $a \maps G^3 \to H$, where by a slight
		abuse of definitions we think of this 2-isomorphism as living
		in $H$ rather than $G \times H$, because the source (and
		target) are understood to be $g_1 g_2 g_3$.
	\item The left and right unitors are trivial.
\end{itemize}
A \define{slim Lie 2-group} is one of this form. When $H = \R$, we write simply
$\String_a(G)$ for the above Lie 2-group. It remains to check that this is, in
fact, a Lie 2-group:

\begin{prop} \label{prop:Lie2group}
	$\String_a(G,H)$ is a Lie 2-group: a smooth bicategory with one
	object in which all 1-morphisms and 2-morphisms are weakly invertible.
\end{prop}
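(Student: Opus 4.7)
The plan is to verify that the very explicit bicategorical data given above satisfies each clause of Definition \ref{def:smoothbicat}, then confirm weak invertibility of 1-morphisms and strict invertibility of 2-morphisms. Most checks reduce to elementary facts about the group $G$ acting on the abelian Lie group $H$; the only nontrivial step is matching the pentagon identity with the Lie group 3-cocycle condition on $a$.

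First I would check that $\underline{\Mor} \String_a(G,H)$ is a smooth category. Its manifold of objects is $G$ and its manifold of morphisms is $G \times H$, with source and target both equal to projection onto $G$, identity-assigning map $g \mapsto (g,0)$, and vertical composition $(g,h) \circ (g,h') = (g, h+h')$. Associativity and unit laws for $\circ$ follow immediately from the abelian group structure of $H$, and all maps are smooth because $G$ and $H$ are Lie groups. Next I would verify that horizontal composition $\cdot$ is a smooth functor. On 1-morphisms it is $(g_1,g_2) \mapsto g_1 g_2$, which is smooth because $G$ is a Lie group; on 2-morphisms it is the semidirect product multiplication $(g_1,h_1) \cdot (g_2,h_2) = (g_1 g_2, h_1 + g_1 h_2)$, which is smooth because the action of $G$ on $H$ is smooth. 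Functoriality amounts to the interchange law, which reduces to a short computation using the fact that $G$ acts on $H$ by group homomorphisms: $g(h+h') = gh + gh'$.

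Then I would verify that the associator, as a map $G^3 \to H$ given by $(g_1,g_2,g_3) \mapsto a(g_1,g_2,g_3)$, is a smooth natural transformation. Smoothness is part of the hypothesis on $a$, and naturality is automatic because all source and target 1-morphisms under comparison are equal to $g_1 g_2 g_3$ (the bicategory is skeletal in the 1-morphism direction). The left and right unitors being trivial and $a$ being normalized ($a = 0$ whenever any argument is $1$) makes the triangle identity collapse to the trivial equation $0 = 0 + 0$ in $H$. The heart of the proof is the pentagon identity: expanding each of the five 2-morphisms in the pentagon using the definitions of horizontal and vertical composition gives the equation
\[ g_1 \, a(g_2,g_3,g_4) - a(g_1 g_2, g_3, g_4) + a(g_1, g_2 g_3, g_4) - a(g_1, g_2, g_3 g_4) + a(g_1, g_2, g_3) = 0 \]
in $H$, which is exactly the condition $da = 0$ for the Lie group coboundary $d$ defined in Chapter \ref{ch:Lie-n-groups}. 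This is the main step, and it is only nontrivial in the bookkeeping of which 2-morphism contributes $g_1 a(g_2,g_3,g_4)$ (because it is horizontally composed with the 1-morphism $g_1$ on the left, invoking the twisted addition in $G \ltimes H$) rather than $a(g_2,g_3,g_4)$ itself.

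Finally I would observe that every 1-morphism $g$ has a weak inverse $g^{-1}$ in $G$, with the required 2-isomorphisms $g g^{-1} \To 1$ and $g^{-1} g \To 1$ given by the identity 2-morphism $(1,0)$, and that every 2-morphism $(g,h)$ is strictly invertible with inverse $(g,-h)$ under vertical composition. The only potential obstacle is a careful verification of the pentagon computation, but since $a$ is assumed to be a normalized 3-cocycle by hypothesis, this reduces to unwinding conventions; no further analytic input is required.
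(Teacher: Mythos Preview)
Your proof is correct and follows the same approach as the paper's own proof: the paper likewise reduces the triangle identity to the normalization of $a$ and the pentagon identity to the 3-cocycle condition $da = 0$, then notes smoothness and invertibility. You are simply more thorough in spelling out the smooth-category structure on $\underline{\Mor}\,\String_a(G,H)$, the functoriality of horizontal composition (the interchange law), and the naturality of the associator, all of which the paper leaves implicit.
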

In brief, we prove this by showing that the 3-cocycle condition implies the one
nontrivial axiom for this bicategory: the pentagon identity.
\begin{proof}
	For $\String_a(G,H)$, the left and right unitors are the identity, and
	thus the triangle identity just says $a(g_1,1,g_2) = 1$.  Or, written
	additively, $a(g_1,1,g_2) = 0$ Since $a$ is normalized, this is
	automatic.

	To check that $\String_a(G,H)$ is really a bicategory, it
	therefore remains to check the pentagon identity. This says that the
	following automorphisms of $g_1 g_2 g_3 g_4$ are equal:
	\[ a(g_1, g_2, g_3 g_4) \circ a(g_1 g_2, g_3, g_4) = (1_{g_1} \cdot a(g_2, g_3, g_4)) \circ a(g_1, g_2 g_3, g_4) \circ (a(g_1,g_2,g_3) \cdot 1_{g_4}) \]
	Or, using the definition of vertical composition:
	\[ a(g_1, g_2, g_3 g_4) + a(g_1 g_2, g_3, g_4) = (1_{g_1} \cdot a(g_2, g_3, g_4)) + a(g_1, g_2 g_3, g_4) + (a(g_1,g_2,g_3) \cdot 1_{g_4}) \]
	Finally, use the definition of the dot operation for 2-morphisms, as
	the semidirect product:
	\[ a(g_1, g_2, g_3 g_4) + a(g_1 g_2, g_3, g_4) = g_1 a(g_2, g_3, g_4)) + a(g_1, g_2 g_3, g_4) + a(g_1,g_2,g_3). \] 
	This is the 3-cocycle condition---it holds because $a$ is a 3-cocycle.

	So, $\String_a(G,H)$ is a bicategory. It is smooth because everything
	in sight is smooth: $G$, $H$, the source, target, identity-assigning,
	and composition maps, and the associator $a \maps G^3 \to H$. And it is
	a Lie 2-group: the morphisms in $G$ and 2-morphisms in $H$ are all
	strictly invertible, and thus of course they are weakly invertible.
\end{proof}

In fact, we can say something a bit stronger about $\String_a(G,H)$, if we let
$a$ be any normalized $H$-valued 3-cochain, rather requiring it to be a
cocycle. In this case, $\String_a(G,H)$ is a Lie 2-group if and only if $a$ is
a 3-cocycle, because $a$ satisfies the pentagon identity if and only if it is a
cocycle.

\section{Lie 3-groups}

We now sketch the construction of slim Lie 3-groups from a normalized 4-cocycle
$\pi$. In a sense, this is a straightforward generalization of what we have
done above, but the details must be checked against a specific definition of
3-category. We choose to use tricategories, originally defined by Gordon, Power
and Street \cite{GPS}, but extensively studied by Gurski. We use the definition
from his thesis \cite{Gurski}.

We saw in the last section that a smooth bicategory $B$ is consists of a
smooth manifold of objects, $B_0$, a smooth manifold of morphisms, $B_1$, and a
smooth manifold of 2-morphisms, $B_2$, such that:
\begin{itemize}
	\item $B_1$ and $B_2$ fit together to form a smooth category;
	\item horizontal composition is a smooth functor;
	\item satisfying associativity and left and right unit laws up to natural transformations, the associator and left and right unitors;
	\item satisfying the pentagon and triangle identities.
\end{itemize}

Here, we will define a `smooth tricategory' $T$ to consist of a smooth manifold
of objects, $T_0$, a smooth manifold of morphisms, $T_1$, a smooth manifold of
2-morphisms, $T_2$, and a smooth manifold of 3-morphisms, $T_3$, such that:
\begin{itemize}
	\item $T_1$, $T_2$ and $T_3$ fit together to form a smooth bicategory;
	\item horizontal composition is a smooth `pseudofunctor';
	\item satisfying associativity and left and right unit laws up to smooth `pseudonatural transformations', the associator and left and right unitors;
	\item satisfying the pentagon and triangle identities up to smooth `modifications';
	\item satisfying some identities of their own.
\end{itemize}
Each of the above quoted terms---pseudofunctor, pseudonatural transformations,
modification---would usually need to be defined completely in order to understand
tricategories. But we really only need modifications, because our functors and
natural transformations will not be `pseudo'. Nonetheless, so it is clear what
we leave out, let us discuss each of these terms briefly.

\begin{itemize}
	\item `Pseudofunctor' is to `bicategory' as `functor' is to `category':
		it is a map $F \maps B \to B'$ between bicategories $B$ and
		$B'$, preserving all structure in sight \emph{except}
		horizontal composition and identities, which are only preserved
		up to specified 2-isomorphisms:
		\[ F(f \cdot g) \To F(f) \cdot F(g), \quad F(1_x) \To 1_{F(x)} . \] 
		For the tricategories we construct, all pseudofunctors will be
		strict: the above 2-isomorphisms are identities.

	\item `Pseudonatural transformation' is to `pseudofunctor' as `natural
		transformation' is to 'functor': given two pseudofunctors 
		\[
		\xy 
		(-10,0)*+{B}="1";
		(10,0)*+{B'}="2";
		{\ar@/^1.65pc/^F "1";"2"};
		{\ar@/_1.65pc/_G "1";"2"};
		(0,5)*+{}="A";
		(0,-5)*+{}="B";
		\endxy
		\]
		a pseudonatural transformation is a map:
		\[
		\xy 
		(-10,0)*+{B}="1";
		(10,0)*+{B'}="2";
		{\ar@/^1.65pc/^{F} "1";"2"};
		{\ar@/_1.65pc/_{G} "1";"2"};
		(0,5)*+{}="A";
		{\ar@{=>}^{\scriptstyle \theta} (0,3)*{};(0,-3)*{}} ;
		(0,-5)*+{}="B";
		\endxy .
		\]
		Like a natural transformation, this consists of a morphism for
		each object $x$ in $B$:
		\[ \theta(x) \maps F(x) \to G(x). \]
		Unlike a natural transformation, it is only natural up to a
		specified 2-isomorphism. That is, the naturality square:
		\[ \xymatrix{ F(x) \ar[r]^{F(f)} \ar[d]_{\theta(x)} & F(y) \ar[d]^{\theta(y)} \\
			G(x) \ar[r]^{G(f)} & G(y) \\
		}
		\]
		does \emph{not} commute. It is replaced with a 2-isomorphism:
		\[ \xymatrix{ F(x) \ar[r]^{F(f)} \ar[d]_{\theta(x)} & F(y) \ar[d]^{\theta(y)} \\
		\ar@{=>}[ur]_{\theta(f)} G(x) \ar[r]_{G(f)} & G(y) \\
			}
		\]
		that satisfies some equations of its own. For the tricategories
		we construct, all pseudonatural transformations will be strict:
		the 2-isomorphism above is the identity.
	\item Finally, a `modification' is something new: it is a map between
		pseudonatural transformations. Given two pseudonatural
		transformations:
		\[
		\xy 
		(-12,0)*+{B}="1";
		(12,0)*+{B'}="2";
		{\ar@/^1.65pc/^F "1";"2"};
		{\ar@/_1.65pc/_G "1";"2"};
		(0,5)*+{}="A";
		(0,-5)*+{}="B";
		{\ar@{=>}@/_.75pc/ "A"+(-1.33,0) ; "B"+(-.66,-.55)};
		(-6.5,0)*{\scriptstyle \theta};
		{\ar@{=}@/_.75pc/ "A"+(-1.33,0) ; "B"+(-1.33,0)};
		{\ar@{=>}@/^.75pc/ "A"+(1.33,0) ; "B"+(.66,-.55)};
		(6.5,0)*{\scriptstyle \eta};
		{\ar@{=}@/^.75pc/ "A"+(1.33,0) ; "B"+(1.33,0)};
		\endxy
		\]
		a modification $\Gamma$ is a map:
		\[
		\xy 
		(-12,0)*+{B}="1";
		(12,0)*+{B'}="2";
		{\ar@/^1.65pc/^F "1";"2"};
		{\ar@/_1.65pc/_G "1";"2"};
		(0,5)*+{}="A";
		(0,-5)*+{}="B";
		{\ar@{=>}@/_.75pc/ "A"+(-1.33,0) ; "B"+(-.66,-.55)};
		(-6.5,0)*{\scriptstyle \theta};
		{\ar@{=}@/_.75pc/ "A"+(-1.33,0) ; "B"+(-1.33,0)};
		{\ar@{=>}@/^.75pc/ "A"+(1.33,0) ; "B"+(.66,-.55)};
		(6.5,0)*{\scriptstyle \eta};
		{\ar@{=}@/^.75pc/ "A"+(1.33,0) ; "B"+(1.33,0)};
		{\ar@3{->} (-2,0)*{}; (2,0)*{}};
		(0,2.5)*{\scriptstyle \Gamma};
		\endxy .
		\]
		Just as a pseudonatural transformation consists of a morphism for
		each object $x$ in $B$, a modification consists of a 2-morphism
		for each object $x$ in $B$:
		\[
		\xy 
		(-12,0)*+{F(x)}="1";
		(12,0)*+{G(x)}="2";
		{\ar@/^1.65pc/^{\theta(x)} "1";"2"};
		{\ar@/_1.65pc/_{\eta(x)} "1";"2"};
		(0,5)*+{}="A";
		{\ar@{=>}^{\scriptstyle \Gamma(x)} (0,3)*{};(0,-3)*{}} ;
		(0,-5)*+{}="B";
		\endxy .
		\]
		These 2-morphisms satisfy an equation that will hold trivially
		in the tricategories we consider, so we omit it. See Leinster
		\cite{Leinster:bicat} for more details.
\end{itemize}

With these preliminaries in mind, we can now sketch the definition of a smooth
tricategory. 
\begin{defn} \label{def:smoothtricat}
	A \define{smooth tricategory} $T$ consists of: 
\begin{itemize}
	\item a \define{manifold of objects}, $T_0$;
	\item a \define{manifold of morphisms}, $T_1$;
	\item a \define{manifold of 2-morphisms}, $T_2$;
	\item a \define{manifold of 3-morphisms}, $T_3$;
\end{itemize}
equipped with:
\begin{itemize}
	\item a smooth bicategory structure on $\underline{\Mor} \, T$, with
		\begin{itemize}
			\item $T_1$ as the smooth manifold of objects;
			\item $T_2$ as the smooth manifold of morphisms;
			\item $T_3$ as the smooth manifold of 2-morphisms;
		\end{itemize}
		We call the vertical composition in $\underline{\Mor} \, T$
		\define{composition at a 2-cell}, and the horizontal
		composition in $\underline{\Mor} \, T$ \define{composition at a
		1-cell}.
	\item smooth \define{source} and \define{target maps}:
		\[ s, t \maps T_1 \to T_0. \]
	\item a smooth \define{identity-assigning map}:
		\[ i \maps T_0 \to T_1. \]
	\item a smooth \define{composition} pseudofunctor, called
		\define{composition at a 0-cell}, which is strict in the
		tricategories we consider:
		\[ \cdot \maps \underline{\Mor} \, T \times_{T_0} \underline{\Mor} \, T \to \underline{\Mor} \, T. \]
		That is, three smooth maps:
		\[ \cdot \maps T_1 \times_{T_0} T_1 \to T_1 \]
		\[ \cdot \maps T_2 \times_{T_0} T_2 \to T_2 \]
		\[ \cdot \maps T_3 \times_{T_0} T_3 \to T_3 \]
		satisfying the axioms of a strict functor.
	\item a smooth pseudonatural transformation, the \define{associator},
		which is trivial in the tricategories we consider:
		\[ a(f,g,h) \maps (f \cdot g) \cdot h \To f \cdot (g \cdot h). \]
	\item smooth pseudonatural transformations, the \define{left} and
		\define{right unitors}, all trivial in the tricategories we
		consider:
		\[ l(f) \maps 1 \cdot f \To f, \quad r(f) \maps f \cdot 1 \To f. \]
	\item a smooth modification called the \define{pentagonator}:
		\[
		\xy
		 (0,20)*+{(f g) (h k)}="1";
		 (40,0)*+{f (g (h k))}="2";
		 (25,-20)*{ \quad f ((g h) k)}="3";
		 (-25,-20)*+{(f (g h)) k}="4";
		 (-40,0)*+{((f g) h) k}="5";
		 {\ar@{=>}^{a(f,g,h k)}     "1";"2"}
		 {\ar@{=>}_{1_f \cdot a_(g,h,k)}  "3";"2"}
		 {\ar@{=>}^{a(f,g h,k)}    "4";"3"}
		 {\ar@{=>}_{a(f,g,h) \cdot 1_k}  "5";"4"}
		 {\ar@{=>}^{a(f g,h,k)}    "5";"1"}
		 {\ar@3{->}^{\pi(f,g,h,k)} (-2,5)*{}; (-2,-5)*{} }
		\endxy
		\]
	\item smooth modifications called the \define{middle, left} and
		\define{right triangulators}, all trivial in the tricategories
		we consider:
\begin{center}
  \begin{tikzpicture}
    \filldraw[white,fill=yellow,fill opacity=0.1](0,0)--(3,0)--(3,3)--(0,3)--cycle;
    \node (AB1) at (0,0) {$fg$};
    \node (AB2) at (3,0) {$fg$}
      edge [<-, double] node [l, below] {$1_f \cdot 1_g$} (AB1);
      \node (AIB2) at (3,3) {$f(1g)$}
      edge [->, double] node [l, right] {$1_f \cdot l$} (AB2);
      \node (AIB1) at (0,3) {$(f1)g$}
      edge [->, double] node [l, above] {$a$} (AIB2)
      edge [<-, double] node [l, left] {$r^* \cdot 1_g$} (AB1);
    \node at (1,1.5) {\tikz\node [rotate=-90] {$\Rrightarrow$};};
    \node at (1.5,1.5) {$\mu$};
  \end{tikzpicture}
\end{center}
\begin{center}
  \begin{tikzpicture}
    \filldraw[white,fill=yellow,fill opacity=0.1](0,2)--(3,1)--(0,0)--cycle;
    \node (IAB1) at (0,2) {$(1f)g$};
    \node (IAB2) at (3,1) {$1(fg)$}
      edge [<-, double] node [l, above right] {$a$} (IAB1);
    \node (AB) at (0,0) {$fg$}
      edge [<-, double] node [l, below right] {$l$} (IAB2)
      edge [<-, double] node [l, left] {$l \cdot 1_g $} (IAB1);
    \node at (1,1) {$\Rrightarrow \lambda$};
  \end{tikzpicture}
\end{center}
\begin{center}
  \begin{tikzpicture}
    \filldraw[white,fill=yellow,fill opacity=0.1](0,2)--(3,1)--(0,0)--cycle;
    \node (ABI1) at (0,2) {$f(g1)$};
    \node (ABI2) at (3,1) {$(fg)1$}
      edge [<-, double] node [l, above right] {$a^*$} (ABI1);
    \node (AB) at (0,0) {$fg$}
      edge [<-, double] node [l, below right] {$r$} (ABI2)
      edge [<-, double] node [l, left] {$1_f \cdot r$} (ABI1);
    \node at (1,1) {$\Rrightarrow \rho$};
  \end{tikzpicture}
\end{center}
\end{itemize}

These smooth modifications all satisfy their own axioms. When $\lambda$, $\rho$
and $\mu$ are trivial, their axioms boil down to the statement that $\pi$ is
trivial whenever one its arguments is trivial. We therefore omit them. The one
axiom we need to consider is the \define{pentagonator identity}:

\newpage
\thispagestyle{empty}

\begin{figure}[H]
    \begin{center}
      \begin{tikzpicture}[line join=round]
        \filldraw[white,fill=red,fill opacity=0.1](-4.306,-3.532)--(-2.391,-.901)--(-2.391,3.949)--(-5.127,.19)--(-5.127,-2.581)--cycle;
        \filldraw[white,fill=red,fill opacity=0.1](-4.306,-3.532)--(-2.391,-.901)--(2.872,-1.858)--(4.306,-3.396)--(3.212,-4.9)--cycle;
        \filldraw[white,fill=red,fill opacity=0.1](2.872,-1.858)--(2.872,5.07)--(-.135,5.617)--(-2.391,3.949)--(-2.391,-.901)--cycle;
        \filldraw[white,fill=green,fill opacity=0.1](4.306,-3.396)--(4.306,3.532)--(2.872,5.07)--(2.872,-1.858)--cycle;
        \begin{scope}[font=\fontsize{8}{8}\selectfont]
          \node (A) at (-2.391,3.949) {$(f(g(hk)))p$};
          \node (B) at (-5.127,.19) {$(f((gh)k))p$}
	  edge [->, double] node [l, above left] {$(1_{f}  a)1_{p}$} (A);
          \node (C) at (-5.127,-2.581) {$((f(gh))k)p$}
	  edge [->, double] node [l, left] {$a  1_{p}$} (B);
          \node (D) at (-4.306,-3.532) {$(((fg)h)k)p$}
	  edge [->, double] node [l, left] {$(a  1_{k})  1_{p}$} (C);
          \node (E) at (3.212,-4.9) {$((fg)h)(kp)$}
            edge [<-, double] node [l, below] {$a$} (D);
          \node (F) at (4.306,-3.396) {$(fg)(h(kp))$}
            edge [<-, double] node [l, below right] {$a$} (E);
          \node (G) at (4.306,3.532) {$f(g(h(kp)))$}
            edge [<-, double] node [l, right] {$a$} (F);
          \node (H) at (2.872,5.07) {$f(g((hk)p))$}
            edge [->, double] node [l, above right] {$1_{f}(1_{g}a)$} (G);
          \node (I) at (-.135,5.617) {$f((g(hk))p)$}
            edge [->, double] node [l, above] {$1_{f}a$} (H)
            edge [<-, double] node [l, above left] {$a$} (A);
          \node (M) at (-2.391,-.901) {$((fg)(hk))p$}
            edge [<-, double] node [l, right] {$a1_{p}$} (D)
            edge [->, double] node [l, right] {$a1_{p}$} (A);
          \node (N) at (2.872,-1.858) {$(fg((hk)p))$}
            edge [<-, double] node [l, above] {$a$} (M)
            edge [->, double] node [l, left] {$a$} (H)
            edge [->, double] node [l, left] {$(1_{f}1_{g})a$} (F);
	    \node at (-4,-.5) {$\Rrightarrow \pi \cdot 1_{_{1_{p}}}$};
          \node at (0,-3) {\tikz\node [rotate=-90] {$\Rrightarrow$};};
          \node at (0.5,-3) {$\pi$};
          \node at (0,2) {\tikz\node [rotate=-45] {$\Rrightarrow$};};
          \node at (0.5,2) {$\pi$};
          \node at (3.5,1) {$\cong$};
        \end{scope}
      \end{tikzpicture}
      \[ = \]
      \begin{tikzpicture}[line join=round]
        \filldraw[white,fill=green,fill opacity=0.1](-2.872,1.858)--(-.135,5.617)--(-2.391,3.949)--(-5.127,.19)--cycle;
        \filldraw[white,fill=red,fill opacity=0.1](3.212,-4.9)--(4.306,-3.396)--(4.306,3.532)--(2.391,.901)--(2.391,-3.949)--cycle;
        \filldraw[white,fill=green,fill opacity=0.1](-4.306,-3.532)--(3.212,-4.9)--(2.391,-3.949)--(-5.127,-2.581)--cycle;
        \filldraw[white,fill=red,fill opacity=0.1](-2.872,1.858)--(2.391,.901)--(4.306,3.532)--(2.872,5.07)--(-.135,5.617)--cycle;
        \filldraw[white,fill=red,fill opacity=0.1](-5.127,-2.581)--(-5.127,.19)--(-2.872,1.858)--(2.391,.901)--(2.391,-3.949)--cycle;
        \begin{scope}[font=\fontsize{8}{8}\selectfont]
          \node (A) at (-2.391,3.949) {$(f(g(h k)))p$};
          \node (B) at (-5.127,.19) {$(f((gh)k))p$}
            edge [->, double] node [l, above left] {$(1_{f}a)1_{p}$} (A);
          \node (C) at (-5.127,-2.581) {$((f(gh))k)p$}
            edge [->, double] node [l, left] {$a1_{p}$} (B);
          \node (D) at (-4.306,-3.532) {$(((fg)h)k)p$}
            edge [->, double] node [l, left] {$(a1_{k})1_{p}$} (C);
          \node (E) at (3.212,-4.9) {$((fg)h)(kp)$}
            edge [<-, double] node [l, below] {$a$} (D);
          \node (F) at (4.306,-3.396) {$(fg)(h(kp))$}
            edge [<-, double] node [l, below right] {$a$} (E);
          \node (G) at (4.306,3.532) {$f(g(h(kp)))$}
            edge [<-, double] node [l, right] {$a$} (F);
          \node (H) at (2.872,5.07) {$f(g((hk)p))$}
            edge [->, double] node [l, above right] {$1_{f}(1_{g}a)$} (G);
          \node (I) at (-.135,5.617) {$f((g(hk))p)$}
            edge [->, double] node [l, above] {$1_{f}a$} (H)
            edge [<-, double] node [l, above left] {$a$} (A);
          \node (J) at (-2.872,1.858) {$f(((gh)k)p)$}
            edge [->, double] node [l, below right] {$1_{f}(a1_{p})$} (I)
            edge [<-, double] node [l, below right] {$a$} (B);
          \node (K) at (2.391,-3.949) {$(f(gh))(kp)$}
            edge [<-, double] node [l, left] {$a(1_{k}1_{p})$} (E)
            edge [<-, double] node [l, above] {$a$} (C);
          \node (L) at (2.391,.901) {$f((gh)(kp))$}
            edge [<-, double] node [l, left] {$a$} (K)
            edge [<-, double] node [l, above] {$1_{f}a$} (J)
            edge [->, double] node [l, above left] {$1_{f}a$} (G);
          \node at (-1,-1) {\tikz\node [rotate=-45] {$\Rrightarrow$};};
	  \node at (-.5,-1) {$\pi$};
          \node at (1,3) {\tikz\node [rotate=-45] {$\Rrightarrow$};};
	  \node at (1.7,3) {$1_{_{1_{f}}} \cdot \pi$};
          \node at (3,-1.5) {\tikz\node [rotate=-45] {$\Rrightarrow$};};
          \node at (3.5,-1.5) {$\pi$};
          \node at (-1,-3.7) {$\cong$};
          \node at (-2.5,3) {$\cong$};
        \end{scope}
      \end{tikzpicture}
    \end{center}
    \end{figure}

    \clearpage

\end{defn}

In the pentagonator identity, we have omitted $\cdot$ everywhere except the
faces to save space.  This identity comes from a 3-dimensional solid called the
\define{associahedron}. This is the polyhedron where:
\begin{itemize}
	\item vertices are parenthesized lists of five morphisms, e.g.\ $(((fg)h)k)p$;
	\item edges connect any two vertices related by an application of the associator, e.g.\ 
		\[ (((fg)h)k)p \Rightarrow ((fg)(hk))p . \]
\end{itemize}

In fact, the pentagonator identity gives us a picture of the associahedron.
Regarding the left-hand side of the equation as the back and the right-hand
side as the front, we assemble the following polyhedron:
\begin{center}
  \begin{tikzpicture}[line join=round]
    \begin{scope}[scale=.2]
      \filldraw[fill=red,fill opacity=0.7](-4.306,-3.532)--(-2.391,-.901)--(-2.391,3.949)--(-5.127,.19)--(-5.127,-2.581)--cycle;
      \filldraw[fill=red,fill opacity=0.7](-4.306,-3.532)--(-2.391,-.901)--(2.872,-1.858)--(4.306,-3.396)--(3.212,-4.9)--cycle;
      \filldraw[fill=red,fill opacity=0.7](2.872,-1.858)--(2.872,5.07)--(-.135,5.617)--(-2.391,3.949)--(-2.391,-.901)--cycle;
      \filldraw[fill=green,fill opacity=0.7](4.306,-3.396)--(4.306,3.532)--(2.872,5.07)--(2.872,-1.858)--cycle;
      \filldraw[fill=green,fill opacity=0.7](-2.872,1.858)--(-.135,5.617)--(-2.391,3.949)--(-5.127,.19)--cycle;
      \filldraw[fill=red,fill opacity=0.7](3.212,-4.9)--(4.306,-3.396)--(4.306,3.532)--(2.391,.901)--(2.391,-3.949)--cycle;
      \filldraw[fill=green,fill opacity=0.7](-4.306,-3.532)--(3.212,-4.9)--(2.391,-3.949)--(-5.127,-2.581)--cycle;
      \filldraw[fill=red,fill opacity=0.7](-2.872,1.858)--(2.391,.901)--(4.306,3.532)--(2.872,5.07)--(-.135,5.617)--cycle;
      \filldraw[fill=red,fill opacity=0.7](-5.127,-2.581)--(-5.127,.19)--(-2.872,1.858)--(2.391,.901)--(2.391,-3.949)--cycle;
    \end{scope}
  \end{tikzpicture}
\end{center}
Identifying the vertices, edges and faces of this polyhedron with the
corresponding morphisms, 2-morphisms and 3-morphisms from the pentagonator
identity, we see the identity just says that the associahedron commutes. 

A \define{Lie 3-group} is a smooth tricategory with one object where all
morphisms, 2-morphisms and 3-morphisms are weakly invertible. Though it looks
quite complex, the pentagonator identity is secretly a cocycle condition. In
fact, given a normalized $H$-valued 4-cocycle $\pi$ on a Lie group $G$, we can
construct a Lie 3-group $\Brane_\pi(G,H)$ with:
\begin{itemize}
	\item One object, $\bullet$, regarded as a manifold in the trivial way.
	\item For each element $g \in G$, an automorphism of the one object:
		\[ \bullet \stackrel{g}{\longrightarrow} \bullet \]
		Composition at a 0-cell given by multiplication in the group:
		\[ \cdot \maps G \times G \to G. \]
		The source and target maps are trivial, and identity-assigning
		map takes the one object to $1 \in G$.
	\item Only the identity 2-morphism on any 1-morphism, and no
		2-morphisms between distinct 1-morphisms:
		\[
		\xy
		(-8,0)*+{\bullet}="4";
		(8,0)*+{\bullet}="6";
		{\ar@/^1.65pc/^g "4";"6"};
		{\ar@/_1.65pc/_g "4";"6"};
		{\ar@{=>}^{1_g} (0,3)*{};(0,-3)*{}} ;
		\endxy
		, \quad g \in G.
		\]
		So the space of 2-morphisms is also $G$. The source, target and
		identity-assigning maps are all the identity on $G$.
		Composition at a 1-cell is trivial, while composition at a
		0-cell is again multiplication in $G$.
	\item For each $h \in H$, a 3-automorphism of the 2-morphism $1_g$,
		and no 3-morphisms between distinct 2-morphisms:
		\[
		\xy 
		(-10,0)*+{\bullet}="1";
		(10,0)*+{\bullet}="2";
		{\ar@/^1.65pc/^g "1";"2"};
		{\ar@/_1.65pc/_g "1";"2"};
		(0,5)*+{}="A";
		(0,-5)*+{}="B";
		{\ar@{=>}@/_.75pc/ "A"+(-1.33,0) ; "B"+(-.66,-.55)};
		{\ar@{=}@/_.75pc/ "A"+(-1.33,0) ; "B"+(-1.33,0)};
		{\ar@{=>}@/^.75pc/ "A"+(1.33,0) ; "B"+(.66,-.55)};
		{\ar@{=}@/^.75pc/ "A"+(1.33,0) ; "B"+(1.33,0)};
		{\ar@3{->} (-2,0)*{}; (2,0)*{}};
		(0,2.5)*{\scriptstyle h};
		(-7,0)*{\scriptstyle 1_g};
		(7,0)*{\scriptstyle 1_g};
		\endxy
		, \quad h \in H.
		\]
		Thus the space of 3-morphisms is $G \times H$. The source and
		target maps are projection onto $G$, and the identity assigning
		map takes $1_g$ to $0 \in H$, for all $g \in G$.
	\item Three kinds of composition of 3-morphisms: given a pair of
		3-morphisms on the same 2-morphism, we can compose them at at a
		2-cell, which we take to be addition in $H$:
		\[
		\xy 0;/r.22pc/:
		(0,15)*{};
		(0,-15)*{};
		(0,8)*{}="A";
		(0,-8)*{}="B";
		{\ar@{=>} "A" ; "B"};
		{\ar@{=>}@/_1pc/ "A"+(-4,1) ; "B"+(-3,0)};
		{\ar@{=}@/_1pc/ "A"+(-4,1) ; "B"+(-4,1)};
		{\ar@{=>}@/^1pc/ "A"+(4,1) ; "B"+(3,0)};
		{\ar@{=}@/^1pc/ "A"+(4,1) ; "B"+(4,1)};
		{\ar@3{->} (-6,0)*{} ; (-2,0)*+{}};
		(-4,3)*{\scriptstyle h};
		{\ar@3{->} (2,0)*{} ; (6,0)*+{}};
		(4,3)*{\scriptstyle h'};
		(-15,0)*+{\bullet}="1";
		(15,0)*+{\bullet}="2";
		{\ar@/^2.75pc/^g "1";"2"};
		{\ar@/_2.75pc/_g "1";"2"};
		\endxy 
		\quad = \quad
		\xy 0;/r.22pc/:
		(0,15)*{};
		(0,-15)*{};
		(0,8)*{}="A";
		(0,-8)*{}="B";
		{\ar@{=>}@/_1pc/ "A"+(-4,1) ; "B"+(-3,0)};
		{\ar@{=}@/_1pc/ "A"+(-4,1) ; "B"+(-4,1)};
		{\ar@{=>}@/^1pc/ "A"+(4,1) ; "B"+(3,0)};
		{\ar@{=}@/^1pc/ "A"+(4,1) ; "B"+(4,1)};
		{\ar@3{->} (-6,0)*{} ; (6,0)*+{}};
		(0,3)*{\scriptstyle h + h'};
		(-15,0)*+{\bullet}="1";
		(15,0)*+{\bullet}="2";
		{\ar@/^2.75pc/^g "1";"2"};
		{\ar@/_2.75pc/_g "1";"2"};
		\endxy .
		\]
		We can also compose two 3-morphisms at a 1-cell, which we again
		take to be composition in $H$:
		\[	
		\xy 0;/r.22pc/:
		(0,15)*{};
		(0,-15)*{};
		(0,9)*{}="A";
		(0,1)*{}="B";
		{\ar@{=>}@/_.5pc/ "A"+(-2,1) ; "B"+(-1,0)};
		{\ar@{=}@/_.5pc/ "A"+(-2,1) ; "B"+(-2,1)};
		{\ar@{=>}@/^.5pc/ "A"+(2,1) ; "B"+(1,0)};
		{\ar@{=}@/^.5pc/ "A"+(2,1) ; "B"+(2,1)};
		{\ar@3{->} (-2,6)*{} ; (2,6)*+{}};
		(0,9)*{\scriptstyle h};
		(0,-1)*{}="A";
		(0,-9)*{}="B";
		{\ar@{=>}@/_.5pc/ "A"+(-2,-1) ; "B"+(-1,-1.5)};
		{\ar@{=}@/_.5pc/ "A"+(-2,0) ; "B"+(-2,-.7)};
		{\ar@{=>}@/^.5pc/ "A"+(2,-1) ; "B"+(1,-1.5)};
		{\ar@{=}@/^.5pc/ "A"+(2,0) ; "B"+(2,-.7)};
		{\ar@3{->} (-2,-5)*{} ; (2,-5)*+{}};
		(0,-2)*{\scriptstyle h'};
		(-15,0)*+{\bullet}="1";
		(15,0)*+{\bullet}="2";
		{\ar@/^2.75pc/^g "1";"2"};
		{\ar@/_2.75pc/_g "1";"2"};
		{\ar "1";"2"};
		(8,2)*{\scriptstyle g};
		\endxy 
		\quad = \quad
		\xy 0;/r.22pc/:
		(0,15)*{};
		(0,-15)*{};
		(0,8)*{}="A";
		(0,-8)*{}="B";
		{\ar@{=>}@/_1pc/ "A"+(-4,1) ; "B"+(-3,0)};
		{\ar@{=}@/_1pc/ "A"+(-4,1) ; "B"+(-4,1)};
		{\ar@{=>}@/^1pc/ "A"+(4,1) ; "B"+(3,0)};
		{\ar@{=}@/^1pc/ "A"+(4,1) ; "B"+(4,1)};
		{\ar@3{->} (-6,0)*{} ; (6,0)*+{}};
		(0,3)*{\scriptstyle h + h'};
		(-15,0)*+{\bullet}="1";
		(15,0)*+{\bullet}="2";
		{\ar@/^2.75pc/^g "1";"2"};
		{\ar@/_2.75pc/_g "1";"2"};
		\endxy .
		\]
		In terms of maps, both of these compositions are just:
		\[ 1 \times + \maps G \times H \times H \to G \times H. \]
		And finally, we can glue two 3-cells at the 0-cell, the object.
		We call this \define{composition at a 0-cell}, and define it to
		be addition \emph{twisted by the action of $G$}:
		\[
		\xy 0;/r.22pc/:
		(0,15)*{};
		(0,-15)*{};
		(-20,0)*+{\bullet}="1";
		(0,0)*+{\bullet}="2";
		{\ar@/^2pc/^g "1";"2"};
		{\ar@/_2pc/_g "1";"2"};
		(20,0)*+{\bullet}="3";
		{\ar@/^2pc/^{g'} "2";"3"};
		{\ar@/_2pc/_{g'} "2";"3"};
		(-10,6)*+{}="A";
		(-10,-6)*+{}="B";
		{\ar@{=>}@/_.7pc/ "A"+(-2,0) ; "B"+(-1,-.8)};
		{\ar@{=}@/_.7pc/ "A"+(-2,0) ; "B"+(-2,0)};
		{\ar@{=>}@/^.7pc/ "A"+(2,0) ; "B"+(1,-.8)};
		{\ar@{=}@/^.7pc/ "A"+(2,0) ; "B"+(2,0)};
		(10,6)*+{}="A";
		(10,-6)*+{}="B";
		{\ar@{=>}@/_.7pc/ "A"+(-2,0) ; "B"+(-1,-.8)};
		{\ar@{=}@/_.7pc/ "A"+(-2,0) ; "B"+(-2,0)};
		{\ar@{=>}@/^.7pc/ "A"+(2,0) ; "B"+(1,-.8)};
		{\ar@{=}@/^.7pc/ "A"+(2,0) ; "B"+(2,0)};
		{\ar@3{->} (-12,0)*{}; (-8,0)*{}};
		(-10,3)*{\scriptstyle h};
		{\ar@3{->} (8,0)*{}; (12,0)*{}};
		(10,3)*{\scriptstyle h'};
		\endxy 
		\quad = \quad 
		\xy 0;/r.22pc/:
		(0,15)*{};
		(0,-15)*{};
		(0,8)*{}="A";
		(0,-8)*{}="B";
		{\ar@{=>}@/_1pc/ "A"+(-4,1) ; "B"+(-3,0)};
		{\ar@{=}@/_1pc/ "A"+(-4,1) ; "B"+(-4,1)};
		{\ar@{=>}@/^1pc/ "A"+(4,1) ; "B"+(3,0)};
		{\ar@{=}@/^1pc/ "A"+(4,1) ; "B"+(4,1)};
		{\ar@3{->} (-6,0)*{} ; (6,0)*+{}};
		(0,3)*{\scriptstyle h + gh'};
		(-15,0)*+{\bullet}="1";
		(15,0)*+{\bullet}="2";
		{\ar@/^2.75pc/^{gg'} "1";"2"};
		{\ar@/_2.75pc/_{gg'} "1";"2"};
		\endxy .
		\]
		In terms of a map, $\cdot$ is just given by multiplication on
		the semidirect product:
		\[ \cdot \maps (G \ltimes H) \times (G \ltimes H) \to G \ltimes H. \]

	\item The associator, left and right unitors are automatically
		trivial, because all 2-morphisms are trivial.

	\item For each quadruple of 1-morphisms, a specified 3-isomorphism,
		the \define{2-associator} or \define{pentagonator}:
		\[ \pi(g_1, g_2, g_3, g_4) \maps 1_{g_1 g_2 g_3 g_4} \to 1_{g_1 g_2 g_3 g_4}. \]
		given by the 4-cocycle $\pi \maps G^4 \to H$, which we think of
		as element of $H$ because the source (and target) are
		understood to be $1_{g_1 g_2 g_3 g_4}$.

	\item The three other specified 3-isomorphisms are trivial.
\end{itemize}
A \define{slim Lie 3-group} is one of this form. As before, it remains to check
that it is, in fact, a Lie 3-group. We claim:
\begin{prop} \label{prop:Lie3group}
	$\Brane_\pi(G,H)$ is a Lie 3-group: a smooth tricategory with one
	object and all morphisms weakly invertible.
\end{prop}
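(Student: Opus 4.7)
The plan is to mirror the structure of the proof of Proposition \ref{prop:Lie2group}, where we showed that the pentagon identity for $\String_a(G,H)$ is exactly the 3-cocycle condition on $a$. Here we want to show that the pentagonator identity for $\Brane_\pi(G,H)$ is exactly the 4-cocycle condition on $\pi$, with all other data and axioms holding either trivially or because they reduce to normalization of $\pi$.

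First I would verify that the data assembled in the construction really does fit the scaffolding of Definition \ref{def:smoothtricat}. The sub-bicategory $\underline{\Mor} T$ has $G$ as its manifold of objects, $G$ as its manifold of morphisms (only identity 2-morphisms), and $G\times H$ as its manifold of 2-morphisms, with composition at a 1-cell given by $1\times+$ and composition at a 2-cell given by addition in $H$; this is a smooth bicategory in which the associator and unitors are identities, so its pentagon and triangle identities are automatic. Composition at a 0-cell is the smooth map $(G\ltimes H)\times(G\ltimes H)\to G\ltimes H$, and being a (strict) pseudofunctor amounts to the fact that $G\ltimes H$ is a group and the action of $G$ on $H$ respects addition—both of which hold by hypothesis. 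The associator, left and right unitors are identities (and hence pseudonatural in the strict sense), and the triangulators $\mu,\lambda,\rho$ are trivial because they are identity 3-morphisms on identity 2-morphisms; their axioms then reduce to the claim that $\pi$ vanishes whenever one of its arguments is $1$, which is exactly normalization of $\pi$.

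Next I would check that the pentagonator $\pi(g_1,g_2,g_3,g_4)$ really is a smooth modification: since the relevant pseudonatural transformations (iterated associators) are all trivial, the coherence equation a modification must satisfy collapses to a tautology, and smoothness is immediate from smoothness of $\pi\colon G^4\to H$. The core of the proof is then the pentagonator identity. Writing out both sides of this identity for $\Brane_\pi(G,H)$, every edge labeled by an associator becomes an identity 2-morphism, so the entire three-dimensional diagram can be interpreted as an equation among 3-morphisms attached to the single 2-morphism $1_{g_1g_2g_3g_4g_5}$. Using the definitions of composition at 0-, 1-, and 2-cells for 3-morphisms—addition in $H$ in the latter two cases and addition twisted by the $G$-action in the first—the left-hand side becomes
\[
\pi(g_1g_2,g_3,g_4,g_5)+\pi(g_1,g_2,g_3g_4,g_5)+\pi(g_1,g_2,g_3,g_4g_5),
\]
while the right-hand side becomes
\[
g_1\pi(g_2,g_3,g_4,g_5)+\pi(g_1,g_2g_3,g_4,g_5)+\pi(g_1,g_2,g_3,g_4)\cdot 1.
\]
Setting these equal and rearranging is precisely the 4-cocycle condition $d\pi=0$, which holds by hypothesis.

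Finally I would dispatch weak invertibility: 1-morphisms are elements of the group $G$ and have strict two-sided inverses $g^{-1}$; 2-morphisms are identities and are trivially invertible; 3-morphisms are elements of the abelian group $H$ and are strictly invertible by negation. Strict invertibility implies weak invertibility, so every $k$-morphism is weakly invertible. The main obstacle is the pentagonator identity, and here everything hinges on carefully unwinding the twisted composition at a 0-cell to recognize the $g_1\pi(g_2,g_3,g_4,g_5)$ term; once that bookkeeping is done, the identity is manifestly the 4-cocycle condition.
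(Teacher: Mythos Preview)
Your approach is exactly the paper's: verify that the triangulator axioms reduce to normalization of $\pi$, then read off the pentagonator identity as the 4-cocycle condition, and finish by noting strict invertibility. However, you have swapped two terms between the two sides of the pentagonator identity. In the associahedron, the back face labeled $\pi\cdot 1_{1_{g_5}}$ contributes $\pi(g_1,g_2,g_3,g_4)$ to the left-hand side, while the front face in which $g_4g_5$ is grouped contributes $\pi(g_1,g_2,g_3,g_4g_5)$ to the right-hand side; you have these reversed. With your placement, the resulting equation is
\[
g_1\pi(g_2,g_3,g_4,g_5)-\pi(g_1g_2,g_3,g_4,g_5)+\pi(g_1,g_2g_3,g_4,g_5)-\pi(g_1,g_2,g_3g_4,g_5)-\pi(g_1,g_2,g_3,g_4g_5)+\pi(g_1,g_2,g_3,g_4)=0,
\]
which has the wrong signs on the last two terms and is \emph{not} $d\pi=0$. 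The correct split is
\[
\pi(g_1,g_2,g_3,g_4)+\pi(g_1g_2,g_3,g_4,g_5)+\pi(g_1,g_2,g_3g_4,g_5)
= g_1\pi(g_2,g_3,g_4,g_5)+\pi(g_1,g_2g_3,g_4,g_5)+\pi(g_1,g_2,g_3,g_4g_5),
\]
which is precisely the 4-cocycle condition. Once you fix this bookkeeping, your argument is complete and matches the paper's.
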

Once again, we prove this by showing that the 4-cocycle condition implies the
one nontrivial axiom for this tricategory: the pentagonator identity.
\begin{proof}
	As noted above, the triangulators $\lambda$, $\rho$ and
	$\mu$ are trivial. The axioms they satisfy are automatic because $\pi$
	is normalized.

	So to check that $\Brane_\pi(G,H)$ is a tricategory, it remains to
	check that $\pi$ satisfies the pentagonator identity. Since the 3-cells
	of the pentagonator identity commute (they represent elements of $H$),
	and all the faces not involving a $\pi$ are trivial, the first half reads:
	\[ \pi(g_1,g_2,g_3,g_4) \cdot 0_{g_5} + \pi(g_1 g_2, g_3, g_4, g_5) + \pi(g_1,g_2,g_3 g_4,g_5) . \]
	Here, we write $0_{g_5}$ to denote 3-morphism which is the identity on
	the identity of the 1-morphism $g_5$. We do not need to be worried
	about order of terms, since composition of 3-morphisms is addition in
	$H$. The second half of the pentagonator identity reads:
	\[ 0_{g_1} \cdot \pi(g_2,g_3,g_4,g_5) + \pi(g_1,g_2 g_3, g_4, g_5) + \pi(g_1,g_2,g_3,g_4 g_5). \]
	Here, we write $0_{g_1}$ for to denote the 3-morphism which is the
	identity on the identity of the 1-morphism $g_1$. Applying the
	definition of $\cdot$ as the semidirect product, we see the equality of
	the first half with the second half is just the cocycle condition on
	$\pi$:
	\begin{eqnarray*} 
		\pi(g_1,g_2,g_3,g_4) + \pi(g_1 g_2, g_3, g_4, g_5) + \pi(g_1,g_2,g_3 g_4,g_5)   \\
		= g_1 \pi(g_2,g_3,g_4,g_5) + \pi(g_1,g_2 g_3, g_4, g_5) + \pi(g_1,g_2,g_3,g_4 g_5) .
	\end{eqnarray*} 

	So, $\Brane_\pi(G,H)$ is a tricategory. It is smooth because
	everything in sight is smooth: $G$, $H$, and the map $\pi \maps G^4
	\to H$. And it is a Lie 3-group: the 1-morphisms $G$, the trivial
	2-morphisms, and the 3-morphisms $H$ are all strictly invertible, and
	thus of course they are weakly invertible.
\end{proof}

Once again, we can say something a bit stronger about $\Brane_\pi(G,H)$, if we
let $\pi$ be any normalized $H$-valued 4-cochain, rather requiring it to be a
cocycle. In this case, $\Brane_{\pi}(G,H)$ is a Lie 3-group if and only if
$\pi$ is a 4-cocycle, because $\pi$ satisfies the pentagon identity if and only
if it is a cocycle.

\chapter{Integrating nilpotent Lie \emph{n}-algebras} \label{ch:integrating}

Any mathematician worth her salt knows that we can easily construct Lie
algebras as the infinitesimal versions of Lie groups, and that a more
challenging inverse construction exists: we can `integrate' Lie algebras to get
Lie groups. By analogy, we expect that the same is true of Lie $n$-algebras and
Lie $n$-groups: that we can construct Lie $n$-algebras as the infinitesimal
versions of Lie $n$-groups, and we can `integrate' Lie $n$-algebras to obtain
Lie $n$-groups. 

In fact, it is easy to see how to obtain slim Lie $n$-algebras from slim Lie
$n$-groups. As we saw in Chapter~\ref{ch:Lie-n-superalgebras}, slim Lie
$n$-algebras are built from $(n+1)$-cocycles in Lie algebra cohomology.
Remember, $p$-cochains on the Lie algebra $\g$ are linear maps:
\[ C^p(\g,\h) = \left\{ \omega \maps \Lambda^p \g \to \h \right\} , \]
where $\h$ is a representation of $\g$, though we shall restrict ourselves to
the trivial representation $\h = \R$ in this chapter.

On the other hand, in Chapter~\ref{ch:Lie-n-groups}, we saw that slim Lie $n$-groups
are built from $(n+1)$-cocycles in Lie group cohomology, at least for $n=2$ and
3. Remember, $p$-cochains on $G$ are smooth maps:
\[ C^p(G,H) = \left\{ f \maps G^p \to H \right\} , \]
where $H$ is an abelian group on which $G$ acts by automorphism, though we
shall restrict ourselves to $H = \R$ with trivial action in this chapter.

Thus, to derive a Lie $n$-algebra from a Lie $n$-group, just differentiate
the defining Lie group $(n+1)$-cocycle at the identity to obtain a Lie algebra
$(n+1)$-cocycle. In other words, for every Lie group $G$ with Lie algebra $\g$,
there is a cochain map:
\[ D \maps C^\bullet(G) \to C^\bullet(\g) , \]
given by differentiation. Here, we have omitted reference to the coefficients
$H$ and $\h$ because both are assumed to be $\R$. We continue this practice for
the rest of the chapter.

Going the other way, however, is challenging---integrating a Lie $n$-algebra is
harder, even when the Lie $n$-algebra in question is slim.  Nonetheless, this
challenge has been met. Building on the earlier work of Getzler \cite{Getzler}
on integrating nilpotent Lie $n$-algebras, Henriques \cite{Henriques} has shown
that any Lie $n$-algebra can be integrated to a `Lie $n$-group', which
Henriques defines as a sort of smooth Kan complex in the category of Banach
manifolds.  More recently, Schreiber \cite{Schreiber} has generalized this
integration procedure to a setting much more general than that of Banach
manifolds, including both supermanifolds and manifolds with infinitesimals. For
both Henriques and Schreiber, the definition of Lie $n$-group is weaker
than the one we sketched in Chapter \ref{ch:Lie-n-groups}---it weakens the
notion of multiplication so that the product of two group `elements' is only
defined up to equivalence. This level of generality seems essential for the
construction to work for \emph{every} Lie $n$-algebra.

However, for \emph{some} Lie $n$-algebras, we can integrate them using the more
naive idea of Lie $n$-group we prefer in this thesis: a smooth $n$-category
with one object in which every $k$-morphism is weakly invertible, for all $1
\leq k \leq n$. We shall see that, for some slim Lie $n$-algebras, we can
integrate the defining Lie algebra $(n+1)$-cocycle to obtain a Lie group
$(n+1)$-cocycle. In other words, for certain Lie groups $G$ with Lie algebra
$\g$, there is a cochain map:
\[ \smallint \maps C^\bullet(\g) \to C^\bullet(G) . \]
which is a chain homotopy inverse to differentiation.

When is this possible? We can always differentiate Lie group cochains to obtain
Lie algebra cochains, but if we can also integrate Lie algebra cochains to obtain
Lie group cochains, the cohomology of the Lie group and its Lie algebra
will coincide:
\[ H^\bullet(\g) \iso H^\bullet(G) . \]
By a theorem of van Est \cite{vanEst}, this happens when all the homology
groups of $G$, as a topological space, vanish.

Thus, we should look to Lie groups with vanishing homology for our examples.
How bad can things be when the Lie group is not homologically trivial? To get a
sense for this, recall that any semisimple Lie group $G$ is diffeomorphic to
the product of its maximal compact subgroup $K$ and a contractible space $C$:
\[ G \approx K \times C . \]
When $K$ is a point, $G$ is contractible, and certainly has vanishing homology.
At the other extreme, when $C$ is a point, $G$ is compact. And indeed, in this
case there is no hope of obtaining a nontrivial cochain map from Lie algebra
cochains to Lie group cochains:
\[ \smallint \maps C^\bullet(\g) \to C^\bullet(G) \]
because \emph{every smooth cochain on a compact group is trivial}.

This fact provided an obstacle to early attempts to integrate Lie 2-algebras.
For instance, consider the string Lie 2-algebra $\strng(n)$ we described in
Section \ref{sec:string2alg}. Recall that it is the slim Lie 2-algebra
$\strng_j(\so(n))$, where $j$ is the canonical 3-cocycle on $\so(n)$, given by
combining the Killing form with the bracket: 
\[ j = \langle -, [-,-] \rangle . \]
One could attempt to integrate $\strng(n)$ to a slim Lie 2-group
$\String_{\smallint j}(\SO(n))$, where $\smallint j$ is a Lie group 3-cocycle
on $\SO(n)$ which somehow integrates $j$, but because the compact group
$\SO(n)$ admits no nontrivial smooth Lie group cocycles, this idea fails.

The real lesson of the string Lie 2-algebra is that, once again, our notion of
Lie 2-group is not general enough. By generalizing the concept of Lie 2-group,
various authors, like Baez--Crans--Schreiber--Stevenson \cite{BCSS}, Henriques
\cite{Henriques}, and Schommer-Pries \cite{SchommerPries}, were successful in
integrating $\strng(n)$. 

Nonetheless, there is a large class of Lie $n$-algebras for which our Lie
$n$-groups \emph{are} general enough. In particular, when $G$ is an
`exponential' Lie group, the story is completely different. A Lie group or Lie
algebra is called \define{exponential} if the exponential map 
\[ \exp \maps \g \to G \]
is a diffeomorphism.  For instance, all simply-connected nilpotent Lie groups
are exponential, though the reverse is not true. Certainly, all exponential Lie
groups have vanishing homology, because $\g$ is contractible. We caution the
reader that some authors use the term `exponential' merely to indicate that
$\exp$ is surjective.

When $G$ is an exponential Lie group with Lie algebra $\g$, we can use a
geometric technique developed by Houard \cite{Houard} to construct a cochain
map:
\[ \smallint \maps C^\bullet(\g) \to C^\bullet(G). \]
The basic idea behind this construction is simple, a natural outgrowth of a
familiar concept from the cohomology of Lie algebras. Because a Lie algebra
$p$-cochain is a linear map:
\[ \omega \maps \Lambda^p \g \to \R, \]
using left translation, we can view $\omega$ as defining a $p$-form on the
Lie group $G$. So, we can integrate this $p$-form over $p$-simplices in $G$.
Thus we can define a smooth function:
\[ \smallint \omega \maps G^p \to \R, \]
by viewing the integral of $\omega$ as a function of the vertices of a
$p$-simplex:
\[ \smallint \omega(g_1, g_2, \dots, g_p) = \int_{[1, g_1, g_1 g_2, \dots, g_1 g_2 \cdots g_p]} \omega . \]
For the right-hand side to truly be a function of the $p$-tuple $(g_1, g_2,
\dots, g_p)$, we will need a standard way to `fill out' the $p$-simplex $[1,
g_1, g_1 g_2, \dots, g_1 g_2 \cdots g_p]$, based only on its vertices. It is
here that the fact that $G$ is exponential is key: in an exponential group, we
can use the exponential map to define a unique path from the identity $1$ to
any group element. We think of this path as giving a 1-simplex, $[1,g]$, and we
can extend this idea to higher dimensional $p$-simplices. 

Therefore, when $G$ is exponential, we can construct $\smallint$. Using this
cochain map, it is possible to integrate the slim Lie $n$-algebra
$\brane_\omega(\g)$ to the slim Lie $n$-group $\Brane_{\smallint \omega}(G)$. 

We proceed as follows. In Section \ref{sec:integratingcochains}, we construct
$\smallint$ and show that, along with $D$, it gives a homotopy equivalence
between the complexes $C^\bullet(\g)$ and $C^\bullet(G)$. In Section
\ref{sec:examples}, we give explicit formulas for the $p$-cochain $\smallint
\omega$ in terms of $\omega$, for $p = 0$, 1, 2, and 3. Finally, in Section
\ref{sec:heisenberg2group}, we use $\smallint$ to integrate the Heisenberg Lie
2-algebra of Section \ref{sec:sec:heisenberg2alg}.  Later, in Chapter
\ref{ch:integrating2}, we shall see that this construction can be `superized',
and integrate Lie $n$-superalgebras to $n$-supergroups.

\section{Integrating Lie algebra cochains} \label{sec:integratingcochains}

In what follows, we shall see that for an exponential Lie group $G$, we can
construct simplices in $G$ that get along with the action of $G$ on itself.
Since we can treat any $p$-cochain $\omega$ on $\g$ as a left-invariant
$p$-form on $G$, we can integrate $\omega$ over a $p$-simplex in $S$ in $G$.
Regarding $\int_S \omega$ as a function of the vertices of $S$, we will see
that it defines a Lie group $p$-cochain. The fact that this is cochain map is
purely geometric: it follows automatically from Stokes' theorem.

Let us begin by replacing the cohomology of $\g$ with the cohomology of
left-invariant differential forms on $G$.  Recall that the cohomology of the
Lie algebra $\g$ is given by the Lie algebra cochain complex,
$C^{\bullet}(\g)$, which at level $p$ consists of $p$-linear maps from $\g$ to
$\R$:
\[ C^p(\g) = \left\{ \omega \maps \Lambda^n \g \to \R \right\}. \]
We already defined this for Lie superalgebras in Section \ref{sec:cohomology}.
In that section, we saw that the coboundary map $d$ on this complex is usually
defined by a rather lengthy formula, but here we shall substitute an
equivalent, more geometric definition. Since we can think of the Lie algebra
$\g$ as the tangent space $T_1 G$, we can think of a $p$-cochain on $\g$ as
giving a $p$-form on this tangent space. Using left translation on the group,
we can translate this $p$-form over $G$ to define a $p$-form on all of $G$.
This $p$-form is left-invariant, and it is easy to see that any left-invariant
$p$-form on $G$ can be defined in this way.

So, in fact, we could just as well define 
\[ C^p(\g) = \left\{ \mbox{left-invariant $p$-forms on $G$} \right\}. \] 
It is well-known that the de Rham differential of a left-invariant $p$-form
$\omega$ is again left-invariant, and remarkably, the formula for $d \omega_1$
involves only the Lie bracket on $\g$. This formula is Chevalley and
Eilenberg's original definition of $d$ \cite{ChevalleyEilenberg}, the one we gave
in Section \ref{sec:cohomology}, albeit adapted for Lie superalgebras. In this
section, we may forget about this messy formula, and use the de Rham
differential instead.

The cohomology of the Lie group $G$ is given by the Lie group cochain complex,
$C^{\bullet}(G)$, which at level $p$ is given by the set of smooth functions
from $G^p$ to $\R$:
\[ C^p(G) = \left\{ f \maps G^p \to \R \right\}. \]
We have already discussed this in Chapter~\ref{ch:Lie-n-groups}. The coboundary
map $d$ on this complex is usually defined by a complicated formula we gave in
that chapter, but we can give it a more geometric description just as we did in
the case of Lie algebras. 

Since we are going to construct a cochain map by integrating $p$-forms over
$p$-simplices, it would be best to view Lie group cohomology in terms of
simplices now. To this end, let us define a \define{combinatorial $p$-simplex}
in the group $G$ to be an $(p+1)$-tuple of elements of $G$, which we call the
vertices in this context.  Of course, $G$ acts on the set of combinatorial
$p$-simplices by left multiplication of the vertices.

Now, we would like to think of Lie group $p$-cochains as `smooth,
homogeneous, $\R$-valued cochains' on the free abelian group on
combinatorial $p$-simplices. Of course, we need to say what this means. We say
an $\R$-valued $p$-cochain $F$ is \define{homogeneous} if it is
invariant under the action of $G$, and that it is \define{smooth} if
the corresponding map 
\[ F \maps G^{p+1} \to \R \]
is smooth. Now if
\[ C^p_H(G) = \left\{ \mbox{smooth homogeneous $p$-cochains} \right\}. \]
denotes the abelian group of all smooth, homogeneous $p$-cochains, there is a
standard way to make $C^\bullet_H(G)$ into a cochain complex. Just take the
coboundary operator to be:
\[ dF = F \circ \partial, \]
where $\partial$ is the usual boundary operator on $p$-chains. It is automatic
that $d^2 = 0$.

In fact, this cochain complex is isomorphic to the original one, which we
distinguish as the \define{inhomogeneous cochains}:
\[ C^p_I(G) = \left\{ f \maps G^p \to \R \right\} . \]
To see this, note that any inhomogeneous cochain:
\[ f \maps G^p \to \R \]
gives rise to a unique, smooth, homogeneous $p$-cochain $F$, by defining:
\[ F(g_0, \dots, g_p) = f(g^{-1}_0 g_1, g^{-1}_1 g_2,  \dots, g^{-1}_{p-1} g_p) \]
for each combinatorial $p$-simplex $(g_0, \dots, g_p)$. Conversely, every
smooth, homogeneous $p$-cochain $F$ gives a unique inhomogeneous $p$-cochain $f
\maps G^p \to \R$, by defining:
\[ f(g_1, \dots, g_p) = F(1, g_1, g_1 g_2, \dots, g_1 g_2 \dots g_p). \]
Finally, note that these isomorphisms commute with the coboundary operators on
$C^\bullet_H(G)$ and $C^\bullet_I(G)$. Henceforth, we will write $C^\bullet(G)$
to mean either complex.

These simplicial notions will permit us to define a cochain map from the Lie
algebra complex to the Lie group complex:
\[ C^{\bullet}(\g) \to C^{\bullet}(G). \]
For $\omega \in C^p(\g)$, the idea is to define an element $\smallint \omega
\in C^p(G)$ by \emph{integrating} the left-invariant $p$-form $\omega$ over a
$p$-simplex $S$ in the group $G$. In other words, the value which $\smallint
\omega$ assigns to $S$ is defined to be:
\[ (\smallint \omega)(S) = \int_S \omega. \]
This is nice because Stokes' theorem will tell us it is a cochain map:
\[ (\smallint d \omega)(S) = \int_S d \omega = \int_{\partial S} \omega = d (\smallint \omega)(S) \]
The only hard part is defining $p$-simplices in $G$ in such a way that $\smallint
\omega$ is actually a smooth, homogeneous $p$-cochain. It is here that the
fact that $G$ is \emph{exponential} is key. 

Note that, up until this point, we have only discussed combinatorial
$p$-simplices, which have no relationship to the Lie group structure of
$G$---they are just $(p+1)$-tuples of vertices. We now wish to `fill out'
the combinatorial simplices. That is, we want to create a rule that to any
$(p+1)$-tuple $(g_0, \dots, g_p)$ of vertices in $G$ assigns a filled
$p$-simplex in $G$, which we denote
\[ [g_0, \dots, g_p ]. \]
In order to prove that $\smallint \omega$ is smooth, we need smoothness
conditions for this rule, and in order to prove $\smallint \omega$ is
homogeneous, we shall require the left-translate of a $p$-simplex to again be a
$p$-simplex. In other words, we need:
\[ g [g_0, \dots, g_p] = [g g_0, \dots, g g_p]. \]
We make this precise as follows.

\begin{defn} \label{def:simplices}
Let $\Delta^p$ denote $\{(x_0, \dots, x_p) \in \R^{p+1} : \sum x_i = 1, x_i
\geq 0 \}$,  the standard $p$-simplex in $\R^{p+1}$.  Given a collection of
smooth maps
\[ \varphi_p \maps \Delta^p \times G^{p+1} \to G \]
for each $p \geq 0$, we say this collection defines a \define{left-invariant notion
of simplices} in $G$ if it satisfies:
\begin{enumerate}
	\item \textbf{The vertex property.} For any $(p+1)$-tuple, the
		restriction 
		\[ \varphi_p \maps \Delta^p \times \{(g_0, \dots, g_p)\} \to G \] 
		sends the vertices of $\Delta^p$ to $g_0, \dots, g_p$, in that
		order. We denote this restriction by 
		\[ [g_0, \dots, g_p]. \] 
		We call this map a \define{\boldmath{$p$}-simplex}, and regard
		it as a map from $\Delta^p$ to $G$.
	      
	\item \textbf{Left-invariance.} For any $p$-simplex $[g_0, \dots, g_p]$ and any $g \in G$, we
		have:
		\[ g [g_0, \dots, g_p] = [g g_0, \dots, g g_p ]. \]

	\item \textbf{The face property.} For any $p$-simplex
		\[ [g_0, \dots, g_p] \maps \Delta^p \to G \]
		the restriction to a face of $\Delta^p$ is a $(p-1)$-simplex.
\end{enumerate}

\end{defn}

Note that the second condition just says that the map
\[ \varphi_p \maps \Delta^p \times G^{p+1} \to G \]
is equivariant with respect to the left action of $G$, where we take $G$ to act
trivially on $\Delta^p$. 

On any group equipped with a left-invariant notion of simplex, we have the
following result:
\begin{prop} \label{prop:integral}
	Let $G$ be a Lie group equipped with a left-invariant notion of
	simplices, and let $\g$ be its Lie algebra. Then there is a cochain map
	from the Lie algebra cochain complex to the Lie group cochain complex
	\[ \smallint \maps C^{\bullet}(\g) \to C^{\bullet}(G) \]
	given by integration---that is, if $\omega$ is a left-invariant
	$p$-form on $G$, and $S$ is a $p$-simplex in $G$, then define:
	\[ (\smallint \omega)(S) = \int_S \omega. \]
\end{prop}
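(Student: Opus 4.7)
The plan is to verify three things: (i) for any left-invariant $p$-form $\omega$ on $G$, the function $(g_0,\dots,g_p) \mapsto \int_{[g_0,\dots,g_p]} \omega$ is a smooth homogeneous $p$-cochain, (ii) the assignment $\omega \mapsto \smallint \omega$ intertwines the de Rham differential with the coboundary on homogeneous group cochains, and (iii) therefore, under the isomorphism $C^\bullet_H(G) \iso C^\bullet_I(G)$ recalled above, $\smallint$ really lands in the inhomogeneous Lie group cochain complex and commutes with $d$.

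First I would check smoothness. By hypothesis the evaluation map $\varphi_p \maps \Delta^p \times G^{p+1} \to G$ is smooth, and $\omega$, viewed as a left-invariant $p$-form on $G$, is smooth. Hence the pullback $\varphi_p^*\omega$ is a smooth $p$-form on $\Delta^p \times G^{p+1}$, and fiberwise integration along the compact simplex $\Delta^p$ produces a smooth function of $(g_0,\dots,g_p)$; this is precisely $(\smallint\omega)(g_0,\dots,g_p)$. Next, for homogeneity, I use the left-invariance property of Definition \ref{def:simplices}: $g \cdot [g_0,\dots,g_p] = [gg_0,\dots,gg_p]$. Since left multiplication $L_g \maps G \to G$ is an orientation-preserving diffeomorphism and $\omega$ is left-invariant, $L_g^*\omega = \omega$, so a change-of-variables gives
\[ \int_{[gg_0,\dots,gg_p]} \omega = \int_{L_g[g_0,\dots,g_p]} \omega = \int_{[g_0,\dots,g_p]} L_g^*\omega = \int_{[g_0,\dots,g_p]} \omega. \]
Thus $\smallint\omega$ is homogeneous in the sense of Section \ref{sec:integratingcochains}.

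For the cochain map property, I would apply Stokes' theorem. By the face property, the boundary of the $p$-simplex $[g_0,\dots,g_p]$ in $G$ decomposes as the signed sum of its $(p-1)$-dimensional faces, each of which is again a simplex in our sense:
\[ \partial[g_0,\dots,g_p] = \sum_{i=0}^{p} (-1)^i [g_0,\dots,\widehat{g_i},\dots,g_p]. \]
Since the de Rham differential of a left-invariant form is again left-invariant, and agrees with the Chevalley--Eilenberg differential on $C^\bullet(\g)$, Stokes' theorem yields
\[ (\smallint d\omega)(g_0,\dots,g_p) = \int_{[g_0,\dots,g_p]} d\omega = \int_{\partial[g_0,\dots,g_p]} \omega = \sum_{i=0}^{p} (-1)^i \int_{[g_0,\dots,\widehat{g_i},\dots,g_p]} \omega, \]
which is exactly $(d\smallint\omega)(g_0,\dots,g_p)$ under the simplicial coboundary on $C^\bullet_H(G)$.

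The main obstacle is really bookkeeping rather than a single hard step: one must check that the face property is compatible with the simplicial boundary formula (so that orientations and signs line up with the standard Koszul sign convention inherited from the isomorphism $C^\bullet_H(G) \iso C^\bullet_I(G)$), and that the smoothness of the family of simplices is strong enough to justify differentiation under the integral sign when one later wants to compare $\smallint$ with the differentiation map $D$. Everything else is a direct application of Stokes' theorem together with the three axioms in Definition \ref{def:simplices}.
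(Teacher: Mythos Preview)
Your proof is correct and follows essentially the same approach as the paper: pull back $\omega$ along $\varphi_p$ and integrate over $\Delta^p$ to get smoothness, use left-invariance of both the simplices and of $\omega$ to get homogeneity, and invoke Stokes' theorem for the cochain property. The only difference is one of emphasis: the paper dispatches the cochain-map step in a single line (``Stokes' theorem implies this is a cochain map'') and then spends its effort on smoothness and homogeneity, while you spell out the boundary decomposition via the face property; and the paper argues homogeneity by observing that $\varphi_p^*\omega$ is itself $G$-invariant on $\Delta^p\times G^{p+1}$, whereas you do the equivalent change-of-variables computation on the integral directly.
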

\begin{proof}
	Let $\omega \in C^p(\g)$. We have already noted that Stokes' theorem
	\[ \int_S d \omega = \int_{\partial S} \omega \]
	implies that this map is a cochain map. We only need to check that
	$\smallint \omega$ really lands in $C^p(G)$. That is, that it is smooth
	and homogeneous. Because $G$ acts trivially on the coefficient group
	$\R$, homogeneity means that $(\smallint \omega)(S)$ is invariant of
	the left action of $G$ on $S$.

	Indeed, note that we can pull the smooth, left-invariant $p$-form
	$\omega$ back along
	\[ \varphi_p \maps \Delta^p \times G^{p+1} \to G. \]
	The result, $\varphi^*_p \omega$, is a smooth $p$-form on $\Delta^p
	\times G^{p+1}$, still invariant under the action of $G$. Integrating
	out the dependence on $\Delta^p$, we see this results in a smooth,
	invariant map:
	\[ \smallint \omega \maps G^{p+1} \to \R, \]
	which is precisely what we wanted to prove.
\end{proof}

We would now like to show that any exponential Lie group $G$ comes with a
left-invariant notion of simplices. Our essential tool for this is our ability
to use the exponential map to connect any element of $G$ to the identity by a
uniquely-defined path. If $h = \exp(X) \in G$ is such an element, we can then
define the `based' 1-simplex $[1,h]$ to be swept out by the path $\exp(tX)$,
left translate this to define the general 1-simplex $[g, gh]$ as that swept out
by the path $g \exp(tX)$, and proceed to define higher-dimensional simplices
with the help of the exponential map and induction, using what we call the
\define{apex-base construction}: given a definition of $(p-1)$-simplex, we
define the $p$-simplex
\[ [1, g_1, \dots, g_p] \]
by using the exponential map to sweep out a path from 1, the \define{apex}, to
each point of the already defined $(p-1)$-simplex, the \define{base}:
\[ [g_1, \dots, g_p]. \]
Having done this, we can then use left translation to define the general
$p$-simplex:
\[ [g_0, g_1, \dots, g_p] = g_0 [1, g^{-1}_0 g_1, \dots, g^{-1}_0 g_p ]. \]
In fact, this construction also covers the 1-simplex case. All we need to kick
off our induction is to define 0-simplices to be points in $G$.

To make all this precise, we must use it to define smooth maps
\[ \varphi_p \maps \Delta^p \times G^{p+1} \to G, \]
for each $p \geq 0$. To overcome some analytic technicalities in constructing
$\varphi_p$, we will also need to fix a smooth increasing function:
\[ \ell \maps [0,1] \to [0,1] \]
which is 0 on a \emph{neighborhood} of 0, and then monotonically increases to 1
at 1. We shall call $\ell$ the \define{smoothing factor}. We shall see latter
that our choice of smoothing factor is immaterial: $\varphi_p$ depends on
$\ell$, but integrals over simplices do not. 

Let us begin by defining 0-simplices as points. That is, we define
\[ \varphi_0 \maps \Delta^0 \times G \to G \]
as the obvious projection.

Now, assume that we have defined $(p-1)$-simplices, so we have:
\[ \varphi_{p-1} \maps \Delta^{p-1} \times G^p \to G. \]
Using this, we wish to define:
\[ \varphi_p \maps \Delta^p \times G^{p+1} \to G. \]
But since we want this to be $G$-equivariant, we might as well define it for
\define{based \boldmath{$p$}-simplices}: a simplex whose first vertex is
$1$. So first, we will give a map:
\[ \tilde{f}_p \maps \Delta^p \times G^p \to G \]
which we think of as giving us the based $p$-simplex
\[ [1, g_1, \dots, g_p] \]
for any $p$-tuple. We do this using the apex-base construction. First, the map
$\varphi_{p-1} \maps \Delta^{p-1} \times G^p \to G$ can be extended to a map
\[ f_p \maps [0,1] \times \Delta^{p-1} \times G^p \to G \]
by defining $f_p$ to be $\varphi_{p-1}$ on $\{1\} \times \Delta^{p-1} \times G^p$, to
be $1$ on $\{0\} \times \Delta^{p-1} \times G^p$, and using the exponential map
to interpolate in between. Since $[0,1] \times \Delta^p$ is a kind of
generalized prism, we take the liberty of calling $\{0\} \times \Delta^p$ the
\define{0 face}, and $\{1\} \times \Delta^p$ the \define{1 face}.

Here, the requirement for smoothness complicates things slightly,
because we shall actually need $f_p$ to be $1$ on a \emph{neighborhood} of
the 0 face. So, to be precise, for $(t,x,g_1,\dots,g_p)) \in [0,1] \times
\Delta^{p-1} \times G^p$, we have that $\varphi_{p-1}(x,g_0,\dots,g_p)$ is a
point of $G$, say $\exp(X)$. Define:
\[ f_p(t, x, g_0, \dots, g_p) = \exp(\ell(t)X). \]
where $\ell$ is the smoothing factor we mention above: a smooth increasing
function which is 0 on a \emph{neighborhood} of 0, and then monotonically
increases to 1 at 1. This guarantees $f_p$ will be $1$ on a neighborhood of the
0 face, and will match $\varphi_{p-1}$ on the 1 face. 

Since $f_p$ is smooth and is constant on a neighborhood of the 0 face of the
prism, $[0,1] \times \Delta^{p-1}$, we can quotient by this face and obtain a
smooth map:
\[ \tilde{f}_p \maps \Delta^p \times G^p \to G. \]
For definiteness, we can use the smooth quotient map defined by:
\[ 
\begin{array}{lccc}
	q_p \maps & [0,1] \times \Delta^{p-1} & \to     & \Delta^p \\
	          & (t, x)                     & \mapsto & (1-t, tx) 
\end{array}
\]
which we note sends the 0 face to the 0th vertex of $\Delta^p$, and sends the
vertices of $\Delta^{p-1}$ to the remaining vertices of $\Delta^p$, in order.
Finally, to define the nonbased $p$-simplices, we extend by the left action of
$G$---for any $g \in G$ and any $(x, g_1, \dots, g_p) \in \Delta^p \times G^p$,
set:
\[ \varphi_p(x, g, g g_1, \dots, g g_p) = g \tilde{f}_p(x, g_1, \dots, g_p). \]
This defines
\[ \varphi_p \maps \Delta^p \times G^{p+1} \to G. \]
It just remains to check that:
\begin{prop} \label{prop:standard}
	This defines a left-invariant notion of simplices on $G$, which we call
	the \define{standard left-invariant notion of simplices with smoothing
	factor $\ell$}.
\end{prop}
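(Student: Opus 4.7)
The plan is to verify the three properties of Definition~\ref{def:simplices} by induction on $p$, with smoothness threaded through each step. The base case $p = 0$ is immediate: $\varphi_0$ is the projection $\Delta^0 \times G \to G$, and the three properties (there being no faces to worry about) hold trivially. For the inductive step, I assume $\varphi_{p-1}$ satisfies all three properties and additionally is smooth, and show the same for $\varphi_p$.

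Left-invariance of $\varphi_p$ is essentially built into the definition. The based map $\tilde{f}_p$ is defined from $f_p$ on $\Delta^p \times G^p$, and then $\varphi_p(x,g,gg_1,\dots,gg_p) = g\tilde{f}_p(x,g_1,\dots,g_p)$ is the $G$-equivariant extension. For smoothness, I first note that $f_p \maps [0,1] \times \Delta^{p-1} \times G^p \to G$ is smooth because $\exp$ is smooth, the inductive $\varphi_{p-1}$ is smooth, and the smoothing factor $\ell$ is smooth with $\ell = 0$ on a neighborhood of $0$. This last condition ensures that $f_p$ is locally constant (equal to $1 \in G$) on a neighborhood of the $0$ face of the prism $[0,1] \times \Delta^{p-1}$, so it descends through the quotient $q_p$ to a smooth $\tilde{f}_p$ on $\Delta^p \times G^p$. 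Finally, $\varphi_p$ is smooth because left multiplication in $G$ is, and the change of variables $(g,gg_1,\dots,gg_p) \mapsto (g,g_1,\dots,g_p)$ is a diffeomorphism on $G^{p+1}$.

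For the vertex property, it suffices by left-invariance to check based simplices: that $[1,g_1,\dots,g_p]$ sends the vertices $v_0,v_1,\dots,v_p$ of $\Delta^p$ to $1,g_1,\dots,g_p$. The quotient map $q_p(t,x) = (1-t,tx)$ sends the entire $0$ face $\{0\} \times \Delta^{p-1}$ to $v_0$, and sends the vertices of $\{1\} \times \Delta^{p-1}$ to $v_1,\dots,v_p$ in order. Since $f_p = 1$ identically on the $0$ face, $\tilde{f}_p(v_0,g_1,\dots,g_p) = 1$. Since $f_p = \varphi_{p-1}$ on the $1$ face, by induction $\tilde{f}_p$ sends $v_i$ to $g_i$ for $1 \leq i \leq p$.

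The face property is where the real work lies, and I expect it to be the main obstacle. By left-invariance, again it is enough to consider based simplices. The $0$th face of $\Delta^p$ (opposite $v_0$) is the image under $q_p$ of $\{1\} \times \Delta^{p-1}$, and there $f_p = \varphi_{p-1}$, so this face is precisely $[g_1,\dots,g_p]$. For the $i$th face with $i \geq 1$ (opposite $v_i$), its preimage under $q_p$ is $[0,1]$ times the $(i-1)$st face of $\Delta^{p-1}$; applying the inductive face property to $\varphi_{p-1}$ identifies this $(i-1)$st face with $[g_1,\dots,\hat g_i,\dots,g_p]$, and then the apex-base construction---performed with the same smoothing factor $\ell$ and the same exponential interpolation---identifies the restriction of $\tilde{f}_p$ to this face with $[1,g_1,\dots,\hat g_i,\dots,g_p]$. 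The delicate point is that the apex-base construction commutes with taking faces, which follows because the construction is defined pointwise along rays $t \mapsto \exp(\ell(t)X)$ emanating from the apex, and these rays depend only on the endpoint $\exp(X)$ in the base. Finally, I would remark that although $\varphi_p$ depends on the choice of smoothing factor $\ell$, the three defining properties do not reference $\ell$, so the construction yields a left-invariant notion of simplices for each choice; the independence of integrals on $\ell$ follows later from a straightforward reparametrization argument.
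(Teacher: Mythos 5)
Your proof is correct and follows essentially the same route as the paper's: induction on $p$, with left-invariance and smoothness built into the construction, the vertex property read off from the behaviour of $f_p$ on the $0$ and $1$ faces of the prism, and the face property established by observing that the apex-base construction commutes with restriction to faces. You supply somewhat more detail than the paper (explicitly tracking the quotient $q_p$ and justifying why the rays from the apex restrict correctly), but the argument is the same.
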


\begin{proof}
	By construction, the $\varphi_p$ are all smooth and $G$-equivariant, so
	we only need to check the vertex property and the face property. We do
	this inductively.

	For 0-simplices, the vertex property is trivial. Assume it holds for
	$(p-1)$-simplices. In particular, the map
	\[ [g_1, \dots, g_p] \maps \Delta^{p-1} \to G \]
	sends the vertices of $\Delta^{p-1}$ to $g_1, \dots, g_p$, in that order.
	By construction, the based $p$-simplex
	\[ [1, g_1, \dots, g_p] \maps \Delta^p \to G \]
	sends the 0th vertex to 1 and the rest of the vertices to $g_1, \dots,
	g_p$, since the $(p-1)$-simplex $[g_1, \dots, g_p]$ has the vertex
	property and is defined to be the base of this $p$-simplex in the
	apex-base construction. By $G$-equivariance, this extends to all
	$p$-simplices.

	For 0-simplices, the face property holds vacuously, and for 1-simplices
	it is the same as the vertex property. Now take $p \geq 2$, and assume
	the face property holds for all $k$-simplices with $k < p$. By
	$G$-equivariance, the face property will hold for all $p$-simplices as
	long as it holds for all based $p$-simplices, for instance:
	\[ [1, g_1, \dots, g_p]. \]
	By the apex-base construction, the $(p-1)$-simplex $[g_1, \dots, g_p]$
	is the 0th face of $[1, g_1, \dots, g_p]$, since it was chosen as the
	base. For any other face, say the $i$th face, the apex-base
	construction gives the $(p-1)$-simplex
	\[ [1, g_1, \dots, \hat{g_i}, \dots, g_p] \maps \Delta^{p-1} \to G \]
	with 1 as apex, and the $(p-2)$-simplex $[g_1, \dots, \hat{g_i}, \dots,
	g_p]$ as base. Thus, the face property holds for the $p$-simplex $[1,
	g_1, \dots, g_p]$.
\end{proof}

While the existence of \emph{any} left-invariant notion of simplices in $G$ suffices
to integrate Lie algebra cochains, we have found an almost overwhelming wealth
of these notions---one for each smoothing factor $\ell$. In fact, for the
moment we will indicate the dependence of the standard notion of left-invariant
simplices on $\ell$ with a superscript:
\[ \varphi_p^\ell \maps \Delta^p \times G^{p+1} \to G . \]
Of course, the dependence of $\varphi_p^\ell$ on $\ell$ passes to the
individual simplices, so we give them a superscript as well:
\[ [g_0, \dots, g_p]^\ell \maps \Delta^p \to G . \]
Fortunately, however, the cochain map:
\[ \smallint \maps C^\bullet(\g) \to C^\bullet(G) \]
is \emph{independent} of $\ell$. That is, if $\ell'$ is another smoothing
factor, we have:
\[ \int_{[g_0, \dots, g_p]^\ell} \omega = \int_{[g_0, \dots, g_p]^{\ell'}} \omega , \]
for any left-invariant $p$-form $\omega$. 

We shall prove this not by comparing the integrals for two smoothing factors,
but rather computing the integral in a way that is manifestly independent of
smoothing factor. We do this by showing that the role of the smoothing factor
is basically to allow us to smoothly quotient the $p$-dimensional cube
$[0,1]^p$ down to the standard $p$-simplex $\Delta^p$. Had we parameterized our
$p$-simplices with cubes to begin with, we would have had no need for a smoothing
factor. As a trade off, however, our proof that integration gives a chain map
would have required more care when analyzing the boundary.

Now we get to work. Rather than parameterizing the $p$-simplex on the domain
$\Delta^p$:
\[ [g_0, g_1, \dots, g_p]^\ell \maps \Delta^p \to G , \]
we shall show how to parameterize it on the $p$-dimensional cube:
\[ \langle g_0, g_1, \dots, g_p \rangle \maps [0,1]^p \to G , \]
That is, these two functions have the same images---a $p$-simplex in $G$ with
vertices $g_0, \dots, g_p \in G$, they induce the same orientations on
their images, and both traverse it the image precisely once. So, as we shall
prove, the integral over either simplex is the same. But, as we shall also see,
the latter parameterization does not depend on the smoothing factor $\ell$.

How do we discover the parameterization $\langle g_0, \dots, g_p \rangle$? We
just repeat the apex-base construction, but we avoid quotienting to down to
$\Delta^p$! Begin by defining the 0-simplices to map the 0-dimensional cube to
the indicated vertex:
\[ \langle g_0 \rangle \maps \{0\} \to G \]
Define a 1-simplex by using the exponential map to sweep out a path from $g_0$
to $g_1$:
\[ \langle g_0, g_1 \rangle \maps [0,1] \to G , \] 
by defining:
\[ \langle g_0, g_1 \rangle (t_1) = g_0 \exp(t_1 X_1), \quad t_1 \in [0,1] . \]
where $g_0^{-1} g_1 = \exp(X_1)$. Now, define a 2-simplex using the exponential map to sweep out paths from $g_0$
to the 1-simplex $\langle g_1, g_2 \rangle$. That is, define:
\[ \langle g_0, g_1, g_2 \rangle \maps [0,1]^2 \to G , \]
to be given by:
\[ \langle g_0, g_1, g_2 \rangle (t_1, t_2) = g_0 \exp(t_1 Z(X_1, t_2 X_2)) , \]
where $g_0^{-1} g_1 = \exp(X_1)$, $g_1^{-1} g_2 = \exp(X_2)$, and $Z$ denotes the
Baker--Campbell--Hausdorff series:
\[ g_1 g_2 = \exp(Z(X_1, X_2) = \exp(X_1 + X_2 + \half[X_1, X_2] + \cdots ) . \]
Continuing in this manner, with a bit of work one can see that the $p$-simplex:
\[ \langle g_0, g_1, g_2, \dots, g_{p-1}, g_p \rangle \maps [0,1]^p \to G \]
is given by the horrendous formula:
\[ 
\langle g_0, \dots , g_p \rangle (t_1, \dots, t_p) = g_0 \exp(t_1 Z(X_1, t_2 Z(X_2, \dots, t_{p-1} Z(X_{p-1} , t_p X_p) \dots ))) ,
\]
where $g_0^{-1} g_1 = \exp(X_1)$, $g_1^{-1} g_2 = \exp(X_2)$, \dots,
$g_{p-1}^{-1} g_p = \exp(X_p)$. While horrendous, this formula is at least
independent of the smoothing factor $\ell$, and this forms the basis of the
following proposition:

\begin{prop}
	Let $G$ be an exponential Lie group with Lie algebra $\g$, let $\ell$
	be a smoothing factor, and equip $G$ with the standard left-invariant
	notion of simplices with smoothing factor $\ell$. For any $p$-simplex
	\[ [g_0, \dots, g_p ]^\ell \maps \Delta^p \to G , \]
	depending on $\ell$ and parameterized on the domain $\Delta^p$, there
	is a $p$-simplex:
	\[ \langle g_0, \dots, g_p \rangle \maps [0,1]^p \to G \]
	given by the formula:
	\[ 
	\langle g_0, \dots , g_p \rangle (t_1, \dots, t_p) = g_0 \exp(t_1 Z(X_1, t_2 Z(X_2, \dots, t_{p-1} Z(X_{p-1} , t_p X_p) \dots ))) ,
	\]
	where $g_0^{-1} g_1 = \exp(X_1)$, $g_1^{-1} g_2 = \exp(X_2)$, \dots,
	$g_{p-1}^{-1} g_p = \exp(X_p)$.  Then $\langle g_0, \dots, g_p \rangle$
	is independent of $\ell$, parameterized on the domain $[0,1]^p$, and
	has the same image and orientation as $[g_0, \dots, g_p]^\ell$.
	Furthermore, for any $p$-form $\omega$ on $G$, the integral of $\omega$
	is the same over either simplex:
	\[ \int_{[g_0, \dots, g_p]^\ell} \omega = \int_{\langle g_0, \dots, g_p \rangle} \omega . \]
\end{prop}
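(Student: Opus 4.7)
The plan is to identify both parametrizations as iterated exponential cones over the same apex and base, differing only in how each cone axis is parametrized, and then conclude by a change of variables. First, note that both sides of every claim are $G$-equivariant: the standard simplex by Proposition \ref{prop:standard}, and the cube simplex manifestly from the factor $g_0$ sitting in front of the formula. So it suffices to treat the based case $g_0 = 1$.

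Next, I would construct a smooth iterated quotient $\rho_p \maps [0,1]^p \to \Delta^p$ by setting $\rho_0$ to be the identity on the one-point space and, inductively,
\[ \rho_p(t_1, u) = q_p(t_1, \rho_{p-1}(u)) = (1 - t_1,\, t_1 \rho_{p-1}(u)) \]
for $u \in [0,1]^{p-1}$, where $q_p$ is the quotient map from the construction of the standard simplices. A direct Jacobian computation shows $\rho_p$ is orientation-preserving and a diffeomorphism from the open cube $(0,1)^p$ onto a full-measure open subset of $\Delta^p$, so $\int_{\Delta^p} \eta = \int_{[0,1]^p} \rho_p^* \eta$ for any top-degree form $\eta$. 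Let $\ell^{\times p} \maps [0,1]^p \to [0,1]^p$ apply $\ell$ in each coordinate. The key step is to prove by induction on $p$ the pointwise identity
\[ [1, g_1, \dots, g_p]^\ell \circ \rho_p = \langle 1, g_1, \dots, g_p \rangle \circ \ell^{\times p}. \]
For the inductive step, the left-hand side sends $(t_1, u)$ to $\exp(\ell(t_1) X(u))$, where $\exp(X(u)) = [g_1, \dots, g_p]^\ell(\rho_{p-1}(u))$. Peeling off $g_1 = \exp(X_1)$ by $G$-equivariance and applying the inductive hypothesis to the remaining based $(p-1)$-simplex, one computes $X(u) = Z(X_1, \ell(u_1) Z(X_2, \dots, \ell(u_{p-1}) X_p))$ by repeated use of $\exp(X)\exp(Y) = \exp(Z(X,Y))$. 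This is exactly the exponent in $\langle 1, g_1, \dots, g_p \rangle(\ell(t_1), \ell(u_1), \dots, \ell(u_{p-1}))$.

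Given this identification, the sub-claims about image, orientation, and $\ell$-independence of $\langle \cdot \rangle$ are immediate. The equality of integrals then follows by stringing together two substitutions:
\[ \int_{[1, g_1, \dots, g_p]^\ell} \omega = \int_{[0,1]^p} \rho_p^* ([1, g_1, \dots, g_p]^\ell)^* \omega = \int_{[0,1]^p} (\ell^{\times p})^* \langle 1, g_1, \dots, g_p \rangle^* \omega = \int_{\langle 1, g_1, \dots, g_p \rangle} \omega, \]
where the last equality uses that $\ell$ is a smooth increasing surjection $[0,1] \to [0,1]$, so the componentwise substitution $s_i = \ell(t_i)$ converts each factor $\ell'(t_i)\, dt_i$ into $ds_i$ with no boundary contribution (the constant portion of $\ell$ near $0$ contributes measure zero).

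The main obstacle will be the inductive identification. Carrying it out requires carefully tracking how the smoothing factor $\ell$ enters at each level of the cone construction and verifying that the apex-base rule---which builds $[1, g_1, \dots, g_p]^\ell$ by exponentiating $\ell(t_1)$ times the logarithm of the base $(p-1)$-simplex---produces, via iterated BCH, precisely the nested $Z$-series in the cube formula. The algebraic fact making this work is $\exp(X_1)\exp(Y) = \exp(Z(X_1, Y))$, which converts left-translation by $g_1$ into an outer layer of the nested BCH series at each stage of the recursion.
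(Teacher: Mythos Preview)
Your proposal is correct and is precisely a fleshed-out version of the paper's approach: the paper gives only a two-line sketch (``equality of images and orientations follows from the apex-base construction, and equality of the integrals follows from reparameterization invariance \dots\ for which the monotonicity of $\ell$ becomes crucial''), and your inductive identity $[1,g_1,\dots,g_p]^\ell \circ \rho_p = \langle 1,g_1,\dots,g_p\rangle \circ \ell^{\times p}$ together with the two changes of variables is exactly what makes that sketch rigorous.
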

\begin{proof}[Sketch of proof]
	Equality of images and orientations follows from the apex-base
	construction, and equality of the integrals follows from
	reparameterization invariance---specifically the change of variables
	formula for multiple integrals, for which the monotonicity of $\ell$
	becomes crucial.
\end{proof}

\begin{cor}
	Let $G$ be an exponential Lie group with Lie algebra $\g$, let $\ell$
	be a smoothing factor, and equip $G$ with the standard left-invariant
	notion of simplices with smoothing factor $\ell$. Let 
	\[ \smallint \maps C^\bullet(\g) \to C^\bullet(G) \]
	be the cochain map from Lie algebra cochains to Lie group cochains
	given by integration over simplices. Then the cochain map $\smallint$
	is independent of the smoothing factor $\ell$.
\end{cor}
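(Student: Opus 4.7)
The plan is to derive the corollary directly from the preceding proposition, essentially as a restatement. The key observation is that the cochain $\smallint \omega \in C^p(G)$ is defined pointwise as an integral: its value at a $(p+1)$-tuple $(g_0, \ldots, g_p)$ is $\int_{[g_0, \ldots, g_p]^\ell} \omega$. So to show $\smallint$ is independent of $\ell$, it suffices to show that this integral is independent of $\ell$ for every left-invariant $p$-form $\omega$ and every choice of vertices.

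First, I would invoke the preceding proposition, which provides the cube-parameterized simplex $\langle g_0, \ldots, g_p \rangle \maps [0,1]^p \to G$ given by the explicit Baker--Campbell--Hausdorff formula, and asserts the equality
\[ \int_{[g_0, \ldots, g_p]^\ell} \omega = \int_{\langle g_0, \ldots, g_p \rangle} \omega . \]
The right-hand side is manifestly independent of $\ell$, since the formula for $\langle g_0, \ldots, g_p \rangle$ involves only the group multiplication, the exponential map, and the standard parameterization of $[0,1]^p$---not the smoothing factor. Thus the left-hand side, which depends a priori on $\ell$, is actually independent of it.

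Putting this together, for any two smoothing factors $\ell$ and $\ell'$,
\[ (\smallint \omega)^\ell (g_0, \ldots, g_p) = \int_{\langle g_0, \ldots, g_p \rangle} \omega = (\smallint \omega)^{\ell'}(g_0, \ldots, g_p) , \]
where I have written superscripts to temporarily indicate the dependence of $\smallint$ on the smoothing factor. Since this holds pointwise on $G^{p+1}$ and for every $\omega \in C^p(\g)$, the cochain maps coincide.

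There is no real obstacle here; the work has already been done in the preceding proposition. The only subtlety worth emphasizing is that one should check that the cube parameterization $\langle g_0, \ldots, g_p \rangle$ really does yield a well-defined $p$-form pullback with the correct orientation, so that the change-of-variables argument used to equate the two integrals goes through without sign issues. This is guaranteed by the monotonicity of $\ell$ invoked in the previous proposition's proof sketch, and we can appeal to that directly.
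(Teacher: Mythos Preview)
Your proposal is correct and follows essentially the same approach as the paper: both recall that $(\smallint \omega)(g_0,\dots,g_p)$ is defined as $\int_{[g_0,\dots,g_p]^\ell}\omega$, invoke the preceding proposition to rewrite this as $\int_{\langle g_0,\dots,g_p\rangle}\omega$, and observe that the latter is manifestly independent of $\ell$. Your additional remark about orientation and the monotonicity of $\ell$ is a reasonable caution but is already absorbed into the previous proposition, as you note.
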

\begin{proof}
	Recall that if $\omega$ is a left-invariant $p$-form on $G$, and $[g_0,
	\dots, g_p]^\ell$ is a $p$-simplex in $G$, the cochain map $\smallint$
	is defined by:
	\[ (\smallint \omega)(g_0, \dots, g_p) = \int_{[g_0, \dots, g_p]^\ell} \omega \]
	By the previous proposition, this integral is equal to 
	\[ \int_{\langle g_0, \dots, g_p \rangle} \omega , \]
	where $\langle g_0, \dots, g_p \rangle \maps [0,1]^p \to G$ is given as
	above, and is independent of $\ell$. Thus, $\smallint$ is also
	independent of $\ell$.
\end{proof}

Having proven that the cochain map $\smallint$ is independent of smoothing
factor, we will now allow the smoothing factor to recede into the background.
Henceforth, we abuse terminology somewhat and speak of \emph{the} standard
left-invariant notion of simplices to mean the standard notion with some
implicit choice of smoothing factor.

The hard work of integrating Lie algebra cochains is now done. We would now
like to go the other way, and show how to get a Lie algebra cochain from a Lie
group cochain. This direction is much easier: in essence, we differentiate the
Lie group cochain at the identity, and antisymmetrize the result. To do this,
we make use of the fact that any element of the Lie algebra can be viewed as a
directional derivative at the identity. The following result, due to van Est
(c.f.\ \cite{vanEst}, Formula 46) just says this map defines a cochain map:
\begin{prop}
	Let $G$ be a Lie group with Lie algebra $\g$. Then there is a cochain map
	from the van Est complex to the Chevalley--Eilenberg complex:
	\[ D \maps C^{\bullet}(G) \to C^{\bullet}(\g) \]
	given by differentiation---that is, if $F$ is a homogeneous
	$p$-cochain on $G$, and $X_1, \dots, X_p \in \g$, then we can define:
	\[ DF(X_1, \dots, X_p) = \frac{1}{p!} \sum_{\sigma \in S_p} \mathrm{sgn}(\sigma) X_{\sigma(1)}^1 \dots X_{\sigma(p)}^p F(1, g_1, g_1 g_2, \dots, g_1 g_2 \dots g_p), \]
	where by $X_i^j$ we indicate that the operator $X_i$ differentiates
	only the $j$th variable, $g_j.$
\end{prop}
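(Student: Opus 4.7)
The plan is to verify two things: first, that $DF$ lies in $C^p(\g)$, and second, that $D$ commutes with the coboundary. For the first, multilinearity in each $X_i$ is immediate since the left-invariant vector field $X_i^j$ depends linearly on $X_i \in \g$. Antisymmetry in the arguments $X_1,\dots,X_p$ is built into the formula: swapping $X_i$ and $X_j$ corresponds to composing the summation index with a transposition, which flips the sign $\mathrm{sgn}(\sigma)$. So $DF$ is indeed a well-defined antisymmetric multilinear map $\Lambda^p \g \to \R$.

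For the main claim $D \circ d = d \circ D$, I would first pass to the inhomogeneous description of $C^\bullet(G)$ introduced earlier in the chapter, so that a $p$-cochain is a smooth function $f \maps G^p \to \R$ and the coboundary is the familiar
\[ df(g_1,\dots,g_{p+1}) = f(g_2,\dots,g_{p+1}) + \sum_{i=1}^p (-1)^i f(g_1,\dots,g_i g_{i+1},\dots,g_{p+1}) + (-1)^{p+1} f(g_1,\dots,g_p). \]
Under this identification, $Df(X_1,\dots,X_p)$ is the antisymmetrized iterated derivative of $f$ at $(1,\dots,1)$. I would then expand $D(df)(X_1,\dots,X_{p+1})$ by substituting the formula above and analyzing each type of term. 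The two ``outside'' terms $f(g_2,\dots,g_{p+1})$ and $f(g_1,\dots,g_p)$ contribute zero to $D(df)$ because each fails to depend on some variable, so the corresponding partial derivative vanishes identically. Only the ``merging'' terms $f(\dots, g_i g_{i+1},\dots)$ survive.

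The key computational step is to understand the derivatives on a merging term. Since $X_i$ is left-invariant, differentiating $f(\dots,g_i g_{i+1},\dots)$ first in $g_{i+1}$ at the identity gives $(X_{\sigma(i+1)} f)(\dots, g_i,\dots)$, and then differentiating in $g_i$ at the identity gives $(X_{\sigma(i)} X_{\sigma(i+1)} f)(\dots,1,\dots)$—while a derivative in the ``wrong'' order gives the opposite product. When we antisymmetrize over $S_{p+1}$, these opposite-order products combine into commutators, converting the iterated derivative into expressions of the form $[X_{\sigma(i)}, X_{\sigma(i+1)}] \cdot (\text{remaining derivatives}) f$. These commutators are exactly what appears in the Chevalley--Eilenberg formula
\[ d(Df)(X_1,\dots,X_{p+1}) = \sum_{i<j} (-1)^{i+j} (Df)([X_i,X_j], X_1,\dots,\hat X_i,\dots,\hat X_j,\dots,X_{p+1}), \]
and the remaining task is to check that the signs and combinatorial factors match.

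The main obstacle is precisely this bookkeeping: one must verify that the sign $(-1)^i$ from the merging term in $df$, combined with the sign $\mathrm{sgn}(\sigma)$ and the factor $\tfrac{1}{(p+1)!}$ in the definition of $D$, reproduces the Chevalley--Eilenberg sign $(-1)^{i+j}$ together with the factor $\tfrac{1}{p!}$ in $Df$. A cleaner alternative, which I would likely take if the direct bookkeeping proves too cumbersome, is to use the identification of $C^\bullet(\g)$ with left-invariant differential forms on $G$ recalled earlier in the chapter. Under that identification, $Df$ becomes a left-invariant $p$-form whose value at $1$ is the antisymmetrization of the iterated derivative, and the Chevalley--Eilenberg $d$ becomes the ordinary de Rham differential. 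One then verifies $D(df) = d(Df)$ by showing both equal the same intrinsic $(p+1)$-form on $G$, reducing the argument to a standard computation relating Lie brackets to commutators of vector fields—exactly the content that the direct combinatorial approach rederives by hand.
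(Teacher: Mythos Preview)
The paper does not actually prove this proposition; it simply cites Houard \cite{Houard}, p.\ 224, Lemma 1. Your outline therefore supplies substantially more than the paper does, and the overall strategy---kill the two end terms of $df$ because each misses a variable, then extract commutators from the merging terms via the antisymmetrization over $S_{p+1}$---is the standard one and is correct.

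One inaccuracy in your description of the merging step is worth flagging. The operators $X_{\sigma(i)}^i$ and $X_{\sigma(i+1)}^{i+1}$ act on distinct copies of $G$ and therefore commute, so there is no ``wrong order'' of differentiation: both orders of applying these two partials to $f(\dots,g_i g_{i+1},\dots)$ and evaluating at the identity yield the same expression $X_{\sigma(i)} X_{\sigma(i+1)} f(1,\dots,1)$, with the $g_i$-slot derivative outermost. The commutator $[X_{\sigma(i)},X_{\sigma(i+1)}]$ does not come from reordering these partials; it comes from pairing the summands for $\sigma$ and for $\sigma\cdot(i\ i{+}1)$ in the sum over $S_{p+1}$, which carry opposite signs and together contribute $X_{\sigma(i)}X_{\sigma(i+1)}f - X_{\sigma(i+1)}X_{\sigma(i)}f$. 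This is presumably what you intended, but as written your sentence locates the non-commutativity in the partial derivatives rather than in the antisymmetrization. With that clarified, the residual sign-and-factor bookkeeping (matching the $(-1)^i/(p+1)!$ from $D(df)$ against the $(-1)^{i+j}/p!$ in the Chevalley--Eilenberg formula) is routine, and your fallback via left-invariant forms and the de Rham differential is also a legitimate way to finish.
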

\begin{proof}
	See Houard \cite{Houard}, p.\ 224, Lemma 1.
\end{proof}

Having now defined cochain maps
\[ \smallint \maps C^{\bullet}(\g) \to C^{\bullet}(G) \]
and 
\[ D \maps C^{\bullet}(G) \to C^{\bullet}(\g), \]
the obvious next question is whether or not this defines a homotopy equivalence
of cochain complexes. Indeed, as proved by Houard, they do:
\begin{thm}
	Let $G$ be a Lie group equipped with a left-invariant notion of
	simplices, and $\g$ its Lie algebra. The cochain map
	\[ D \smallint \maps C^{\bullet}(\g) \to C^{\bullet}(\g), \]
	is the identity, whereas the cochain map
	\[ \smallint D \maps C^{\bullet}(G) \to C^{\bullet}(G) \]
	is cochain-homotopic to the identity. Therefore the Lie algebra cochain
	complex $C^\bullet(\g)$ and the Lie group cochain complex
	$C^\bullet(G)$ are homotopy equivalent and thus have isomorphic
	cohomology.
\end{thm}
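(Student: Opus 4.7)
The plan has two largely independent parts corresponding to the two assertions, with the first being a direct local calculation and the second requiring a genuine chain homotopy construction.

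For $D\smallint = \id_{C^\bullet(\g)}$, I would argue by explicit computation at the identity. Given a left-invariant $p$-form $\omega$ on $G$, by definition $(D\smallint\omega)(X_1,\ldots,X_p)$ is the antisymmetrized mixed partial derivative at $s=0$ of the function $(s_1,\ldots,s_p)\mapsto \int_{[1,\,\exp(s_1X_1),\,\exp(s_1X_1)\exp(s_2X_2),\,\ldots]}\omega$. The key technical move is to pass to the cube parameterization $\langle 1,g_1,\ldots,g_1\cdots g_p\rangle \maps [0,1]^p\to G$, which makes the integrand an explicit function of the $s_j$. Expanding in powers of the $s_j$, only the monomial proportional to $s_1s_2\cdots s_p$ survives the mixed partials at zero; that term comes from the linearization of the simplex in $\g$ with vertices $0,\,s_1X_1,\,s_1X_1+s_2X_2,\ldots,\,s_1X_1+\cdots+s_pX_p$. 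Evaluating the left-invariant form $\omega$ on this linearized simplex gives a multiple of $\omega(X_1,\ldots,X_p)\,s_1\cdots s_p$ by antisymmetry, and the $1/p!$ together with the sign-weighted sum in the definition of $D$ produce exactly $\omega(X_1,\ldots,X_p)$, as an elementary check for $p=1,2$ confirms.

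For $\smallint D \simeq \id_{C^\bullet(G)}$, I would produce an explicit chain homotopy $h \maps C^\bullet(G)\to C^{\bullet-1}(G)$ satisfying $\id-\smallint D = dh + hd$. The geometric intuition is that the exponential map supplies a canonical contracting homotopy of $G$ onto $1$, which lifts to a contraction of the bar simplicial set. Concretely, for a homogeneous cochain $F$ of degree $p$ and a combinatorial $(p-1)$-simplex $(g_0,\ldots,g_{p-1})$, I would define $(hF)(g_0,\ldots,g_{p-1})$ as an integral of mixed partial derivatives of $F$ over an auxiliary $p$-dimensional prism that interpolates, through the exponential flow, between the original simplex and an infinitesimal simplex at the identity. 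Stokes' theorem applied in this prism---with the de Rham boundary contributing the two top and bottom faces (giving $F$ on one end and $\smallint DF$ on the other), and the simplicial boundary contributing the side faces (giving $dhF$ and $hdF$)---yields the desired homotopy identity once signs are tracked.

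The main obstacle is exactly this simultaneous bookkeeping: the prism mixes a $(p-1)$-simplicial direction with a $1$-parameter smoothing direction, so the boundary decomposes into faces of two different combinatorial types, and reconciling the signs from the simplicial differential with those from the de Rham differential is delicate. A more conceptual route that avoids some of this combinatorics is to introduce the double complex $K^{p,q}$ of smooth $q$-forms on $G^{p+1}$, equipped with the bar (horizontal) and de Rham (vertical) differentials. Because $G$ is exponential, each $G^{p+1}$ is contractible, so the vertical cohomology collapses to constants; dually, the bar complex of a contractible space is acyclic in positive degrees. Both spectral sequences degenerate at $E_2$, their respective edge complexes are exactly $C^\bullet(G)$ and $C^\bullet(\g)$ (the latter via left-invariant forms), and the zigzag along the double complex yields both $\smallint$ and $D$ together with the homotopy---once one checks that the edge maps reproduce the integration and differentiation formulas from Proposition~\ref{prop:integral}, the homotopy equivalence of $C^\bullet(\g)$ and $C^\bullet(G)$ follows and the final isomorphism on cohomology is automatic.
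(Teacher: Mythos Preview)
The paper does not prove this theorem: its entire proof reads ``See Houard \cite{Houard}, p.\ 234, Proposition 2.'' So there is no in-text argument against which to compare your proposal directly; you are supplying what the paper omits.

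Your outline for $D\smallint = \id$ is the right idea---linearize the simplex at the identity and observe that only the $s_1\cdots s_p$ term survives the mixed partial. Do carry out your ``elementary check for $p=2$'' carefully, though: the infinitesimal simplex with vertices $0,\, s_1X_1,\, s_1X_1+s_2X_2,\ldots$ has $\int\omega = \tfrac{1}{p!}\,s_1\cdots s_p\,\omega(X_1,\ldots,X_p)$ to leading order, and combining this with the paper's $\tfrac{1}{p!}$ in front of the antisymmetrizer yields $\tfrac{1}{p!}\,\omega$, not $\omega$. Either the paper's displayed formula for $D$ carries a spurious $\tfrac{1}{p!}$ (Houard's and van Est's formulas do not), or some normalization needs adjustment---worth flagging rather than asserting the constants work out.

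For $\smallint D \simeq \id$, your double-complex sketch is the standard van Est argument and is the conceptual backbone of Houard's proof, though your description needs tightening: the two differentials are the de Rham $d$ on $\Omega^q(G^{p})$ and the group-cochain $d$; acyclicity on one side is the Poincar\'e lemma (each $G^{p}$ is contractible since $G$ is exponential), and on the other it is the fact that smooth group cochains with coefficients in $C^\infty(G)$ form an acyclic complex (an induced-module argument), not ``the bar complex of a contractible space.'' Your first proposal, the explicit prism homotopy, is closer to what Houard actually writes down, but the sentence ``Stokes' theorem applied in this prism'' elides the real work: $F$ is a group cochain, not a differential form, so one must first pass through the double complex (or an equivalent intermediate) before any version of Stokes applies. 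That passage is precisely the several pages of simplicial combinatorics in Houard that you are gesturing toward but not performing.
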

\begin{proof}
	See Houard \cite{Houard}, p.\ 234, Proposition 2.
\end{proof}

\section{Examples: Explicitly integrating 0-, 1-, 2- and 3-cochains} \label{sec:examples}

In this section, in order to get a feel for the integration procedure given in
Proposition \ref{prop:integral}, we shall explicitly calculate some Lie group
cochains from Lie algebra cochains. The resulting formulas are polynomials on
the Lie group, at least in the nilpotent case. It is important to note,
however, that you do not need to know these explicit formulas in what follows.
It is enough to understand that they exist, and have the properties described
in Section \ref{sec:integratingcochains}. We nonetheless suspect that explicit
formulas will prove useful in future work, so we collect some here.

To facilitate this calculation, we shall also have to explicitly construct some
low-dimensional left-invariant simplices. For 0-cochains and 1-cochains, we
will find the task very easy---we only need our Lie group $G$ to be
exponential. On the other hand, for 2- and 3-cochains, the construction gets
much harder. This complexity shows just how powerful the abstract approach of
the previous section actually is---imagine having to prove Proposition
\ref{prop:integral} through an explicit integration such as those we present
here!

So, for 2- and 3-cochains, we simplify the problem by assuming our Lie algebra
$\g$ to be \define{2-step nilpotent}: all brackets of brackets are zero. This
allows us to use a simplified form of the Baker--Campbell--Hausdorff formula:
\[ \exp(X) \exp(Y) = \exp(X + Y + \half [X,Y]) \]
and the Zassenhaus formula:
\begin{equation} \label{eqn:zassenhaus} 
	\exp(X + Y) = \exp(X) \exp(Y) \exp(-\half [X,Y]) = \exp(X) \exp(Y - \half[X,Y]). 
\end{equation}
Partially, this nilpotentcy assumption just makes our calculations tenable, but
secretly it is because our main application of these ideas will be to 2-step
nilpotent Lie \emph{superalgebras}.

\subsection*{0-cochains} \label{sec:0-cochains}

Let $\omega$ be a Lie algebra 0-cochain: that is, a real number. Then
$\smallint \omega = \omega$ is a Lie group 0-cochain. We can view it as the
integral of $\omega$ over the 0-simplex $[1]$.

\subsection*{1-cochains} \label{sec:1-cochains}

Let $\omega$ be a Lie algebra 1-cochain: that is, a linear map 
\[ \omega \maps \g \to \R, \]
which we extend to a 1-form on $G$ by left translation. We define a Lie group
1-cochain $\smallint \omega$ by integrating $\omega$ over 1-simplices in $G$.
In particular
\[ \smallint \omega(g) = \int_{[1,g]} \omega. \]
Since $G$ is exponential, it has a standard left-invariant notion of 1-simplex,
given by exponentiation. So, if $g = \exp(X)$, then the 1-simplex $[1,g]$ is
given by
\[ [1,g](t) = \exp(tX), \quad 0 \leq t \leq 1.\]
We denote this map by $\varphi$ for brevity. So:
\[ \smallint \omega(g) = \int_0^1 \omega(\dot{\varphi}(t)) \, dt \]
Noting that the derivative of $\varphi$ is 
\[ \dot{\varphi}(t) = \exp(tX) X \]
we have
\[ \smallint \omega(g) = \int_0^1 \omega( \exp(tX) X) \, dt =  \int_0^1 \omega(X) \, dt = \omega(X), \]
where we have used the left invariance of $\omega$. In summary, 
\[ \smallint \omega(g) = \omega(X), \]
for $g = \exp(X)$.

As a check on this, note that because we have proved $\smallint$ is a cochain
map, $\smallint \omega$ should be a cocycle whenever $\omega$ is. So let us
verify this. Assume $d \omega = 0$. That is, for all $X$ and $Y \in \g$, we
have:
\[ d \omega(X,Y) = - \omega([X,Y]) = 0. \]
So the cocycle condition merely says that $\omega$ must vanish on brackets. 

Now compute the coboundary of $\smallint \omega$. As a function of a single
group element, $\smallint \omega$ is an inhomogeneous Lie group 2-cochain. So
we must use the coboundary formula from Chapter \ref{ch:Lie-n-groups}, which we
recall here for convenience: when $f \maps G^p \to \R$ is an inhomogeneous Lie
group $p$-cochain, its coboundary is:
\begin{eqnarray*}
	df(g_1, \dots, g_{p+1}) & = & \sum_{i=1}^p (-1)^i f(g_1, \dots, g_{i-1}, g_i g_{i+1}, g_{i+2}, \dots, g_{p+1}) \\
				&   & + (-1)^{p+1} f(g_1, \dots, g_p) .
\end{eqnarray*}
In the case of $\smallint \omega$, this becomes simply:
\[ d \smallint \omega(g,h) = \smallint \omega(h) - \smallint \omega(gh) + \smallint \omega(g). \]

Finally, check that this coboundary is zero when $\omega$ is a cocycle, and
hence that $\smallint \omega$ is a cocycle whenever $\omega$ is. If $g =
\exp(X)$ and $h = \exp(Y)$, we have 
\[ gh = \exp(X) \exp(Y) = \exp(X + Y + \half[X,Y] + \cdots) \] 
by the Baker--Campbell--Hausdorff formula, and thus:
\[ d \smallint \omega(g, h) = \omega(Y) - \omega(X + Y + \half[X,Y] + \cdots) + \omega(X) = 0 \]
where we have used $\omega$'s linearity along with the cocycle condition that
$\omega$ must vanish on brackets.

\subsection*{2-cochains} \label{sec:2-cochains}

As we have just seen, 0-cochains and 1-cochains are easily integrated on any
exponential Lie group, and the result is always a polynomial Lie group cochain.
Unfortunately, even for 2-cochains, the integration is much more complicated,
and no longer polynomial unless $\g$ is nilpotent. So, at this point, we will
simplify matters by assuming $\g$ to be 2-step nilpotent. To hint at this with
our notation, we will now call our Lie algebra $\n$ and the corresponding
simply-connected Lie group $N$. 

Let $\omega$ be a Lie algebra 2-cochain: that is, a left-invariant 2-form. We
define a Lie group 2-cochain $\smallint \omega$ by integrating $\omega$ over
2-simplices in $N$. In particular:
\[ \smallint \omega(g, h) = \int_{[1,g,gh]} \omega. \]
Now suppose $g = \exp(X)$ and $h = \exp(Y)$. Recall we that obtain the
2-simplex $[1, g, gh]$ using the apex-base construction: we connect each point
of the base $[g,gh] = g[1,h]$ to 1 by the exponential map. Since $[1,h](t) =
\exp(tY)$, the base is parameterized by 
\[ [g,gh](t) = g \exp(tY) = \exp(X + tY + \frac{t}{2} [X,Y]) \]
by the Baker--Campbell--Hausdorff formula. Now let us construct $[1,g,gh]$ by
first constructing a map from the square
\[ \varphi \maps [0,1] \times [0,1] \to N \]
given by 
\[ \varphi(s,t) = \exp(s(X + tY + \frac{t}{2} [X,Y])). \]
At this stage in our general construction, since this map is 1 on the $\{0\}
\times [0,1]$ edge of the square, we would typically quotient the square out by
this edge to obtain a map from the standard 2-simplex. But in practice, we do
not need to do this. Since the integral $\int_{[1,g,gh]} \omega$ is invariant
under reparameterization, we might as well parameterize our 2-simplex
$[1,g,gh]$ with $\varphi$ and integrate over the square to obtain:
\[ \smallint \omega(g,h) = \int_0^1 \int_0^1 \omega(\frac{\partial \varphi}{\partial s}, \frac{\partial \varphi}{\partial t}) \, ds \, dt. \]
Our task has essentially been reduced to computing the partial derivatives of
$\varphi$. Thanks to the left invariance of $\omega$, we may as well
left translate these partials back to 1 once we have them, since:
\[ \omega(\frac{\partial \varphi}{\partial s}, \frac{\partial \varphi}{\partial t}) = \omega(\varphi^{-1} \frac{\partial \varphi}{\partial s}, \varphi^{-1} \frac{\partial \varphi}{\partial t}). \]

Let us begin with $\frac{\partial \varphi}{\partial s}$. Since the exponent of
$\varphi(s,t) = \exp(s(X + tY + \frac{t}{2} [X,Y]))$ is linear in $s$, this is
simply:
\[ \frac{\partial \varphi}{\varphi s}(s,t) = \varphi(s,t)(X + tY \frac{t}{2} [X, Y]). \]
This is a tangent vector at $\varphi(s,t)$. We can left translate it back to 1 to obtain:
\[ \varphi^{-1} \frac{\partial \varphi}{\partial s} = X + tY \frac{t}{2} [X, Y]). \]

The partial with respect to $t$ is slightly harder, because the exponent is not
linear in $t$. To compute this, we need the Zassenhaus formula, Formula
\ref{eqn:zassenhaus}, to separate the terms linear in $t$ from those that are
not. Applying this, we obtain
\[ \varphi(s,t) = \exp(sX) \exp(stY + \frac{st}{2} [X,Y] -\frac{s^2 t}{2}[X,Y]). \]
Differentiating this with respect to $t$ and left translating the result to 1, we get:
\[ \varphi^{-1} \frac{\partial \varphi}{\partial t} = sY + \frac{s - s^2}{2} [X,Y]. \]
Substituting these partial derivatives into the integral, our problem becomes:
\[ \smallint \omega(g, h) = \int_0^1 \int_0^1 \omega(X + tY + \frac{t}{2} [X,Y], \, sY + \frac{s - s^2}{2} [X,Y]) \, ds \, dt. \]
It is now easy enough, using $\omega$'s bilinearity and antisymmetry, to bring
all the polynomial coefficients out and integrate them, obtaining an expression
which is the sum of three terms:
\[ \smallint \omega(g, h) = \half \omega(X,Y) + \frac{1}{12} \omega(X, [X,Y]) - \frac{1}{12} \omega(Y, [X,Y]). \]

Nevertheless, we would like to do this calculation explicitly. In essence, we
use $\omega$'s bilinearity and antisymmetry to our advantage, to write these
coefficients as integrals of various determinants. To wit, the coefficent of
$\omega(X,Y)$ is the integral of the determinant
\[ \left| \begin{array}{cc} 
	1 & t \\ 
	0 & s
\end{array} \right|
= s,
\]
which we obtain from reading off the coefficients of $X$ and $Y$ in the integrand:
\[ \omega(X + tY + \frac{t}{2} [X,Y], \, sY + \frac{s - s^2}{2} [X,Y]). \]
So the coefficient of $\omega(X,Y)$ is $\int_0^1 \int_0^1 s \, ds \, dt =
\half$. We can use this idea to obtain the other two coefficients as well---the
coefficient of $\omega(X,[X,Y])$ is the integral of the determinant
\[ \left| \begin{array}{cc} 
	1 & \frac{t}{2} \\ 
	0 & \frac{s - s^2}{2} 
\end{array} \right|
= \frac{s - s^2}{2},
\]
which is $\frac{1}{12}$, and the coefficient of $\omega(Y, [X,Y])$ is the
integral of the determinant
\[ \left| \begin{array}{cc} 
	t & \frac{t}{2} \\ 
	s & \frac{s - s^2}{2} 
\end{array} \right|
= -\frac{s^2 t}{2}, 
\]
which is $-\frac{1}{12}$.

As a final check on this calculation, let us again show that when $\omega$ is a
cocycle, so is $\smallint \omega$. We know this must be true by Proposition
\ref{prop:integral}, of course, but when checking it explicitly the cocycle
condition seems almost miraculous. Since this final computation is a bit of a
workout, we tuck it into the proof of the following proposition. It is only a
check, and understanding the calculation is not necessary for what follows.

\begin{prop}
	Let $N$ be a simply-connected Lie group whose Lie algebra $\n$ is
	2-step nilpotent. If $\omega$ is a Lie algebra 2-cocycle on $\n$, then
	the Lie group 2-cochain on $N$ defined by 
	\[ \smallint \omega(g, h) = \half \omega(X, Y) + \frac{1}{12} \omega(X - Y, [X,Y]), \]
	where $g = \exp(X)$ and $h = \exp(Y)$, is also a cocycle.
\end{prop}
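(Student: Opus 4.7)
The plan is to directly verify the Lie group cocycle condition for $\smallint \omega$ by explicit computation, exploiting the fact that the 2-step nilpotency of $\n$ drastically truncates the Baker--Campbell--Hausdorff series. By the formula for the coboundary of an inhomogeneous 2-cochain we need to show
\[ \smallint \omega(h, k) - \smallint \omega(gh, k) + \smallint \omega(g, hk) - \smallint \omega(g, h) = 0 \]
for all $g = \exp(X)$, $h = \exp(Y)$, $k = \exp(Z)$ in $N$.

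First I would use Baker--Campbell--Hausdorff in the 2-step nilpotent case to compute
\[ gh = \exp\!\bigl(X + Y + \tfrac{1}{2}[X,Y]\bigr), \qquad hk = \exp\!\bigl(Y + Z + \tfrac{1}{2}[Y,Z]\bigr), \]
and then plug each of these pairs into the explicit formula $\smallint \omega(a,b) = \tfrac{1}{2}\omega(A,B) + \tfrac{1}{12} \omega(A - B, [A,B])$. The essential simplification is that since $\n$ is 2-step nilpotent, any bracket of a bracket vanishes: every correction term like $\tfrac{1}{2}[X,Y]$ is annihilated the moment it appears inside another bracket. This keeps the expansions finite and purely "quadratic plus cubic" in $X, Y, Z$.

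The task then becomes bookkeeping. I would group the resulting terms by type. The purely quadratic contributions (those coming from the $\tfrac{1}{2}\omega(A,B)$ part of each $\smallint \omega$) cancel immediately using only bilinearity and antisymmetry of $\omega$: one checks, for instance, that $\tfrac{1}{2}[\omega(Y,Z) - \omega(X{+}Y,Z) + \omega(X, Y{+}Z) - \omega(X,Y)] = 0$. The cubic contributions split into two kinds: those coming from the $\tfrac{1}{12}\omega(A-B,[A,B])$ pieces, and the "mixed" terms where the $\tfrac{1}{2}[X,Y]$ or $\tfrac{1}{2}[Y,Z]$ correction inside a product like $gh$ or $hk$ meets the outer $\tfrac{1}{2}\omega$. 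Expanding and collecting, the remaining cubic residue should be a constant multiple of
\[ \omega(X,[Y,Z]) - \omega(Y,[X,Z]) + \omega(Z,[X,Y]), \]
which is exactly $d\omega(X,Y,Z)$ up to sign and therefore vanishes by hypothesis.

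The main obstacle is not conceptual but combinatorial: keeping track of all the cubic terms, their signs, and their numerical coefficients $\tfrac{1}{2}, \tfrac{1}{12}, \tfrac{1}{4}$ etc., without error. Of course, we already know the result must hold abstractly from Proposition \ref{prop:integral}, since $\smallint$ is a cochain map; the point of this direct verification is just to see the cancellations "by hand" and confirm the explicit formula. I would therefore organize the computation as a small table listing every monomial of the form $\omega(U, [V,W])$ appearing in each of the four terms, and exhibit the cancellation line by line.
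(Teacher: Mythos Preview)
Your proposal is correct and follows essentially the same approach as the paper: a direct verification of the inhomogeneous coboundary condition using the truncated Baker--Campbell--Hausdorff formula, with the cancellations organized into quadratic and cubic pieces and the latter dispatched by the Lie algebra cocycle condition. One minor caution: the cubic residue does not collapse to a single constant multiple of $d\omega(X,Y,Z)$ in one step---the paper's computation requires applying the cocycle identity twice (together with the observation that $\omega([X,Y],[Z,W])=0$ from cocycle plus nilpotency) before the remaining terms visibly cancel.
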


\begin{proof}

As already noted, this fact is immediate from Proposition \ref{prop:integral},
but we want to ignore this and check it explicitly. To do this, we
repeatedly use the Baker--Campbell--Hausdorff formula, the assumption that $\n$
is 2-step nilpotent, and the cocycle condition on $\omega$. This latter
condition reads:
\[ d \omega(X,Y,Z) = -\omega([X,Y], Z) + \omega([X,Z],Y) - \omega([Y,Z],X) = 0. \]
Note how this resembles the Jacobi identity. We prefer to write it as follows:
\[ \omega(X, [Y,Z]) = \omega([X,Y],Z) + \omega(Y, [X,Z]). \]

To begin, the coboundary of the inhomogeneous Lie group 3-cochain $\smallint
\omega$ is given by:
\[ d \smallint \omega(g, h, k) = \smallint \omega(h, k) - \smallint \omega(gh, k) + \smallint \omega(g, hk) - \smallint \omega(g, h) . \]
Let us assume that
\[ g = \exp(X), \quad h = \exp(Y), \quad k = \exp(Z), \]
so that 
\[ gh = \exp(X + Y + \half [X,Y]), \quad hk = \exp(Y + Z + \half[Y,Z]). \]
Now we repeatedly insert the expression for our Lie group 2-cochain, so the
coboundary of $\smallint \omega$ becomes:
\begin{eqnarray*}
	d \smallint \omega(g, h, k) & = & \half \omega(Y, Z) + \frac{1}{12} \omega(Y - Z, [Y,Z]) \\
				    &   & - \half \omega(X + Y + \half[X,Y], Z) - \frac{1}{12} \omega(X + Y + \half [X,Y] - Z, [X + Y, Z]) \\
				    &   & + \half \omega(X, Y + Z + \half [Y,Z] ) + \frac{1}{12} \omega(X - Y - Z - \half [Y,Z] , [X,Y + Z]) \\
				    &   & - \half \omega(X, Y) - \frac{1}{12} \omega(X - Y, [X,Y]), 
\end{eqnarray*}

Note that the cocycle condition combined with nilpotency implies that any term
in which $\omega$ eats two brackets vanishes. In general,
\[ \omega([X,Y], [Z,W]) = \omega([ [X,Y], Z], W) + \omega(Z, [ [X, Y], W]) = 0, \]
thanks to the fact that brackets of brackets vanish. So, in the expression for
$d \smallint \omega$, we can simplify the fourth term:
\begin{eqnarray*}
	\omega(X + Y + \half [X,Y] - Z, [X + Y, Z]) & = & \omega(X + Y - Z, [X + Y, Z]) + \half \omega([X,Y], [X + Y, Z]) \\
	                                            & = & \omega(X + Y - Z, [X + Y, Z]).
\end{eqnarray*}
Similarly for the sixth term:
\[ \omega(X - Y - Z - \half [Y,Z] , [X,Y + Z]) = \omega(X - Y - Z, [X,Y + Z]). \]
This leaves us with:
\begin{eqnarray*}
	d \smallint \omega(g, h, k) & = & \half \omega(Y, Z) + \frac{1}{12} \omega(Y - Z, [Y,Z]) \\
				    &   & - \half \omega(X + Y + \half[X,Y], Z) - \frac{1}{12} \omega(X + Y - Z, [X + Y, Z]) \\
				    &   & + \half \omega(X, Y + Z + \half [Y,Z] ) + \frac{1}{12} \omega(X - Y - Z, [X,Y + Z]) \\
				    &   & - \half \omega(X, Y) - \frac{1}{12} \omega(X - Y, [X,Y]), 
\end{eqnarray*}
Expanding this using bilinearity, we obtain, after many cancellations:
\begin{eqnarray*}
	d \smallint \omega(g, h, k) & = & - \frac{1}{4} \omega([X,Y],Z) - \frac{1}{12} \omega(X, [Y,Z]) - \frac{1}{12} \omega(Y,[X,Z])  \\
                                    &   & + \frac{1}{4} \omega(X,[Y,Z]) - \frac{1}{12} \omega(Y, [X,Z]) - \frac{1}{12} \omega(Z,[X,Y])  .
\end{eqnarray*}
We combine the two terms with coefficient $1/4$ using the cocycle condition:
\[ -\omega([X,Y],Z) + \omega(X, [Y,Z]) = \omega([Y,X],Z) + \omega(X,[Y,Z]) = \omega(Y,[X,Z]). \]
Similarly, for the first and fourth terms with coefficent $1/12$, we apply the
cocycle condition to get:
\[ \omega(X,[Y,Z]) + \omega(Z,[X,Y]) = \omega(Y, [X, Z]). \]
So, substituting these in, we finally obtain:
\[ d \smallint \omega(g, h, k) = \frac{1}{4} \omega(Y, [X,Z]) - \frac{1}{12} \omega(Y,[X,Z]) - \frac{1}{12} \omega(Y, [X,Z]) - \frac{1}{12} \omega(Y,[X,Z]) = 0, \]
as desired.

\end{proof}

As a corollary, note that we could equally well have said:

\begin{cor}
	Let $N$ be a simply-connected Lie group whose Lie algebra $\n$ is
	2-step nilpotent. If $\omega$ is a Lie algebra 2-cocycle on $\n$, then
	the Lie group 2-cochain on $N$ defined by 
	\[ \smallint \omega(g, h) = \int_0^1 \int_0^1 \omega(X + tY + \frac{t}{2} [X,Y], \, sY + \frac{s - s^2}{2} [X,Y]) \, ds \, dt, \]
	where $g = \exp(X)$ and $h = \exp(Y)$, is also a cocycle.
\end{cor}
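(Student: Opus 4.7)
The plan is to observe that this corollary is essentially a tautological reformulation of the preceding proposition, coupled with the derivation that produced the explicit polynomial formula in the first place. Recall that in the discussion immediately preceding the proposition, we parameterized the 2-simplex $[1,g,gh]$ by the square $[0,1]\times[0,1]$ via the map $\varphi(s,t) = \exp(s(X + tY + \tfrac{t}{2}[X,Y]))$, computed the left-translated partial derivatives as
\[ \varphi^{-1}\tfrac{\partial\varphi}{\partial s} = X + tY + \tfrac{t}{2}[X,Y], \qquad \varphi^{-1}\tfrac{\partial\varphi}{\partial t} = sY + \tfrac{s-s^2}{2}[X,Y], \]
and concluded that the integral
\[ \smallint \omega(g,h) = \int_0^1\!\!\int_0^1 \omega\!\left(X + tY + \tfrac{t}{2}[X,Y],\; sY + \tfrac{s-s^2}{2}[X,Y]\right)\,ds\,dt \]
equals the polynomial expression $\tfrac{1}{2}\omega(X,Y) + \tfrac{1}{12}\omega(X-Y,[X,Y])$. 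So the corollary's integral formula and the proposition's polynomial formula define the very same $2$-cochain on $N$.

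Given this, the proof reduces to a single sentence: the cochain appearing in the corollary coincides with the one appearing in the preceding proposition, hence is a cocycle by that proposition. First I would state explicitly that the equality of the two formulas was already established by the integration carried out in the paragraphs preceding the proposition (using bilinearity and antisymmetry of $\omega$, and the determinant trick for extracting coefficients of $\omega(X,Y)$, $\omega(X,[X,Y])$, and $\omega(Y,[X,Y])$). Then the cocycle conclusion is immediate.

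There is really no obstacle here, since all the work was done in the proof of the preceding proposition. If one wanted a conceptually cleaner proof that bypasses the polynomial form entirely, one could instead invoke Proposition \ref{prop:integral} directly: the formula in the corollary is nothing but $\int_{[1,g,gh]}\omega$ computed in a reparameterization-invariant way on the square, and Proposition \ref{prop:integral} guarantees that $\smallint$ sends cocycles to cocycles via Stokes' theorem. The only thing one would verify in that alternative route is that the square parameterization $\varphi$ has the same image and orientation as the standard $2$-simplex $[1,g,gh]^\ell$ (which follows from the apex-base construction and monotonicity of the smoothing factor), so that the integrals agree. Either presentation makes the corollary a one-line consequence of material already established.
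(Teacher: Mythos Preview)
Your proposal is correct and matches the paper's own proof exactly: the paper simply notes that the integral formula equals the polynomial expression $\tfrac{1}{2}\omega(X,Y) + \tfrac{1}{12}\omega(X-Y,[X,Y])$ by the earlier calculation, so the result follows immediately from the preceding proposition. Your additional remark about the alternative route via Proposition~\ref{prop:integral} is also valid, though the paper does not take that path here.
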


\begin{proof}
	By our calculation in this section,
	\[ \smallint \omega(g, h) = \half \omega(X, Y) + \frac{1}{12} \omega(X - Y, [X,Y]), \]
	so the result is immediate.
\end{proof}

\subsection*{3-cochains} \label{sec:3-cochains}

Let $\omega$ be a 3-cochain on the Lie algebra: that is, a left-invariant
3-form. Judging by our experience in the last section, the complexity of
integrating $\omega$ to a Lie group 3-cochain may be quite high. Indeed, we
shall ultimately avoid writing down $\smallint \omega$, except as an integral.
Nonetheless, we can make this integral quite explicit.

We define the Lie group 3-cochain $\smallint \omega$ to be the integral of
$\omega$ over a 3-simplex in $N$. In particular:
\[ \smallint \omega(g, h, k) = \int_{[1, g, gh, ghk]} \omega. \]
Now assume that $g = \exp(X)$, $h = \exp(Y)$ and $k = \exp(Z)$. Recall we that obtain the
3-simplex $[1, g, gh, ghk]$ using the apex-base construction: we connect each point
of the base $[g,gh,ghk] = g[1,h, hk]$ to 1 by the exponential map. In the last
section, we saw that $[1,h,hk](t,u) = \exp(t(Y + uZ + \frac{u}{2} [Y,Z]))$, so
the base is parameterized by 
\begin{eqnarray*}
	[g,gh,ghk](t,u) & = & g \exp(t(Y + uZ + \frac{u}{2} [Y,Z]) \\
	                & = & \exp(X + tY + tuZ + \frac{tu}{2} [Y,Z] + \half[X, tY + tuZ]),
\end{eqnarray*}
by the Baker--Campbell--Hausdorff formula. Now let us construct $[1,g,gh,ghk]$
by first constructing a map from the cube
\[ \varphi \maps [0,1] \times [0,1] \times [0,1] \to N \]
given by 
\begin{eqnarray*}
	\varphi(s,t,u) & = & \exp(s(X + tY + tuZ + \frac{tu}{2} [Y,Z] + \half[X, tY + tuZ])) \\
	               & = & \exp(sX + stY + stuZ + \frac{st}{2}[X, Y] + \frac{stu}{2} [Y,Z] + \frac{stu}{2} [X,Z]).
\end{eqnarray*}
At this stage in our general construction, since this map is 1 on the $\{0\}
\times [0,1] \times [0,1]$ face of the cube and on the lines $\{s\} \times
\{0\} \times [0,1]$ of constant $s$ on the $[0,1] \times {0} \times [0,1]$ face
of the cube, we could quotient the cube out by these sets to obtain a map from
the standard 3-simplex. But in practice, we do not need to do this. Since the
integral $\int_{[1,g,gh,ghk]} \omega$ is invariant under reparameterization, we
might as well parameterize our 3-simplex $[1,g,gh,ghk]$ with $\varphi$ and
integrate over the cube to obtain:
\[ \smallint \omega (g,h,k) = \int_0^1 \int_0^1 \int_0^1 \omega(\frac{\partial \varphi}{\partial s}, \frac{\partial \varphi}{\partial t}, \frac{\partial \varphi}{\partial u}) \, ds \, dt \, du . \]
Once again, our task has essentially reduced to computing the partial
derivatives of $\varphi$, and once again, thanks to the left invariance of
$\varphi$, we may as well left translate these partials back to 1 once we have
them, since:
\[ \omega(\frac{\partial \varphi}{\partial s}, \frac{\partial \varphi}{\partial t}, \frac{\partial \varphi}{\partial u}) = \omega(\varphi^{-1} \frac{\partial \varphi}{\partial s}, \varphi^{-1} \frac{\partial \varphi}{\partial t}, \varphi^{-1} \frac{\partial \varphi}{\partial u}). \]

Let us begin with $\frac{\partial \varphi}{\partial s}$. Since the exponent of
$\varphi(s,t,u)$ is linear in $s$, this is simply:
\[ \frac{\partial \varphi}{\varphi s}(s,t,u) = \varphi(s,t,u)(X + tY + tuZ + \frac{t}{2} [X, Y] + \frac{tu}{2} [Y,Z] + \frac{tu}{2} [X,Z]). \]
This is a tangent vector at $\varphi(s,t,u)$. We can left translate it back to 1 to obtain:
\[ \varphi^{-1} \frac{\partial \varphi}{\varphi s} = X + tY + tuZ + \frac{t}{2} [X, Y] + \frac{tu}{2} [Y,Z] + \frac{tu}{2} [X,Z]. \]

The partial with respect to $t$ is slightly harder, because the exponent is not
linear in $t$. To compute this, we again need the Zassenhaus formula, Formula
\ref{eqn:zassenhaus}, to separate the terms linear in $t$ from those that are
not. Applying this, we obtain
\[ \varphi(s,t,u) = \exp(sX) \exp(stY + stuZ + \frac{st}{2} [X,Y] + \frac{stu}{2}[Y,Z] + \frac{stu}{2}[X,Z] - \half [sX, stY + stuZ]). \]
Differentiating this with respect to $t$ and left translating the result to 1, we get:
\[ \varphi^{-1} \frac{\partial \varphi}{\partial t} = sY + suZ + \frac{s}{2} [X,Y] + \frac{su}{2}[Y,Z] + \frac{su}{2}[X,Z] - \half [sX, sY + suZ], \]
which we can simplify by combining like terms:
\[ \varphi^{-1} \frac{\partial \varphi}{\partial t} = sY + suZ + \frac{s-s^2}{2} [X,Y] + \frac{su}{2}[Y,Z] + \frac{su - s^2 u}{2}[X,Z] . \]

Finally, the partial with respect to $u$ requires that we separate out the
terms linear in $u$, again using the Zassenhaus formula:
\[ \varphi(s,t,u) = \exp(sX +stY + \frac{st}{2}[X,Y]) \exp(stuZ + \frac{stu}{2}[Y,Z] + \frac{stu}{2}[X,Z] - \half [sX + stY, stuZ]). \]
Differentiating this with respect to $u$ and left translating the result to 1, we get:
\[ \varphi^{-1} \frac{\partial \varphi}{\partial u} = stZ + \frac{st}{2}[Y,Z] + \frac{st}{2}[X,Z] - \half [sX + stY, stZ], \]
which we can again simplify by combining like terms:
\[ \varphi^{-1} \frac{\partial \varphi}{\partial u} = stZ + \frac{st - s^2 t^2}{2}[Y,Z] + \frac{st - s^2 t}{2}[X,Z] . \]

Substituting these partial derivatives into the integral, our problem becomes:
\[
\begin{array}{rcrl}
	\smallint \omega(g, h, k) & = & \displaystyle \int_0^1 \int_0^1 \int_0^1 \omega( & \displaystyle X + tY + tuZ + \frac{t}{2} [X, Y] + \frac{tu}{2} [Y,Z] + \frac{tu}{2} [X,Z], \\
	                                                                             & & & \displaystyle sY + suZ + \frac{s-s^2}{2} [X,Y] + \frac{su}{2}[Y,Z] + \frac{su - s^2 u}{2}[X,Z], \\
	                                                                             & & & \displaystyle stZ + \frac{st - s^2 t^2}{2}[Y,Z] + \frac{st - s^2 t}{2}[X,Z] \quad ) \, ds \, dt \, du . \\
\end{array}
\]
This integral is bad enough. Further evaluating this integral is quite a chore
(the answer involves 17 nonzero terms!), so we stop here. We would only like to
give a hint as to how the evaluation could be done. As in Section
\ref{sec:2-cochains}, thanks to $\omega$'s trilinearity and antisymmetry, the
coefficients of the terms in $\smallint \omega(g, h, k)$ are integrals of
various determinants. For instance, the coefficient of $\omega(X,Y,Z)$ is the
integral of the $3 \times 3$ determinant
\[ \left| \begin{array}{ccc} 
	1 & t & tu \\
	0 & s & su \\
	0 & 0 & st \\
\end{array} \right|
= s^2 t,
\]
which we obtain from reading off the coefficients of $X$, $Y$ and $Z$ in the
integrand. So the coefficient of $\omega(X, Y, Z)$ in $\smallint \omega(g, h,
k)$ is $\int_0^1 \int_0^1 \int_0^1 s^2 t \, ds \, dt \, du = \frac{1}{6}$. The
other terms may be computed similarly.

Just as we shall not attempt to evaluate the integral for $\smallint \omega(g,
h, k)$, we also do not attempt to demonstrate that it gives a Lie group cocycle
when $\omega$ is a Lie algebra cocycle. After all,
Proposition~\ref{prop:integral} does this for us:

\begin{prop}
	Let $N$ be a simply-connected Lie group whose Lie algebra $\n$ is
	2-step nilpotent. If $\omega$ is a Lie algebra 3-cocycle on $\n$, then
	the Lie group 3-cochain on $N$ given by
	\[
	\begin{array}{rcrl}
		\smallint \omega(g, h, k) & = & \displaystyle \int_0^1 \int_0^1 \int_0^1 \omega( & \displaystyle X + tY + tuZ + \frac{t}{2} [X, Y] + \frac{tu}{2} [Y,Z] + \frac{tu}{2} [X,Z], \\
		                                                                             & & & \displaystyle sY + suZ + \frac{s-s^2}{2} [X,Y] + \frac{su}{2}[Y,Z] + \frac{su - s^2 u}{2}[X,Z], \\
		                                                                             & & & \displaystyle stZ + \frac{st - s^2 t^2}{2}[Y,Z] + \frac{st - s^2 t}{2}[X,Z] \quad ) \, ds \, dt \, du , \\
	\end{array}
	\]
	where $g = \exp(X)$, $h = \exp(Y)$ and $k = \exp(Z)$, is also a
	cocycle.
\end{prop}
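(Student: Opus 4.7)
The plan is to invoke Proposition \ref{prop:integral} directly: since that proposition already shows that the map $\smallint \maps C^\bullet(\g) \to C^\bullet(G)$ is a cochain map on any exponential Lie group equipped with the standard left-invariant notion of simplices, all the work has been done abstractly, and what remains is to verify that the explicit triple integral stated here is nothing other than $\smallint \omega(g,h,k)$. Thus the only task is bookkeeping: to exhibit the 3-simplex $[1, g, gh, ghk]$ as a parameterized subset of $N$ and show that integrating $\omega$ over it, viewed as a left-invariant 3-form, reproduces the displayed expression.

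First I would recall the parametrization $\varphi \maps [0,1]^3 \to N$ of the 3-simplex $[1, g, gh, ghk]$ built by the apex-base construction in the preceding computation. Since $\n$ is 2-step nilpotent, the Baker--Campbell--Hausdorff series truncates after the first bracket term, giving the closed-form exponent
\[ \varphi(s,t,u) = \exp\bigl(sX + stY + stuZ + \tfrac{st}{2}[X,Y] + \tfrac{stu}{2}[Y,Z] + \tfrac{stu}{2}[X,Z]\bigr). \]
Then I would compute the three partial derivatives $\varphi^{-1}\partial_s\varphi$, $\varphi^{-1}\partial_t\varphi$, $\varphi^{-1}\partial_u\varphi$ using the Zassenhaus formula \eqref{eqn:zassenhaus} to strip off variables that appear nonlinearly in the exponent; these are exactly the three tangent vectors in $\n$ that appear as the arguments of $\omega$ in the displayed integrand.

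Next I would invoke left-invariance of $\omega$ to replace $\omega(\partial_s\varphi, \partial_t\varphi, \partial_u\varphi)$ with $\omega$ evaluated on the left-translated partials, and invoke reparametrization invariance of the integral to replace the integration domain $\Delta^3$ (quotient of the cube) with the cube $[0,1]^3$ itself---legitimate because $\varphi$ is constant on the faces being quotiented, so those faces contribute measure zero. This produces exactly the displayed triple integral, and therefore by Proposition \ref{prop:integral} it is a cocycle whenever $\omega$ is.

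The main obstacle, though a modest one, is the Zassenhaus manipulation for $\partial_t\varphi$ and $\partial_u\varphi$: one must carefully split the exponent into a piece independent of the variable being differentiated and a piece linear in it, absorbing the quadratic correction $-\tfrac{1}{2}[\,\cdot\,,\,\cdot\,]$ into the coefficient of the bracket terms. Two-step nilpotency is essential here, since it kills all iterated brackets in the Baker--Campbell--Hausdorff and Zassenhaus series; without it the integrand would acquire infinitely many bracket-of-bracket terms. Once this algebraic step is handled, no further computation is needed, since the cocycle property is inherited from $\omega$ through Stokes' theorem, which is the content of Proposition \ref{prop:integral}.
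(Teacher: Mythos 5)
Your proposal is correct and matches the paper's approach exactly: the paper's proof is simply ``immediate upon combining Proposition \ref{prop:integral} with the above discussion,'' where ``the above discussion'' is precisely the computation you outline---the apex-base parametrization on the cube, the truncated Baker--Campbell--Hausdorff and Zassenhaus manipulations enabled by 2-step nilpotency, left-invariance, and reparametrization invariance. Nothing further is needed.
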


\begin{proof}
	This is immediate upon combining Proposition \ref{prop:integral} with
	the above discussion.
\end{proof}

\section{The Heisenberg Lie 2-group} \label{sec:heisenberg2group}

In Section \ref{sec:sec:heisenberg2alg}, we met the Heisenberg Lie algebra,
$\mathfrak{H} = \mathrm{span}(p,q,z)$. This is the 3-dimensional Lie algebra
where the generators $p$, $q$ and $z$ satisfy relations which mimic the
canonical commutation relations from quantum mechanics:
\[ [p,q] = z, \quad [p,z] = 0, \quad [q,z] = 0 . \]
As one can see from the above relations, $\mathfrak{H}$ is 2-step nilpotent:
brackets of brackets are zero. 

We then met the Lie 2-algebra generalization, the Heisenberg Lie 2-algebra:
\[ \mathfrak{Heisenberg} = \strng_\gamma(\mathfrak{H}), \]
built by extending $\mathfrak{H}$ with the 3-cocycle $\gamma = p^* \wedge q^*
\wedge z^*$, where $p^*$, $q^*$, and $z^*$ is the basis dual to $p$, $q$ and
$z$.

It is easy to construct a Lie group $H$ with Lie algebra $\mathfrak{H}$. Just
take the group of $3 \times 3$ upper triangular matrices with units down the
diagonal:
\[ H = \left\{ \left(
		\begin{matrix} 
			1 & a & b \\ 
			0 & 1 & c \\
			0 & 0 & 1 \\
		\end{matrix} 
                \right)
		: a, b, c \in \R \right\} . 
\]
This is an exponential Lie group:
\[ \begin{array}{cccc} \exp \maps & \mathfrak{H} & \to     &  H \\
		                  & ap + cq + bz & \mapsto & \left( \begin{matrix} 1 & a & b \\ 0 & 1 & c \\ 0 & 0 & 1 \\ \end{matrix} \right) . 
   \end{array}
\]
So we can apply Proposition \ref{prop:standard} to construct the standard
left-invariant notion of simplices in $H$, and Proposition \ref{prop:integral}
to integrate the Lie algebra 3-cocycle $\gamma$ to a Lie group 3-cocycle
$\smallint \gamma$. We therefore get a Lie 2-group, the \define{Heisenberg Lie
2-group}:
\[ \mathrm{Heisenberg} = \String_{\smallint \gamma}(H) . \]

\chapter{Supergeometry and supergroups} \label{ch:supergeometry}

We would now like to generalize our work from Lie algebras and Lie groups to
Lie superalgebras and supergroups. Of course, this means that we need a way to
talk about Lie supergroups, their underlying supermanifolds, and the maps
between supermanifolds. This task is made easier because we do not need the
full machinery of supermanifold theory. Because our supergroups will be
exponential, we only need to work with supermanifolds that are diffeomorphic
to super vector spaces. Nonetheless, let us begin with a sketch of
supermanifold theory from the perspective that suits us best, which could
loosely be called the `functor of points' approach.

The rough geometric picture one should have of a supermanifold $M$ is that of
an ordinary manifold with infinitesimal `superfuzz', or `superdirections',
around each point. At the infinitesimal level, an ordinary manifold is merely a
vector space---its tangent space at a point. In contrast, the tangent space to
$M$ has a $\Z_2$-grading: tangent vectors which point along the underlying
manifold of $M$ are taken to be even, while tangent vectors which point along
the superdirections are taken to be odd.

At least infinitesimally, then, all supermanifolds look like super vector spaces, 
\[ \R^{p|q} : = \R^p \oplus \R^q , \] 
where $\R^p$ is even and $\R^q$ is odd.  And indeed, just as ordinary manifolds
are locally modeled on ordinary vector spaces, $\R^n$, supermanifolds are
locally modeled on super vector spaces, $\R^{p|q}$. But before we sketch how
this works, let us introduce our main tool: the so-called `functor of points'.

The basis for the functor of points is the Yoneda Lemma, a very general and
fundamental fact from category theory:
\begin{YL} 
	Let $C$ be a category. The functor  
	\[ \begin{array}{ccc} 
		 C & \to & \Fun(C^{\op}, \Set) \\
		 x & \mapsto & \Hom(-,x) \\
	\end{array} \]
	is a full and faithful embedding of $C$ into the category
	$\Fun(C^{\op}, \Set)$ of contravariant functors from $C$ to $\Set$.
	This embedding is called the \define{Yoneda embedding}. 
\end{YL}
The upshot of this lemma is that, without losing any information, we can
replace an object $x$ by a functor $\Hom(-,x)$, and a morphism $f \maps x
\to y$ by a natural transformation 
\[ \Hom(-,f) \maps \Hom(-,x) \Rightarrow \Hom(-,y) \] 
of functors. Each component of this natural transformation is the `obvious'
thing: for an object $z$, the function
\[ \Hom(z,f) \maps \Hom(z,x) \to \Hom(z,y) \]
just takes the morphism $g \maps z \to x$ to the morphism $fg \maps z \to y$.

On a more intuitive level, the functor of points tells us how to reconstruct a
`space' $x \in C$ by probing it with \emph{every other space} $z \in C$---that
is, by looking at all the ways in which $z$ maps into $x$, which forms the set
$\Hom(z,x)$. The true power of the functor of points, however, arises when we
can reconstruct $x$ without having to probe it will \emph{every} $z$, but with
$z$ from a manageable subcategory of $C$. And while it deviates slightly from the
spirit of the Yoneda Lemma, we can shrink this subcategory still further if we
allow $\Hom(z,x)$ to have more structure than that of a mere set. In fact, when
$M$ is a supermanifold, we will consider probes $z$ for which $\Hom(z,M)$ is an
ordinary manifold. 

For what $z$ is $\Hom(z,M)$ a manifold? One clue is that when $M$ is an
ordinary manifold, there is a manifold of ways to map a point into $M$:
\[ M \iso \Hom(\R^0, M) , \]
but the space of maps from any higher-dimensional manifold to $M$ is generally
not a finite-dimensional manifold in its own right. Similarly, when $M$ is a
supermanifold, there is an ordinary manifold of ways to map a point into $M$:
\[ M_{\R^{0|0}} = \Hom(\R^{0|0}, M) . \]
One should think of this as the ordinary manifold one gets from $M$ by
forgetting about the superdirections. But thanks to the superdirections, we now we have
more ways of obtaining a manifold of maps to $M$: there is an ordinary manifold
of ways to map a point with $q$ superdirections into $M$:
\[ M_{\R^{0|q}} = \Hom(\R^{0|q}, M) . \]
So, for every supermanifold $M$, we get a functor:
\[ \begin{array}{cccc} 
	\Hom(-,M) \maps & \SuperPoints^\op & \to     & \Man \\
		        & \R^{0|q}         & \mapsto & \Hom(\R^{0|q}, M) 
\end{array}
\]
where $\SuperPoints$ is the category consisting of supermanifolds of the form
$\R^{0|q}$ and smooth maps between them. Of course, we have not yet said what
this category is precisely, but one should think of $\R^{0|q}$ as a
supermanifold whose underlying manifold consists of one point, with $q$
infinitesimal superdirections---a `superpoint'. Because this lets us
probe the superdirections of $M$, this functor has enough information to completely
reconstruct $M$. We will go further, however, and sketch how to define $M$ as a
certain kind of functor from $\SuperPoints^\op$ to $\Man$.

This approach goes back to Schwarz \cite{Schwarz} and Voronov \cite{Voronov},
who used it to formalize the idea of `anticommuting coordinates' used in the
physics literature. Since Schwarz, a number of other authors have developed the
functor of points approach to supermanifolds, most recently Sachse
\cite{Sachse} and Balduzzi, Carmeli and Fioresi \cite{BCF}. We will follow
Sachse, who defines supermanifolds entirely in terms of their functors of
points, rather than using sheaves.

\section{Supermanifolds} \label{sec:supermanifolds}

\subsection{Super vector spaces as supermanifolds}

Let us now dive into supermathematics. Our main need is to define smooth maps
between super vector spaces, but we will sketch the full definition of
supermanifolds and the smooth maps between them. Just as an ordinary manifold
is a space that is locally modeled on a vector space, a supermanifold is
locally modeled on a super vector space. Since we will define a supermanifold
$M$ as a functor
\[ M \maps \SuperPoints^\op \to \Man , \]
we first need to say how to think of the simplest kind of supermanifold, a
super vector space $V$, as such a functor:
\[ V \maps \SuperPoints^\op \to \Man . \]
But first we owe the reader a definition of the category of superpoints.

Recall from Section \ref{sec:superalgebra} that a \define{super vector
space} is a $\Z_2$-graded vector space $V = V_0 \oplus V_1$ where $V_0$ is
called the \define{even} part, and $V_1$ is called the \define{odd} part.
There is a symmetric monoidal category $\SuperVect$ which has:
\begin{itemize}
	\item $\Z_2$-graded vector spaces as objects;
	\item Grade-preserving linear maps as morphisms;
	\item A tensor product $\tensor$ that has the following grading: if $V
		= V_0 \oplus V_1$ and $W = W_0 \oplus W_1$, then $(V \tensor
		W)_0 = (V_0 \tensor W_0) \oplus (V_1 \tensor W_1)$ and 
              $(V \tensor W)_1 = (V_0 \tensor W_1) \oplus (V_1 \tensor W_0)$;
	\item A braiding
		\[ B_{V,W} \maps V \tensor W \to W \tensor V \]
		defined as follows: $v \in V$ and $w \in W$ are of grade $|v|$
		and $|w|$, then
		\[ B_{V,W}(v \tensor w) = (-1)^{|v||w|} w \tensor v. \]
\end{itemize}
The braiding encodes the `the rule of signs': in any calculation, when two odd
elements are interchanged, we introduce a minus sign. We write $\R^{p|q}$ for
the super vector space with even part $\R^p$ and odd part $\R^q$.

We define a \define{supercommutative superalgebra} to be a commutative algebra $A$
in the category $\SuperVect$.  More concretely, it is a real, associative
algebra $A$ with unit which is $\Z_2$-graded:
\[ A = A_0 \oplus A_1, \]
and is graded-commutative. That is:
\[ ab = (-1)^{|a||b|} ba, \]
for all homogeneous elements $a, b \in A$, as required by the rule of signs. We
define a \define{homomorphism of superalgebras} $f \maps A \to B$ to be an
algebra homomorphism that respects the grading.  So, there is a category
$\SuperAlg$ with supercommutative superalgebras as objects, and homomorphisms
of superalgebras as morphisms. Henceforth, we will assume all our superalgebras
to be supercommutative unless otherwise indicated.

A particularly important example of a supercommutative superalgebra is a
\define{Grassmann algebra}: a finite-dimensional exterior algebra
\[ A = \Lambda \R^n, \]
equipped with the grading:
\[ A_0 = \Lambda^0 \R^n \oplus \Lambda^2 \R^n \oplus \cdots, \quad A_1 = \Lambda^1 \R^n \oplus \Lambda^3 \R^n \oplus \cdots . \]
Let us write $\GrAlg$ for the category with Grassmann algebras as objects and
homomorphisms of superalgebras as morphisms. 

In fact, the Grassmann algebras are essential for our approach to supermanifold
theory, because:
\[ \GrAlg = \SuperPoints^\op \]
so rather than thinking of a supermanifold $M$ as a contravariant functor from
$\SuperPoints$ to $\Man$, we can view a supermanifold as a covariant functor:
\[ M \maps \GrAlg \to \Man \]
To see why this is sensible, recall that a smooth map between ordinary manifolds
\[ \varphi \maps M \to N \]
is the same as a homomorphism between their algebras of smooth functions which
goes the other way:
\[ \varphi^* \maps C^\infty(N) \to C^\infty(M) \]
By analogy, we expect something similar to hold for supermanifolds. In particular, 
a smooth map from a superpoint:
\[ \varphi \maps \R^{0|q} \to M \]
ought to be to the same as a homomorphism of their `superalgebras of smooth
functions' which points the other way:
\[ \varphi^* \maps C^{\infty}(M) \to C^{\infty}(\R^{0|q}) . \]
But since $\R^{0|q}$ is a purely odd super vector space, we \emph{define} its
algebra of smooth functions to be $\Lambda(\R^q)^*$.  Intuitively, this is
because $\R^{0|q}$ is a supermanifold with $q$ `odd, anticommuting
coordinates', given by the standard projections:
\[ \theta^1, \dots, \theta^q \maps \R^q \to \R , \] 
so a `smooth function' $f$ on $\R^{0|q}$ should have a `power series expansion'
that looks like:
\[ f = \sum_{i_1 < i_2 < \cdots < i_k} f_{i_1 i_2 \dots i_k} \theta^{i_1} \wedge \theta^{i_2} \wedge \cdots \wedge \theta^{i_k} . \]
where the coefficients $f_{i_1 i_2 \dots i_k}$ are real. Thus $f$ is precisely
an element of $\Lambda(\R^q)^*$. Thus, we \emph{define}
\[ \Hom(\R^{0|q}, M) = \Hom(C^\infty(M), \Lambda(\R^q)^*) . \]
In this way, rather than thinking of $M$ as a functor:
\[ \begin{array}{cccc} 
	\Hom(-,M) \maps & \SuperPoints^\op & \to     & \Man \\
		        & \R^{0|q}         & \mapsto & \Hom(\R^{0|q}, M) 
\end{array}
\]
where $\Hom$ is in the category of supermanifolds (though we have not defined
this), we think of $M$ as a functor:
\[ \begin{array}{cccc} 
	\Hom(C^{\infty}(M),-) \maps & \GrAlg  & \to     & \Man \\
	                            & A & \mapsto & \Hom(C^\infty(M),A)
\end{array}
\]
where $\Hom$ is in the category of superalgebras (which we have defined, though
we have not defined $C^\infty(M)$).

Since we have just given a slew of definitions, let us bring the discussion
back down to earth with a concise summary:
\begin{itemize}
	\item Every supermanifold is a functor: 
		\[ M \maps \GrAlg \to \Man , \]
		though not every such functor is a supermanifold.
	\item Every smooth map of supermanifolds is a natural transformation: 
	       \[ \varphi \maps M \to N , \]
	      though not every such natural transformation is a smooth map of
	      supermanifolds. 
\end{itemize}
Next, let us introduce some concise notation:
\begin{itemize}
	\item Let us write $M_A$ for the value of $M$ on the Grassmann
	       algebra $A$, and call this the \define{$A$-points}
	       of $M$.
	\item Let us write $M_f \maps M_A \to M_{B}$ for the smooth
	       map induced by a homomorphism $f \maps A \to B$.
	\item Finally, we write $\varphi_A \maps M_A \to N_A$ for the smooth
		map which the natural transformation $\varphi$ gives between
		the $A$-points. We call $\varphi_A$ a \define{component} of the
		natural transformation $\varphi$.
\end{itemize}

With this background, we can now build up the theory of supermanifolds in
perfect analogy to the theory of manifolds. First, we need to say how to think
of our model spaces, the super vector spaces, as supermanifolds. 

Indeed, given a finite-dimensional super vector space $V$, the \define{
supermanifold associated to $V$}, or just the \define{supermanifold $V$} to be
the functor:
\[ V \maps \GrAlg \to \Man \]
which takes:
\begin{itemize}
	\item each Grassmann algebra $A$ to the vector space:
		\[ V_A = (A \tensor V)_0 = A_0 \tensor V_0 \, \oplus \, A_1 \tensor V_1 \]
		regarded as a manifold in the usual way;
	\item each homomorphism $f \maps A \to B$ of Grassmann algebras to the
		linear map $V_f \maps V_A \to V_B$ that is the identity on
		$V$ and $f$ on $A$:
		\[ V_f = (f \tensor 1)_0 \maps (A \tensor V)_0 \to (B \tensor V)_0 . \]
		This map, being linear, is also smooth.
\end{itemize}
We take this definition because, roughly speaking, the set of $A$-points is
the set of homomorphisms of superalgebras, $\Hom(C^\infty(V),
A)$. By analogy with the ordinary manifold case, we expect that any such
homomorphism is determined by its restriction to the `dense subalgebra' of
polynomials: 
\[ \Hom(C^\infty(V), A) \iso \Hom(\Sym(V^*), A) , \]
though here we are being very rough, because we have not assumed any topology
on our superalgebras, so the term `dense subalgebra' is not meaningful.
Since $\Sym(V^*)$ is the free supercommutative superalgebra on $V^*$, a
homomorphism out of it is the same as a linear map of super vector spaces:
\[ \Hom(\Sym(V^*), A) \iso \Hom(V^*, A) , \]
where the first $\Hom$ is in $\SuperAlg$ and the second $\Hom$ is in
$\SuperVect$.  Finally, because $V$ is finite-dimensional and linear maps of
super vector spaces preserve grading, this last $\Hom$ is just:
\[ \Hom(V^*, A) \iso V_0 \tensor A_0 \, \oplus \, V_1 \tensor A_1 . \]
which, up to a change of order in the factors, is how we defined $V_A$. This
last set is a manifold in an obvious way: it is an ordinary,
finite-dimensional, real vector space. In fact, it is just the even part of the
super vector space $A \tensor V$:
\[ V_{A} = (A \tensor V)_0 , \]
as we have noted in our definition. 

In fact, $V_A = A_0 \tensor V_0 \, \oplus \, A_1 \tensor V_1$ is more than a
mere vector space---it is an $A_0$-module. Moreover, given any linear map of
super vector spaces:
\[ L \maps V \to W \]
we get an $A_0$-module map between the $A$-points in a natural way:
\[ L_A = (1 \tensor L)_0 \maps (A \tensor V)_0 \to (A \tensor W)_0 . \]
Indeed, $L$ induces a natural transformation between the supermanifold $V$ and
the supermanifold $W$. That is, given any homomorphism $f \maps A \to B$ of
Grassmann algebras, the following square commutes:
\[ \xymatrix{   V_A \ar[r]^{L_A} \ar[d]_{V_f} & W_A \ar[d]^{W_f} \\
		V_B \ar[r]_{L_B}              & W_B } 
\]
We therefore have a functor 
\[ \SuperVect \to \Fun(\GrAlg, \Man) \]
which takes super vector spaces to their associated supermanifolds, and linear
transformations to natural transformations between supermanifolds. For future
reference, we note this fact in a proposition:
\begin{prop} \label{prop:supervectormanifold}
	There is a faithful functor:
	\[ \SuperVect \to \Fun(\GrAlg, \Man) \]
	that takes a super vector space $V$ to the supermanifold $V$ whose
	$A$-points are:
	\[ V_A = (A \tensor V)_0, \]
	and takes a linear map of super vector spaces:
	\[ L \maps V \to W \]
	to the natural transformation whose components are:
	\[ L_A = (1 \tensor L)_0 \maps (A \tensor V)_0 \to (A \tensor W)_0 . \]
	In the above, $A$ is a Grassmann algebra and the tensor product takes
	place in $\SuperVect$. 
\end{prop}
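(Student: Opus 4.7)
The plan is to verify the three claims of the proposition in turn: that each super vector space $V$ yields a functor $V \maps \GrAlg \to \Man$, that each linear map $L \maps V \to W$ of super vector spaces yields a natural transformation between the associated functors, and that the resulting assignment $\SuperVect \to \Fun(\GrAlg, \Man)$ is itself a faithful functor. None of the three is deep, but bookkeeping with the two gradings (on $A$ and on $V$) requires some care, and faithfulness requires a small observation about recovering $L$ from its components.

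First I would check that $V_A = (A \tensor V)_0 = A_0 \tensor V_0 \oplus A_1 \tensor V_1$ is a smooth manifold, which is immediate because it is a finite-dimensional real vector space, and that $V_f = (f \tensor 1)_0$ is a smooth map, which is immediate because it is linear. Functoriality is routine: $V_{\id_A} = (\id_A \tensor 1_V)_0 = \id_{V_A}$, and $V_{g \circ f} = ((g \circ f) \tensor 1)_0 = (g \tensor 1)_0 \circ (f \tensor 1)_0 = V_g \circ V_f$. So each $V$ yields a functor to $\Man$.

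Next I would verify naturality. Given $L \maps V \to W$ and a homomorphism $f \maps A \to B$ of Grassmann algebras, the square
\[
\xymatrix{ V_A \ar[r]^{L_A} \ar[d]_{V_f} & W_A \ar[d]^{W_f} \\ V_B \ar[r]_{L_B} & W_B }
\]
commutes because both $W_f \circ L_A$ and $L_B \circ V_f$ are the restriction to $(A \tensor V)_0$ of the map $f \tensor L \maps A \tensor V \to B \tensor W$; the grading is preserved because $L$ is even as a morphism in $\SuperVect$ and $f$ preserves parity. Functoriality of the assignment $L \mapsto \{L_A\}$ then follows from a parallel calculation: $(1_V)_A = \id_{V_A}$ and $(M \circ L)_A = M_A \circ L_A$.

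The final, and slightly more substantive, step is faithfulness. Suppose $L, L' \maps V \to W$ induce the same natural transformation, so $L_A = L'_A$ for every Grassmann algebra $A$. Taking $A = \R = \Lambda \R^0$ gives $L_0 = L'_0$ on $V_0$. To recover the odd part, take $A = \Lambda \R^1$, with odd generator $\theta$; then for any $v \in V_1$ the element $\theta \tensor v \in A_1 \tensor V_1 \subseteq V_A$ satisfies $L_A(\theta \tensor v) = \theta \tensor L_1(v)$ and similarly for $L'$, so $L_1 = L'_1$. Hence $L = L'$, and the functor is faithful. The main place one could slip up is keeping the two gradings straight when taking the even part of the tensor product, but once one writes out $(A \tensor V)_0 = A_0 \tensor V_0 \oplus A_1 \tensor V_1$ explicitly, everything falls into place.
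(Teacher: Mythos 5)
Your proof is correct, and it is more self-contained than the one in the paper: there, functoriality is dismissed as ``easy to check'' and faithfulness is deferred to a general result of Sachse (his Proposition 3.1), whereas you prove faithfulness directly. Your key observation---that evaluating the natural transformation at $A = \R = \Lambda\R^0$ recovers $L_0$ on $V_0$, while evaluating at $A = \Lambda\R^1$ and testing on elements $\theta \tensor v$ with $v \in V_1$ recovers $L_1$ (since $v \mapsto \theta \tensor v$ is injective)---is exactly the right elementary argument, and it makes transparent \emph{why} two Grassmann algebras suffice to separate even morphisms of super vector spaces. The routine verifications of functoriality in $f$ and naturality of $L_A$ (both maps in the naturality square being restrictions of $f \tensor L$) match what the paper leaves implicit. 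The only thing your approach gives up relative to the citation is generality: Sachse's result covers a broader class of functors of points, but for the proposition as stated your direct argument is complete and arguably preferable in a self-contained exposition.
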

\begin{proof}
	It is easy to check that this defines a functor. Faithfulness follows
	from a more general result in Sachse \cite{Sachse}, c.f.\ Proposition
	3.1.
\end{proof}
\noindent While this functor is faithful, it is far from full; in particular,
it misses all of the `smooth maps' between super vector spaces which do not
come from a linear map. We define these additional maps now.

Infinitesimally, all smooth maps should be like a linear map $L \maps V \to
W$, so given two finite-d imensional super vector spaces $V$ and $W$,
we define a \define{smooth map between super vector spaces}:
\[ \varphi \maps V \to W , \]
to be a natural transformation between the supermanifolds $V$ and $W$ such that
the derivative
\[ (\varphi_A)_* \maps T_x V_A \to T_{\varphi_A(x)} W_A \]
is $A_0$-linear at each $A$-point $x \in V_A$, where the $A_0$-module structure
on each tangent space comes from the canonical identification of a vector space
with its tangent space:
\[ T_x V_A \iso V_A, \quad T_{\varphi(x)} W_A \iso W_A . \]
Note that each component $\varphi_A \maps V_A \to W_A$ is smooth in the
ordinary sense, because by virtue of living in the category of smooth
manifolds. We say that a smooth map $\varphi_A \maps V_A \to W_A$ whose
derivative is $A_0$-linear at each point is \define{$A_0$-smooth} for short.

Finally, note that there is a supermanifold:
\[ 1 \maps \GrAlg \to \Man , \]
which takes each Grassmann algebra to the one-point manifold. We call this the
\define{one-point supermanifold}, and note that it is the supermanifold
associated to the super vector space $\R^{0|0}$. Just as $\R^{0|0}$ is the
terminal object in $\SuperVect$, 1 is the terminal object in
$\Fun(\GrAlg,\Man)$, as well as the smaller category of supermanifolds, whose
definition we now describe.  

\subsection{Supermanifolds in general}

The last section treated the special kind of a supermanifold of greatest
interest to us: the supermanifold associated to a super vector space.

Nonetheless, we now sketch how to define a general supermanifold, $M$. Since
$M$ will be locally isomorphic to a super vector space $V$, it helps to have
local pieces of $V$ to play the same role that open subsets of $\R^n$ play for
ordinary manifolds.  So, fix a super vector space $V$, and let $U \subseteq
V_0$ be open. The \define{superdomain} over $U$ is the functor: 
\[ \mathcal{U} \maps \GrAlg \to \Man \]
that takes each Grassmann algebra $A$ to
\[ \mathcal{U}_A = V_{\epsilon_A}^{-1}(U) \]
where $\epsilon_A \maps A \to \R$ the projection of the Grassmann algebra
$A$ that kills all nilpotent elements. We say that $\mathcal{U}$ is a
\define{superdomain in $V$}, and write $\mathcal{U} \subseteq V$.

If $\mathcal{U} \subseteq V$ and $\mathcal{U}' \subseteq W$ are two
superdomains in super vector spaces $V$ and $W$, a \define{smooth map of
superdomains} is a natural transformation:
\[ \varphi \maps \mathcal{U} \to \mathcal{U}' \]
such that for each Grassmann algebra $A$, the component on
$A$-points is smooth:
\[ \varphi_{A} \maps \mathcal{U}_A \to {\mathcal{U}'}_A . \]
and the derivative:
\[ (\varphi_A)_* \maps T_x \mathcal{U}_A \to T_{\varphi_A(x)} {\mathcal{U}'}_A \]
is $A_0$-linear at each $A$-point $x \in \mathcal{U}_A$, where the  
$A_0$-module structure on each tangent space comes from the canonical
identification with the ambient vector spaces:
\[ T_x \mathcal{U}_A \iso V_A, \quad T_{\varphi(x)} \mathcal{U}'_A \iso W_A . \]
Again, we say that a smooth map $\varphi_{A} \maps \mathcal{U}_A \to
{\mathcal{U}'}_A$ whose derivative is $A_0$-linear at each point is
\define{$A_0$-smooth} for short.

At long last, a \define{supermanifold} is a functor
\[ M \maps \GrAlg \to \Man \]
equipped with an atlas 
\[ (\mathcal{U}_\alpha, \varphi_\alpha \maps \mathcal{U} \to M) , \]
where each $\mathcal{U}_\alpha$ is a superdomain, each $\varphi_\alpha$ is a
natural transformation, and one can define transition functions that are smooth
maps of superdomains. 

Finally, a \define{smooth map of supermanifolds} is a natural
transformation:
\[ \psi \maps M \to N \]
which induces smooth maps between the superdomains in the atlases.
Equivalently, each component 
\[ \varphi_A \maps M_A \to N_A \] 
is \define{$A_0$-smooth}: it is smooth and its derivative
\[ (\varphi_A)_* \maps T_x M_A \to T_{\varphi_A(x)} N_A \]
is $A_0$-linear at each $A$-point $x \in M_A$, where the $A_0$-module structure
on each tangent space comes from the superdomains in the atlases. Thus, there
is a category $\SuperMan$ of supermanifolds. See Sachse \cite{Sachse} for more
details. 

\section{Supergroups from nilpotent Lie superalgebras} \label{sec:supergroups}

We now describe a procedure to integrate a nilpotent Lie superalgebra to a Lie
supergroup. This is a partial generalization of Lie's Third Theorem, which
describes how any Lie algebra can be integrated to a Lie group. In fact, the
full theorem generalizes to Lie supergroups \cite{Tuynman}, but we do not need
it here.

Recall from Section \ref{sec:cohomology} that a \define{Lie superalgebra} $\g$
is a Lie algebra in the category of super vector spaces. More concretely, it is
a super vector space $\g = \g_0 \oplus \g_1$, equipped with a
graded-antisymmetric bracket:
\[ [-,-] \maps \Lambda^2 \g \to \g , \]
which satisfies the Jacobi identity up to signs:
\[ [X, [Y,Z]] = [ [X,Y], Z] + (-1)^{|X||Y|} [Y, [X, Z]]. \]
for all homogeneous $X, Y, Z \in \g$.  A Lie superalgebra $\n$ is called
\define{k-step nilpotent} is any $k$ nested brackets vanish, and it is called
\define{nilpotent} if it is $k$-step nilpotent for some $k$. Nilpotent Lie
superalgebras can be integrated to a unique supergroup $N$ defined on the same
underlying super vector space $\n$.

A \define{Lie supergroup}, or \define{supergroup}, is a group object in the
category of supermanifolds. That is, it is a supermanifold $G$ equipped with
the following maps of supermanifolds:
\begin{itemize}
	\item \define{multiplication}, $m \maps G \times G \to G$;
	\item \define{inverse}, $\inv \maps G \to G$;
	\item \define{identity}, $\id \maps 1 \to G$, where $1$ is the one-point
		supermanifold;
\end{itemize}
such that the following diagrams commute, encoding the usual group axioms:
\begin{itemize}
	\item the associative law:
	\[ \vcenter{
	\xymatrix{ &   G \times G \times G \ar[dr]^{1 \times m}
	   \ar[dl]_{m \times 1} \\
	 G \times G \ar[dr]_{m}
	&&  G \times G \ar[dl]^{m}  \\
	&  G }}
	\]
	\item the right and left unit laws:
	\[ \vcenter{
	\xymatrix{
	 I \times G \ar[r]^{\id \times 1} \ar[dr]
	& G \times G \ar[d]_{m}
	& G \times I \ar[l]_{1 \times \id} \ar[dl] \\
	& G }}
	\]
	\item the right and left inverse laws:
	\[
	\xy (-12,10)*+{G \times G}="TL"; (12,10)*+{G \times G}="TR";
	(-18,0)*+{G}="ML"; (18,0)*+{G}="MR"; (0,-10)*+{1}="B";
	     {\ar_{} "ML";"B"};
	     {\ar^{\Delta} "ML";"TL"};
	     {\ar_{\id} "B";"MR"};
	     {\ar^{m} "TR";"MR"};
	     {\ar^{1 \times \inv } "TL";"TR"};
	\endxy
	\qquad \qquad \xy (-12,10)*+{G \times G}="TL"; (12,10)*+{G \times
	G}="TR"; (-18,0)*+{G}="ML"; (18,0)*+{G}="MR"; (0,-10)*+{1}="B";
	     {\ar_{} "ML";"B"};
	     {\ar^{\Delta} "ML";"TL"};
	     {\ar_{\id} "B";"MR"};
	     {\ar^{m} "TR";"MR"};
	     {\ar^{\inv \times 1} "TL";"TR"};
	\endxy
	\]
\end{itemize}
where $\Delta \maps G \to G \times G$ is the diagonal map. In addition, a
supergroup is \define{abelian} if the following diagram commutes:
\[ \xymatrix{ 
G \times G \ar[r]^\tau \ar[dr]_m & G \times G \ar[d]^m \\
                                 & G
}
\]
where $\tau \maps G \times G \to G \times G$ is the \define{twist map}. Using
$A$-points, it is defined to be:
\[ \tau_A(x,y) = (y,x), \]
for $(x,y) \in G_A \times G_A$.

Examples of supergroups arise easily from Lie groups: if $G$ is an
Lie group, it is also a Lie group defined on the supermanifold whose
$A$-points are:
\[ G_A = \Hom(C^\infty(G), A) , \]
where $C^\infty(G)$ is the ordinary algebra of smooth functions on $G$,
regarded as a purely even superalgebra.  In this way, any classical Lie group,
such as $\SO(n)$, $\SU(n)$ and $\Sp(n)$, becomes a supergroup.

To obtain more interesting examples, we will integrate a nilpotent Lie
superalgebra, $\n$ to a supergroup $N$. For any Grassmann algebra $A$, the bracket 
\[ [-,-] \maps \Lambda^2 \n \to \n \]
induces an $A_0$-linear map between the $A$-points:
\[ [-,-]_A \maps \Lambda^2 \n_A \to \n_A, \]
where $\Lambda^2 \n_A$ denotes the exterior square of the $A_0$-module $\n_A$.
Thus $[-,-]_A$ is antisymmetric, and it easy to check that it makes $\n_A$ into
a Lie algebra which is also nilpotent.

On each such $A_0$-module $\n_A$, we can thus define a Lie group $N_A$ where
the multplication is given by the Baker--Campbell--Hausdorff formula, inversion
by negation, and the identity is $0$. Because we want to write the group $N_A$
multiplicatively, we write $\exp_A \maps n_A \to N_A$ for the identity map, and
then define the multiplication, inverse and identity maps:
\[ m_A \maps N_A \times N_A \to N_A, \quad \inv_A \maps N_A \to N_A, \quad \id_A \maps 1_A \to N_A, \]
as follows:
\[ m_A(\exp_A(X), \exp_A(Y)) = \exp_A(X) \exp_A(Y) = \exp_A(X + Y + \half[X,Y]_A + \cdots ) \]
\[ \inv_A(\exp_A(X)) = \exp_A(-X) = \exp_A(X)^{-1}, \]
\[ \id_A(1) = \exp_A(0) = 1, \]
for any $A$-points $X, Y \in \n_A$, and the first 1 in the last equation refers to
the single element of $1_A$. But it is clear that all of these maps are natural in
$A$. Furthermore, they are all $A_0$-smooth, because as polynomials with
coefficients in $A_0$, they are smooth with derivatives that are $A_0$-linear.
They thus define smooth maps of supermanifolds:
\[ m \maps N \times N \to N, \quad \inv \maps N \to N, \quad \id \maps 1 \to N, \]
where $N$ is the supermanifold $\n$. And because each of the $N_A$ is a group,
$N$ is a supergroup. We have thus proved:

\begin{prop} \label{prop:nilpotentsupergroup}
	Let $\n$ be a nilpotent Lie superalgebra. Then there is a supergroup
	$N$ defined on the supermanifold $\n$, obtained by integrating the
	nilpotent Lie algebra $\n_A$ with the Baker--Campbell--Hausdorff
	formula for all Grassmann algebras $A$. More precisely, we define the maps:
	\[ m \maps N \times N \to N, \quad \inv \maps N \to N, \quad \id \maps 1 \to N, \]
	by defining them on $A$-points as follows:
	\[ m_A(\exp_A(X), \exp_A(Y)) = \exp_A(Z(X,Y)), \]
	\[ \inv_A(\exp_A(X)) = \exp_A(-X), \]
	\[ \id_A(1) = \exp_A(0), \]
	where 
	\[ \exp \maps \n \to N \]
	is the identity map of supermanifolds, and:
	\[ Z(X,Y) = X + Y + \half[X,Y]_A + \cdots \]
	denotes the Baker--Campbell--Hausdorff series on $\n_A$, which
	terminates because $\n_A$ is nilpotent.
\end{prop}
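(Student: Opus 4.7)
The plan is to reduce the verification of supergroup structure to the verification of three things: (i) for each Grassmann algebra $A$, the $A$-points $\n_A$ form an honest nilpotent Lie algebra in the classical sense, so that the Baker--Campbell--Hausdorff series makes sense and endows $N_A$ with a classical Lie group structure; (ii) the resulting assignments $A \mapsto m_A$, $A \mapsto \inv_A$, $A \mapsto \id_A$ are natural in $A$, so they define natural transformations between the relevant functors $\GrAlg \to \Man$; and (iii) each of these components is $A_0$-smooth, so the natural transformations are in fact smooth maps of supermanifolds in the sense of Section~\ref{sec:supermanifolds}. Once these hold, the group axiom diagrams commute component-wise because each $N_A$ is a group, and commuting diagrams of natural transformations are detected componentwise.

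The main subtlety, and I expect this to be the key step, lies in (i): we must show that the graded-antisymmetric bracket $[-,-] \maps \Lambda^2 \n \to \n$ induces an ordinary (ungraded) antisymmetric Lie bracket $[-,-]_A \maps \Lambda^2 \n_A \to \n_A$ on the $A_0$-module $\n_A = A_0 \tensor \n_0 \oplus A_1 \tensor \n_1$. Following the recipe from Proposition~\ref{prop:supervectormanifold}, we define
\[
   [a \tensor X, \, b \tensor Y]_A = (-1)^{|X||b|} \, ab \tensor [X,Y]
\]
on homogeneous simple tensors, which is exactly what the functoriality of $(A \tensor -)_0$ produces. The rule of signs then gives ordinary antisymmetry: swapping two even-overall elements $a \tensor X$ and $b \tensor Y$ in $\n_A$ picks up $(-1)^{|a||b| + |X||Y|}$ from the super-swap and $(-1)^{|X||Y|}$ from the graded antisymmetry of $[-,-]$, leaving $(-1)^{|a||b|}$; but since $a \tensor X$ and $b \tensor Y$ are both $\Z_2$-even overall, $|a| \equiv |X|$ and $|b| \equiv |Y|$ mod 2, so $(-1)^{|a||b|} = (-1)^{|X||Y|} = 1$ after using that both tensors are even. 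A parallel bookkeeping verifies the ordinary Jacobi identity on $\n_A$ from the super-Jacobi identity on $\n$. Finally, nilpotency of $\n_A$ is inherited directly from nilpotency of $\n$, because any $k$-fold nested bracket on $\n_A$ is a sum of terms each containing a $k$-fold nested bracket from $\n$.

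For (ii), naturality is essentially automatic: for a homomorphism $f \maps A \to B$ of Grassmann algebras, the map $\n_f = (f \tensor 1)_0$ intertwines $[-,-]_A$ with $[-,-]_B$ by construction, so it sends the BCH series on $\n_A$ to the BCH series on $\n_B$, term by term. Hence $N_f \circ m_A = m_B \circ (N_f \times N_f)$, and similarly for $\inv$ and $\id$. For (iii), each $A_0$-module $\n_A$ is finite-dimensional and free, and on it $m_A$, $\inv_A$, and $\id_A$ are polynomial in the coordinates with coefficients in $A_0$ (the polynomiality, and in particular the termination of BCH, uses nilpotency); such polynomial maps are smooth with derivatives that are $A_0$-linear, which is exactly the definition of $A_0$-smooth.

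Once (i)--(iii) are in hand, the supergroup axioms reduce to diagrams of natural transformations between functors $\GrAlg \to \Man$, and it suffices to check these componentwise. On $A$-points, the associativity, unit, and inverse diagrams are exactly the group axioms for the classical nilpotent Lie group $N_A$ built from BCH on $\n_A$, which are well-known. Hence $N$ is a supergroup on the supermanifold $\n$, as claimed.
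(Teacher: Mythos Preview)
Your proposal is correct and follows essentially the same three-step argument as the paper: the discussion preceding the proposition verifies that (i) $\n_A$ is an honest nilpotent Lie algebra, (ii) the BCH-defined maps are natural in $A$, and (iii) they are $A_0$-smooth because they are polynomial with $A_0$-coefficients, after which the group axioms hold componentwise. One small slip: in your antisymmetry check the claim ``$(-1)^{|X||Y|} = 1$'' is not true when both $X$ and $Y$ are odd; the correct bookkeeping is that with $|a|=|X|$ and $|b|=|Y|$ the total sign in $[bY,aX]_A$ relative to $-[aX,bY]_A$ is $(-1)^{|Y||a|+|a||b|+|X||Y|} = (-1)^{|X||Y|} = (-1)^{|X||b|}$, which matches, so antisymmetry does hold.
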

\noindent
Experience with ordinary Lie theory suggests that, in general, there will be
more than one supergroup which has Lie superalgebra $\n$. To distinguish the
one above, we call $N$ the \define{exponential supergroup} of $\n$. 

\chapter{Lie \emph{n}-supergroups from supergroup cohomology} \label{ch:Lie-n-supergroups}

We saw in Chapter \ref{ch:Lie-n-groups} that 3-cocycles in Lie group cohomology
allow us to construct Lie 2-groups. We now generalize this to supergroups. The
most significant barrier is that we now work internally to the category of
supermanifolds instead of the much more familiar category of smooth manifolds.
Our task is to show that this change of categories does not present a problem.
The main obstacle is that the category of supermanifolds is not a concrete
category: morphisms are determined not by their value on the underlying set of
a supermanifold, but by their value on $A$-points for all Grassmann algebras
$A$.

The most common approach is to define morphisms without reference to elements,
and to define equations between morphisms using commutative diagrams. This is
how we gave the definition of smooth bicategory, except that we found it
convenient to state the pentagon and triangle identities using elements.  As an
alternative to commutative diagrams, for supermanifolds, one can use $A$-points
to define morphisms and specify equations between them. This tends to make
equations look friendlier, because they look like equations between functions.
We shall use this approach.

First, let us define the cohomology of a supergroup $G$ with coefficients in an
abelian supergroup $H$, on which $G$ \define{acts by automorphism}. This means
that we have a morphism of supermanifolds:
\[ \alpha \maps G \times H \to H, \]
which, for any Grassmann algebra $A$, induces an action of the group
$G_A$ on the abelian group $H_A$:
\[ \alpha_A \maps G_A \times H_A \to H_A. \]
For this action to be by automorphism, we require:
\[ \alpha_A(g)(h + h') = \alpha_A(g)(h) + \alpha_A(g)(h'), \]
for all $A$-points $g \in G_A$ and $h, h' \in H_A$.

We define supergroup cohomology using \define{the supergroup cochain complex},
$C^\bullet(G,H)$, which at level $p$ just consists of the set of maps
from $G^p$ to $H$ as supermanifolds:
\[ C^p(G,H) = \left\{ f \maps G^p \to H \right\} . \]
Addition on $H$ makes $C^p(G,H)$ into an abelian group for all $p$.  The
differential is given by the usual formula, but using $A$-points:
\begin{eqnarray*} 
	df_A(g_1, \dots, g_{p+1}) & = & g_1 f_A(g_2, \dots, g_{p+1}) \\
	                          &   & + \sum_{i=1}^p (-1)^i f_A(g_1, \dots, g_i g_{i+1}, \dots, g_{p+1}) \\
				  &   & + (-1)^{p+1} f_A(g_1, \dots, g_p) , \\
\end{eqnarray*}
where $g_1, \dots, g_{p+1} \in G_A$ and the action of $g_1$ is given by
$\alpha_A$. Noting that $f_A$, $\alpha_A$, multiplication and $+$ are all:
\begin{itemize}
	\item natural in $A$;
	\item $A_0$-smooth: smooth with derivatives which are $A_0$-linear;
\end{itemize}
we see that $df_A$ is:
\begin{itemize}
	\item natural in $A$; 
	\item $A_0$-smooth: smooth with a derivative which is $A_0$-linear;
\end{itemize}
so it indeed defines a map of supermanifolds:
\[ df \maps G^{p+1} \to H. \]
Furthermore, it is immediate that:
\[ d^2 f_A = 0 \]
for all $A$, and thus
\[ d^2 f = 0. \]
So $C^\bullet(G,H)$ is truly a cochain complex. Its cohomology $H^\bullet(G,H)$
is the \define{supergroup cohomology of $G$ with coefficients in $H$}. Of
course, if $df = 0$, $f$ is called a \define{cocycle}, and $f$ is \define{normalized} if
\[ f_A(g_1, \dots, g_p) = 0 \]
for any Grassmann algebra $A$, whenever one of the $A$-points $g_1, \dots, g_p$
is 1. When $H = \R$, we omit reference to it, and write $C^\bullet(G,\R)$ as
$C^\bullet(G)$.

A \define{super bicategory} $B$ has 
\begin{itemize}
	\item a \define{supermanifold of objects} $B_0$;
	\item a \define{supermanifold of morphisms} $B_1$;
	\item a \define{supermanifold of 2-morphisms} $B_2$;
\end{itemize}
equipped with maps of supermanifolds as described in Definition
\ref{def:smoothbicat}: source, target, identity-assigning, horizontal
composition, vertical composition, associator and left and right unitors all
maps of supermanifolds, and satisfying the same axioms as a smooth bicategory.
The associator satisfies the pentagon identity, which we state in terms of
$A$-points: the following pentagon commutes:
\[
\xy
 (0,20)*+{(f g) (h k)}="1";
 (40,0)*+{f (g (h k))}="2";
 (25,-20)*{ \quad f ((g h) k)}="3";
 (-25,-20)*+{(f (g h)) k}="4";
 (-40,0)*+{((f g) h) k}="5";
 {\ar@{=>}^{a(f,g,h k)}     "1";"2"}
 {\ar@{=>}_{1_f \cdot a_(g,h,k)}  "3";"2"}
 {\ar@{=>}^{a(f,g h,k)}    "4";"3"}
 {\ar@{=>}_{a(f,g,h) \cdot 1_k}  "5";"4"}
 {\ar@{=>}^{a(fg,h,k)}    "5";"1"}
\endxy
\]
for any `composable quadruple of morphisms':
\[ (f,g,h,k) \in (B_1 \times_{B_0} B_1 \times_{B_0} B_1 \times_{B_0} B_1)_A . \]
Similarly, the associator and left and right unitors satisfy the triangle
identity, which we state in terms of $A$-points: the following triangle
commutes:
\[ 
\xy
(-20,10)*+{(f 1) g}="1";
(20,10)*+{f (1 g)}="2";
(0,-10)*+{f g}="3";
{\ar@{=>}^{a(f,1,g)}	"1";"2"}
{\ar@{=>}_{r(f) \cdot 1_g}	"1";"3"}
{\ar@{=>}^{1_f \cdot l(g)} "2";"3"}
\endxy
\]
for any `composable pair of morphisms':
\[ (f,g) \in (B_1 \times_{B_0} B_1)_A . \]

A \define{2-supergroup} is a super bicategory with one object (more precisely,
the one-point supermanifold), and all morphisms and 2-morphisms weakly
invertible.  Given a normalized $H$-valued 3-cocycle $a$ on $G$, we can
construct a 2-supergroup $\String_a(G,H)$ in the same way we constructed
the Lie 2-group $\String_a(G,H)$ when $G$ and $H$ were Lie groups, by just
deleting every reference to elements of $G$ or $H$:
\begin{itemize}
	\item The supermanifold of objects is the one-point supermanifold, 1.
	\item The supermanifold of morphisms is the supergroup $G$, with
		composition given by the multiplication:
		\[ \cdot \maps G \times G \to G. \]
		The source and target maps are the unique maps to the one-point
		supermanifold. The identity-assigning map is the
		identity-assigning map for $G$:
		\[ \id \maps 1 \to G. \]
	\item The supermanifold of 2-morphisms is $G \times H$. The source and
		target maps are both the projection map to $G$. The identity
		assigning map comes from the identity-assigning map for $H$:
		\[ 1 \times \id \maps G \times 1 \to G \times H. \]
	\item Vertical composition of 2-morphisms is given by addition in $H$:
		\[ 1 \times + \maps G \times H \times H \to G \times H, \]
		where we have used the fact that the pullback of 2-morphisms
		over objects is trivially:
		\[ (G \times H) \times_1  (G \times H) \iso G \times H \times H. \]
		Horizontal composition, $\cdot$, given by the multiplication
		on the semidirect product:
		\[ \cdot \maps ( G \ltimes H ) \times ( G \ltimes H ) \to G \ltimes H. \]
	\item The left and right unitors are trivial.
	\item The associator is given by the 3-cocycle $a \maps G^3 \to H$, where
		the source (and target) is understood to come from
		multiplication on $G$.
\end{itemize}
A \define{slim 2-supergroup} is one of this form. It remains to check that it
is, indeed, a 2-supergroup.

\begin{prop} \label{prop:2supergroup}
	$\String_a(G,H)$ is a 2-supergroup: a super bicategory
	with one object and all morphisms and 2-morphisms weakly invertible.
\end{prop}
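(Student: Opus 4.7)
The plan is to mimic the proof of Proposition \ref{prop:Lie2group}, but reformulated with $A$-points in place of honest elements so that every equation lives in the category of supermanifolds. The structural setup---objects, morphisms, 2-morphisms, composition, associator, unitors---has already been specified as morphisms of supermanifolds, so the only remaining work is to verify (i) the triangle identity, (ii) the pentagon identity, and (iii) weak invertibility of all 1- and 2-morphisms.

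First I would dispatch the triangle identity. Since the left and right unitors are trivial by construction, for any Grassmann algebra $A$ and any $A$-point $(f,g) \in (G \times G)_A$, the triangle identity collapses to the statement $a_A(f,1,g) = 0$ in $H_A$. Because $a \maps G^3 \to H$ is a normalized supergroup 3-cocycle, $a_A$ vanishes on any triple containing the identity of $G_A$, so this is automatic.

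Next I would verify the pentagon identity. For any $A$ and any composable quadruple $(g_1,g_2,g_3,g_4) \in (G^4)_A$, vertical composition of 2-morphisms on the same 1-morphism is addition in $H_A$ and horizontal composition of 2-morphisms is the semidirect product multiplication in $(G \ltimes H)_A$. Unraveling the five edges of the pentagon around the 1-morphism $g_1 g_2 g_3 g_4$, the identity becomes
\[ a_A(g_1,g_2,g_3 g_4) + a_A(g_1 g_2, g_3, g_4) = g_1 \cdot a_A(g_2,g_3,g_4) + a_A(g_1, g_2 g_3, g_4) + a_A(g_1, g_2, g_3), \]
where the action of $g_1$ on $a_A(g_2,g_3,g_4)$ is the action $\alpha_A$ of $G_A$ on $H_A$. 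But this is precisely the supergroup $3$-cocycle condition $da = 0$ evaluated on $A$-points, which holds by hypothesis. Since the equation of morphisms $G^4 \to H$ holds after evaluating at $A$-points for every Grassmann algebra $A$, it holds as an equation of morphisms of supermanifolds. This is the step I expect to carry the most content, though as in the classical case, it is really just the cocycle condition in disguise.

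Finally, invertibility and the super-smooth structure are immediate. All 1-morphisms lie in the supergroup $G$, so inversion $\inv \maps G \to G$ gives strict inverses; all 2-morphisms lie in the supergroup $G \ltimes H$, so they are likewise strictly invertible. All structure maps (source, target, identity, horizontal composition, vertical composition, associator) were defined as maps of supermanifolds in the construction, so no further smoothness check is required. Therefore $\String_a(G,H)$ is a 2-supergroup, as claimed. The only conceptual subtlety, and the one worth emphasizing in the proof, is that we must state the pentagon and triangle identities in the $A$-point language and appeal to the fact that a natural transformation between representable functors on $\GrAlg$ is determined by its components; everything else parallels the proof of Proposition \ref{prop:Lie2group} line for line.
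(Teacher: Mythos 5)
Your proposal is correct and follows exactly the route the paper takes: the paper's proof simply declares itself a duplicate of the proof of Proposition \ref{prop:Lie2group} with $A$-points in place of elements, which is precisely what you have written out (triangle identity from normalization, pentagon identity as the $A$-point cocycle condition, strict invertibility from the supergroup structures on $G$ and $G \ltimes H$). No gaps.
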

\begin{proof}
	This proof is a duplicate of the proof of Proposition
	\ref{prop:Lie2group}, but with $A$-points instead of elements.
\end{proof}

In a similar way, we can generalize our construction of Lie 3-groups to
`3-supergroups'.  A \define{super tricategory} $T$ has 
\begin{itemize}
	\item a \define{supermanifold of objects} $T_0$;
	\item a \define{supermanifold of morphisms} $T_1$;
	\item a \define{supermanifold of 2-morphisms} $T_2$;
	\item a \define{supermanifold of 3-morphisms} $T_3$;
\end{itemize}
equipped with maps of supermanifolds as described in Definition
\ref{def:smoothtricat}: source, target, identity-assigning, composition at
0-cells, 1-cells and 2-cells, associator and left and right unitors,
pentagonator and triangulators all maps of supermanifolds, and satisfying the
same axioms as a smooth tricategory. As in the case of the pentagon identity
above, we express the pentagonator identity in terms of $A$-points: the
following equation holds:

\newpage
\thispagestyle{empty}

\begin{figure}[H]
    \begin{center}
      \begin{tikzpicture}[line join=round]
        \filldraw[white,fill=red,fill opacity=0.1](-4.306,-3.532)--(-2.391,-.901)--(-2.391,3.949)--(-5.127,.19)--(-5.127,-2.581)--cycle;
        \filldraw[white,fill=red,fill opacity=0.1](-4.306,-3.532)--(-2.391,-.901)--(2.872,-1.858)--(4.306,-3.396)--(3.212,-4.9)--cycle;
        \filldraw[white,fill=red,fill opacity=0.1](2.872,-1.858)--(2.872,5.07)--(-.135,5.617)--(-2.391,3.949)--(-2.391,-.901)--cycle;
        \filldraw[white,fill=green,fill opacity=0.1](4.306,-3.396)--(4.306,3.532)--(2.872,5.07)--(2.872,-1.858)--cycle;
        \begin{scope}[font=\fontsize{8}{8}\selectfont]
          \node (A) at (-2.391,3.949) {$(f(g(hk)))p$};
          \node (B) at (-5.127,.19) {$(f((gh)k))p$}
	  edge [->, double] node [l, above left] {$(1_{f}  a)1_{p}$} (A);
          \node (C) at (-5.127,-2.581) {$((f(gh))k)p$}
	  edge [->, double] node [l, left] {$a  1_{p}$} (B);
          \node (D) at (-4.306,-3.532) {$(((fg)h)k)p$}
	  edge [->, double] node [l, left] {$(a  1_{k})  1_{p}$} (C);
          \node (E) at (3.212,-4.9) {$((fg)h)(kp)$}
            edge [<-, double] node [l, below] {$a$} (D);
          \node (F) at (4.306,-3.396) {$(fg)(h(kp))$}
            edge [<-, double] node [l, below right] {$a$} (E);
          \node (G) at (4.306,3.532) {$f(g(h(kp)))$}
            edge [<-, double] node [l, right] {$a$} (F);
          \node (H) at (2.872,5.07) {$f(g((hk)p))$}
            edge [->, double] node [l, above right] {$1_{f}(1_{g}a)$} (G);
          \node (I) at (-.135,5.617) {$f((g(hk))p)$}
            edge [->, double] node [l, above] {$1_{f}a$} (H)
            edge [<-, double] node [l, above left] {$a$} (A);
          \node (M) at (-2.391,-.901) {$((fg)(hk))p$}
            edge [<-, double] node [l, right] {$a1_{p}$} (D)
            edge [->, double] node [l, right] {$a1_{p}$} (A);
          \node (N) at (2.872,-1.858) {$(fg((hk)p))$}
            edge [<-, double] node [l, above] {$a$} (M)
            edge [->, double] node [l, left] {$a$} (H)
            edge [->, double] node [l, left] {$(1_{f}1_{g})a$} (F);
	    \node at (-4,-.5) {$\Rrightarrow \pi \cdot 1_{_{1_{p}}}$};
          \node at (0,-3) {\tikz\node [rotate=-90] {$\Rrightarrow$};};
          \node at (0.5,-3) {$\pi$};
          \node at (0,2) {\tikz\node [rotate=-45] {$\Rrightarrow$};};
          \node at (0.5,2) {$\pi$};
          \node at (3.5,1) {$\cong$};
        \end{scope}
      \end{tikzpicture}
      \[ = \]
      \begin{tikzpicture}[line join=round]
        \filldraw[white,fill=green,fill opacity=0.1](-2.872,1.858)--(-.135,5.617)--(-2.391,3.949)--(-5.127,.19)--cycle;
        \filldraw[white,fill=red,fill opacity=0.1](3.212,-4.9)--(4.306,-3.396)--(4.306,3.532)--(2.391,.901)--(2.391,-3.949)--cycle;
        \filldraw[white,fill=green,fill opacity=0.1](-4.306,-3.532)--(3.212,-4.9)--(2.391,-3.949)--(-5.127,-2.581)--cycle;
        \filldraw[white,fill=red,fill opacity=0.1](-2.872,1.858)--(2.391,.901)--(4.306,3.532)--(2.872,5.07)--(-.135,5.617)--cycle;
        \filldraw[white,fill=red,fill opacity=0.1](-5.127,-2.581)--(-5.127,.19)--(-2.872,1.858)--(2.391,.901)--(2.391,-3.949)--cycle;
        \begin{scope}[font=\fontsize{8}{8}\selectfont]
          \node (A) at (-2.391,3.949) {$(f(g(h k)))p$};
          \node (B) at (-5.127,.19) {$(f((gh)k))p$}
            edge [->, double] node [l, above left] {$(1_{f}a)1_{p}$} (A);
          \node (C) at (-5.127,-2.581) {$((f(gh))k)p$}
            edge [->, double] node [l, left] {$a1_{p}$} (B);
          \node (D) at (-4.306,-3.532) {$(((fg)h)k)p$}
            edge [->, double] node [l, left] {$(a1_{k})1_{p}$} (C);
          \node (E) at (3.212,-4.9) {$((fg)h)(kp)$}
            edge [<-, double] node [l, below] {$a$} (D);
          \node (F) at (4.306,-3.396) {$(fg)(h(kp))$}
            edge [<-, double] node [l, below right] {$a$} (E);
          \node (G) at (4.306,3.532) {$f(g(h(kp)))$}
            edge [<-, double] node [l, right] {$a$} (F);
          \node (H) at (2.872,5.07) {$f(g((hk)p))$}
            edge [->, double] node [l, above right] {$1_{f}(1_{g}a)$} (G);
          \node (I) at (-.135,5.617) {$f((g(hk))p)$}
            edge [->, double] node [l, above] {$1_{f}a$} (H)
            edge [<-, double] node [l, above left] {$a$} (A);
          \node (J) at (-2.872,1.858) {$f(((gh)k)p)$}
            edge [->, double] node [l, below right] {$1_{f}(a1_{p})$} (I)
            edge [<-, double] node [l, below right] {$a$} (B);
          \node (K) at (2.391,-3.949) {$(f(gh))(kp)$}
            edge [<-, double] node [l, left] {$a(1_{k}1_{p})$} (E)
            edge [<-, double] node [l, above] {$a$} (C);
          \node (L) at (2.391,.901) {$f((gh)(kp))$}
            edge [<-, double] node [l, left] {$a$} (K)
            edge [<-, double] node [l, above] {$1_{f}a$} (J)
            edge [->, double] node [l, above left] {$1_{f}a$} (G);
          \node at (-1,-1) {\tikz\node [rotate=-45] {$\Rrightarrow$};};
	  \node at (-.5,-1) {$\pi$};
          \node at (1,3) {\tikz\node [rotate=-45] {$\Rrightarrow$};};
	  \node at (1.7,3) {$1_{_{1_{f}}} \cdot \pi$};
          \node at (3,-1.5) {\tikz\node [rotate=-45] {$\Rrightarrow$};};
          \node at (3.5,-1.5) {$\pi$};
          \node at (-1,-3.7) {$\cong$};
          \node at (-2.5,3) {$\cong$};
        \end{scope}
      \end{tikzpicture}
    \end{center}
\end{figure}
\clearpage

for any `composable quintet of morphisms':
\[ (f,g,h,k,p) \in (T_1 \times_{T_0} T_1 \times_{T_0} T_1 \times_{T_0} T_1 \times_{T_0} T_1)_A \]

A \define{3-supergroup} is a super tricategory with one object (more precisely,
the one-point supermanifold) and all morphisms, 2-morphisms and 3-morphisms
weakly invertible. Given a normalized $H$-valued 4-cocycle $\pi$ on $G$, we can
construct a 3-supergroup $\Brane_\pi(G,H)$ in the same way we constructed the
Lie 3-group $\Brane_\pi(G,H)$ when $G$ and $H$ were Lie groups, but deleting
every reference to elements of $G$ or $H$:
\begin{itemize}
	\item The supermanifold of objects is the one-point supermanifold, $1$.

	\item The supermanifold of morphisms is the supergroup $G$. 
		Composition at a 0-cell is given by multiplication in the group:
		\[ \cdot \maps G \times G \to G. \]
		The source and target maps are the unique maps to $1$. The
		identity-assigning map is the identity-assigning map for $G$:
		\[ \id \maps 1 \to G. \]

	\item The supermanifold of 2-morphisms is again $G$. The source, target
		and identity-assigning maps are all the identity on $G$.
		Composition at a 1-cell is the identity on $G$, while
		composition at a 0-cell is again multiplication in $G$.
		This encodes the idea that all 2-morphisms are trivial.

	\item The supermanifold of 3-morphisms is $G \times H$. The source and
		target maps are projection onto $G$. The identity-assigning map
		is the inclusion:
		\[ G \to G \times H \]
		that takes $A$-points $g \in G_A$ to $(g,0) \in G_A \times
		H_A$, for all $A$.

	\item Three kinds of composition of 3-morphisms: composition at a
		2-cell and at a 3-cell are both given by addition on $H$:
		\[ 1 \times + \maps G \times H \times H \to G \times H. \]
		While composition at a 0-cell is just given by multiplication on
		the semidirect product:
		\[ \cdot \maps (G \ltimes H) \times (G \ltimes H) \to G \ltimes H. \]

	\item The associator, left and right unitors are automatically
		trivial, because all 2-morphisms are trivial.

	\item The triangulators are trivial.

	\item The \define{2-associator} or \define{pentagonator} is given by the
		4-cocycle $\pi \maps G^4 \to H$, where the source (and target)
		is understood to come from multiplication on $G$.
\end{itemize}
A \define{slim 3-supergroup} is one of this form. It remains to check that it
is, indeed, a 3-supergroup.

\begin{prop} \label{prop:3supergroup}
	$\Brane_a(G,H)$ is a 3-supergroup: a super tricategory with one
	object and all morphisms, 2-morphisms and 3-morphisms weakly
	invertible.
\end{prop}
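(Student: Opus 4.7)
The plan is to mimic the proof of Proposition \ref{prop:Lie3group} verbatim, but everywhere an element of $G$ or $H$ appeared, use an $A$-point instead, as was done in passing from Proposition \ref{prop:Lie2group} to Proposition \ref{prop:2supergroup}. First I would dispose of the trivial content: because the left and right unitors and the associator are trivial, the middle, left, and right triangulators $\mu$, $\lambda$, $\rho$ must also be trivial, and their defining axioms collapse to the statement that $\pi$ vanishes whenever one of its arguments is $1$. Since $\pi$ is normalized by hypothesis, this holds on $A$-points for every Grassmann algebra $A$, and so the triangulator axioms are automatic.

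The only substantive thing to check is the pentagonator identity. Following the strategy of Proposition \ref{prop:Lie3group}, I would fix a Grassmann algebra $A$ and a composable quintet of $A$-points $(g_1,g_2,g_3,g_4,g_5) \in G_A^5$, and evaluate the pentagonator identity on these. Because every 2-cell is trivial and every face of the associahedron not labeled by $\pi$ is an identity, the two sides of the identity reduce---using the definition of composition at a $0$-cell as multiplication in the semidirect product $G_A \ltimes H_A$, and composition at $2$- and $3$-cells as addition in $H_A$---to
\[
\pi_A(g_1,g_2,g_3,g_4) + \pi_A(g_1g_2,g_3,g_4,g_5) + \pi_A(g_1,g_2,g_3g_4,g_5)
\]
on the one hand, and
\[
g_1\,\pi_A(g_2,g_3,g_4,g_5) + \pi_A(g_1,g_2g_3,g_4,g_5) + \pi_A(g_1,g_2,g_3,g_4g_5)
\]
on the other. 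Their equality is exactly the cocycle condition $d\pi_A = 0$, which holds because $\pi$ is a supergroup $4$-cocycle on $G$ with values in $H$ and hence $\pi_A$ is a group $4$-cocycle on $G_A$ for every $A$.

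The only obstacle that is genuinely different from the smooth case is confirming that every piece of data really is a morphism in $\SuperMan$ rather than merely a natural transformation of set-valued functors. But each structural map was assembled from $m$, $\id$, $\inv$, $+$, the $G$-action on $H$, and the $4$-cocycle $\pi$, all of which are maps of supermanifolds by assumption; pullbacks, products, and compositions of such maps remain maps of supermanifolds, and being an $A_0$-smooth natural family is preserved under these operations. Finally, invertibility is immediate: the $1$-morphisms form the supergroup $G$, the $2$-morphisms are all identities, and the $3$-morphisms form the supergroup $G \ltimes H$, so all $k$-morphisms are strictly, hence weakly, invertible. This shows $\Brane_\pi(G,H)$ is a $3$-supergroup, completing the proof.
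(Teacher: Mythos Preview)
Your proposal is correct and takes essentially the same approach as the paper: the paper's proof is literally the one-line statement that it duplicates the proof of Proposition~\ref{prop:Lie3group} but with $A$-points instead of elements, which is exactly what you have spelled out in detail.
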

\begin{proof}
	This proof is a duplicate of the proof of Proposition
	\ref{prop:Lie3group}, but with $A$-points instead of elements.
\end{proof}

\chapter{Integrating nilpotent Lie \emph{n}-superalgebras} \label{ch:integrating2}

We now generalize our technique for integrating cocycles from nilpotent Lie
algebras to nilpotent Lie \emph{superalgebras}. Those familiar with
supermanifold theory may find it surprising that this is possible---the theory
of differential forms is very different for supermanifolds than for manifolds,
and integrating differential forms on a manifold was crucial to our method in
Section \ref{sec:integratingcochains}. But we can sidestep this issue on a
supergroup $N$ by considering $A$-points for any Grassmann algebra $A$. Then
$N_A$ is a manifold, so the usual theory of differential forms applies.

Here is how we will proceed. Fixing a nilpotent Lie superalgebra $\n$ with
exponential supergroup $N$, we can use Proposition \ref{prop:supervectormanifold}
turn any Lie superalgebra cochain $\omega$ on $\n$ into a Lie algebra cochain $\omega_A$
on $\n_A$. We then use the techniques in Section \ref{sec:integratingcochains} to turn
$\omega_A$ into a Lie group cochain $\smallint \omega_A$ on $N_A$. Checking
that $\smallint \omega_A$ is natural in $A$ and $A_0$-smooth, this defines a
supergroup cochain $\smallint \omega$ on $N$.

As we saw in Proposition \ref{prop:supervectormanifold}, any map of super vector
spaces becomes an $A_0$-linear map on $A$-points. We have already touched on
the way this interacts with symmetry: for a Lie superalgebra $\g$, the
graded-antisymmetric bracket
\[ [-,-] \maps \Lambda^2 \g \to \g \]
becomes an honest antisymmetric bracket on $A$-points:
\[ [-,-]_A \maps \Lambda^2 \g_A \to \g_A. \]
More generally, we have:
\begin{lem}
	Graded-symmetric maps of super vector spaces:
	\[ f \maps \Sym^p V \to W \]
	induce symmetric maps on $A$-points:
	\[ f_A \maps \Sym^p V_A \to W_A, \]
	defined by:
	\[ f_A(a_1 v_1, \dots, a_p v_p) = a_p \cdots a_1 f(v_1, \dots, v_p), \]
	where $\Sym^p V_A$ is the symmetric power of $V_A$ as an $A_0$-module
	and $a_i \in A$, $v_i \in V$ are of matching parity.
	Similarly, graded-antisymmetric maps of super vector spaces:
	\[ f \maps \Lambda^p V \to W \]
	induce antisymmetric maps on $A$-points:
	\[ f_A \maps \Lambda^p V_A \to W_A, \]
	defined by:
	\[ f_A(a_1 v_1, \dots, a_p v_p) = a_p \cdots a_1 f(v_1, \dots, v_p), \]
	where $\Lambda^p V_A$ is the exterior power $V_A$ as an $A_0$-module
	and $a_i \in A$, $v_i \in V$ are of matching parity.
\end{lem}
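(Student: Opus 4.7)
The plan is to verify the lemma in three stages: define $f_A$ on pure tensors, extend by $A_0$-multilinearity, and then check (anti)symmetry. Since the only difference between the two cases is one extra sign, I would treat them in parallel.

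First, I would recall that $V_A = (A \otimes V)_0 = A_0 \otimes V_0 \oplus A_1 \otimes V_1$, so every pure tensor in $V_A$ has the form $a \otimes v$ with $|a| = |v|$. On tuples of such pure tensors, the formula $f_A(a_1 v_1, \dots, a_p v_p) = a_p \cdots a_1 f(v_1, \dots, v_p)$ is unambiguous: the product of the $a_i$'s is well-defined in $A$, the evaluation of $f$ is well-defined in $W$, and the parity matching $|a_i| = |v_i|$ guarantees that the result lies in $W_A = (A \otimes W)_0$. Bilinearity of $f$ in each $v_i$ and of multiplication in each $a_i$ extend this pure-tensor formula to a map on $V_A^p$; $A_0$-linearity in each slot is immediate, since pulling out a scalar $b \in A_0$ moves an even element to the front of $a_p \cdots a_1$ without a sign.

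For the symmetry check, the key observation is that the reversal of the $a_i$'s in the formula is designed precisely so that Koszul signs cancel. When arguments $a_i v_i$ and $a_{i+1} v_{i+1}$ are transposed, the transposition swaps $a_i$ and $a_{i+1}$ in the reversed product $a_p \cdots a_1$, contributing $(-1)^{|a_i||a_{i+1}|}$ by graded commutativity of $A$, and transposes $v_i$ and $v_{i+1}$ inside $f$, contributing $(-1)^{|v_i||v_{i+1}|}$ by graded symmetry of $f$. Since $|a_i| = |v_i|$ and $|a_{i+1}| = |v_{i+1}|$, these signs cancel, giving strict symmetry of $f_A$. In the antisymmetric case, the graded-antisymmetry of $f$ supplies one additional factor of $-1$, yielding strict antisymmetry of $f_A$.

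The main obstacle, such as it is, is purely bookkeeping: tracking Koszul signs through the reversal convention and confirming that the extension from pure tensors to $V_A^p$ is independent of decomposition. Both reduce to routine applications of the sign conventions from Section \ref{sec:superalgebra}, so the whole argument unwinds to a careful definition-chase with no substantive mathematical content beyond the sign cancellation highlighted above.
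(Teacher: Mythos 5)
Your proposal is correct; note that the paper itself offers no argument here (its proof reads ``This is straightforward and we leave it to the reader''), and your computation is exactly the intended one: the reversal $a_p \cdots a_1$ encodes the Koszul signs of the braiding $(A\tensor V)^{\tensor p}\iso A^{\tensor p}\tensor V^{\tensor p}$ under the parity-matching condition $|a_i|=|v_i|$, so the sign $(-1)^{|a_i||a_{i+1}|}$ from graded commutativity of $A$ cancels the sign $(-1)^{|v_i||v_{i+1}|}$ from the graded (anti)symmetry of $f$. The only point deserving one more explicit sentence is well-definedness on sums of decomposables, which follows because your pure-tensor formula is the restriction to $(V_A)^p$ of the globally defined map $(A\tensor V)^{\tensor p}\to A\tensor W$ obtained from the braiding, multiplication in $A$, and $f$.
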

\begin{proof}
	This is straightforward and we leave it to the reader.
\end{proof}

Next, we need to show that Lie superalgebra cochains $\omega$ on $\n$ give rise
to Lie algebra cochains $\omega_A$ on the $A$-points $\n_A$. In fact, this
works for any Lie superalgebra, but there is one twist: because $\n_A$ is an
$A_0$-module, $\omega \maps \Lambda^p \n \to \R$ gives rise to an $A_0$-linear
map:
\[ \omega_A \maps \Lambda^p \n_A \to A_0, \]
using the fact that $\R_A = A_0$. So, we need to say how to do Lie algebra
cohomology with coefficients in $A_0$. It is just a straightforward
generalization of cohomology with coefficients in $\R$.

Indeed, any Lie superalgebra $\g$ induces a Lie algebra structure on $\g_A$
where the bracket is $A_0$-bilinear. We say that $\g_A$ is an \define{$A_0$-Lie
algebra}. Given any $A_0$-Lie algebra $\g_A$, we define its cohomology with the
\define{$A_0$-Lie algebra cochain complex}, which at level $p$ consists of
antisymmetric $A_0$-multilinear maps:
\[ C^p(\g_A) = \left\{ \omega \maps \Lambda^p \g_A \to A_0 \right\}. \]
We define $d$ on this complex in exactly the same way we define $d$ for
$\R$-valued Lie algebra cochains. This makes $C^\bullet(\g_A)$ into a cochain
complex, and the \define{cohomology of an $A_0$-Lie algebra with coefficients
in $A_0$} is the cohomology of this complex.

\begin{prop}
	Let $\g$ be a Lie superalgebra, and let $\g_A$ be the $A_0$-Lie algebra
	of its $A$-points. Then there is a cochain map:
	\[ C^\bullet(\g) \to C^\bullet(\g_A) \]
	given by taking the $p$-cochain $\omega$
	\[ \omega \maps \Lambda^p \g \to \R \]
	to the induced $A_0$-linear map $\omega_A$:
	\[ \omega_A \maps \Lambda^p \g_A \to A_0, \]
	where $\Lambda^p \g_A$ denotes the $p$th exterior power of $\g_A$ as
	an $A_0$-module.
\end{prop}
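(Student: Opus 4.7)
The plan is straightforward in outline but fiddly in its handling of signs. The proof has two pieces: first, check that $\omega_A$ actually lies in the target complex $C^\bullet(\g_A)$; second, check that $d \omega_A = (d \omega)_A$.

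For the first piece, I would invoke the preceding lemma directly. Since $\omega$ is a graded-antisymmetric map $\Lambda^p \g \to \R$ of super vector spaces, the lemma produces an antisymmetric $A_0$-multilinear map $\omega_A \maps \Lambda^p \g_A \to \R_A$. Using $\R_A = (A \otimes \R)_0 = A_0$, this is precisely an element of $C^p(\g_A)$. Linearity of $\omega \mapsto \omega_A$ in $\omega$ is immediate from the construction in the lemma.

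For the second piece, the key observation is that both $d(\omega_A)$ and $(d\omega)_A$ are antisymmetric $A_0$-multilinear maps $\Lambda^{p+1} \g_A \to A_0$, so by $A_0$-multilinearity and the fact that elements of $\g_A = (A \tensor \g)_0$ are $A_0$-linear combinations of pure tensors $a x$ with $a \in A$, $x \in \g$ of matching parity, it suffices to check equality on $(p+1)$-tuples $X_i = a_i x_i$ of such pure tensors. On the left-hand side, one writes
\[ d\omega_A(X_1, \dots, X_{p+1}) = \sum_{i<j} (-1)^{i+j} \omega_A\bigl([X_i,X_j]_A, X_1, \dots, \hat{X}_i, \dots, \hat{X}_j, \dots, X_{p+1}\bigr), \]
since the $\R$-action on cochains is trivial. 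Expanding the $A_0$-bilinear bracket $[a_i x_i, a_j x_j]_A$ and pushing all of the $a$'s out of $\omega_A$ via the lemma, one gets a product of the $a_k$'s in a specific order times $\omega([x_i,x_j], x_1, \dots, \hat x_i, \dots, \hat x_j, \dots, x_{p+1})$. On the right-hand side, $(d\omega)_A$ is computed by first evaluating $d\omega$ on the $x_k$'s using the Lie superalgebra coboundary formula (with trivial representation, so only the bracket sum survives), and then reassembling the $a_k$'s via the lemma.

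The matching of the two expressions is the step I expect to be the main obstacle; it is essentially an exercise in the Koszul sign rule. The signs $(-1)^{|x_i||x_j|}\epsilon_1^{i-1}(i) \epsilon_1^{j-1}(j)$ appearing in the superalgebra coboundary arise exactly from pulling $a_i$ past $x_1, \dots, x_{i-1}$ and $a_j$ past $x_1, \dots, x_{j-1}$, and then past $a_i x_i$ or $x_i$, in order to place the factor $a_j a_i$ in front before extracting it from $\omega$. Once one does this bookkeeping carefully for a single term $(i,j)$, both sides are seen to contribute the same multiple of $a_{p+1} \cdots \hat a_i \cdots \hat a_j \cdots a_1 \cdot a_j a_i \cdot \omega([x_i,x_j], x_1, \dots, \hat x_i, \dots, \hat x_j, \dots, x_{p+1})$ with matching sign $(-1)^{i+j}$. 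Summing over $i<j$ completes the verification that $d(\omega_A) = (d\omega)_A$, so $\omega \mapsto \omega_A$ is a cochain map.
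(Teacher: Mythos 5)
Your proposal is correct and follows essentially the same route as the paper: both reduce to checking $d(\omega_A)=(d\omega)_A$ on generators $a_1X_1\wedge\cdots\wedge a_{p+1}X_{p+1}$ with $a_i$ and $X_i$ of matching parity, expand $d(\omega_A)$ via the $A_0$-Lie algebra coboundary formula (only bracket terms surviving), and reorder the coefficients to extract $a_{p+1}\cdots a_1$, with the Koszul signs from that reordering reproducing exactly the signs $(-1)^{|X_i||X_j|}\epsilon_1^{i-1}(i)\epsilon_1^{j-1}(j)$ in the Lie superalgebra differential. Your explicit preliminary check that $\omega_A$ lands in $C^p(\g_A)$ via the preceding lemma is a harmless addition that the paper leaves implicit.
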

\begin{proof}
	We need to show:
	\[ d(\omega_A) = (d\omega)_A. \]
	Since these are both linear maps on $\Lambda^{p+1}(\g_A)$, it suffices
	to check that they agree on generators, which are of the form:
	\[ a_1 X_1 \wedge a_2 X_2 \wedge \cdots \wedge a_{p+1} X_{p+1} \]
	for $a_i \in A$ and $X_i \in \g$ of matching parity. By definition:
	\[ (d\omega)_A(a_1 X_1 \wedge a_2 X_2 \wedge \cdots \wedge a_{p+1} X_{p+1}) = a_{p+1} a_p \cdots a_{1} d\omega(X_1 \wedge X_2 \wedge \cdots \wedge X_{p+1}). \]

	On the other hand, to compute $d(\omega_A)$, we need to apply the
	formula for $d$ to obtain the intimidating expression:
	\begin{eqnarray*}
		&   & d(\omega_A)(a_1 X_1, \dots, a_{p+1} X_{p+1}) \\
		& = & \sum_{i < j} (-1)^{i+j} \omega_A([a_i X_i, a_j X_j]_A, a_1 X_1, \dots, \widehat{a_i X_i}, \dots, \widehat{a_j X_j}, \dots, a_{p+1} X_{p+1}) \\
		& = & \sum_{i < j} (-1)^{i+j} a_{p+1} \cdots \hat{a}_j \cdots \hat{a}_i \cdots a_{1} a_j a_i \omega([X_i, X_j], X_1, \dots, \hat{X}_i, \dots, \hat{X}_j, \dots, X_{p+1}). \\
	\end{eqnarray*}
	If we reorder the each of the coefficients $a_{p+1} \cdots \hat{a}_j
	\cdots \hat{a}_i \cdots a_{1} a_j a_i$ to $a_{p+1} \cdots a_2 a_{1}$ at
	the cost of introducing still more signs, we can factor all of the
	$a_i$s out of the summation to obtain:
	\begin{eqnarray*}
		&   & a_{p+1} \cdots a_2 a_1  \\
		& \times &  \sum_{i < j} (-1)^{i+j} (-1)^{|X_i||X_j|} \epsilon_1^{i-1}(i) \epsilon_1^{j-1}(j) \omega([X_i, X_j], X_1, \dots, \hat{X}_i, \dots, \hat{X}_j, \dots, X_{p+1}) \\
		& = & a_{p+1} \cdots a_2 a_1 d\omega(X_1 \wedge X_2 \wedge \dots \wedge X_{p+1}).
	\end{eqnarray*}
	Note that the first two lines are a single quantity, the product of
	$a_{p+1} \cdots a_1$ and a large summation. The last line is
	$(d\omega)_A(a_1 X_1 \wedge \dots \wedge a_{p+1} X_{p+1})$, as desired.
\end{proof}
This proposition says that from any Lie superalgebra cocycle on $\n$ we obtain
a Lie algebra cocycle on $\n_A$, albeit now valued in $A_0$. Since $N_A$ is an
exponential Lie group with Lie algebra $\n_A$, we can apply the techniques we
developed in Section \ref{sec:integratingcochains} to integrate $\omega_A$ to a group
cocycle, $\smallint \omega_A$, on $N_A$. 

First, however, we must pause to give some preliminary definitions concerning
calculus on $N_A$, which is diffeomorphic to the $A_0$-module $\n_A$. Recall
from Section \ref{sec:supermanifolds} that a map 
\[ \varphi \maps V \to W \]
between two $A_0$-modules said to be \define{$A_0$-smooth} if it is smooth in
the ordinary sense and its derivative
\[ \varphi_* \maps T_x V \to T_{\varphi(x)} W \]
is $A_{0}$-linear at each point $x \in V$. Here, the $A_0$-module structure on
each tangent space comes from the canonical identification with the ambient
vector space:
\[ T_x V \iso V, \quad T_{\varphi(x)} W \iso W. \]
It is clear that the identity is $A_0$-smooth and the composite of any two
$A_0$-smooth maps is $A_0$-smooth. A vector field $X$ on $V$ is
\define{$A_0$-smooth} if $Xf$ is an $A_0$-smooth function for all $f \maps V
\to A_0$ that are $A_0$-smooth. An $A_0$-valued differential $p$-form $\omega$
on $V$ is \define{$A_0$-smooth} if $\omega(X_1, \dots, X_p)$ is an $A_0$-smooth
function for all $A_0$-smooth vector fields $X_1, \dots, X_p$.

Now, we return to integrating $\omega$. As a first step, because $\n_A = T_1
N_A$, we can view $\omega_A$ as an $A_0$-valued $p$-form on $T_1 N_A$. Using
left translation, we can extend this to a left-invariant $A_0$-valued $p$-form
on $N_A$. Indeed, we can do this for any $A_0$-valued $p$-cochain on $\n_A$:
\[ C^p(\n_A) \iso \left\{ \mbox{left-invariant $A_0$-valued $p$-forms on $N_A$} \right\}. \]
Note that any left-invariant $A_0$-valued form on $N_A$ is automatically
$A_0$-smooth: multiplication on $N_A$ is given the $A_0$-smooth operation
induced from multiplication on $N$, and so left translation on $N_A$ is
$A_0$-smooth. We can differentiate and integrate $A_0$-valued $p$-forms in just
the same way as we would real-valued $p$-forms, and the de Rham differential
$d$ of left-invariant $p$-forms coincides with the usual differential of Lie
algebra $p$-cochains.

As before, we need a notion of simplices in $N$. Since $N$ is a supermanifold,
the vertices of a simplex should not be points of $N$, but rather $A$-points
for arbitrary Grassmann algebras $A$. This means that for any $(p+1)$-tuple of
$A$-points, we want to get a $p$-simplex:
\[ [n_0, n_1, \dots, n_p] \maps \Delta^p \to N_A, \]
where, once again, $\Delta^p$ is the standard $p$-simplex in $\R^{p+1}$, and
this map is required to be smooth. But this only defines a $p$-simplex in
$N_A$. To really get our hands on a $p$-simplex in $N$, we need it to depend
functorially on the choice of Grassmann algebra $A$ we use to probe $N$. So if $f
\maps A \to B$ is a homomorphism between Grassmann algebras and $N_f \maps N_A \to
N_B$ is the induced map between $A$-points and $B$-points, we require:
\[ N_f \circ [n_0, n_1, \dots, n_p] = [ N_f(n_0), N_f(n_1), \dots, N_f(n_p) ] \]
Thus given a collection of maps:
\[ (\varphi_p)_A \maps \Delta^p \times (N_A)^{p+1} \to N_A \]
for all $A$ and $p \geq 0$, we say this collection defines a
\define{left-invariant notion of simplices} in $N$ if 
\begin{itemize}

	\item each $(\varphi_p)_A$ is smooth, and for each $x \in \Delta^p$,
		the restriction: 
		\[ (\varphi_p)_A \maps \left\{x\right\} \times N_A^{p+1} \to N_A \] 
		is $A_0$-smooth;

	\item it defines a left-invariant notion of simplices in $N_A$ for each
		$A$, as in Definition \ref{def:simplices}; 

	\item the following diagram commutes for all homomorphisms $f \maps A
		\to B$:
		\[ \xymatrix{
		\Delta^p \times N_A^{p+1} \ar[r]^>>>>>{(\varphi_p)_A} \ar[d]_{1 \times N_f^{p+1}}  & N_A \ar[d]^{N_f} \\
		\Delta^p \times N_B^{p+1} \ar[r]_>>>>>{(\varphi_p)_B} & N_B 
		}
		\]

\end{itemize}
We can use a left-invariant notion of simplices to define a cochain map
$\smallint \maps C^\bullet(\n) \to C^\bullet(N)$:
\begin{prop} \label{prop:superintegrating}
	Let $\n$ be a nilpotent Lie superalgebra, and let $N$ be the
	exponential supergroup which integrates $\n$. If $N$ is equipped with a
	left-invariant notion of simplices, then there is a cochain map:
	\[ \smallint \maps C^\bullet(\n) \to C^\bullet(N) \]
	which sends the Lie superalgebra $p$-cochain $\omega$ to the supergroup
	$p$-cochain $\smallint \omega$, given on $A$-points by:
	\[ (\smallint \omega)_A(n_1, \dots, n_p) = \int_{[1, n_1, n_1 n_2, \dots, n_1 n_2 \dots n_p ] } \omega_A \]
	for $n_1, \dots, n_p \in N_A$.
\end{prop}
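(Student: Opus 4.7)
The plan is to reduce the claim to the purely classical Lie-group case already handled in Proposition \ref{prop:integral}, one Grassmann algebra at a time, and then assemble the resulting family into a map of supermanifolds. For each Grassmann algebra $A$, the previous proposition gives a closed-under-$d$ assignment $\omega \mapsto \omega_A$ from Lie superalgebra $p$-cochains on $\n$ to $A_0$-valued Lie algebra $p$-cochains on the $A_0$-Lie algebra $\n_A$; after left translation, $\omega_A$ becomes a left-invariant $A_0$-valued $p$-form on $N_A$. Since the given data equip each $N_A$ with a left-invariant notion of simplices in the classical sense, the obvious $A_0$-valued version of Proposition \ref{prop:integral} produces a group $p$-cochain $(\smallint \omega)_A \maps N_A^p \to A_0$ by the formula in the statement, and Stokes' theorem on $N_A$ gives $d(\smallint \omega)_A = \smallint (d\omega)_A$ on the nose. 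So the cochain-map identity holds pointwise in $A$, and the only real content is to check that the family $\{(\smallint \omega)_A\}_A$ constitutes a morphism of supermanifolds $N^p \to \R$.

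Two properties have to be verified for this: $A_0$-smoothness of each component, and naturality in $A$. For $A_0$-smoothness, I would pull $\omega_A$ back along the map $(\varphi_p)_A \maps \Delta^p \times N_A^{p+1} \to N_A$ from the definition of a left-invariant notion of simplices and then integrate out the compact, $A$-independent factor $\Delta^p$. Because $\omega_A$ is left-invariant and hence $A_0$-smooth, and $(\varphi_p)_A$ is $A_0$-smooth in its $N_A^{p+1}$ variables for each fixed point of $\Delta^p$, the pullback is jointly smooth and fiberwise $A_0$-multilinear in its tangent arguments; integration over $\Delta^p$ preserves both properties by a standard differentiation-under-the-integral-sign argument, so the result is $A_0$-smooth on $N_A^{p+1}$, and invariance of $\omega_A$ under the diagonal left action of $N_A$ lets us pass to the inhomogeneous description on $N_A^p$.

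For naturality, let $f \maps A \to B$ be a homomorphism of Grassmann algebras, inducing a smooth homomorphism $N_f \maps N_A \to N_B$ and a Lie homomorphism $\n_f \maps \n_A \to \n_B$ compatible with $\exp$. From the definition $\omega_A(a_1 X_1 \wedge \cdots \wedge a_p X_p) = a_p \cdots a_1 \, \omega(X_1, \dots, X_p)$ one reads off the identity $f \circ \omega_A = \omega_B \circ \Lambda^p \n_f$ on $\Lambda^p \n_A$, which translates via left invariance into the statement of differential forms $N_f^\ast \omega_B = f \circ \omega_A$. Combined with the compatibility of the simplex maps under $N_f$ imposed in the definition of a left-invariant notion of simplices, namely $N_f \circ [1, n_1, n_1 n_2, \dots] = [1, N_f(n_1), N_f(n_1) N_f(n_2), \dots]$, the ordinary change of variables formula for an $A_0$-linear map into $B_0$ yields
\[
 f\bigl((\smallint \omega)_A(n_1, \dots, n_p)\bigr) = (\smallint \omega)_B\bigl(N_f(n_1), \dots, N_f(n_p)\bigr),
\]
which is exactly the naturality square needed for $\smallint \omega$ to be a morphism $N^p \to \R$ in $\Fun(\GrAlg, \Man)$.

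The main obstacle is the bookkeeping in this naturality step: one has to be sure that the behaviour of the simplex maps $(\varphi_p)_A$ under changes of $A$, together with the sign-laden definition of $\omega_A$ in terms of $\omega$, matches the change-of-variables formula when the scalars $A_0$ themselves vary. Everything else --- $A_0$-smoothness and the cochain-map identity --- is a direct transcription of arguments we have already used, once Stokes' theorem for left-invariant $A_0$-valued forms is in hand.
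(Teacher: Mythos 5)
Your proposal is correct and follows essentially the same route as the paper: verify that each component $(\smallint\omega)_A$ is $A_0$-smooth (by pulling back along $(\varphi_p)_A$ and differentiating under the integral over the compact factor $\Delta^p$) and natural in $A$ (via $f\circ\omega_A = N_f^*\omega_B$ together with the compatibility of the simplex maps with $N_f$), then obtain the cochain-map identity from the per-$A$ Stokes argument of Proposition \ref{prop:integral}. No gaps.
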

\begin{proof}
	First, we must check that $\smallint \omega_A \maps N_A^p \to A_0$ is
	natural in $A$ and $A_0$-smooth, and hence defines a map of
	supermanifolds:
	\[ \smallint \omega \maps N^p \to \R . \]
	Smoothness is clear, so we check naturality and the $A_0$-linearity of
	the derivative.

	To check naturality, let $f \maps A \to B$ be a homomorphism, and $N_f
	\maps N_A \to N_B$ be the induced map from $A$-points to $B$-points. We
	wish to show the following square commutes:
	\[ \xymatrix{
	N_A^p \ar[r]^{\smallint \omega_A} \ar[d]_{N_f^p} & A_0 \ar[d]^{f_0} \\
	N_B^p \ar[r]_{\smallint \omega_B}                & B_0 \\
	} \]
	For $A$-points $n_1, \dots, n_p \in N_A$, we have:
	\[ f_0 \int_{[1, n_1, n_1 n_2, \dots, n_1 n_2 \dots n_p]} \omega_A = \int_{[1,n_1, n_1 n_2, \dots, n_1 \dots n_p]}  f_0 \omega_A. \]
	Since $\omega_A \maps \Lambda^p \n_A \to A_0$ is natural itself, we have:
	\[ f_0 \omega_A(X_1, \dots, X_p) = \omega_B(\n_f(X_1), \dots, \n_f(X_p)), \]
	for all $X_1, \dots, X_p \in \n_A$. Now, under the identification $\n_A \iso T_1 N_A$, the linear map:
	\[ \n_f \maps \n_A \to \n_B \]
	is the derivative of the linear map $N_f \maps N_A \to N_B$, so we get the pullback of $\omega_A$ along $N_f$:
	\[ \omega_B(\n_f(X_1), \dots, \n_f(X_p)) = \omega_B((N_f)_*(X_1), \dots, (N_f)_*(X_p)) = N^*_f \omega_B(X_1, \dots, X_p). \]
	Finally:
	\begin{eqnarray*} 
		f_0 \int_{[1, n_1, n_1 n_2, \dots, n_1 n_2 \dots n_p]} \omega_A & = & \int_{[1, n_1, n_1 n_2, \dots, n_1 n_2 \dots n_p]} N^*_f \omega_B \\ 
		& = & \int_{N_f \circ [1, n_1, n_1 n_2, \dots , n_1 n_2 \dots n_p]} \omega_B \\
		& = & \int_{[1, N_f(n_1), N_f(n_1) N_f(n_2), \dots , N_f(n_1) N_f(n_2) \dots N_f(n_p)]} \omega_B \\
	\end{eqnarray*}
	where in the last step we have used the fact that $[1, n_1, n_1 n_2,
	\dots, n_1 \dots n_p]$ is a left-invariant simplex in $N$, as well as
	the fact that $N_f$ is a group homomorphism. But this says exactly that
	$\smallint \omega_A$ is natural in $A$. 

	Next, we check that $\smallint \omega_A$ has a derivative that is
	$A_0$-linear. Briefly, this holds because the derivative of
	$(\varphi_p)_A$ with respect to $N_A$ is $A_0$-linear.  The
	$A_0$-linearity of the derivative of $\smallint \omega_A$ then follows
	from the elementary analytic fact that integration with respect to one
	variable and differentiation respect to another commute with each
	other, at least when the integration is performed over a compact set.

	In detail, let us write $\psi$ for the function $\smallint \omega_A
	\maps N_A^p \to A_0$.  Let $v \in T_n N_A^{p}$ be a tangent vector, and let $a \in A_0$.
	Tedious as it may seem, we will show directly that the derivative of
	$\psi$ at $n$ is $A_0$-linear by computing its value on
	$av$. Denoting the derivative of $\psi$ at $n$ by $\psi_*$, we want to
	show that:
	\[ \psi_*(av) = a \psi_*(v) . \]
	Take $\gamma$ to be a path through $n$ with tangent $v$:
	\[ \gamma(0) = n, \quad \dot{\gamma}(0) = v , \]
	and $\delta$ to be a path through $n$ with tangent $av$:
	\[ \delta(0) = n, \quad \dot{\delta}(0) = av , \]
	With wish to show that:
	\[ \frac{d}{dt} \psi(\delta(t)) \, \mid_{t=0} \, \, = \, \,  a \frac{d}{dt} \psi(\gamma(t)) \, \mid_{t=0} . \]

	Now, by definition, 
	\[ \psi(n) = \int_{\Delta^p} g(x, n) \, dx , \]
	where $n \in N_A^p$, and $g$ denotes the pullback of $\omega_A$ along
	the function:
	\[ \begin{array}{ccc} 
			\Delta^p \times N_A^p & \to     & N_A \\
			(x, n_1, \dots, n_p)  & \mapsto & (\varphi_p)_A (x, 1, n_1, n_1 n_2, \dots, n_1 n_2 \cdots n_p) , 
	\end{array} \]
	where $x \in \Delta^p$ and $n_1, \dots, n_p \in N_A$. So: 
	\[ \psi(\delta(t)) = \int_{\Delta^p} g(x, \delta(t)) \, dx, \quad \psi(\gamma(t)) = \int_{\Delta^p} g(x, \gamma(t)) \, dx . \]
	And thus, by differentiating and commuting with integration, we get:
	\[ \psi_*(av) = \int_{\Delta^p} \frac{\partial}{\partial t} g(x,\delta(t)) \mid_{t=0} \, dx , \quad \psi_*(v) = \int_{\Delta^p} \frac{\partial}{\partial t} g(x,\gamma(t)) \mid_{t=0} \, dx \]
	By hypothesis, $(\varphi_p)_A$ and $\omega_A$ are $A_0$-smooth in
	$N_A$, therefore so is the pullback $h$. Thus:
	\[ \frac{\partial}{\partial t} h(x,\delta(t)) \mid_{t=0} = a \frac{\partial}{\partial t} h(x,\gamma(t)) \mid_{t=0} . \]
	Using the $A_0$-linearity of integration, it follows that:
	\[ \psi_*(av) = a \psi_*(v) \]
	as desired.

	Thus, $\smallint \omega \maps N_A^p \to A_0$, being natural in $A$ and
	$A_0$-smooth, defines a map of supermanifolds: 
	\[ \smallint \omega \maps N^p \to \R . \] 
	We therefore have a map:
	\[ \smallint \maps C^\bullet(\n) \to C^\bullet(N) \]
	It remains check that it is a cochain map. Indeed, $\smallint \omega_A$ is the composite
	of the cochain maps:
	\[ \omega \mapsto \omega_A \mapsto \smallint \omega_A, \]
	So on $A$-points, $\smallint d\omega$ is:
	\begin{eqnarray*}
	(\smallint d \omega)_A & = & \smallint (d\omega)_A \\
	& = & d( \smallint \omega_A) \\
	& = & (d \smallint \omega)_A ,
	\end{eqnarray*}
	where in the last step we have used the fact that $d(f_A) = (df)_A$ by
	definition of $d$. In brief, for every Grassmann algebra $A$:
	\[ (\smallint d \omega)_A = (d \smallint \omega)_A . \]
	We therefore conclude:
	\[ \smallint d \omega = d \smallint \omega,  \]
	and so $\smallint$ is a cochain map.
	
\end{proof}

Finally, we shall prove that there is a left-invariant notion of simplices with
which we can equip $N$. For a fixed Grassmann algebra $A$, the Lie group $N_A$ is
exponential. We shall show that if we take:
\[ (\varphi_p)_A \maps \Delta^p \times N^{p+1}_A \to N_A \]
to be the standard notion of left-invariant simplices in Proposition
\ref{prop:standard}, then this defines a left-invariant notion of simplices in
$N$. The key is to note that each stage of the inductive definition of
$(\varphi_p)_A$ we get maps that are natural in $A$.

\begin{prop}
	Let $N$ be the exponential supergroup of the nilpotent Lie superalgebra
	$\n$. Fix a smoothing factor $\ell \maps [0,1] \to [0,1]$.  For each
	Grassmann algebra $A$ and $p \geq 0$, define:
	\[ (\varphi_p)_A \maps \Delta^p \times N^{p+1}_A \to N_A \]
	to be the standard left-invariant notion of simplices with smoothing
	factor $\ell$. Then this defines a left-invariant notion of simplices
	in $N$.
\end{prop}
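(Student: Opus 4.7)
The plan is to verify the three requirements of a left-invariant notion of simplices in $N$ one at a time: smoothness plus $A_0$-smoothness of each restriction $(\varphi_p)_A|_{\{x\}\times N_A^{p+1}}$, the condition that $(\varphi_p)_A$ defines a left-invariant notion of simplices in the Lie group $N_A$, and naturality of $(\varphi_p)_A$ in $A$. The second of these is essentially free: Proposition \ref{prop:standard} gives exactly the statement we need for each fixed $A$, provided we choose on each $N_A$ the standard construction with the same smoothing factor $\ell$, which is what the statement prescribes. So the real content is in the first and third conditions, and my plan is to prove both by induction on $p$ in lockstep with the apex-base construction from Section~\ref{sec:integratingcochains}.

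For the base case $p=0$, the map $(\varphi_0)_A \maps \Delta^0 \times N_A \to N_A$ is the projection, so all three properties are immediate. For the inductive step, I would retrace the construction of $(\varphi_p)_A$: it is built from $(\varphi_{p-1})_A$, the exponential map $\exp_A \maps \n_A \to N_A$, left translation in $N_A$, and the (fixed, $A$-independent) smoothing factor $\ell$ and quotient map $q_p \maps [0,1]\times\Delta^{p-1}\to\Delta^p$. The key observation is that each ingredient is natural in $A$: the exponential map $\exp \maps \n \to N$ is a morphism of supermanifolds, hence its components $\exp_A$ are natural in $A$; multiplication $m$ and inversion $\inv$ on $N$ are morphisms of supermanifolds, so left translation is natural in $A$; and $\ell$ and $q_p$ do not involve $A$ at all. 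Composing natural transformations gives a natural transformation, so the map $f_p$ of Section~\ref{sec:integratingcochains} is natural in $A$, its descent $\tilde f_p$ to $\Delta^p\times N_A^p$ is natural, and extending by left translation yields naturality of the full $(\varphi_p)_A$; this gives the required commuting square.

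For the smoothness conditions, the same inductive strategy applies. At each stage the ingredients $\exp_A$, multiplication, and left translation are $A_0$-smooth because they are the $A$-point components of morphisms of supermanifolds (in the sense of Section~\ref{sec:supermanifolds}); $\ell$ and $q_p$ are smooth in the ordinary sense and independent of $A$, so they preserve $A_0$-smoothness in the $N_A$-variables. Composing $A_0$-smooth maps with smooth maps in the $\Delta^p$-direction gives a map $(\varphi_p)_A$ that is smooth overall and whose restriction to each $\{x\}\times N_A^{p+1}$ is $A_0$-smooth, as required.

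The main obstacle, modest though it is, is keeping the bookkeeping straight at the quotient step: one must verify that the descent from $[0,1]\times\Delta^{p-1}\times N_A^p$ to $\Delta^p\times N_A^p$ via $q_p$ respects both naturality in $A$ and $A_0$-smoothness in the group variables. Since $q_p$ acts only on the simplex factor and is independent of $A$, descent commutes with the functorial action of $N_f$ on the $N_A^p$-factor, so the naturality square assembled at the previous stage passes to the quotient; and because $f_p$ is constant (equal to $1 \in N_A$) on a neighborhood of the $0$-face, its descent $\tilde f_p$ is genuinely smooth with the same $A_0$-smoothness in $N_A$ as $f_p$. Once these points are checked, the three conditions of a left-invariant notion of simplices in $N$ follow, and the proof is complete.
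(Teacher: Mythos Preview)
Your proposal is correct and follows essentially the same approach as the paper: an induction on $p$ along the apex-base construction, using naturality of $\exp$ and of left translation to propagate the commuting square, and observing that each step preserves $A_0$-smoothness. The paper's proof is a bit more hands-on (it explicitly checks $N_f \circ [1,n_1,\dots,n_p] = [1,N_f(n_1),\dots,N_f(n_p)]$ by tracing the path $\exp_A(\ell(t)X)$ under $N_f$), while you phrase the same content more abstractly in terms of composing natural transformations, but the substance is identical.
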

\begin{proof}
	Fix Grassmann algebras $A$ and $B$ and a map $f \maps A \to B$. We proceed
	by induction on $p$. For $p = 0$, the maps:
	\[ (\varphi_0)_A \maps \Delta^0 \times N_A \to N_A, \]
	\[ (\varphi_0)_B \maps \Delta^0 \times N_B \to N_B, \]
	are the obvious projections. The fact that:
	\[ \xymatrix{
	\Delta^0 \times N_A \ar[r]^>>>>>{(\varphi_0)_A} \ar[d]_{1 \times N_f}  & N_A \ar[d]^{N_f} \\
	\Delta^0 \times N_B \ar[r]_>>>>>{(\varphi_0)_B} & N_B 
	}
	\]
	commutes is then automatic.

	For arbitrary $p$, suppose that the following square commutes:
	\[ \xymatrix{
	\Delta^{p-1} \times N_A^p \ar[r]^>>>>>{(\varphi_{p-1})_A} \ar[d]_{1 \times N_f^p}  & N_A \ar[d]^{N_f} \\
	\Delta^{p-1} \times N_B^p \ar[r]_>>>>>{(\varphi_{p-1})_B} & N_B 
	}
	\]
	and that $(\varphi_{p-1})_A$ and  $(\varphi_{p-1})_B$ are $A_0$- and
	$B_0$-smooth. In other words, the above square says that for any
	$p$-tuple of $A$-points, we have:
	\[ N_f \circ [ n_1, \dots, n_p ] = [N_f(n_1), \dots, N_f(n_p)]. \]
	We construct $(\varphi_p)_A$ and $(\varphi_p)_B$ from
	$(\varphi_{p-1})_A$ and $(\varphi_{p-1})_B$, respectively, using the
	apex-base construction. That is, given the $(p-1)$-simplex $[n_1,
	\dots, n_p]$ given by $(\varphi_{p-1})_A$ for the $A$-points $n_1,
	\dots, n_p \in N_A$, we define the based $p$-simplex:
	\[ [1,n_1, \dots, n_p] \]
	in $N_A$ by using the exponential map $\exp_A$ to sweep out a path from the apex $1$ to
	each point of the base $[n_1, \dots, n_p]$. Similarly, we define the based $p$-simplex:
	\[ [1,N_f(n_1), \dots, N_f(n_p)] \]
	in $N_B$ by using the exponential map $\exp_B$ to sweep out a path from
	the apex $1$ to each point of the base $[N_f(n_1), \dots, N_f(n_p)]$.
	From the naturality of $\exp$, we will establish that:
	\[ N_f \circ [1, n_1, \dots, n_p] = [1, N_f(n_1), \dots, N_f[n_p)] . \]

	To verify this claim, let 
	\[ \exp_A(X) = [n_1, \dots, n_p](x), \mbox{ for some } x \in \Delta^{p-1} \]
	be a point of the base in $N_A$. By the inductive hypothesis, $N_f(\exp_A(X)) =
	\exp_B(\n_f(X))$ is the corresponding point of the base in $N_B$. We
	wish to see that points of the path $\exp_A(\ell(t)X)$ connecting $1$
	to $\exp_A(X)$ in $N_A$ correspond via $N_f$ to points on the path
	$\exp_B(\ell(t)\n_f(X))$ connecting $1$ to $\exp_B(\n_f(X))$ in $N_B$.
	But this is automatic, because:
	\[ N_f(\exp_A(\ell(t)X) = \exp_B(\n_f(\ell(t)X)) = \exp_B(\ell(t) \n_f(X)) , \]
	where in the last step we use the fact that $\n_f \maps \n_A \to \n_B$
	is linear. Thus, it is true that:
	\[ N_f \circ [1, n_1, \dots, n_p] = [1, N_f(n_1), \dots, N_f[n_p)] , \]
	for based $p$-simplices. 
	
	Using left translation, we can show that:
	\[ N_f \circ [n_0, n_1, \dots, n_p] = [N_f(n_0), N_f(n_1), \dots, N_f[n_p)] . \]
	for all $p$-simplices. In other words, the following diagram commutes:
	\[ \xymatrix{
	\Delta^p \times N_A^{p+1} \ar[r]^>>>>>{(\varphi_p)_A} \ar[d]_{1 \times N_f^{p+1}}  & N_A \ar[d]^{N_f} \\
	\Delta^p \times N_B^{p+1} \ar[r]_>>>>>{(\varphi_p)_B} & N_B 
	}
	\]
	Because each step in the apex-base construction preserves $A_0$- or
	$B_0$-smoothness, we note that $(\varphi_p)_A$ and $(\varphi_p)_B$ are
	$A_0$- and $B_0$-smooth, respectively. The result now follows for all
	$p$ by induction.
\end{proof}

\chapter{Superstring Lie 2-supergroups, 2-brane Lie 3-supergroups} \label{ch:finale}

We are now ready to unveil the Lie $n$-super\-groups which integrate our
favorite Lie $n$-super\-algebras, $\superstring(k+1,1)$ and $\twobrane(k+2,1)$.
Remember, these are the Lie $n$-superalgebras which occur only in the
dimensions for which string theory and 2-brane theory make sense. They are
\emph{not} nilpotent, simply because the Poincar\'e superalgebras
$\siso(k+1,1)$ and $\siso(k+2,1)$ that form degree 0 of $\superstring(k+1,1)$
and $\twobrane(k+2,1)$ are not nilpotent. Nonetheless, we are equipped to
integrate them using only the tools we have built to perform this task for
nilpotent Lie $n$-superalgebras.

The road to this result has been a long one, and there is yet some ground to
cover before we are finished. So, let us take stock of our progress before we
move ahead:

\begin{itemize}
	\item In spacetime dimensions $k+2 = 3$, 4, 6 and 10, we used division
		algebras to construct a 3-cocycle $\alpha$ on the
		supertranslation algebra:
		\[ \T = V \oplus S \]
		which is nonzero only when it  eats a vector and two spinors:
		\[ \alpha(A, \psi, \phi) = \langle \psi, A \phi \rangle . \]

	\item In spacetime dimensions $k+3 = 4$, 5, 7 and 11, we used division
		algebras to construct a 4-cocycle $\alpha$ on the
		supertranslation algebra:
		\[ \T = \V \oplus \S \]
		which is nonzero only when it eats two vectors and two spinors:
		\[ \beta(\A,\B,\Psi,\Phi) = \langle \Psi, (\A\B - \B\A) \Phi \rangle . \]

	\item Because $\alpha$ is invariant under the action of $\so(k+1,1)$,
		it can be extended to a 3-cocycle on the Poincar\'e
		superalgebra:
		\[ \siso(k+1,1) = \so(k+1,1) \ltimes \T. \]
		The extension is just defined to vanish outside of $\T$, and we
		call it $\alpha$ as well.

	\item Because $\beta$ is invariant under the action of $\so(k+2,1)$,
		it can be extended to a 3-cocycle on the Poincar\'e
		superalgebra:
		\[ \siso(k+2,1) = \so(k+2,1) \ltimes \T. \]
		The extension is just defined to vanish outside of $\T$, and we
		call it $\beta$ as well.

	\item Therefore, in spacetime dimensions $k+2$, we get a Lie
		2-superalgebra $\superstring(k+1,1)$ by extending
		$\siso(k+1,1)$ by the 3-cocycle $\alpha$.

	\item Likewise, in spacetime dimensions $k+3$, we get a Lie
		3-superalgebra $\twobrane(k+2,1)$ by extending
		$\siso(k+2,1)$ by the 4-cocycle $\beta$.
\end{itemize}

In the last chapter, we built the technology necessary to integrate Lie
superalgebra cocycles to supergroup cocycles, \emph{provided} the Lie
superalgebra in question is nilpotent. This allows us to integrate nilpotent
Lie $n$-superalgebras to $n$-supergroups. But $\superstring(k+1,1)$ and
$\twobrane(k+2,1)$ are not nilpotent, so we cannot use this directly here.

However, the cocycles $\alpha$ and $\beta$ are supported on a nilpotent
subalgebra: the supertranslation algebra, $\T$, for the appropriate dimension.
This saves the day: we can integrate $\alpha$ and $\beta$ as cocycles on $\T$.
This gives us cocycles $\smallint \alpha$ and $\smallint \beta$ on the
supertranslation supergroup, $T$, for the appropriate dimension. We will then
be able to extend these cocycles to the Poincar\'e supergroup, thanks to their
invariance under Lorentz transformations.

The following proposition helps us to accomplish this, but takes its most
beautiful form when we work with `homogeneous supergroup cochains', which we
have not actually defined. Rest assured---they are exactly what you expect. If
$G$ is a supergroup that acts on the abelian supergroup $M$ by automorphism, a
\define{homogeneous $M$-valued $p$-cochain} on $G$ is a smooth map:
\[ F \maps G^{p+1} \to M \]
such that, for any Grassmann algebra $A$ and $A$-points $g, g_0, \dots, g_p \in
G_A$:
\[ F_A(gg_0, g g_1, \dots, g g_p) = g F_A(g_1, \dots, g_p) . \]
We can define the supergroup cohomology of $G$ using homogeneous or
inhomogeneous cochains, just as was the case with Lie group cohomology.

\begin{prop} Let $G$ and $H$ be Lie supergroups such that $G$ acts on $H$, and
	let $M$ be an abelian supergroup on which $G \ltimes H$ acts by
	automorphism.  Given a homogeneous $M$-valued $p$-cochain $F$ on $H$:
	\[ F \maps H^{p+1} \to M, \]
       	we can extend it to a map of supermanifolds:
	\[ \tilde{F} \maps (G \ltimes H)^{p+1} \to M \]
	by pulling back along the projection $(G \ltimes H)^{p+1} \to H^{p+1}$.
	In terms of $A$-points 
	\[ (g_0,h_0), \ldots, (g_p,h_p) \in G_A \ltimes H_A , \] 
	this means $\tilde{F}$ is defined by:
	\[ \tilde{F}_A((g_0,h_0), \ldots, (g_p,h_p)) = F_A(h_0, \ldots, h_p), \]
	Then $\tilde{F}$ is a homogeneous $p$-cochain on $G \ltimes H$ if and only if $F$ is
	$G$-equivariant, and in this case $d\tilde{F} = \widetilde{dF}$.
\end{prop}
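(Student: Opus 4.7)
The plan is to verify both claims pointwise, working with $A$-points for an arbitrary Grassmann algebra $A$. Fix the convention for the semidirect product so that $(g_1,h_1)(g_2,h_2) = (g_1 g_2, h_1 \cdot (g_1 h_2))$ in $G_A \ltimes H_A$, and the induced action on $M_A$ is $(g,h) \cdot m = h \cdot (g \cdot m)$; these are forced by the requirements that multiplication restrict correctly to $G_A$ and $H_A$ and that the action restrict to the given $G$- and $H$-actions on $M$. With these conventions pinned down, I first left-translate a homogeneous tuple: the $H$-components of $(g,h)(g_i, h_i)$ are $h \cdot (g h_i)$, so
\[ \tilde F_A\bigl((g,h)(g_0,h_0), \dots, (g,h)(g_p,h_p)\bigr) \;=\; F_A\bigl(h (g h_0),\, \dots,\, h (g h_p)\bigr). \]

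Next, I apply the hypothesized $H$-homogeneity of $F$ to pull the factor $h$ outside, so the right-hand side becomes $h \cdot F_A(g h_0, \dots, g h_p)$. Comparing this with what homogeneity of $\tilde F$ demands, namely $(g,h) \cdot \tilde F_A((g_0,h_0), \dots, (g_p, h_p)) = h \cdot (g \cdot F_A(h_0, \dots, h_p))$, I see that the homogeneity of $\tilde F$ is equivalent to
\[ F_A(g h_0, \dots, g h_p) \;=\; g \cdot F_A(h_0, \dots, h_p) \quad \text{for all } g \in G_A,\ h_i \in H_A, \]
which is precisely $G$-equivariance of $F$. Naturality in $A$ is automatic since every ingredient was defined as a morphism of supermanifolds, so the equivalence stated in the proposition follows.

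For the coboundary claim, the key observation is that on \emph{homogeneous} cochains, the coboundary operator takes the streamlined form
\[ d F(x_0, \dots, x_{p+1}) \;=\; \sum_{i=0}^{p+1} (-1)^i F(x_0, \dots, \hat{x}_i, \dots, x_{p+1}), \]
with no extra action term — all of the $G$-action has been absorbed into the homogeneity condition. Since $\tilde F$ depends only on the $H$-components of its arguments, dropping any one argument commutes with passing from $\tilde F$ to $F$, giving $d \tilde F_A((g_0,h_0), \dots, (g_{p+1}, h_{p+1})) = (dF)_A(h_0, \dots, h_{p+1}) = \widetilde{dF}_A((g_0,h_0), \dots, (g_{p+1}, h_{p+1}))$.

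The only real obstacle is bookkeeping: nailing down the semidirect product and action conventions so that the computation lines up (in particular, so that the factor $h$ that comes out of the $H$-homogeneity of $F$ matches the factor $h$ produced by acting with $(g,h)$ on $M$). Once this is done, both parts reduce to one-line manipulations, and the conclusion $d\tilde F = \widetilde{dF}$ is immediate from the simple form of the coboundary on homogeneous cochains.
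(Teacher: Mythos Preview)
Your proposal is correct and follows essentially the same route as the paper: both work over $A$-points, compute the $H$-component of a left-translated tuple in the semidirect product, use the $H$-homogeneity of $F$ to strip off the factor $h$, and identify the residual condition as $G$-equivariance of $F$. On the coboundary claim you actually give more detail than the paper, which simply asserts that $dF$ remains $G$-equivariant and that $d\tilde{F} = \widetilde{dF}$ is easy to see.
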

\begin{proof}
	We work over $A$-points, $G_A \ltimes H_A$. Denoting the action of $g
	\in G_A$ on $h \in H_A$ by $h^g$, recall that multiplication in the
	semidirect product $G_A \ltimes H_A$ is given by:
	\[ (g_1, h_1) (g_2,h_2) = (g_1 g_2, h_1 h_2^{g_1}). \]
	
	Now suppose $\tilde{F}$ is homogeneous. By definition of homogeneity, we have:
	\[ \tilde{F}_A((g,h)(g_0,h_0), \ldots, (g,h)(g_p,h_p)) = (g,h) \tilde{F}_A((g_0,h_0), \ldots, (g_p,h_p)). \]
	Multiplying out each pair on the left and using the definition of
	$\tilde{F}$ on both sides, we get:
	\[ F_A(h h_0^{g}, \ldots, h h_p^{g}) = (g,h) F_A(h_0, \ldots, h_p). \]
	Writing $(g,h)$ as $(1,h)(g,1)$, and pulling $h$ out on the left-hand
	side, we now obtain:
	\[ (1,h) F_A(h_0^{g}, \ldots, h_p^{g}) = (1,h)(g,1) F_A(h_0, \ldots, h_p). \]
	Cancelling $(1,h)$ from both sides, this last equation just says that
	$F_A$ is $G_A$-equivariant. The converse follows from reversing this
	calculation. Since this holds for all $A$-points, we conclude that
	$\tilde{F}$ is homogeneous if and only if $F$ is $G$-equivariant.

	When $F$ is $G$-equivariant, it is easy to see that $dF$ is also, and
	that $d \tilde{F} = \widetilde{dF}$, so we are done.
\end{proof}

\noindent 
Now, at long last, we are ready to integrate $\alpha$ and $\beta$. In the
following proposition, $T$ denotes the \define{supertranslation group}, the
exponential supergroup of the supertranslation algebra $\T$.

\begin{prop}
	In dimensions 3, 4, 6 and 10, the Lie supergroup 3-cocycle $\smallint
	\alpha$ on the supertranslation group $T$ is invariant under the action
	of $\Spin(k+1,1)$.  Similarly, in dimensions 4, 5, 7 and 11, the Lie
	supergroup 4-cocycle $\smallint \beta$ on the supertranslation group
	$T$ is invariant under the action of $\Spin(k+2,1)$.
\end{prop}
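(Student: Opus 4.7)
The plan is to reduce the invariance of the integrated cocycle to three facts that are already in hand: (i) $\alpha$ and $\beta$ are Lorentz-invariant Lie superalgebra cocycles by construction, since they are built out of $\Spin$-equivariant maps; (ii) the exponential map $\exp \maps \T \to T$ intertwines the $\Spin$-action on $\T$ with the induced $\Spin$-action on $T$, and this induced action is by supergroup automorphisms; and (iii) the standard left-invariant notion of simplices in $T$ from Proposition~\ref{prop:standard}, used in Proposition~\ref{prop:superintegrating} to define $\smallint$, is built entirely from left translation and the exponential map. Combining (ii) and (iii) will show that $\Spin$ carries simplices to simplices in a structured way, and then (i) will let us pull back and conclude invariance.

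First I would work at $A$-points for an arbitrary Grassmann algebra $A$. Denote the action of $g \in \Spin(k+1,1)_A$ on $X \in \T_A$ by $g \cdot X$, and on $n = \exp_A(X) \in T_A$ by $g \cdot n := \exp_A(g \cdot X)$. Since $\Spin$ acts on $\T$ by Lie superalgebra automorphisms, it acts on $T$ by Lie supergroup automorphisms; in particular, $g \cdot (n n') = (g \cdot n)(g \cdot n')$ for $A$-points. Viewing $\alpha$ (extended trivially from $\T$ to $\siso(k+1,1)$) as a left-invariant $A_0$-valued $3$-form $\alpha_A$ on $T_A$, Lorentz invariance of $\alpha$ says that this form is preserved by pullback along the diffeomorphism $L_g \maps T_A \to T_A$ given by $n \mapsto g \cdot n$.

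Next I would show that the apex-base construction of the simplex $[1, n_1, n_1 n_2, \ldots, n_1 \cdots n_p]_A$ is $\Spin$-equivariant: applying $g \in \Spin_A$ sends this simplex to $[1, g\cdot n_1, (g\cdot n_1)(g\cdot n_2), \ldots, (g\cdot n_1)\cdots(g\cdot n_p)]_A$. This follows by induction on $p$ from three facts: the apex $1 \in T_A$ is fixed by the action; the action commutes with $\exp_A$ by definition, so the exponential paths joining apex to base are carried to exponential paths joining the new apex to the new base; and the action is by group automorphism, so it preserves the products $n_1 \cdots n_i$ appearing among the vertices. The induction base $p=0$ is trivial, and the inductive step copies the argument that established left-invariance of the standard simplices in Proposition~\ref{prop:standard}, with left translation replaced by the $\Spin$-action.

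Finally, by the change-of-variables formula applied to the diffeomorphism $L_g$, together with $L_g^*\alpha_A = \alpha_A$ and the equivariance of simplices,
\[
(\smallint \alpha)_A(g \cdot n_1, \ldots, g \cdot n_p)
= \int_{L_g \circ [1, n_1, \ldots, n_1\cdots n_p]} \alpha_A
= \int_{[1, n_1, \ldots, n_1\cdots n_p]} L_g^*\alpha_A
= (\smallint \alpha)_A(n_1, \ldots, n_p).
\]
Since this holds for every $A$ and every $A$-point, $\smallint \alpha$ is $\Spin(k+1,1)$-invariant. The proof for $\smallint \beta$ is identical, with $\alpha$ replaced by $\beta$ and $\Spin(k+1,1)$ by $\Spin(k+2,1)$. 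The only point that requires genuine care—and which I expect to be the main obstacle—is verifying the equivariance of the apex-base construction cleanly enough that the smoothing factor $\ell$ plays no role, but this is assured because the $\Spin$-action is linear on $\T_A$ and hence commutes with the scaling $X \mapsto \ell(t)X$ used in the construction.
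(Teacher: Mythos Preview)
Your proposal is correct and follows essentially the same approach as the paper: the paper also reduces to showing that the standard simplices are equivariant under the $G$-action (proved by induction on $p$ via the apex-base construction and equivariance of $\exp$), then concludes by pulling back along the action and using $\Ad(g)^*\omega_A = \omega_A$. The only cosmetic difference is that the paper first abstracts your argument into a general lemma (for any nilpotent $H$ with a supergroup $G$ acting by automorphisms and any $G$-invariant cochain $\omega$), and then observes that the proposition in question is an immediate consequence.
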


\noindent This is an immediate consequence of the following:

\begin{prop} 
	Let $H$ be a nilpotent Lie supergroup with Lie superalgebra $\h$.
	Assume $H$ is equipped with its standard left-invariant notion of 
	simplices, and let $G$ be a Lie supergroup that acts on $H$ by
	automorphism. If $\omega \in C^p(\h)$ is an even Lie superalgebra
	$p$-cochain which is invariant under the induced action of $G$ on $\h$,
	then $\smallint \omega \in C^p(H)$ is a Lie supergroup $p$-cochain
	which is invariant under the action of $G$ on $H$.
\end{prop}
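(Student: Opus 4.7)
The plan is to prove the proposition by working with $A$-points for a fixed Grassmann algebra $A$, where everything reduces to statements about an ordinary nilpotent Lie group $H_A$ acted on by the group $G_A$ via automorphism. The strategy mirrors the standard classical fact that translating a $G$-invariant left-invariant form over $G$-translated simplices yields a $G$-invariant result.

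First I would unpack the hypotheses. The action of $G$ on $H$ yields, for each Grassmann algebra $A$, a smooth action of $G_A$ on $H_A$ by group automorphism. Differentiating at the identity gives a compatible action of $G_A$ on $\h_A$ by Lie algebra automorphism, and this action commutes with the exponential map: $g \cdot \exp_A(X) = \exp_A(g \cdot X)$ for $g \in G_A$ and $X \in \h_A$. The hypothesis that $\omega$ is $G$-invariant means that at the level of $A$-points, $\omega_A(g \cdot X_1, \dots, g \cdot X_p) = \omega_A(X_1, \dots, X_p)$ for all $g \in G_A$ and $X_1, \dots, X_p \in \h_A$. Equivalently, viewing $\omega_A$ as a left-invariant $A_0$-valued $p$-form on $H_A$, the pullback along the automorphism induced by $g$ satisfies $g^* \omega_A = \omega_A$.

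The key geometric step, and the main obstacle, is to show that the standard left-invariant notion of simplices on $H$ is equivariant with respect to $G$: for any $g \in G_A$ and any $A$-points $h_0, \dots, h_p \in H_A$,
\[ g \cdot [h_0, h_1, \dots, h_p] = [g \cdot h_0, g \cdot h_1, \dots, g \cdot h_p]. \]
I would establish this inductively using the apex-base construction from Proposition \ref{prop:standard}. The base case $p=0$ is trivial, and the $1$-simplex case follows because $g$ is a group automorphism commuting with $\exp_A$: the path $t \mapsto \exp_A(t X)$ from $1$ to $\exp_A(X)$ is carried by $g$ to the path $t \mapsto \exp_A(t (g \cdot X))$ from $1$ to $g \cdot \exp_A(X)$. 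For the inductive step, since $g$ fixes $1$, commutes with the smoothing interpolation $\exp_A(\ell(t) X)$, and (by left-invariance of the construction, combined with the fact that left translation by $g \cdot h = g \cdot h \cdot 1$ in $G_A \ltimes H_A$ intertwines the $G_A$-action with left translation by $h$) sends the apex-base construction on $[h_1, \dots, h_p]$ to the one on $[g \cdot h_1, \dots, g \cdot h_p]$, the result follows.

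With this equivariance in hand, the proof concludes quickly. Using that $g$ acts by automorphism so $(g \cdot h_1)(g \cdot h_2) \cdots (g \cdot h_k) = g \cdot (h_1 h_2 \cdots h_k)$, I compute
\[ (\smallint \omega)_A(g \cdot h_1, \dots, g \cdot h_p) = \int_{[1,\, g \cdot h_1,\, g \cdot (h_1 h_2),\, \dots,\, g \cdot (h_1 \cdots h_p)]} \omega_A. \]
Since $g$ fixes $1$, the simplex on the right equals $g \cdot [1, h_1, h_1 h_2, \dots, h_1 \cdots h_p]$ by the equivariance established above. A change of variables converts the integral into $\int_{[1, h_1, \dots, h_1 \cdots h_p]} g^* \omega_A$, and by $G$-invariance of $\omega$ we have $g^* \omega_A = \omega_A$, yielding $(\smallint \omega)_A(h_1, \dots, h_p)$. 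Since this holds for every Grassmann algebra $A$, the map $\smallint \omega \maps H^p \to \R$ is invariant under the $G$-action, which is the statement of the proposition.
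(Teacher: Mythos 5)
Your proposal is correct and follows essentially the same route as the paper: reduce to $A$-points, prove $G_A$-equivariance of the standard simplices by induction on $p$ via the apex-base construction and the equivariance of $\exp_A$, then conclude by pulling back $\omega_A$ along the automorphism and invoking its $G$-invariance. The only cosmetic difference is that the paper states the computation for homogeneous cochains with arbitrary vertices $h_0,\dots,h_p$, while you phrase it for the based simplex $[1,h_1,h_1h_2,\dots]$; these are equivalent.
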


\begin{proof}
	Fixing a Grassmann algebra $A$, we must prove that
	\[ \int_{[h_0^g, h_1^g, \dots, h_p^g]} \omega_A = \int_{[h_0, h_1, \dots, h_p]} \omega_A , \]
	for all $A$-points $g \in G_A$ and $h_0, h_1, \dots, h_p \in H_A$. We
	shall see this follows from the fact that the $p$-simplices in $H$ are
	themselves $G$-equivariant, in the sense that:
	\[ [h_0^g, h_1^g, \dots, h_p^g] = [h_0, h_1, \dots, h_p]^g . \]
	Assuming this for the moment, let us check that our result follows.
	Indeed, applying the above equation, we get:
	\begin{eqnarray*}
		\int_{[h_0^g, h_1^g, \dots, h_p^g]} \omega_A & = & \int_{[h_0, h_1, \dots, h_p]^g} \omega_A \\
		                                             & = & \int_{[h_0, h_1, \dots, h_p]} \Ad(g)^* \omega_A \\
							     & = & \int_{[h_0, h_1, \dots, h_p]} \omega_A ,
	\end{eqnarray*}
	where the final step uses $\Ad(g)^* \omega_A = \omega_A$, which is just
	the $G$-invariance of $\omega$.

	It therefore remains to prove the equation $[h_0^g, h_1^g, \dots, h_p^g] =
	[h_0, h_1, \dots, h_p]^g$ actually holds. Note that this is the same as
	saying that the map
	\[ (\varphi_{p})_A \maps \Delta^p \times H_A^{p+1} \to H_A \]
	is $G_A$-equivariant. We check it by induction on $p$.

	For $p = 0$, the map:
	\[ (\varphi_0)_A \maps \Delta^0 \times H_A \to H_A  \]
	is just the projection, and $G_A$-equivariance is obvious. So fix some $p
	\geq 0$ and suppose that $(\varphi_{p-1})_A$ is $G_A$-equivariant. We now
	construct $(\varphi_p)_A$ from $(\varphi_{p-1})_A$ using the apex-base
	construction, and show that equivariance is preserved.

	So, given the $(p-1)$-simplex $[h_1, \dots, h_p]$ given by
	$(\varphi_{p-1})_A$ for the $A$-points $h_1, \dots, h_p \in H_A$, we
	define the based $p$-simplex:
	\[ [1,h_1, \dots, h_p] \]
	in $H_A$ by using the exponential map to sweep out a path from the apex
	$1$ to each point of the base $[h_1, \dots, h_p]$. In a similar way, we
	define the based $p$-simplex:
	\[ [1,h_1^g, \dots, h_p^g] \]
	By hypothesis, $[h_1^g, \dots, h_p^g] = [h_1, \dots, h_p]^g$, and since
	the exponential map $\exp \maps \h_A \to H_A$ is itself
	$G_A$-equivariant, it follows for based $p$-simplices that:
	\[ [1, h_1^g, \dots, h_p^g] = [1, h_1, \dots, h_p]^g . \]
	The result now follows for all $p$-simplices by left translation. This
	completes the proof.
\end{proof}

It thus follows that in dimensions 3, 4, 6 and 10, the cocycle $\smallint
\alpha$ on the supertranslations can be extended to a 3-cocycle on the full
Poincar\'e supergroup:
\[ \SISO(k+1, 1) = \Spin(k+1, 1) \ltimes T , \] 
while in dimensions 4, 5, 7 and 11, the cocycle $\smallint \beta$ can be
extended to the Poincar\'e supergroup:
\[ \SISO(k+2,1) = \Spin(k+2,1) \ltimes T . \] 
By a slight abuse of notation, we continue to denote these extensions by
$\smallint \alpha$ and $\smallint \beta$, respectively. As an immediate
consequence, we have:

\begin{thm} \label{thm:superstringgroup}
	In dimensions 3, 4, 6 and 10, there exists a slim Lie 2-supergroup
	formed by extending the Poincar\'e supergroup $\SISO(k+1,1)$ by the
	3-cocycle $\smallint \alpha$, which we call we the \define{superstring
	Lie 2-supergroup}, \define{\boldmath{$\Superstring(k+1,1)$}}.
\end{thm}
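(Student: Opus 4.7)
The plan is to assemble the theorem from the machinery already developed in the preceding chapters. First, I would recall that $\alpha$ was constructed as an even, Lorentz-invariant 3-cocycle on the supertranslation algebra $\T$ (Theorem \ref{thm:3-cocycle}), and that $\T$ is nilpotent---in fact 2-step nilpotent, since the only nontrivial brackets are the symmetric spinor brackets that land in $V$. So Proposition \ref{prop:superintegrating} applies to the supertranslation group $T$ (the exponential supergroup of $\T$, provided by Proposition \ref{prop:nilpotentsupergroup}), yielding a supergroup 3-cochain
\[ \smallint \alpha \maps T^3 \to \R \]
which is closed because $\smallint$ is a cochain map.

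Next I would invoke the proposition immediately preceding the theorem to conclude that $\smallint \alpha$ is $\Spin(k+1,1)$-invariant: this is where the hypothesis that $G = \Spin(k+1,1)$ acts by automorphism on the nilpotent supergroup $H = T$ is used, together with Lorentz-invariance of $\alpha$ at the Lie algebra level. Then I would apply the supergroup analog of the Lie algebra extension result proved in Chapter \ref{ch:Lie-n-superalgebras}: a $G$-invariant, normalized, closed cochain on $H$ extends uniquely, by pulling back along the projection $G \ltimes H \to H$, to a closed cochain on $G \ltimes H = \SISO(k+1,1)$. The homogeneous-versus-inhomogeneous translation is the same as in the Lie algebra proof; the content is that $G$-invariance is exactly the condition needed for the pulled-back cochain to be closed.

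The final step is to feed the resulting normalized 3-cocycle $\smallint \alpha$ on $\SISO(k+1,1)$ with values in $\R$ into the construction of Proposition \ref{prop:2supergroup}, which produces a slim Lie 2-supergroup $\String_{\smallint \alpha}(\SISO(k+1,1),\R)$; this we christen $\Superstring(k+1,1)$. Proposition \ref{prop:2supergroup} already does the verification that the 3-cocycle condition on $\smallint \alpha$ is equivalent to the pentagon identity for the associator built from it, and that the triangle identity reduces to the normalization, so there is nothing further to check.

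The most delicate step, and essentially the only one that is not already packaged as a black box, is normalization together with the verification that the extension-by-pullback really lands in the supergroup cochain complex (that is, that $\widetilde{\smallint \alpha}$ is a smooth map of supermanifolds). Smoothness is automatic from $A_0$-smoothness of $\smallint \alpha$ at each Grassmann algebra and naturality in $A$, both of which Proposition \ref{prop:superintegrating} provides. For normalization, I would observe that the standard left-invariant simplices degenerate whenever any vertex equals its successor, so $\smallint \alpha$ vanishes on degenerate tuples; the extension to $\SISO(k+1,1)$ preserves this because it is defined by pullback along a projection. Once normalization is in hand, Proposition \ref{prop:2supergroup} finishes the construction.
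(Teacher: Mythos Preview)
Your proposal is correct and follows essentially the same route as the paper, which treats the theorem as an immediate consequence of the propositions immediately preceding it: integrate $\alpha$ on the nilpotent $T$ via Proposition~\ref{prop:superintegrating}, use the $\Spin(k+1,1)$-invariance proposition to see $\smallint\alpha$ is invariant, extend to $\SISO(k+1,1)$ via the homogeneous-cochain pullback proposition, and then invoke Proposition~\ref{prop:2supergroup}. You are actually more careful than the paper in explicitly addressing normalization of $\smallint\alpha$ (needed for the triangle identity in Proposition~\ref{prop:2supergroup}), which the paper leaves implicit.
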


\begin{thm} \label{thm:2branegroup}
	In dimensions 4, 5, 7 and 11, there exists a slim Lie 3-supergroup
	formed by extending the Poincar\'e supergroup $\SISO(k+2,1)$ by the
	4-cocycle $\smallint \beta$, which we call the \define{2-brane Lie
	3-supergroup}, \define{\boldmath{$\Twobrane(k+2,1)$}}.
\end{thm}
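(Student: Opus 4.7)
The plan is to assemble the theorem from the machinery built earlier, with essentially no new computation. Starting from the 4-cocycle $\beta$ on the supertranslation algebra $\T$ constructed in Theorem \ref{thm:4-cocycle}, I would first invoke Theorem \ref{thm:2brane} to extend it to an even 4-cocycle on the Poincar\'e superalgebra $\siso(k+2,1) = \so(k+2,1) \ltimes \T$, with the extension vanishing outside $\T$. Because $\T$ is 2-step nilpotent, Proposition \ref{prop:nilpotentsupergroup} yields an exponential supertranslation supergroup $T$ integrating $\T$, and Proposition \ref{prop:superintegrating} (together with the left-invariant notion of simplices constructed for $T$) produces a supergroup 4-cocycle $\smallint \beta$ on $T$.

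Next I would show that $\smallint \beta$ is Lorentz-invariant on $T$, so that it extends to a 4-cocycle on the full Poincar\'e supergroup $\SISO(k+2,1) = \Spin(k+2,1) \ltimes T$. This is precisely the content of the two propositions just proved in the text: $\Spin(k+2,1)$-invariance of $\beta$ passes to the standard left-invariant simplices via the $\Spin(k+2,1)$-equivariance of the exponential map, and then the semidirect-product proposition lets me extend $\smallint \beta$ by pulling back along the projection $\SISO(k+2,1)^{p+1} \to T^{p+1}$, preserving the cocycle condition since $d\widetilde{F} = \widetilde{dF}$ whenever $F$ is equivariant. Normalization of $\smallint \beta$ on $\SISO(k+2,1)$ is automatic, because any simplex $[1, n_1, \ldots, n_p]$ with one $n_i = 1$ has two coincident vertices and so the left-invariant form $\beta_A$ integrates to zero along it.

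With a normalized, $\R$-valued supergroup 4-cocycle $\smallint \beta$ on the Lie supergroup $\SISO(k+2,1)$ in hand, I would then invoke the super version of Proposition \ref{prop:Lie3group}, namely Proposition \ref{prop:3supergroup}, to produce the slim 3-supergroup
\[
\Twobrane(k+2,1) := \Brane_{\smallint \beta}\bigl(\SISO(k+2,1),\R\bigr).
\]
By construction this is a super tricategory with one object in which all morphisms, 2-morphisms, and 3-morphisms are (strictly) invertible, and the pentagonator identity reduces to the 4-cocycle condition on $\smallint \beta$, which we already know holds.

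The main obstacle, and the only place where the argument is not purely formal, is the verification that the extension of $\smallint \beta$ from $T$ to the semidirect product $\SISO(k+2,1)$ by the formula of the semidirect-product proposition remains a supergroup cochain in the supermanifold sense: that is, that the resulting map $(\SISO(k+2,1))^4 \to \R$ is natural in the Grassmann algebra $A$ and $A_0$-smooth on $A$-points. Naturality follows because $\smallint \beta$ was produced via Proposition \ref{prop:superintegrating}, whose proof already verifies these properties; the projection to $T^{p+1}$ is a morphism of supermanifolds, so $A_0$-smoothness and naturality in $A$ pull back along it. Once this is in place, the slim 3-supergroup structure is immediate from the general construction.
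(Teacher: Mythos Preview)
Your proposal is correct and follows essentially the same approach as the paper, which simply states the theorem as an ``immediate consequence'' of the two preceding propositions on Lorentz-invariance of $\smallint \beta$ and on extending equivariant cochains to a semidirect product. You are in fact more explicit than the paper: you spell out the normalization of $\smallint \beta$ (via degeneracy of simplices with a repeated vertex) and the supermanifold-level verification that the extended cochain is natural in $A$ and $A_0$-smooth, both of which the paper leaves implicit.
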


\section{Outlook}

In this thesis we have seen a number of clues that a categorified geometry is
relevant to superstrings, M-theory, and supergravity.  Categorifying gauge
theory to obtain higher gauge theory boosts the dimension of objects which we
can parallel transport. The special identities which make supersymmetry work
allow us to categorify the spacetime symmetries of superstrings. We propose
there is a simple underlying reason for all of this: strings are extended
objects, not point particles, so we need a geometry in which extended objects
can play a role as fundamental as points. It is precisely this kind of geometry
which we are now ready to explore, now that we have the $n$-supergroups
$\Superstring(k+1,1)$ and $\Twobrane(k+2,1)$ in our hands.

\singlespacing

\bibliographystyle{plain}

\end{document}